\pretocmd{\chapter}{\addtocontents{toc}{\protect\addvspace{15\p@}}}{}{}
\pretocmd{\section}{\addtocontents{toc}{\protect\addvspace{5\p@}}}{}{}
\let\oldtocsection=\tocsection
\let\oldtocsubsection=\tocsubsection
\let\oldtocsubsubsection=\tocsubsubsection
\renewcommand{\tocsection}[2]{\hspace{0em}\oldtocsection{#1}{#2}}
\renewcommand{\tocsubsection}[2]{\hspace{1.8em}\oldtocsubsection{#1}{#2}}
\renewcommand{\tocsubsubsection}[2]{\hspace{4.4em}\oldtocsubsubsection{#1}{#2}}
\definecolor{linkcolor}{HTML}{e88d67} 
\definecolor{citecolor}{HTML}{e88d67} 
\definecolor{urlcolor}{HTML}{e88d67} 
\definecolor{myNewColorA}{HTML}{fec3a6}
\definecolor{myNewColorB}{HTML}{ffaf80}
\definecolor{myNewColorC}{HTML}{fb8f67}
\definecolor{seagreen}{HTML}{337180}
\definecolor{mseagreen}{HTML}{369673}
\definecolor{darksalmon}{HTML}{e88d67}
\definecolor{silver}{HTML}{bbbbbb}
\definecolor{flowerblue}{HTML}{4e77fc}
\definecolor{tomato}{HTML}{ff6347}
\definecolor{orange}{HTML}{f2b13d}
\definecolor{darkgray}{HTML}{939393}
\DeclareMathOperator{\PI}{PI}
\DeclareMathOperator{\PPII}{PII}
\DeclareMathOperator{\diag}{diag}
\DeclareMathOperator{\cchar}{char}
\newcommand{\brackets}[1]{\left( #1 \right)}
\newcommand{\abs}[1]{\left| #1 \right|}
\newcommand{\Painleve}{Painlev{\'e} }
\theoremstyle{plain}
\newtheorem{thm}{Theorem}[]
\newtheorem{lem}{Lemma}[]
\newtheorem{prop}{Proposition}[]
\theoremstyle{definition}
\newtheorem{defn}{Definition}[]
\newtheorem{exmp}{Example}[]
\theoremstyle{remark}
\newtheorem{rem}{Remark}
\numberwithin{equation}{section}
\begin{document}

    \title[]{Non-Abelian discrete Toda chains and related lattices}

    \author{Irina Bobrova}
    \noindent\address{\noindent 
    \phantom{\hspace{-5.5mm}}
    Max-Planck-Institute for Mathematics in the Sciences, 04103, Leipzig, Germany
    \newline
    Laboratoire de Math{\'e}matiques de Reims, CNRS UMR 9008,
    Universit{\'e} de Reims-Champagne-Ardenne,
    F-51687, Reims, France
    \newline
    IITP (Institute for Information Transmission Problems), RAS, Moscow, 127051, Russia
    }
    \email{ia.bobrova94@gmail.com}
    
    \author{Vladimir Retakh}
    \noindent\address{\noindent 
    \phantom{\hspace{-5.5mm}}
    Department of Mathematics,
    Rutgers University,
    Piscataway, New Jersey, 08854, USA
    }
    \email{vretakh@math.rutgers.edu}
    
    \author{Vladimir Rubtsov}
    \noindent\address{\noindent 
    \phantom{\hspace{-5.5mm}}
    Maths Department,
    Univ. Angers, 
    CNRS, 
    LAREMA, 
    SFR MATHSTIC, 
    F-49000, Angers, France
    \newline
    IGAP (Institute for Geometry and Physics), 34136, Trieste, Italy
    \newline
    IITP (Institute for Information Transmission Problems), RAS, Moscow, 127051, Russia
    }
    \email{volodya@univ-angers.fr}
    
    \author{Georgy Sharygin}
    \noindent\address{\noindent 
    \phantom{\hspace{-5.5mm}}
    Department of Mathematics and Mechanics,
    Moscow State (Lomonosov) University,
    Leninskie Gory, d. 1,
    Moscow, 119991, Russia
    }
    \email{sharygin@itep.ru}

    \subjclass{Primary 37K20. Secondary 16B99, 16W25} 
    \keywords{
    non-commutative lattices,
    Toda equations,
    hexagonal Boussinesq lattice,
    Korteweg-de Vries lattice,
    modified Korteweg-de Vries lattice,
    sine-Gordon lattice,
    Somos sequences,
    $q$-Painlevé equations,
    quasideterminants, 
    Lax pairs}
    
    \maketitle 

    \begin{abstract}
    We have derived a non-abelian analog for the two-dimensional discrete Toda lattice which possesses solutions in terms of quasideterminants and admits Lax pairs of different forms. Its connection with non-abelian analogs for several well-known (1+1) and one-dimensional lattices is discussed. In particular, we consider a non-commutative analog of the scheme: discrete Toda equations $\rightarrow$ Somos-$N$ sequences $\rightarrow$ discrete Painlevé equations. 
    \end{abstract}

    %\tableofcontents
    
    \section*{Introduction}

In the present paper, we consider a non-abelian analog of the well-known two-dimensional discrete Toda equation, also called the Hirota or KP equation. In the commutative case, this equation may be regarded as the master equation, since it can be reduced to many other important integrable systems. It is also known that Toda lattices may be interpreted as the determinant Jacobi identity, if one presents their solutions in a determinant form. By using a non-abelian Jacobi identity for quasideterminants \cite{gel1991determinants}, we have derived the non-commutative two-dimensional discrete Toda lattice, the \ref{eq:2ddToda_0}, differ from \cite{nimmo2006non} (see Remark \ref{rem:nim}). Our analog, the \ref{eq:2ddToda_0} equation, possesses Lax pairs of different types and solutions in terms of quasideterminants. Note that the form of the \ref{eq:2ddToda_0} is more suitable for the further reductions to (1+1)-dimensional and one-dimensional systems. In particular, we are interested in an extension to the non-commutative case of the following scheme
\begin{align}
    \text{discrete Toda equations}
    &&
    \overset{\text{\cite{hone2017reductions}}}{\longrightarrow}&
    &&
    \text{Somos-$N$ equations}
    &&
    \overset{\text{\cite{hone2014discrete}}\,\,}{\longrightarrow}&
    &&
    \text{discrete Painlev\'e equations}.
\end{align}

The paper is separated into three parts: \textbf{(1)} non-abelian analogs of the Toda lattices, \textbf{(2)} reductions of the \ref{eq:2ddToda_0} to (1+1)-dimensional systems and \textbf{(3)} to one-dimensional lattices. As a result, we have discovered connections between the \ref{eq:2ddToda_0}, its continuous analog and one-dimensional discrete and continuous lattices. Extending well-known reductions of the two-dimensional abelian discrete Toda equation to (1+1)-dimensional systems \cite{date1983method} to the non-commutative case, non-abelian analogs of the hexagonal Boussinesq, Korteweg-de Vries (KdV), modified Korteweg-de Vries (mKdV), and sine-Gordon lattices were obtained together with their Lax pairs. Let us stress that some non-abelian discrete systems were discussed in the papers \cite{volkov1995quantum}, \cite{bobenko2002integrable}, \cite{adler2009discrete}. Plane-wave reductions of the \ref{eq:2ddToda_0} and its scalar Lax pair lead to a non-commutative analog of the autonomous Somos-$N$ like sequences and the corresponding isospectral pairs. Note that  commutative Somos-$N$ like sequences are associated with  $q$PI and $q$PII equations and their hierarchies \cite{hone2014discrete}. We remark that some observations of this paper were were published a little earlier by N. Okubo in \cite{okubo2013discrete}. In order to generalize this connection to the non-abelian case, one needs to get a non-autonomous analog of Somos-$N$ like sequences. Here we suggest a method which in the commutative case is different from those in \cite{hone2014discrete}. This allows us to obtain a non-abelian non-autonomous Somos-$N$ like equation and, thus, non-commutative analogs of the $q$-\Painleve equations. Note that the problem of finding the corresponding Lax pairs for these non-autonomous systems remains open even in the commutative case. 

For the reader convenience, we have gathered main results of the paper and necessary preliminaries in Sections \ref{sec:mainres} and \ref{sec:prel}, respectively, while detailed discussions of the resuls can be found in Sections \ref{sec:main11} -- \ref{sec:main12}, \ref{sec:2ddTLtosys}, and~\ref{sec:Somos_dPainleve}. 

\subsection*{Acknowledgements}
The authors are deeply grateful to V. Adler for invaluable comments. We are also
thankful to A. Hone, A. Zabrodin, and A. Khakimova for discussions. The research of I.B. was partially
supported by the International Laboratory of Cluster Geometry HSE, RF Government grant № 075-15-
2021-608, and by Young Russian Mathematics award; V.Rb. is grateful to Centre Henri Lebesgue, program
ANR-11-LABX-0020-0; G.Sh. was partially supported by the National Key
R\&D Program of China (Grant No. 2020YFE0204200). Contribution of G.Sh. in 
Section \ref{sec:main11} was done under support of the Russian Science Foundation (Project No. 22-11-00272).
Contribution of I.B. and V.Rb. in Section \ref{sec:2ddTLtosys} was done under support of the Russian Science Foundation (Project No. 23-41-00049).

The authors are deeply grateful to the referees for the attention to this paper, as well as for constructive and useful comments and suggestions.

\section{Main results}
\label{sec:mainres}

\subsection{Non-Abelian Toda lattices}
\label{sec:mainres1}
The first part of the paper is devoted to a derivation of a non-abelian analog of the two-dimensional discrete Toda equation and  its Lax pairs of different types (scalar and matrix). We have also discussed their reduction to the one-dimensional Toda lattice and continuous limits to the two- and one-dimensional Toda field lattices. 

As we have already mentioned in the Introduction, abelian Toda lattices may be interpreted as the Jacobi identity for determinants, if one presents solutions of the corresponding Toda lattice in a determinant form. Thanks to a non-abelian analog of the determinant Jacobi identity \cite{gel1991determinants}, this method of a derivation of the Toda lattices can be extended to the non-commutative case. We follow an idea given by Theorem 4.2 \cite{gel1992theory} (see also Theorem 9.7.2 from \cite{gelfand2005quasideterminants}), where the authors had obtained a non-abelian version of the one-dimensional discrete Toda equation. We consider elements of an associative unital division ring $R$ over a field $F$, $\cchar F = 0$ (for more detail see Subsection \ref{sec:ncset}) and often refer to the elements of the ring $R$ as to \textit{functions}. Let us introduce matrix
\begin{align}
    \label{eq:Theta_def_0}
    \Theta_{l, m, n}
    &:= 
    \begin{pmatrix}
    \varphi_{l, m} 
    & \dots & \varphi_{l, m + n - 1}
    \\
    \vdots & \ddots & \vdots
    \\
    \varphi_{l + n - 1, m} 
    & \dots & 
    \varphi_{l + n - 1, m + n - 1}
    \end{pmatrix}
    = \brackets{
    \varphi_{l + i - 1, m + j - 1}
    }_{1 \leq i, j \leq n},
    &
    (l, m, n + 1)
    &\in \mathbb{Z}_{\geq 0}^{3},
\end{align}
where $\varphi_{i, j} \in R$ and set
$\begin{aligned}
    \theta_{l, m, n}
    = |\Theta_{l, m, n}|_{nn}.
\end{aligned}$
Then, one can verify (see Proposition \ref{thm:2ddToda_nc}) that from a non-abelian Jacobi identity for the quasideterminants it follows that the functions $\theta_{l, m, n}$ satisfy the equation
\begin{align}
    \label{eq:2ddToda_0}
    \tag*{2ddTL}
    \theta_{l + 1, m + 1, n}
    &= \theta_{l, m, n + 1}
    + \theta_{l + 1, m, n} \brackets{
    \theta_{l, m, n}^{-1}
    - \theta_{l + 1, m + 1, n - 1}^{-1}
    } \theta_{l, m + 1, n}.
\end{align}
We call the latter equation a non-abelian analog of the two-dimensional discrete Toda lattice and  denote it by \ref{eq:2ddToda_0}. It turns out that equation \ref{eq:2ddToda_0} has Lax pairs of different types (scalar, matrix $2\times2$, and matrix semi-infinite), which we have presented in Propositions \ref{thm:2ddToda_scalsys}, \ref{thm:2ddToda_matsys}, and \ref{thm:2ddToda_matsys_n}. In particular, a scalar Lax pair is given by the system
\begin{gather}
    \label{eq:2ddToda_scalsys_0}
    \left\{
    \begin{array}{rcl}
         \psi_{l + 1, m, n}
         &=& \psi_{l, m, n + 1}
         + a_{l, m, n} \psi_{l, m, n},
         \\[2mm]
         \psi_{l, m - 1, n + 1}
         &=& \psi_{l, m, n + 1}
         + b_{l, m, n} \psi_{l, m, n},
    \end{array}
    \right.
    \\[3mm]
    \label{eq:2ddToda_abdef_0}
    \begin{aligned}
    a_{l, m, n}
    &= \theta_{l + 1, m, n} \theta_{l, m, n}^{-1},
    &&&&&
    b_{l, m, n}
    &= \theta_{l, m - 1, n + 1} \theta_{l, m, n}^{-1}.
    \end{aligned}
\end{gather}
This systems can be easily rewritten in the matrix form in terms of either $2\times2$ or semi-infinite matrices. Note that these pairs generalize well-known Lax pairs \cite{krichever1997quantum}, \cite{zabrodin1997hirota} for the abelian two-dimensional discrete Toda equation.

\begin{rem}
\label{rem:nim}
In the paper \cite{nimmo2006non}, the author have introduced different form of the Hirota lattice called a non-abelian Hirota-Miwa equation. Namely, it arises as a compatibility condition of the linear system
\begin{align}
    &&
    \varphi_{, i} - \varphi_{, j}
    + U_{ij} \, \varphi
    &= 0,
    &
    i,j
    &\in \mathbb{N}
    ,
    &&
\end{align}
where $\varphi \in {R}$ and $U_{ij} \in {R}^{\times}$ (for $i \neq j$). The final form of the compatibility condition is
\begin{align}
    &&
    U_{ij}
    + U_{jk}
    + U_{ki}
    &= 0,
    &
    U_{ij, k} \, U_{ki}
    &= U_{ki, j} \, U_{ij}
    .
    &&
\end{align}
\end{rem}

It is natural to expect that the \ref{eq:2ddToda_0} somehow reduces to its continuous analog. Indeed, regarding a continuous limit as in Proposition \ref{thm:2dToda_scalsys}, one can arrive at the equation
\begin{align}
    \label{eq:2dToda_0}
    \tag*{2dTL}
    \brackets{
    \theta_{n, x} \theta_n^{-1}
    }_y
    &= \theta_{n + 1} \theta_n^{-1}
    - \theta_n \theta_{n - 1}^{-1}
    ,
\end{align}
which is a well-known non-abelian two-dimensional Toda field equation \cite{mikhailov1981reduction} ( see also \cite{sall1982darboux}). This limit can be extended for the Lax pairs and quasideterminant solutions as well as for the equations (see Subsection~\ref{sec:2dToda}). 

Going further to other Toda lattices, one can also get analogs of the one-dimensional discrete and continuous Toda equations. They are given, respectively, by the lattices
\begin{align}
    \label{eq:1ddToda_0}
    \tag*{1ddTL}
    \theta_{m + 2, n}
    &= \theta_{m, n + 1}
    + \theta_{m + 1, n} \brackets{
    \theta_{l, m, n}^{-1}
    - \theta_{m + 2, n - 1}^{-1}
    } \theta_{m + 1, n},
\end{align}
\begin{align}
    \label{eq:1dToda_0}
    \tag*{1dTL}
    \brackets{
    \theta_{n}' \theta_n^{-1}
    }'
    &= \theta_{n + 1} \theta_n^{-1}
    - \theta_n \theta_{n - 1}^{-1}
    .
\end{align}
These equations coincide with those given, for instance, in the paper \cite{gelfand2005quasideterminants}. As for the \ref{eq:2dToda_0}, Lax pairs and quasideterminant solutions may be reduced from the \ref{eq:2ddToda_0}. 

\begin{rem}
The 1d Toda lattice \ref{eq:1dToda_0} is related to the non-abelian Volterra lattice 
\begin{align}
    \label{eq:1dV_nc_0}
    \tag*{1dVL}
    \gamma_{n}'
    &= \gamma_{n + 1} \gamma_{n}
    - \gamma_{n} \gamma_{n - 1}
\end{align}
by the Miura map given in Proposition \ref{thm:1dVLto1dToda}.
\end{rem}

Summing up the discussion above, we formulate the first main result of the paper, a detailed proof of which is given in Subsections \ref{sec:2ddToda} -- \ref{sec:2dToda} and \ref{sec:1ddToda} -- \ref{sec:1dToda}.

\begin{thm}
\label{thm:2ddto1d}
There exist maps connecting the equations, their scalar and matrix linear problems {\rm(}in terms of $2\times2$ and semi-infinite matrices{\rm)}, and quasideterminant solutions of the non-abelian Toda equations 
\begin{figure}[H]
    \centering
    \scalebox{.9}{\tikzset{every picture/.style={line width=0.75pt}} %set default line width to 0.75pt        

\begin{tikzpicture}[x=0.75pt,y=0.75pt,yscale=-1,xscale=1]
%uncomment if require: \path (0,250); %set diagram left start at 0, and has height of 250

%Straight Lines [id:da7982446525783027] 
\draw    (367.23,60.62) -- (407.5,97.43) ;
\draw [shift={(408.98,98.78)}, rotate = 222.42] [color={rgb, 255:red, 0; green, 0; blue, 0 }  ][line width=0.75]    (10.93,-3.29) .. controls (6.95,-1.4) and (3.31,-0.3) .. (0,0) .. controls (3.31,0.3) and (6.95,1.4) .. (10.93,3.29)   ;
%Straight Lines [id:da3296364053313864] 
\draw    (323.2,60.5) -- (282.62,97.19) ;
\draw [shift={(281.14,98.53)}, rotate = 317.88] [color={rgb, 255:red, 0; green, 0; blue, 0 }  ][line width=0.75]    (10.93,-3.29) .. controls (6.95,-1.4) and (3.31,-0.3) .. (0,0) .. controls (3.31,0.3) and (6.95,1.4) .. (10.93,3.29)   ;
%Straight Lines [id:da003264793017319967] 
\draw    (283.34,129.18) -- (323.2,164.56) ;
\draw [shift={(324.69,165.89)}, rotate = 221.6] [color={rgb, 255:red, 0; green, 0; blue, 0 }  ][line width=0.75]    (10.93,-3.29) .. controls (6.95,-1.4) and (3.31,-0.3) .. (0,0) .. controls (3.31,0.3) and (6.95,1.4) .. (10.93,3.29)   ;
%Straight Lines [id:da4577520468254215] 
\draw    (410.91,128.96) -- (370.34,165.64) ;
\draw [shift={(368.86,166.99)}, rotate = 317.88] [color={rgb, 255:red, 0; green, 0; blue, 0 }  ][line width=0.75]    (10.93,-3.29) .. controls (6.95,-1.4) and (3.31,-0.3) .. (0,0) .. controls (3.31,0.3) and (6.95,1.4) .. (10.93,3.29)   ;

% Text Node
\draw (325.,48.72) node [anchor=north west][inner sep=0.75pt]  [rotate=-0.65] [align=left] {\rm\ref{eq:2ddToda_0}};
% Text Node
\draw (404.26,105.19) node [anchor=north west][inner sep=0.75pt]  [rotate=-1.33] [align=left] {\rm\ref{eq:2dToda_0}};
% Text Node
\draw (380.69,51.98) node [anchor=north west][inner sep=0.75pt]  [font=\scriptsize,rotate=-42.57] [align=left] {cont.lim.};
% Text Node
\draw (253.02,105.9) node [anchor=north west][inner sep=0.75pt]  [rotate=-359.16] [align=left] {\rm\ref{eq:1ddToda_0}};
% Text Node
\draw (276.74,82.32) node [anchor=north west][inner sep=0.75pt]  [font=\scriptsize,rotate=-317.35] [align=left] {reduction};
% Text Node
\draw (283.05,134.04) node [anchor=north west][inner sep=0.75pt]  [font=\scriptsize,rotate=-41.75] [align=left] {cont.lim.};
% Text Node
\draw (377.49,164.2) node [anchor=north west][inner sep=0.75pt]  [font=\scriptsize,rotate=-317.27] [align=left] {reduction};
% Text Node
\draw (329.29,161.15) node [anchor=north west][inner sep=0.75pt]  [rotate=-0.65] [align=left] {\rm\ref{eq:1dToda_0}};

\end{tikzpicture}}
\end{figure}
\end{thm}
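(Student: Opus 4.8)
The content of the theorem is that the four equations, together with all of their accompanying data — the scalar linear problem \eqref{eq:2ddToda_scalsys_0}, its rewritings as $2\times2$ and semi-infinite matrix pairs, and the quasideterminant solution $\theta_{l,m,n}=\abs{\Theta_{l,m,n}}_{nn}$ — fit into a single commuting square. The plan is therefore to treat the four arrows one at a time, and for each arrow to verify compatibility on all four levels (equation, scalar Lax pair, matrix Lax pairs, quasideterminant solution); the final step is to check that the two composite maps from \ref{eq:2ddToda_0} down to \ref{eq:1dToda_0} agree.

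First I would dispose of the two reductions, which are purely algebraic. For the horizontal arrow \ref{eq:2ddToda_0} $\to$ \ref{eq:1ddToda_0} I would impose the plane-wave ansatz $\theta_{l,m,n}=\theta_{l+m,n}$, depending only on $s=l+m$ and $n$. Then the two first-order shifts collapse, $\theta_{l+1,m,n}$ and $\theta_{l,m+1,n}$ both becoming $\theta_{s+1,n}$, while $\theta_{l+1,m+1,n}\mapsto\theta_{s+2,n}$ and $\theta_{l+1,m+1,n-1}\mapsto\theta_{s+2,n-1}$; substituting into \ref{eq:2ddToda_0} and relabelling $s$ yields \ref{eq:1ddToda_0} at once. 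The same substitution reduces the coefficients $a_{l,m,n},b_{l,m,n}$ of \eqref{eq:2ddToda_abdef_0} to functions of $(s,n)$ and collapses the two lines of \eqref{eq:2ddToda_scalsys_0}; on the solution side I would trace the constraint back to the seed data, requiring $\varphi_{i,j}$ to depend only on $i+j$, so that $\Theta_{l,m,n}$ acquires a Hankel structure in $l+m$. The reduction \ref{eq:2dToda_0} $\to$ \ref{eq:1dToda_0} is the continuous shadow of this: imposing $\theta_n=\theta_n(x+y)$ identifies $\partial_x$ with $\partial_y$, so that $\brackets{\theta_{n,x}\theta_n^{-1}}_y$ becomes $\brackets{\theta_n'\theta_n^{-1}}'$ while the right-hand side is untouched.

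Next come the two continuous limits, which form the analytic core. Following the strategy of Proposition \ref{thm:2dToda_scalsys}, I would perform the limit directly on the scalar linear problem \eqref{eq:2ddToda_scalsys_0}, so that the nonlinear field equation emerges as the compatibility condition of the limiting Lax pair and the two levels are handled together. Concretely, for \ref{eq:2ddToda_0} $\to$ \ref{eq:2dToda_0} I would introduce a spacing $\varepsilon$, set $x=\varepsilon l$ and $y=\varepsilon m$ while keeping $n$ discrete, gauge the wave function $\psi_{l,m,n}$, expand $\theta_{l,m,n}=\theta_n(x,y)\brackets{1+\varepsilon\,\alpha_n+\varepsilon^2\,\beta_n+\dots}$, and match orders in $\varepsilon$; the coefficients in \eqref{eq:2ddToda_abdef_0} become differential expressions, and at the appropriate order the compatibility of the rescaled system gives \ref{eq:2dToda_0}. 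Throughout one must keep the order of factors fixed, since in the division ring $R$ the combinations $\theta_n^{-1}\theta_{n,x}$ and their analogues do not commute. The same scheme in a single discrete variable turns \ref{eq:1ddToda_0} into \ref{eq:1dToda_0}, and under the same rescaling of the seed data the discrete quasideterminant $\abs{\Theta_{l,m,n}}_{nn}$ should converge to the quasi-Wronskian solving the continuous lattice.

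Finally I would check that the square commutes, namely that reducing first and then passing to the continuous limit produces the same \ref{eq:1dToda_0} — with the same Lax pair and the same solution — as taking the limit first and then reducing. I expect this commutativity, rather than any individual arrow, to be the principal obstacle: one must verify that the plane-wave constraint $\theta_{l,m,n}=\theta_{l+m,n}$ survives the $\varepsilon$-expansion and descends to exactly the diagonal constraint $\theta_n=\theta_n(x+y)$, and that the non-commutative correction terms introduced in the two limits are mutually consistent. Since $x=\varepsilon l$ and $y=\varepsilon m$ force $s=l+m$ to rescale precisely to $x+y$, the two composites already coincide at the level of the equations, and the same bookkeeping then propagates the agreement through the scalar and matrix Lax pairs and the quasideterminant solutions.
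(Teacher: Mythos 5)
Your overall architecture --- four arrows, each checked on the levels of equation, scalar pair, $2\times2$ and semi-infinite matrix pairs, and quasideterminant solution, followed by a check that the two composites from \ref{eq:2ddToda_0} to \ref{eq:1dToda_0} agree --- is exactly the paper's (the last step is its Proposition \ref{thm:1ddto1d}), and your treatment of the two reductions at the level of the equations and of the seed data (Hankel specialisation of the $\varphi_{i,j}$) is essentially right. There are, however, two concrete gaps. First, the continuous limit as you have written it fails: the ansatz $\theta_{l,m,n}=\theta_n(x,y)\brackets{1+\varepsilon\,\alpha_n+\varepsilon^2\beta_n+\dots}$ keeps every $\theta_{l,m,\cdot}$ of order one, so at order $\varepsilon^0$ equation \ref{eq:2ddToda_0} degenerates to the constraint $\theta_{n+1}=\theta_n\theta_{n-1}^{-1}\theta_n$ rather than becoming an identity, and \ref{eq:2dToda_0} never emerges. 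The correct substitution is the weighted one of \eqref{eq:2ddto2d_th}, namely $\theta_n(x,y)=\varepsilon^{l+m}\theta_{l,m,n}$ (with $\psi_n=\varepsilon^{l}\psi_{l,m,n}$); the factor $\varepsilon^{l+m}$ is not a power-series perturbation of $1$ and cannot be absorbed into your $\alpha_n,\beta_n$. It is precisely this weight that pushes the $\theta_{n\pm1}$ terms down to order $\varepsilon^2$, where they balance $\brackets{\theta_{n,x}\theta_n^{-1}}_y\theta_n$.

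Second, you never say where the spectral parameter of the one-dimensional linear problems comes from. If one simply ``collapses the two lines'' of the scalar system under $\theta_{l,m,n}\mapsto\theta_{l+m,n}$, the reduced linear problem carries no parameter and the isospectral pairs for \ref{eq:1ddToda_0} and \ref{eq:1dToda_0} are not obtained, so the ``scalar and matrix linear problems'' part of the statement is not proved. The paper introduces $\lambda$ through a quasi-periodicity of the wave function along the reduced direction, $T_{1,-2}\brackets{\psi_{l,m,n}}=\lambda\,\psi_{l+m,n}$ in the discrete case \eqref{eq:2ddto1dd}, and through light-cone variables $x=\tfrac12(t+z)$, $y=\tfrac12(t-z)$ with $\partial_z$ acting on the $z$-independent reduced functions as multiplication by $\lambda$ in the continuous case \eqref{eq:2dto1d}. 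Your closing worry about commutativity of the square is legitimate but is the easier part: once the weights above are in place, $t=\varepsilon(l+m)=x+y$ makes the two composites agree on all levels, which is exactly what Proposition \ref{thm:1ddto1d} records.
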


\medskip
\subsection{Non-Abelian analogs of some discrete (1 + 1) systems}
\label{sec:mainres2}
The second part of the paper, Section~\ref{sec:2ddTLtosys}, is devoted to reductions of the \ref{eq:2ddToda_0} to different non-abelian analogs of some discrete $(1 + 1)$ systems. We consider  KdV-like and periodic reductions and, as a result, derived non-abelian analogs of  hexagonal discrete Boussinesq equation, discrete Korteweg-de Vries equation and its modified analog, and discrete sine-Gordon lattice. Under commutative reduction, all our analogs coincide with well-known abelian lattices (see, e.g. \cite{date1983method}). We also obtain Lax pairs for these analogs. 

The KdV-like reductions are given by a linear change of the indices $(l, m, n) \mapsto (f(l, m, n), g(l, m, n))$ and lead, in particular, to the discrete Boussinesq and KdV lattices. Regarding such maps for the equations \ref{eq:2ddToda_0}, we arrive to the following non-abelian analogs of hexagonal Boussinesq equation
\begin{align}
    \label{eq:dB_nc_0}
    \tag*{dBSQ}
    \theta_{m, n - 1} \theta_{m - 1, n - 1}^{-1}
    - \theta_{m - 1, n} \theta_{m - 1, n - 1}^{-1}
    &= \theta_{m + 1, n + 1} \theta_{m, n + 1}^{-1}
    - \theta_{m + 1, n + 1} \theta_{m + 1, n}^{-1}
\end{align}
and to the KdV equation
\begin{align}
    \label{eq:dKdV_nc_0}
    \tag*{dKdV}
    &&
    v_{m + 1, n + 1}
    - v_{m, n}
    &= v_{m, n + 1}^{-1}
    - v_{m + 1, n}^{-1},
    &
    v_{m, n}
    &= \theta_{m, n} \theta_{m - 1, n}^{-1}
    .
    &&
\end{align}
These two lattices and their Lax pairs are discussed in Subsection \ref{sec:kdvred}. 

In Subsection \ref{sec:perred}, we consider a periodic reduction $(n \mod 2)$ for the \ref{eq:2ddToda_0}. This leads us to non-abelian discrete sine-Gordon equation
\begin{align}
    \label{eq:dsG_nc_0}
    \tag*{$\text{dsG}_1$}
    \theta_{l + 1, m + 1}
    \theta_{l, m + 1}^{-1}
    - \theta_{l, m}^{-1} \theta_{l, m + 1}^{-1}
    &= \theta_{l + 1, m} \theta_{l, m}^{-1}
    - \theta_{l + 1, m}
    \theta_{l + 1, m + 1}
    .
\end{align}
If one supplements the periodic reduction with a KdV-like reduction, then one can derive the fllowing non-abelian discrete modified KdV equation:
\begin{align}
    \label{eq:dmKdV_nc_1_0}
    \tag*{$\text{dmKdV}_1$}
    \theta_{l + 1, m + 1}^{-1} 
    \theta_{l + 1, m}
    - \theta_{l + 1, m + 1}^{-1} \theta_{l, m + 1}
    &= \theta_{l, m + 1}
    \theta_{l, m}^{-1} 
    - \theta_{l + 1, m}
    \theta_{l, m}^{-1}.
\end{align}
We also present Lax pairs for the above lattices.
Note that in the commutative case these two lattices are connected by the change of variables $\theta_{l, m} \mapsto \theta_{l, m}^{-1}$ for even $m$ and $\theta_{l, m} \mapsto \theta_{l, m}$ for odd $m$ \cite{quispel1991integrable}, which is also applicable in the non-abelian setting. 

Motivated by the well-known link between the dKdV and dmKdV lattices (see, e.g. \cite{nijhoff1995discrete}),
\begin{align}
    &&&&
    {\rm dKdV}
    &&
    \longleftrightarrow
    &&
    {\rm dpKdV}
    &&
    \overset{\text{Miura}}{\longrightarrow}
    &&
    {\rm dmKdV}
    ,
    &&&&
\end{align}
where a dpKdV is a potential KdV lattice \cite{wiersma1987lattice}, we want to establish the same connection for the \ref{eq:dKdV_nc_0} and \ref{eq:dmKdV_nc_1_0} equations. However, an extension of this scheme to a non-abelian case leads to one more analog of the dmKdV lattice. We will briefly describe it now (see also Subsection \ref{sec:pkdv}).

It turns out that equation \ref{eq:dKdV_nc_0} has the following potential form (see Theorem \ref{thm:dKdVtodpKdV_nc})
\begin{align}
    \label{eq:dpKdV_nc_1_0}
    \tag*{dpKdV}
    \brackets{
    u_{m, n + 1} - u_{m + 1, n}
    } \brackets{
    u_{m, n} - u_{m + 1, n + 1}
    }
    &= p \, q.
\end{align}
By generalizing commutative Miura map to the non-abelian case (see Proposition \ref{thm:pKdVtomKdV}), the latter lattice maps to the following non-abelian version of the dmKdV equation
\begin{align}
    \label{eq:dmKdV_nc_2_0}
    \tag*{$\text{dmKdV}_2$}
    \theta_{m, n + 1} \theta_{m + 1, n + 1}^{-1}
    - \theta_{m + 1, n} \theta_{m + 1, n + 1}^{-1}
    = \theta_{m + 1, n} \theta_{m, n}^{-1}
    - \theta_{m, n + 1} \theta_{m, n}^{-1}
    .
\end{align}
To the authors' knowledge, it cannot be obtained directly from the \ref{eq:2ddToda_0} and it is not equivalent to the \ref{eq:dmKdV_nc_1_0}. Note that in the commutative case there is no difference between the \ref{eq:dmKdV_nc_1_0} and \ref{eq:dmKdV_nc_2_0} lattices. Using the same change between dmKdV and dsG equations, one can derive a one more non-abelian analog of the discrete sine-Gordon equation:
\begin{align}
    \label{eq:dsG_nc_2_0}
    \tag*{$\text{dsG}_2$}
    \theta_{m, n + 1}^{-1} \theta_{m + 1, n + 1}
    - \theta_{m + 1, n} \theta_{m + 1, n + 1}
    = \theta_{m + 1, n} \theta_{m, n}^{-1}
    - \theta_{m, n + 1}^{-1} \theta_{m, n}^{-1}
    .
\end{align}
The systems \ref{eq:dpKdV_nc_1_0}, \ref{eq:dmKdV_nc_2_0}, and \ref{eq:dsG_nc_2_0} are integrable in the sense of existence of Lax pairs. 

\medskip
Summarizing the results, we get the following
\begin{thm}
The 2d discrete non-abelian Toda lattice {\rm\ref{eq:2ddToda_0}} and its Lax pair may be reduced to the given non-abelian $(1+1)$ discrete systems and the corresponding Lax pairs, i.e. the following diagram holds
\begin{figure}[H]
    \centering
    \scalebox{.9}{\tikzset{every picture/.style={line width=0.75pt}} %set default line width to 0.75pt        

\begin{tikzpicture}[x=0.75pt,y=0.75pt,yscale=-1,xscale=1]
%uncomment if require: \path (0,244); %set diagram left start at 0, and has height of 244

%Straight Lines [id:da6950381697974539] 
\draw    (359.69,82.85) -- (382.45,70.17) ;
\draw [shift={(384.2,69.2)}, rotate = 150.88] [color={rgb, 255:red, 0; green, 0; blue, 0 }  ][line width=0.75]    (10.93,-3.29) .. controls (6.95,-1.4) and (3.31,-0.3) .. (0,0) .. controls (3.31,0.3) and (6.95,1.4) .. (10.93,3.29)   ;
%Straight Lines [id:da44469564829730823] 
\draw    (360.97,112.85) -- (382.45,124.83) ;
\draw [shift={(384.2,125.8)}, rotate = 209.14] [color={rgb, 255:red, 0; green, 0; blue, 0 }  ][line width=0.75]    (10.93,-3.29) .. controls (6.95,-1.4) and (3.31,-0.3) .. (0,0) .. controls (3.31,0.3) and (6.95,1.4) .. (10.93,3.29)   ;
%Straight Lines [id:da8104199933228868] 
\draw    (435,69.2) -- (461.6,69.2) ;
\draw [shift={(463.6,69.2)}, rotate = 180] [color={rgb, 255:red, 0; green, 0; blue, 0 }  ][line width=0.75]    (10.93,-3.29) .. controls (6.95,-1.4) and (3.31,-0.3) .. (0,0) .. controls (3.31,0.3) and (6.95,1.4) .. (10.93,3.29)   ;
\draw [shift={(433,69.2)}, rotate = 0] [color={rgb, 255:red, 0; green, 0; blue, 0 }  ][line width=0.75]    (10.93,-3.29) .. controls (6.95,-1.4) and (3.31,-0.3) .. (0,0) .. controls (3.31,0.3) and (6.95,1.4) .. (10.93,3.29)   ;
%Straight Lines [id:da7596788370070313] 
\draw    (521.6,69.2) -- (549.2,69.2) ;
\draw [shift={(519.6,69.2)}, rotate = 0] [color={rgb, 255:red, 0; green, 0; blue, 0 }  ][line width=0.75]    (10.93,-3.29) .. controls (6.95,-1.4) and (3.31,-0.3) .. (0,0) .. controls (3.31,0.3) and (6.95,1.4) .. (10.93,3.29)   ;
\draw [shift={(551.2,69.2)}, rotate = 180] [color={rgb, 255:red, 0; green, 0; blue, 0 }  ][line width=0.75]    (10.93,-3.29) .. controls (6.95,-1.4) and (3.31,-0.3) .. (0,0) .. controls (3.31,0.3) and (6.95,1.4) .. (10.93,3.29)   ;
%Straight Lines [id:da9855185521856072] 
\draw    (582.63,85.8) -- (582.25,110.68) ;
\draw [shift={(582.22,112.68)}, rotate = 270.88] [color={rgb, 255:red, 0; green, 0; blue, 0 }  ][line width=0.75]    (10.93,-3.29) .. controls (6.95,-1.4) and (3.31,-0.3) .. (0,0) .. controls (3.31,0.3) and (6.95,1.4) .. (10.93,3.29)   ;
\draw [shift={(582.66,83.8)}, rotate = 90.88] [color={rgb, 255:red, 0; green, 0; blue, 0 }  ][line width=0.75]    (10.93,-3.29) .. controls (6.95,-1.4) and (3.31,-0.3) .. (0,0) .. controls (3.31,0.3) and (6.95,1.4) .. (10.93,3.29)   ;
%Straight Lines [id:da2489590145590368] 
\draw    (279.56,70.14) -- (301.76,81.98) ;
\draw [shift={(277.8,69.2)}, rotate = 28.08] [color={rgb, 255:red, 0; green, 0; blue, 0 }  ][line width=0.75]    (10.93,-3.29) .. controls (6.95,-1.4) and (3.31,-0.3) .. (0,0) .. controls (3.31,0.3) and (6.95,1.4) .. (10.93,3.29)   ;
%Straight Lines [id:da4166648255379246] 
\draw    (279.52,124.86) -- (301.17,113.35) ;
\draw [shift={(277.75,125.8)}, rotate = 332.01] [color={rgb, 255:red, 0; green, 0; blue, 0 }  ][line width=0.75]    (10.93,-3.29) .. controls (6.95,-1.4) and (3.31,-0.3) .. (0,0) .. controls (3.31,0.3) and (6.95,1.4) .. (10.93,3.29)   ;
%Straight Lines [id:da07020645948919313] 
\draw    (243.83,85.8) -- (243.45,110.68) ;
\draw [shift={(243.42,112.68)}, rotate = 270.88] [color={rgb, 255:red, 0; green, 0; blue, 0 }  ][line width=0.75]    (10.93,-3.29) .. controls (6.95,-1.4) and (3.31,-0.3) .. (0,0) .. controls (3.31,0.3) and (6.95,1.4) .. (10.93,3.29)   ;
\draw [shift={(243.86,83.8)}, rotate = 90.88] [color={rgb, 255:red, 0; green, 0; blue, 0 }  ][line width=0.75]    (10.93,-3.29) .. controls (6.95,-1.4) and (3.31,-0.3) .. (0,0) .. controls (3.31,0.3) and (6.95,1.4) .. (10.93,3.29)   ;

% Text Node
\draw (310.15,89.2) node [anchor=north west][inner sep=0.75pt]   [align=left] {\rm\ref{eq:2ddToda_0}};
% Text Node
\draw (389.95,121) node [anchor=north west][inner sep=0.75pt]   [align=left] {\rm\ref{eq:dB_nc_0}};
% Text Node
\draw (389.95,61.2) node [anchor=north west][inner sep=0.75pt]   [align=left] {\rm\ref{eq:dKdV_nc_0}};
% Text Node
\draw (469.95,61.2) node [anchor=north west][inner sep=0.75pt]   [align=left] {\rm\ref{eq:dpKdV_nc_1_0}};
% Text Node
\draw (557.35,61.2) node [anchor=north west][inner sep=0.75pt]   [align=left] {\rm\ref{eq:dmKdV_nc_2_0}};
% Text Node
\draw (567.95,121) node [anchor=north west][inner sep=0.75pt]   [align=left] {\rm\ref{eq:dsG_nc_2_0}};
% Text Node
\draw (218.55,61.2) node [anchor=north west][inner sep=0.75pt]   [align=left] {\rm\ref{eq:dmKdV_nc_1_0}};
% Text Node
\draw (229.15,121) node [anchor=north west][inner sep=0.75pt]   [align=left] {\rm\ref{eq:dsG_nc_0}};

\end{tikzpicture}}
\end{figure}
\end{thm}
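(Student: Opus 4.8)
The statement collects the reductions established in Subsections \ref{sec:kdvred}, \ref{sec:perred}, and \ref{sec:pkdv}, so the proof I have in mind is constructive: it amounts to exhibiting, arrow by arrow, the explicit substitution realizing each reduction together with the map it induces on the scalar Lax pair, and then checking that the compatibility of the reduced linear problem reproduces the reduced equation. I would organize the argument around the two families of arrows leaving the center of the diagram (the KdV-like reductions and the periodic reduction) and the chain of maps along the top and right (potential form, Miura, and even-row inversion).

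First I would treat the two KdV-like reductions. For each I fix an affine change of indices $(l,m,n)\mapsto(f(l,m,n),g(l,m,n))$ that freezes one discrete direction, so that $\theta_{l,m,n}$ becomes a function of two indices only. Substituting this ansatz into \ref{eq:2ddToda_0} and keeping strict track of the order of the non-commuting factors, the three-index lattice collapses: one choice of $(f,g)$ yields the hexagonal equation \ref{eq:dB_nc_0}, and a second choice, after introducing the ratio $v_{m,n}=\theta_{m,n}\theta_{m-1,n}^{-1}$, yields \ref{eq:dKdV_nc_0}. Applying the very same index substitution to the scalar Lax pair \eqref{eq:2ddToda_scalsys_0}--\eqref{eq:2ddToda_abdef_0} and eliminating the frozen direction produces the corresponding two-index linear problems, whose compatibility condition must be verified to return exactly the reduced equation with the correct factor ordering.

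Next I would impose the $n\bmod 2$ periodicity $\theta_{l,m,n}=\theta_{l,m,n+2}$. Collapsing the $n$-direction in \ref{eq:2ddToda_0} gives \ref{eq:dsG_nc_0}, and superimposing one of the KdV-like changes of index gives \ref{eq:dmKdV_nc_1_0}; the left vertical arrow is then the involution $\theta_{l,m}\mapsto\theta_{l,m}^{-1}$ on even rows, which I would check sends one lattice to the other in the non-abelian setting. For the right half of the diagram I would first recast \ref{eq:dKdV_nc_0} in potential form, introducing a potential $u$ whose lattice differences recover the field $v$, so as to obtain \ref{eq:dpKdV_nc_1_0} (this is the content of Theorem \ref{thm:dKdVtodpKdV_nc}); the non-abelian Miura map of Proposition \ref{thm:pKdVtomKdV} then sends \ref{eq:dpKdV_nc_1_0} to \ref{eq:dmKdV_nc_2_0}, and the same even-row inversion yields \ref{eq:dsG_nc_2_0}. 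Assembling these individual arrows gives the full diagram.

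The main obstacle I expect is purely non-commutative bookkeeping rather than any structural difficulty. In the abelian case each reduction and each Miura-type map is routine, and the two modified KdV (respectively sine-Gordon) lattices coincide; non-commutativity forces one to pin down the exact left/right placement of every factor and every inverse, and it is precisely this that makes \ref{eq:dmKdV_nc_1_0} and \ref{eq:dmKdV_nc_2_0} split into inequivalent lattices. The two delicate points are therefore (i) verifying that the compatibility of each reduced scalar Lax pair reproduces the reduced equation with the correct ordering, and (ii) checking that the potential $u$ of Theorem \ref{thm:dKdVtodpKdV_nc} can be consistently defined so that the ordered products on the two sides of \ref{eq:dpKdV_nc_1_0} match, making the Miura map of Proposition \ref{thm:pKdVtomKdV} well posed. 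Once these orderings are fixed, the remaining verifications are direct substitutions.
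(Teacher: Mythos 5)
Your plan coincides with the paper's own proof, which is exactly the arrow-by-arrow assembly of Subsections \ref{sec:kdvred}, \ref{sec:perred}, and \ref{sec:pkdv}: the affine index changes \eqref{eq:2ddTodatodKdV} and \eqref{eq:2ddTodatodB} with $v_{m,n}=\theta_{m,n}\theta_{m-1,n}^{-1}$ for the KdV-like arrows, the $(n\bmod 2)$ reduction with inversion on odd $n$ for \ref{eq:dsG_nc_0} and \ref{eq:dmKdV_nc_1_0}, the even-row inversion \eqref{eq:dsGtodmKdV} for the vertical arrows, and the chain of Theorem \ref{thm:dKdVtodpKdV_nc} and Proposition \ref{thm:pKdVtomKdV} for the right-hand side. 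The only detail to add is that before reducing the scalar Lax pair one must rescale $\psi_{l,m,n}$ by suitable powers of $\lambda$ (as in \eqref{eq:scalrule}) so that the frozen direction leaves behind a genuine spectral parameter; otherwise your verification of the reduced compatibility conditions proceeds exactly as in the paper.
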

It would be interesting to consider different continuous limits of the above discrete systems. We leave this issue for our forthcoming papers.

\medskip
\subsection{Non-Abelian Somos-\texorpdfstring{$N$}{N} like sequences and \texorpdfstring{$q$}{q}-Painlevé equations}
\label{sec:mainres3}
The last part of the paper is also devoted to  reductions of \ref{eq:2ddToda_0}, but in the one-dimensional case. By a plane-wave reduction, equation \ref{eq:2ddToda_0} may be reduced to an autonomous Somos-$N$ like equation together with its Lax pair (see Proposition \ref{thm:TodatoSomosN}). A resulting non-abelian analog of the Somos-$N$ like equation can be written as
\begin{gather}
    \label{eq:SomosN_nc_0}
    \begin{aligned}
    y_{M + N}
    &= \alpha^2 \, y_{M + r}
    + y_{M + N - s} \, 
    \brackets{
    y_M^{-1}
    - \alpha^2 \, y_{M + N - r}^{-1} 
    } \,
    y_{M + s},
    &&&
    N 
    &\in \mathbb{N}_{> 3},
    &
    1 
    &\leq r < s \leq \left[\tfrac{N}{2}\right],
    \end{aligned}
\end{gather}
where $\alpha$ is an arbitrary abelian parameter. Since the equation is one-dimensional and autonomous, its linearization is given by an isospectral Lax pair. We present them together with several explicit examples in Subsection \ref{sec:ncsomos}.

It is known that there exists a connection between Somos-$N$ like sequences and $q$-\Painleve equations \cite{hone2014discrete}. An implementation of the non-autonomous parameter in commutative Somos-$N$ sequences is described in Subsection \ref{sec:somoslikeseq}. Note that the method introduced in \cite{hone2014discrete} is not suitable in the non-abelian case. In order to extend this link to the non-commutative case, we have first implement a non-autonomous parameter in the \ref{eq:2ddToda_0} and then use the plane-wave reduction. It allows us, on the hand, to obtain a non-autonomous Somos-$N$ like equation, but, on the other hand, the corresponding Lax pair remains unknown. Let us stress that one-dimensional non-autonomous systems no longer have an isospectral Lax pair and one needs to use a Zakharov-Shabat type pair. 

A non-autonomous non-abelian Somos-$N$ equation derived in Subsection \ref{sec:qPainleve} is given by
\begin{gather}
    \label{eq:SomosN_nc_alM_0}
    \begin{aligned}
    \begin{aligned}
    y_{M + N}
    &= \alpha_M \, y_{M + r}
    + y_{M + s} \, 
    \brackets{
    y_M^{-1}
    - y_{M + N - r}^{-1} \, \alpha_{M - r} 
    } \,
    y_{M + N - s},
    &
    \alpha_M
    &= \alpha \, q^M,
    \end{aligned}
    \\
    \begin{aligned}
    N 
    &\in \mathbb{N}_{> 3},
    &
    1 
    &\leq r < s \leq \left[\tfrac{N}{2}\right].
    \end{aligned}
    \end{aligned}
\end{gather}
Here $\alpha$ is a non-abelian constant parameter. Regarding the analogs of the changes of variables in the paper \cite{hone2014discrete}, this sequence turns into either the $q$PI or $q$PII hierarchy. Namely, let $r = 1$ and $s = 2$, then the changes of variables
\begin{align}
    u_{M}
    &= y_{M + 3} y_{M + 2}^{-1},
    &
    u_{M}
    &= y_{M + 4} y_{M + 2}^{-1}
\end{align}
for even $N = 2 n$ and odd $N = 2 n + 1$ in \eqref{eq:SomosN_nc_alM_0} lead to non-abelian analogs of $q$PI and $q$PII hierarchies, respectively,
\begin{gather}
    \tag*{$q\text{PI}[n]$}
    \label{eq:qPIn_nc_0}
    \begin{aligned}
        u_{M + N - 3} \,
        u_{M + N - 4}
        - u_{M - 1} \, u_{M - 2}
        &= \prod_{k = 0}^{N - 4}
        \brackets{
        \alpha_M \, u_{M + k - 1}^{-1}
        - u_{M + k}^{-1} \, \alpha_{M - 1}
        },
    \end{aligned}
    \\[0mm]
    \tag*{$q\text{PII}[n]$}
    \label{eq:qPIIn_nc_0}
    \begin{aligned}
        \prod_{k = 1 - [\frac{N}{2}]}^0
        u_{M - 2 k - 1}
        - u_{M - 2} \, \prod_{k = 2 - [\frac{N}{2}]}^0
        u_{M - 2 k - 3}
        &= \alpha_M
        - \prod_{k = 0}^{[\frac{N}{2}] - 2}
        u_{M + 2k - 2}^{-1} \,\, 
        \alpha_{M - 1}
        \prod_{k = 2 - [\frac{N}{2}]}^{0}
        u_{M - 2k - 3}.
    \end{aligned}
\end{gather}
A few explicit examples of the above equations and their connection with abelian $q$-\Painleve equations can be found in Subsection \ref{sec:qPainleve}.

\section{Preliminaries}
\label{sec:prel}
\subsection{Hirota's equations}
\label{sec:hirotaeq}

The famous Hirota equation \cite{hirota1981discrete}
\begin{align}
    \label{eq:Hirota}
    &&
    \tau_{l + 1, m, n + 1} \, \tau_{l, m + 1, n}
    &= \tau_{l + 1, m, n} \, \tau_{l, m + 1, n + 1}
    + \tau_{l + 1, m + 1, n} \, \tau_{l, m, n + 1},
    &
    \tau_{l, m, 0}
    &= 1,
    &&
\end{align}
where $\brackets{l, m, n + 1} \in \mathbb{Z}_{\geq 0}^3$, is the most general discrete equation from which one can obtain many other important discrete, semi-discrete, and continuous systems. In particular, this equation is a discrete generalization of the two-dimensional Toda field equation that was introduced in \cite{hirota1993difference} (see eq. (35)):
\begin{align}
    \label{eq:2ddToda}
    &&
    \tau_{l + 1, m + 1, n} \, \tau_{l, m, n}
    &= \tau_{l + 1, m + 1, n - 1} \, \tau_{l, m, n + 1}
    + \tau_{l + 1, m, n} \, \tau_{l, m + 1, n}.
    &&
\end{align}
This equation is connected with \eqref{eq:Hirota} by the map
\begin{align}
    \label{eq:2ddTodatoHirota}
    (l, m, n)
    &\mapsto (l, - m + 1, n + m)
    = (
    \tilde{l}, 
    \tilde{m} + 1, 
    \tilde{n}
    ) 
    ,
\end{align}
where we omitted the sign $\tilde{\phantom{\tau}}$ in \eqref{eq:Hirota}. The Hirota equation can be written in different forms and is also known as a discrete analog of the Kadomtsev–Petviashvili equation \cite{zabrodin1997hirota}.

In order to derive \eqref{eq:Hirota}, R. Hirota used the method of constructing equations whose solutions are expressed via discrete solitons. Moreover,  Hirota equation \eqref{eq:Hirota} is a consequence of the following scalar linear problem
\begin{gather}
    \label{eq:Hirota_scalsys}
    \left\{
    \begin{array}{rcl}
         \psi_{l + 1, m, n}
         - \psi_{l, m, n + 1}
         &=& a_{l, m, n} \, \psi_{l, m, n},
         \\[2mm]
         \psi_{l, m + 1, n}
         - \psi_{l, m, n + 1}
         &=& b_{l, m, n} \, \psi_{l, m, n},
         \\[2mm]
         \psi_{l + 1, m, n}
         - \psi_{l, m + 1, n}
         &=& c_{l, m, n}
         \, \psi_{l, m, n},
    \end{array}
    \right.
    \\[3mm]
    \begin{gathered}
    a_{l, m, n}
    = \tau_{l, m, n} \tau_{l + 1, m, n + 1} (\tau_{l + 1, m, n} \tau_{l, m, n + 1})^{-1},
    \qquad
    b_{l, m, n}
    = \tau_{l, m, n} \tau_{l, m + 1, n + 1} (\tau_{l, m + 1, n} \tau_{l, m, n + 1})^{-1},
    \\[2mm]
    c_{l, m, n}
    = \tau_{l, m, n} \tau_{l + 1, m + 1, n} (\tau_{l + 1, m, n} \tau_{l, m + 1, n})^{-1}.
    \end{gathered}
\end{gather}
We would like to stress that it is enough to consider any two equations of system \eqref{eq:Hirota_scalsys} and the third equation can be recovered from the chosen couple. Let us choose the first two equations, since they can be rewritten in terms of the function $\theta_{l, m, n} = \tau_{l, m, n} \tau_{l, m, n - 1}^{-1}$, and write the linear problem for the 2d discrete Toda lattice \eqref{eq:2ddToda}, by using  \eqref{eq:2ddTodatoHirota}. Then we obtain the system
\begin{gather}
    \label{eq:2ddToda_scalsys_}
    \left\{
    \begin{array}{rcl}
         \psi_{l + 1, m, n}
         &=& \psi_{l, m, n + 1} + a_{l, m, n} \, \psi_{l, m, n},
         \\[2mm]
         \psi_{l, m - 1, n + 1}
         &=& \psi_{l, m, n + 1}
         + b_{l, m, n} \, \psi_{l, m, n},
    \end{array}
    \right.
    \\[3mm]
    \label{eq:2ddToda_abdef_}
    \begin{gathered}
    a_{l, m, n}
    = \theta_{l + 1, m, n} \theta_{l, m, n}^{-1},
    \qquad
    b_{l, m, n}
    = \theta_{l, m - 1, n + 1} \theta_{l, m, n}^{-1}
    .
    \end{gathered}
\end{gather}
Its compatibility condition can be written as
\begin{align}
    T_{- 2, 3} \brackets{
    \psi_{l + 1, m, n}
    }
    &= \psi_{l + 1, m - 1, n + 1}
    = T_{1} \brackets{
    \psi_{l, m - 1, n + 1}
    },
\end{align}
where $T_{\pm i}$, $i = 1, 2, 3$ is a notation for the shift operators
\begin{align}
    &&
    T_{\pm 1} \brackets{f_{l, m, n}}
    &= f_{l \pm 1, m, n},
    &
    T_{\pm 2} \brackets{f_{l, m, n}}
    &= f_{l, m \pm 1, n},
    &
    T_{\pm 3} \brackets{f_{l, m, n}}
    &= f_{l, m, n \pm 1},
    &&
\end{align}
leads to the system for the functions $a_{l, m, n}$ and $b_{l, m, n}$
\begin{align}
\left\{
    \begin{array}{rcl}
        b_{l, m, n + 1} + a_{l, m - 1, n + 1}
        &=& a_{l, m, n + 1} + b_{l + 1, m, n},
        \\[2mm]
        a_{l, m - 1, n + 1} b_{l, m, n}
        &=& b_{l + 1, m, n} a_{l, m, n}
    \end{array}
    \right.
\end{align}
that, after substitution of \eqref{eq:2ddToda_abdef_} into it, reduces to the equation
\begin{align}
    \label{eq:2ddToda_}
    \theta_{l + 1, m + 1, n} \theta_{l, m + 1, n}^{-1}
    = \theta_{l, m, n + 1} \theta_{l, m + 1, n}^{-1}
    + \theta_{l + 1, m, n} \theta_{l, m, n}^{-1} 
    - \theta_{l + 1, m, n} \theta_{l + 1, m + 1, n - 1}^{-1}.
\end{align}
Section \ref{sec:2ddToda} is devoted to a non-abelian generalisation for \eqref{eq:2ddToda_}. 

Let us show how the latter is related to \eqref{eq:2ddToda}. 
Recall that $\theta_{l, m, n} = \tau_{l, m, n} \, \tau_{l, m, n - 1}^{-1}$. Then, \eqref{eq:2ddToda_} can be rewritten in terms of $\tau_{l, m, n}$ as follows
\begin{gather}
     \big(\theta_{l + 1, m + 1, n} - \theta_{l, m, n + 1} \big)
    \theta_{l, m + 1, n}^{-1}
    = \theta_{l + 1, m, n} 
    \brackets{\theta_{l, m, n}^{-1} - \theta_{l + 1, m + 1, n - 1}^{-1}},
    \\[3mm]
    \begin{multlined}
    \brackets{
    \tau_{l + 1, m + 1, n} \tau_{l, m, n}
    - \tau_{l, m, n + 1} \tau_{l + 1, m + 1, n - 1}
    } \, \tau_{l + 1, m, n}^{-1} \tau_{l, m + 1, n}^{-1}
    \hspace{4cm}
    \\[1mm]
    = \brackets{
    \tau_{l + 1, m + 1, n - 1} \tau_{l, m, n - 1}
    - \tau_{l, m, n} \tau_{l + 1, m + 1, n - 2}
    } \, \tau_{l + 1, m, n - 1}^{-1} \tau_{l, m + 1, n - 1}^{-1}
    .
    \end{multlined}
\end{gather}
Since the l.h.s. is differ from the r.h.s. by a shift of $n$, we obtain the equation
\begin{align}
    &&
    \tau_{l + 1, m + 1, n} \tau_{l, m, n}
    &= \tau_{l, m, n + 1} \tau_{l + 1, m + 1, n - 1}
    + \beta_{l, m, n} \, \tau_{l + 1, m, n} \tau_{l, m + 1, n}, 
    &
    \beta_{l, m, n + 1}
    &= \beta_{l, m, n},
    &&
\end{align}
which coincides with \eqref{eq:2ddToda} when $\beta_{l, m, n} = 1$. 

Note that both \eqref{eq:2ddToda} and \eqref{eq:2ddToda_scalsys_} can contain parameters useful for the reductions. In particular, the following change
\begin{align}
    \label{eq:scalrule}
    &&
    \tau_{l, m, n}
    &\mapsto \tau_{l, m, n} \, \alpha^{(l + m) n} \, \beta^{l m}
    ,
    &
    \psi_{l, m, n}
    &\mapsto \lambda_1^{-l} \lambda_2^{m} \lambda_3^{-n} \, \psi_{l, m, n}
    &&
\end{align}
brings the equation and system to the form
\begin{align}
    \label{eq:2ddToda_al}
    &&
    \tau_{l + 1, m + 1, n} \, \tau_{l, m, n}
    &= \alpha^2 \, \tau_{l + 1, m + 1, n - 1} \, \tau_{l, m, n + 1}
    + \beta \, \tau_{l + 1, m, n} \, \tau_{l, m + 1, n},
    &&
\end{align}
\begin{gather}
    \label{eq:2ddToda_scalsys_al}
    \left\{
    \begin{array}{rcl}
         \lambda_1 \, \psi_{l + 1, m, n}
         &=& \lambda_3 \, \psi_{l, m, n + 1} + a_{l, m, n} \, \psi_{l, m, n},
         \\[2mm]
         \lambda_2 \lambda_3 \, \psi_{l, m - 1, n + 1}
         &=& \lambda_3 \, \psi_{l, m, n + 1}
         + b_{l, m, n} \, \psi_{l, m, n},
    \end{array}
    \right.
    \\[3mm]
    \label{eq:2ddToda_abdef_al}
    \begin{gathered}
    a_{l, m, n}
    = \alpha^{-1} \, \theta_{l + 1, m, n} \theta_{l, m, n}^{-1},
    \qquad
    b_{l, m, n}
    = \alpha \, \theta_{l, m - 1, n + 1} \theta_{l, m, n}^{-1}
    .
    \end{gathered}
\end{gather}
\begin{rem}
Regarding $\theta_{l, m, n}$, the map \eqref{eq:scalrule} turns into $\theta_{l, m, n} \mapsto \theta_{l, m, n} \alpha^{l + m}$.
\end{rem}

\begin{rem}
\label{rem:theta_qlmn}
One more useful map is given by the scaling 
\begin{align}
    \label{eq:Toda_change}
    \tau_{l, m, n}
    \mapsto 
    \alpha^{(l + m) \, n} \beta^{l \, m} \, q^{\frac12 k_1 l (l - 1) n} \, 
    q^{\frac12 k_2 m (m - 1) n} \, 
    q^{\frac12 k_3 l \, n \, (l + n)} 
    \, 
    \tau_{l, m, n},
\end{align}
where $\alpha$, $\beta$, $q$, $k_i$ are constants.
It changes \eqref{eq:2ddToda} as
\begin{align}
    \label{eq:2ddToda_albet}
    &&
    \tau_{l + 1, m + 1, n} \, \tau_{l, m, n}
    &= \alpha_{l, m, n} \, \tau_{l + 1, m + 1, n - 1} \, \tau_{l, m, n + 1}
    + \beta \, \tau_{l + 1, m, n} \, \tau_{l, m + 1, n},
    &
    \alpha_{l, m, n}
    &= \alpha^2 \, q^{k_1 l + k_2 m + k_3 n}.
    &&
\end{align}
Here we have an additional parameter $q$, depending on $l$, $m$, and $n$. In other words, it is non-autonomous.
Note that this map can be rewritten in terms of the function $\theta_{l, m, n}$ as 
\begin{align}
    \label{eq:2ddToda_albet_th}
    \theta_{l, m, n}
    \mapsto 
    {\alpha}^{l + m} \, 
    q^{\frac12 k_1 l (l - 1)} \, 
    q^{\frac12 k_2 m (m - 1)} \, 
    q^{\frac12 k_3 l (2 n + l - 1)} 
    \, 
    \theta_{l, m, n}.
\end{align}
\end{rem}

\subsection{Somos-like sequences}
\label{sec:somoslikeseq}
In this subsection we give a brief overview of the Somos-$N$ type equation
\begin{align}
    \label{eq:SomosN_can}
    &&
    x_{M + N} \, x_{M}
    &= \alpha^2 \, x_{M + N - r} \, x_{M + r} 
    + \beta \, x_{M + N - s} \, x_{M + s},
    &&&
    N 
    &\in \mathbb{N}_{>3},
    &
    1 
    &\leq r < s \leq \left[\tfrac{N}{2}\right]
    &&
\end{align}
and its relations to the two-dimensional Toda lattice and discrete \Painleve equations. Here the parameters $\alpha$, $\beta$ are arbitrary constants and, in the initial sequence, they are equal to one. Note that the equation \eqref{eq:SomosN_can} is also known as the two-term version of the Gale–Robinson sequence (see eq. (1.10) in \cite{fomin2002laurent}). 

Regarding a plane-wave reduction, one may bring the 2d discrete Toda equation \eqref{eq:2ddToda_al} to the Somos-$N$ sequence \eqref{eq:SomosN_can}. Indeed, taking the change
\begin{align}
    &&
    x_M
    &:= \tau_{l, m, n},
    &
    \varphi_{M}
    &:= \psi_{l, m, n},
    &
    M 
    &= (N - s) \, l + s \, m + r \, n,
    &&
\end{align}
the 2d discrete Toda \eqref{eq:2ddToda_al} turns into \eqref{eq:SomosN_can}, while the Lax pair \eqref{eq:2ddToda_scalsys_al} with $\lambda_1 = 1$, $\lambda_3 = \mu$, and $\lambda_2 = \lambda \, \mu^{-1}$ becomes (cf. with Proposition 2.1 in \cite{hone2017reductions})
\begin{gather}
    \label{eq:SomosN_scalsys}
    \left\{
    \begin{array}{rcl}
         \varphi_{M + N - s}
         &=& \mu \, \varphi_{M + r} 
         + a_{M} \, \varphi_{M},
         \\[2mm]
         \lambda \, \varphi_{M}
         &=& \mu \, \varphi_{M + s}
         + b_{M} \, \varphi_{M + s - r},
    \end{array}
    \right.
    \\[3mm]
    \begin{aligned}
        a_{M}
        &= \alpha^{-1} \,\, 
        x_{M + N - s} x_{M + N - s - r}^{-1} \, 
        x_{M}^{-1}
        x_{M - r}
        ,
        &&&&&
        b_{M}
        &= \alpha \,\, 
        x_M x_{M - r}^{-1} \, 
        x_{M + s - r}^{-1}
        x_{M + s - 2 r}
        .
    \end{aligned}
\end{gather}
Here $\lambda$ and $\mu$ are spectral parameters. Note that one can introduce $y_M = x_{M} x_{M - r}^{-1}$ and then $\theta_{l, m, n} \mapsto y_M$. The system \eqref{eq:SomosN_scalsys} can be written in the matrix form
\begin{align}
    \left\{
    \begin{array}{rcl}
         \mathcal{L}_M \, \Phi_M
         &=& \lambda \, \Phi_M,
         \\[2mm]
         \Phi_{M + 1}
         &=& \mathcal{M}_M \, \Phi_M,
    \end{array}
    \right.
\end{align}
where $\mathcal{L}_M = \mathcal{L}_M (\mu)$ and $\mathcal{M}_M = \mathcal{M}_M(\mu)$, by using the vector-function $\Phi_M = \begin{pmatrix} \varphi_{M + N - s - 1} & \dots & \varphi_{M + 1} & \varphi_M \end{pmatrix}$. Its compatibility condition 
\begin{align}
    \mathcal{L}_{M + 1} \, \mathcal{M}_M
    &= \mathcal{M}_M \, \mathcal{L}_M
\end{align}
gives rise to
\begin{align}
    \brackets{
    x_{M + N} x_M - \alpha^2 x_{M + N - r} x_{M + r}
    } \, x_{M + N - s}^{-1} x_{M + s}^{-1}
    &= \brackets{
    x_{M + N - r} x_{M - r} - \alpha^2 x_{M + N - 2 r} x_{M}
    } \, x_{M + N - s - r}^{-1} x_{M + s - r}^{-1}.
\end{align}
Here we do not present an explicit form of the matrices $\mathcal{L}_M$, $\mathcal{M}_M$. This fact follows from the reduced compatibility condition of system \eqref{eq:2ddToda_scalsys_al}. 

Similarly to the 2d discrete Toda case, the l.h.s. is differ from the r.h.s. by a shift of $M$. So, we may consider equation
\begin{align}
    &&
    x_{M + N} x_M
    &= \alpha^2 x_{M + N - r} x_{M + r} 
    + \beta_M \, x_{M + N - s} x_{M + s},
    &
    \beta_{M + r}
    &= \beta_M,
    &&
\end{align}
which coincides with \eqref{eq:SomosN_can} if $\beta_M $ is an autonomous constant.

\medskip
In \cite{hone2014discrete}, A. Hone and R. Inoue  studied a connection between Somos-$N$ type sequences and discrete \Painleve equations.
It turns out that the Somos-$4$ and Somos-$5$ sequences are connected with $q$PI and $q$PII equations, respectively. Moreover, one can derive a hierarchy of discrete equations containing either  $q$PI or $q$PII as the first member of the hierarchy. We briefly discuss these results below.

Consider the Somos-4 equation with $\alpha = \beta = 1$
\begin{gather}
    \label{eq:Somos4_can}
    \begin{aligned}
    x_{M + 4} \, x_{M}
    &= x_{M + 3} \, x_{M + 1} 
    + x_{M + 2}^2.
    \end{aligned}
\end{gather}
Regarding the so-called $Y$-variables
$y_M = x_{M + 2} \,\, x_{M + 1}^{-2} \,\, x_M$,
\eqref{eq:Somos4_can} becomes
\begin{align}
    \label{eq:qPI_0}
    y_{M + 2} \,\, y_{M + 1}^2 \,\, y_{M}
    &= 1 + y_{M + 1}.
\end{align}
Note that the quantity
$ \mathcal{Z}_M
    := y_{M + 2} \,\, y_{M + 1}^{2} \,\, y_{M} 
    \brackets{
    1 + y_{M + 1}
    }^{-1}$
satisfies the condition
\begin{align}
    \mathcal{Z}_{M + 2} \,\,
    \mathcal{Z}_{M + 1}^{-2} \,\,
    \mathcal{Z}_M 
    &= 1.
\end{align}
Taking logarithms of the latter, it turns into a linear difference equation whose solution is given by $\log \mathcal{Z}_M = \alpha \, M + \beta$ and one gets $\mathcal{Z}_M = \beta \, q^{M}$ for $\beta$, $q \in \mathbb{C}^*$. Therefore, the general solution of the Somos-$4$ equation \eqref{eq:qPI_0} is given by a solution of the second-order recurrence
\begin{align}
    \label{eq:qPI_1}
    y_{M + 2} \,\, y_{M + 1}^2 \,\, y_{M}
    &= \beta \, q^{M} \, \, \brackets{1 + y_{M + 1}}.
\end{align}
The gauge transformation 
$y_M = \beta^{-1} \alpha_M \, u_M$ leads to the equation
\begin{align}
    &&&&
    u_{M + 2} \, u_{M + 1}^2 \, u_{M}
    &= \alpha_{M + 1} \, u_{M + 1}
    + \beta,
    &
    \alpha_{M + 2} \, \alpha_{M + 1}^2 \, \alpha_{M}
    &= \beta^4 q^M.
    &&&&
\end{align}
A solution of the recurrence $\alpha_{M + 2} \, \alpha_{M + 1}^2 \, \alpha_{M} = \beta^4 q^M$ can be written as $\alpha_M = \pm \alpha \, q^M$. After the shift $M \mapsto M + 1$, we arrive at the canonical form of the $q$PI equation
\begin{align}
    \label{eq:qPI}
    u_{M + 1} \, u_{M}^2 \, u_{M - 1}
    &= \alpha \, q^{M} \, u_{M} + \beta
    .
\end{align}
In the paper \cite{sakai1914problem}, this equation is labeled as $q\text{-P}(A_7)$ (see eq. (2.44) therein). 

Similarly, one can reduce the Somos-$5$ equation
\begin{gather}
    \label{eq:Somos5_can}
    \begin{aligned}
    x_{M + 5} \, x_{M}
    &= x_{M + 4} \, x_{M + 1} 
    + x_{M + 3} \, x_{M + 2}
    \end{aligned}
\end{gather}
to the $q$PII equation of the form
\begin{align}
    \label{eq:qPII}
    u_{M + 1} \, u_{M} \, u_{M - 1}
    &= \alpha \, q^M \, u_{M} + \beta,
\end{align}
due to the existence of the function $\mathcal{Z}_M$ satisfying the condition 
\begin{align}
    \mathcal{Z}_{M + 3} \,\,
    \mathcal{Z}_{M + 2}^{-1} \mathcal{Z}_{M + 1}^{-1} \,\,
    \mathcal{Z}_M 
    &= 1.
\end{align} 
The equation \eqref{eq:qPII} can be rewritten in the asymmetric form (eq. (6.4) in \cite{ramani1996discrete} or eq. (26) in \cite{kruskal2000asymmetric})
\begin{align}
    \label{eq:asymqPII}
    \brackets{
    u_{M + 1} u_{M}
    - 1
    } \, \brackets{
    u_{M} u_{M - 1} 
    - 1
    }
    &= \frac{\alpha q^{-M} \, u_{M}}{\beta q^M \, u_{M} + \gamma}
\end{align}
then it coincides with $q\text{-P}(A_6)$ from \cite{sakai1914problem} (see eq. (2.43) therein).

Considering more general Somos-type sequences \eqref{eq:SomosN_can} with $r = 1$ and $s = 2$ and $\alpha = \beta = 1$, one is able to obtain a higher-order generalization of $q$PI and $q$PII equations. For this purpose, let us look at the family of bilinear systems
\begin{align}
    \label{eq:Tz_sys}
    x_{M + N} \, x_{M}
    &= \mathcal{Z}_M \, \brackets{
    x_{M + N - 1} \, x_{N + 1}
    + x_{M + N - 2} \, x_{M + 2}
    },
    &
    \mathcal{Z}_{M + N - 2} \, 
    \mathcal{Z}_{M + N - 3}^{-1} \mathcal{Z}_{M + 1}^{-1} \,
    \mathcal{Z}_M
    &= 1.
\end{align}
Note that it is always possible to make a gauge transformation to move the non-autonomous coefficient entirely into either the first or the second term in the right-hand side of \eqref{eq:Tz_sys}. Note also that a solution of the $\mathcal{Z}$-system in \eqref{eq:Tz_sys} is given by
\begin{align}
    &&
    \mathcal{Z}_M
    &= \beta_M \, q^M,
    &
    \beta_{M + N - 3}
    &= \beta_M.
    &&
\end{align}
Then, when $N$ is even and $N \geq 4$,  substitution $u_M = x_{M + 2} \, x_{M + 1}^{-2} \, x_{M}$ gives the $q$PI hierarchy
\begin{align}
    \label{eq:qPIn_}
    u_{M + N - 2} \, \brackets{
    \prod_{j = 1}^{N - 3} u_{M + j}
    }^2 \, u_{M}
    &= \mathcal{Z}_M \, \brackets{
    1 + \prod_{k = 1}^{N - 3} u_{M + k}
    },
\end{align}
while for odd $N \geq 5$, the change $u_{M} = x_{M + 3} \, x_{M + 2}^{-1} x_{M + 1}^{-1} \, x_{M}$  yields the $q$PII hierarchy
\begin{align}
    \label{eq:qPIIn_}
    \prod_{j = 0}^{N - 3} u_{M + j}
    &= \mathcal{Z}_M \, \brackets{
    1 + \prod_{k = 1}^{\frac12 (N - 5)} u_{M + 2 k + 1}
    }.
\end{align}

In Section \ref{sec:qPainleve}, we construct non-abelian analogs of the equations discussed above. We suggest another method of introducing non-autonomous constants in Somos-$N$ sequences. It is based on Remark \ref{rem:theta_qlmn}.

\begin{rem}
\label{rem:spSomos7}
Let $N = 7$, $r = 1$, $s = 3$ and $\alpha = \beta = 1$ in \eqref{eq:SomosN_can}. It can be written as
\begin{align}
    x_{M + 7} \, x_{M + 4}^{-1} x_{M + 3}^{-1} \, x_{M}
    &= x_{M + 6} \, x_{M + 4}^{-1} x_{M + 3}^{-1} \, x_{M + 1}
    + 1.
\end{align}
By introducing new variable \cite{hone2014discrete}
\begin{align}
    u_{M}
    &= x_{M + 6} \, x_{M + 4}^{-1} x_{M + 3}^{-1} \, x_{M + 1},
\end{align}
this special case of the Somos-$7$ equation becomes the Gauss recurrence relation of period $5$ (see, e.g., \cite{iwao2012bilinear}):
\begin{align}
    \label{eq:GRP5_can}
    u_{M + 1} \, u_{M - 1}
    &= u_{M} + 1.
\end{align}
\end{rem}

\subsection{Non-Abelian setting}
\label{sec:ncset}

Here we introduce main definitions on which the subsequent sections are based.

Let $R$ be an associative unital division ring over a field $F$ s.t. $\cchar F = 0$. Note that $R$ is a generalization of the ring of rational functions over the field $\mathbb{C}$ to a non-abelian case. 
For the brevity, we identify the unit of the field with the unit of the ring, hoping that this will not lead to misunderstanding. Note that elements of $R$ do not necessarily commute with each other, while the field $F$ belongs to the center $\mathcal{Z}(R)$ of $R$. Below we will often refer to the elements of the ring $R$ as to \textit{functions}.
Note that on $R$ we have the following involution. 

\begin{defn}
\label{def:tau}
An $F$-linear map 
$\tau: R \to R$ 
defined by 
\begin{itemize}
    \item[(a)] 
    $
    \begin{aligned}
    \tau(\alpha)
    &= \alpha,
    &&
    \text{for any}
    &
    \alpha
    &\in F,
    \end{aligned}
    $
    
    \item[(b)] 
    $
    \begin{aligned}
    \tau(f)
    &= f,
    &
    &&
    \text{for any generator $f$ of }
    R,
    \end{aligned}
    $
    
    \item[(c)]
    $
    \begin{aligned}
    \tau(P \, Q)
    &= \tau(Q) \, \tau(P), 
    &&
    \text{for any}
    &
    P, \, Q
    &\in R
    \end{aligned}
    $
\end{itemize}
is called {\it a transposition}.
\end{defn}

We remark that this involution can be extended to the matrices $X = \brackets{x_{i j}}$ over $R$ as follows: 
\begin{align}
    \label{tau_mat}
    \tau(X)
    &= \brackets{
    \tau(x_{j i})
    }.
\end{align}

In addition to the transposition, let us introduce a derivation of $R$.
\begin{defn}
\label{def:der}
Consider an $F$-linear map $\partial : R \to R$ satisfying the properties
\begin{enumerate}
    \item[(a)] 
    $\partial (\alpha) = 0$ for any $\alpha \in F$,
        
    \item[(b)]
    $\partial \left(F \, G\right) 
    = \partial
    \left(F\right) \,\, G 
    + F \,\, \partial \left(G\right)$ 
    for any $F$, $G \in R$,
\end{enumerate}
Such a map we call \textit{a derivation} of the ring $R$ and set $\partial (F) = F'$.
\end{defn}

\begin{rem}
\label{rem:tr_der}
Note that the transposition and the derivation commute with each other:
\begin{align}
    \tau\brackets{
    \partial (F)
    }
    &= \partial \big(
    \tau(F)
    \big)
    .
\end{align}
\end{rem}

One can extend this construction to imitate the case of functions depending on several variables.  
Namely, if $\partial_x$, $\partial_y$ are two commuting derivatives satisfying Definition \ref{def:der} such that there exist elements $x$, $y$ of $\mathcal{Z}(R)$ which $\partial_x(x) = \partial_y(y) = 1$, $\partial_x(y) = \partial_y(x) = 0$. In~this case, we use the notation $\partial_x(F) = F_x$, $\partial_y(F) = F_y$, and so on.

Below we will use non-abelian functions depending on the indices $(l, m, n)$ or variables $x$, $y$. Namely, we remark that functions $f_{l, m, n} \in R$ are dynamical variables at site $(l, m, n) \in \mathbb{Z}^3$, while the function $g_n(x, y) \in R$ is at site $n \in \mathbb{Z}$ and depends on abelian variables $x$, $y \in \mathcal{Z}(R)$. A good example for the function $g_n(x, y)$ that is helpful to keep in mind is matrix-valued functions on a lattice. Note that such a function can be expanded into a Taylor series. We will use this fact for continuous limits. 

\subsection{Quasideterminants}
\label{sec:quasidet}

Following to \cite{gel1991determinants}, \cite{gelfand2005quasideterminants}, we recall the definition of quasideterminants and several their properties. Consider an $n \times n$ matrix $X = (x_{i j})$ whose entries are deﬁned in $R$. Consider also the block decomposition of the matrix $X$ and its inverse $X^{-1} = Y$
\begin{align}
    &&
    X
    &= 
    \begin{blockarray}{ccc}
      J_1 & J_2 &  \\
    \begin{block}{(cc)c}
      X_{I_1 J_1} & X_{I_1 J_2} & I_1  \\
      X_{I_2 J_1} & X_{I_2 J_2} & I_2  \\
    \end{block}
    \end{blockarray},
    &
    X^{-1}
    &= Y = 
    \begin{blockarray}{ccc}
      I_1 & I_2 &  \\
    \begin{block}{(cc)c}
      Y_{J_1 I_1} & Y_{J_1 I_2} & J_1  \\
      Y_{J_2 I_1} & Y_{J_2 I_2} & J_2  \\
    \end{block}
    \end{blockarray},
    &
    |I_1|
    &= |J_1|,
    &
    |I_2|
    &= |J_2|.
    &&
\end{align}
Then, by the block multiplication, we get
\begin{align}
    Y_{J_1 I_1}
    &= \brackets{
    X_{I_1 J_1} 
    - X_{I_1 J_2} \, X_{I_2 J_2}^{-1} \, X_{I_2 J_1}
    }^{-1},
    &
    Y_{J_1 I_2}
    &= - X_{I_1 J_1}^{-1} \, X_{I_1 J_2} \,
    \brackets{
    X_{I_2 J_2} 
    - X_{I_2 J_1} \, X_{I_1 J_1}^{-1} \, X_{I_1 J_2}
    }^{-1},
    \\[1mm]
    Y_{J_2 I_2}
    &= \brackets{
    X_{I_2 J_2} 
    - X_{I_2 J_1} \, X_{I_1 J_1}^{-1} \, X_{I_1 J_2}
    }^{-1},
    &
    Y_{J_2 I_1}
    &= - X_{I_2 J_2}^{-1} \, X_{I_2 J_1} \,
    \brackets{
    X_{I_1 J_1} 
    - X_{I_1 J_2} \, X_{I_2 J_2}^{-1} \, X_{I_2 J_1}
    }^{-1}.
\end{align}
In particular, setting $I_1 = {i}$, $J_1 = {j}$, we obtain that the entries of the inverse of $X$ are given by the recursive expression
\begin{align}
    y_{ji}
    &= \brackets{
    x_{ij}
    - X_{i J_2} \, (X_{I_2 J_2})^{-1} \, X_{I_2 j}
    }^{-1},
    &
    I_2
    &= \{ 1, \dots, n\} \setminus \{ i \},
    &
    J_2
    &= \{ 1, \dots, n\} \setminus \{ j \}.
\end{align}

Let $r_i^j$ represent the $i$-th row of $X$ with the $j$-th element removed, $c_j^i$ be the $j$-th column of $X$ with the $i$-th element removed, and $X^{ij}$ be the submatrix obtained by removing the $i$-th row and the $j$-th column from~$X$. Then there are $n^2$ quasideterminants, denoted as $|X|_{ij}$ for $i, j = 1, 2, \dots , n$, if all of its inverse $(X^{ij})^{-1}$ exist.
\begin{defn} 
\textit{A quasideterminant} $|X|_{ij}$ is defined recursively as
\begin{align}
    |X|_{ij}
    &= x_{i j}
    - r_{i}^j \, (X^{ij})^{-1} \, c_j^i
    = x_{ij}
    - \sum_{k \neq i, l \neq j} 
    x_{i l} \brackets{(X^{ij})^{-1}}_{l k} x_{k j}
    .
\end{align}
\end{defn}
Sometimes we will denote the quasideterminant by
\begin{align}
    |X|_{ij}
    &= 
    \begin{vmatrix}
    x_{11} & \dots & x_{1 j} & \dots & x_{1n}
    \\
    \vdots & & \vdots & & \vdots 
    \\
    x_{i1} & \dots & \boxed{x_{i j}} & \dots & x_{in}
    \\
    \vdots & & \vdots & & \vdots 
    \\
    x_{n1} & \dots & x_{n j} & \dots & x_{nn}
    \end{vmatrix}
    .
\end{align}
Since the quasideterminants of the matrix $X$ are the inverses of the entries of matrix $Y = X^{-1}$:
\begin{align}
    &&
    y_{ij}^{-1}
    &= |X|_{ji},
    &
    i, j
    &= 1, \dots, n,
    &&
\end{align}
the quasideterminant $|X|_{ij}$ can be rewritten as
\begin{align}
    |X|_{ij}
    &= x_{ij}
    - \sum_{k \neq i, l \neq j}
    x_{i l} \,
    \brackets{|X^{ij}|_{kl}}^{-1} \,
    x_{k j}.
\end{align}

Note that in the case of a commutative ring $R$, the quasideterminant is simply given by
\begin{align}
    \label{eq:quasidet_det}
    &&
    &&
    |X|_{ij}
    &= (-1)^{i + j} \frac{\det X}{\det X^{ij}}, 
    &&&
    \text{for any}
    &&
    i, j
    &= 1, 2, \dots, n.
    &&
    &&
\end{align}
Thus, the quasideterminants may be regarded as non-commutative analogs of the ratio of a determinant to one of its principal minors.

\begin{exmp}
\phantom{}

\begin{itemize}
    \item Consider a $2 \times 2$-matrix $X$ with generic entries $x_{ij}$, $i, j = 1, 2$. Then it has four quasideterminants:
    \begin{align}
        |X|_{11}
        &= x_{11}
        - x_{12} \, x_{22}^{-1} \, x_{21}
        ,
        &
        |X|_{12}
        &= x_{12}
        - x_{11} \, x_{21}^{-1} \, x_{22}
        ,
        \\[1mm]
        |X|_{21}
        &= x_{21}
        - x_{22} \, x_{12}^{-1} \, x_{11}
        ,
        &
        |X|_{22}
        &= x_{22}
        - x_{21} \, x_{11}^{-1} \, x_{12}
        .
    \end{align}
    
    \item In the case of $3 \times 3$-matrix $X$ with generic entries $x_{ij}$, $i, j = 1, 2, 3$, there are nine quasideterminants. For example,
    \begin{align}
        |X|_{11}
        = x_{11}
        &- x_{12} \, \brackets{
        x_{22} 
        - x_{23} \, x_{33}^{-1} \, x_{32}
        }^{-1} x_{21}
        - x_{12} \, \brackets{
        x_{32}
        - x_{33} \, x_{23}^{-1} \, x_{22}
        }^{-1} \, x_{31}
        \\[1mm]
        &- \, x_{13} \brackets{
        x_{23}
        - x_{22} \, x_{32}^{-1} \, x_{33}
        }^{-1} \, x_{21}
        - x_{13} \, \brackets{
        x_{33} 
        - x_{32} \, x_{22}^{-1} \, x_{23}
        }^{-1} \, x_{31}.
    \end{align}
\end{itemize}
\end{exmp}

Below we list several important properties of quasideterminants.
\begin{prop}
A quasideterminant is invariant under a permutation of the rows or columns.
\end{prop}
\begin{exmp}
\begin{align}
    \begin{vmatrix}
        x_{11} & x_{12} & x_{13} \\
        x_{21} & x_{22} & x_{23} \\
        x_{31} & \boxed{x_{32}} & x_{33}
    \end{vmatrix}
    &= 
    \begin{vmatrix}
        x_{21} & x_{22} & x_{23} \\
        x_{11} & x_{12} & x_{13} \\
        x_{31} & \boxed{x_{32}} & x_{33}
    \end{vmatrix}
    = 
    \begin{vmatrix}
        x_{12} & x_{11} & x_{13} \\
        x_{22} & x_{21} & x_{23} \\
        \boxed{x_{32}} & x_{31} & x_{33}
    \end{vmatrix}
    .
\end{align}
\end{exmp}

\begin{prop}
\phantom{}
\begin{enumerate}
        \item[\rm{(a)}]
        Let the matrix $B$ be obtained from $A$ by multiplying the $i$-th row on the left by $\lambda$, then
        \begin{align}
            |B|_{k j}
            &= \left\{
            \begin{array}{cc}
                 \lambda |A|_{ij},
                 &  k = i,
                 \\[1mm]
                 |A|_{kj},
                 &  k \neq i.
            \end{array}
            \right.
        \end{align}
        
        \item[\rm{(b)}]
        Let the matrix $C$ be obtained from $A$ by multiplying the $j$-th column on the right by $\mu$, then
        \begin{align}
            |C|_{i l}
            &= \left\{
            \begin{array}{cc}
                |A|_{ij} \mu,
                 &  l = j,
                 \\[1mm]
                 |A|_{il},
                 &  l \neq j.
            \end{array}
            \right.
        \end{align}
        
        \item[\rm{(c)}]
        Let the matrix $D$ be obtained from $A$ by adding to some row {\rm(}resp. column{\rm)} of $A$ its $k$-th row {\rm(}reps. column{\rm)}, then for any $i \neq k$ {\rm(}resp. $j \neq k${\rm)}
        \begin{align}
            |D|_{ij}
            &= |A|_{ij}.
        \end{align}
\end{enumerate}
\end{prop}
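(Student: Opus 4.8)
The plan is to reduce all three statements to the single identity $|X|_{ij} = \brackets{(X^{-1})_{ji}}^{-1}$ recorded above, using the observation that each of the three row/column operations on $A$ is realized by multiplication with an invertible elementary matrix whose effect on $A^{-1}$ is transparent. Throughout I write $E_{pk}$ for the matrix unit with entry $1$ in position $(p,k)$ and $0$ elsewhere.

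First I would dispose of the two ``diagonal'' cases straight from the recursive definition. For (a) with $k=i$: scaling the $i$-th row on the left by $\lambda$ leaves the submatrix $A^{ij}$ (which omits row $i$) and the column $c_j^i$ (which omits the $i$-th entry) untouched, while $b_{ij}=\lambda\, x_{ij}$ and $r_i^j(B)=\lambda\, r_i^j(A)$; hence $|B|_{ij}=\lambda\, x_{ij}-\lambda\, r_i^j (A^{ij})^{-1} c_j^i=\lambda\,|A|_{ij}$. Symmetrically, for (b) with $l=j$ the operation leaves $r_i^j$ and $A^{ij}$ fixed and replaces $c_j^i$ by $c_j^i\,\mu$, so the factor $\mu$ pulls out on the right and $|C|_{ij}=|A|_{ij}\,\mu$.

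For the remaining cases I would pass to the inverse. Writing $B=D_\lambda A$ with $D_\lambda=\diag(1,\dots,\lambda,\dots,1)$ ($\lambda$ in slot $i$) gives $B^{-1}=A^{-1}D_\lambda^{-1}$, i.e. $(B^{-1})_{pq}=(A^{-1})_{pq}$ for $q\neq i$ and $(B^{-1})_{pi}=(A^{-1})_{pi}\,\lambda^{-1}$. Then $|B|_{kj}=\brackets{(B^{-1})_{jk}}^{-1}$ equals $|A|_{kj}$ when $k\neq i$ and $\lambda\,|A|_{ij}$ when $k=i$, completing (a); writing $C=AD_\mu$ and $C^{-1}=D_\mu^{-1}A^{-1}$ gives the analogous statement for (b). For the row version of (c), adding the $k$-th row to the $p$-th row is left multiplication by $\mathbbm{1}+E_{pk}$, whence $D^{-1}=A^{-1}(\mathbbm{1}-E_{pk})$ and $(D^{-1})_{ji}=(A^{-1})_{ji}-(A^{-1})_{jp}\,\delta_{ik}$. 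This coincides with $(A^{-1})_{ji}$ exactly when $i\neq k$, so inverting yields $|D|_{ij}=|A|_{ij}$ in that range; the column version follows by right multiplication by $\mathbbm{1}+E_{kj}$, or directly from the transpose symmetry \eqref{tau_mat} of the definition.

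The main obstacle is that the identity $|X|_{ij}=\brackets{(X^{-1})_{ji}}^{-1}$ presupposes invertibility of the whole matrix, while a quasideterminant only requires the complementary minor $X^{ij}$ to be invertible. I would handle this either by noting that $D_\lambda$, $D_\mu$ and $\mathbbm{1}+E_{pk}$ are invertible (since $\lambda,\mu\neq0$ in the division ring $R$), so the three operations preserve invertibility and $A$ may be assumed invertible without loss of generality; or, to avoid the hypothesis entirely, by rerunning the short direct computation from the recursive definition in the off-diagonal cases, where the scaling of the retained row (resp. column) inside the submatrix cancels against the scaling of the corresponding factor in the defining sum, leaving $|A|_{kj}$ (resp. $|A|_{il}$) unchanged.
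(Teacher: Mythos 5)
The paper itself does not prove this proposition: it is recalled as background from the quasideterminant literature (\cite{gel1991determinants}, \cite{gelfand2005quasideterminants}), so there is no in-paper argument to set yours against. Judged on its own, your proof is correct and is essentially the standard one. The diagonal cases $k=i$ and $l=j$ do follow at once from the recursive definition, since the operation changes only $x_{ij}$ and the deleted row $r_i^j$ (resp.\ column $c_j^i$) while leaving $X^{ij}$ fixed, and the factor $\lambda$ (resp.\ $\mu$) pulls out on the correct side. The off-diagonal cases via $|X|_{ij}=\brackets{(X^{-1})_{ji}}^{-1}$ and multiplication by the invertible elementary matrices $D_\lambda$, $D_\mu$, $\mathbf{I}+E_{pk}$ are also sound, including the noncommutative inversion $\brackets{(A^{-1})_{ji}\,\lambda^{-1}}^{-1}=\lambda\,\brackets{(A^{-1})_{ji}}^{-1}$. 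You correctly identify the one real subtlety: that characterization presupposes invertibility of all of $X$, whereas the proposition should hold whenever the relevant quasideterminants exist, i.e.\ whenever the complementary minors are invertible. Your first patch (assuming $A$ invertible ``without loss of generality'') does not actually recover the full statement, but your second one does: in the off-diagonal case the scaled row survives inside both $B^{kj}$ and $c_j^k$, so $B^{kj}=D'_\lambda A^{kj}$ and $c_j^k(B)=D'_\lambda\, c_j^k(A)$, and the factors cancel in $r_k^j\,(B^{kj})^{-1}c_j^k$; the same cancellation (e.g.\ using $r_k^j\,(A^{pj})^{-1}c_j^p=a_{kj}$ for part (c)) handles the remaining cases and needs only the invertibility of the complementary minor. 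With that computation written out explicitly, the argument is complete.
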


\begin{prop}
The quasiminors of the matrix $X$ are related by the homological relations
\begin{align} 
    \label{eq:homrel_col}
    |X|_{ij} \, \brackets{|X^{il}|_{kj}}^{-1}
    &= - |X|_{il} \, \brackets{|X^{ij}|_{kl}}^{-1},
    \\[2mm]
    \label{eq:homrel_row}
    \brackets{|X^{kj}|_{il}}^{-1} \, |X|_{ij}
    &= - \brackets{|X^{ij}|_{kl}}^{-1} \, |X|_{kj}.
\end{align}
\end{prop}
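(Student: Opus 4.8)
The plan is to reduce both homological relations to short identities among the entries of the inverse matrix $Y = X^{-1}$. The crucial inputs are already available in this subsection: the correspondence between quasideterminants and inverse entries, $|X|_{ij} = \brackets{y_{ji}}^{-1}$, and the non-abelian Jacobi identity for quasideterminants advertised in the Introduction \cite{gel1991determinants}, which can itself be read off from the Schur-complement (block) inversion formulas for $Y$ displayed above. The idea is that every quasideterminant occurring in the statement — including those of the submatrices $X^{il}$, $X^{ij}$, $X^{kj}$ — can be written through entries of the single matrix $Y$, after which the relations become one-line cancellations.

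First I would record (or invoke) the Jacobi identity in the following form: for the submatrix $X^{ab}$ obtained by deleting row $a$ and column $b$, the entries of its inverse satisfy
\[
    \bigl((X^{ab})^{-1}\bigr)_{pq}
    = y_{pq} - y_{pa}\, y_{ba}^{-1}\, y_{bq},
    \qquad p \neq b,\ q \neq a.
\]
This follows from the block formulas for $Y = X^{-1}$ given above by permuting row $a$ and column $b$ into the last positions and comparing the two resulting expressions for the leading block of $(X^{ab})^{-1}$. Combined with $|X^{ab}|_{qp} = \bigl((X^{ab})^{-1})_{pq}\bigr)^{-1}$ (the same inverse-entry rule applied to $X^{ab}$), this expresses each submatrix quasideterminant appearing in the Proposition in terms of the entries $y_{pq}$ of $Y$.

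With these substitutions the column relation \eqref{eq:homrel_col} is immediate: writing $|X|_{ij} = (y_{ji})^{-1}$, $|X|_{il} = (y_{li})^{-1}$, $(|X^{il}|_{kj})^{-1} = y_{jk} - y_{ji}y_{li}^{-1}y_{lk}$ and $(|X^{ij}|_{kl})^{-1} = y_{lk} - y_{li}y_{ji}^{-1}y_{jk}$, both sides collapse to $(y_{ji})^{-1}y_{jk} - (y_{li})^{-1}y_{lk}$ once the factors $y_{ji}^{-1}y_{ji}=1$ and $y_{li}^{-1}y_{li}=1$ are cancelled. The row relation \eqref{eq:homrel_row} is treated identically, the only difference being that the correction terms now sit on the left and $|X|$ multiplies on the right; both sides then reduce to $y_{li}(y_{ji})^{-1} - y_{lk}(y_{jk})^{-1}$. (Alternatively, \eqref{eq:homrel_row} can be deduced from \eqref{eq:homrel_col} by applying the transposition $\tau$ of Definition \ref{tau}, using $\tau(|X|_{ij}) = |\tau(X)|_{ji}$ and relabelling the indices, since $\tau$ reverses products and interchanges the left and right placement of the correction term.)

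The main obstacle is bookkeeping rather than anything conceptual: one must fix the index conventions in the Jacobi identity so that the deleted row and column of $X^{ab}$ are matched to the correct entry $y_{ba}$ of $Y$, and one must keep careful track of whether the correction term stands to the left or to the right — it is precisely this non-commutative left/right asymmetry that separates the column relation \eqref{eq:homrel_col} from the row relation \eqref{eq:homrel_row}. Once the Jacobi identity is pinned down in the two required forms, no further input is needed: the cancellations use only associativity and the existence of inverses in the division ring $R$.
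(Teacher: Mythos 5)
The paper does not prove this proposition: it is recalled verbatim from \cite{gel1991determinants}, \cite{gelfand2005quasideterminants} as a known property of quasideterminants, so there is no in-paper argument to compare against. Your proof is correct and is essentially the standard (Gelfand--Retakh) derivation: the identity $\bigl((X^{ab})^{-1}\bigr)_{pq} = y_{pq} - y_{pa}\,y_{ba}^{-1}\,y_{bq}$ does follow from the displayed block-inversion formulas applied to $Y$ (whose inverse is $X$) with $I_1=\{a\}$, $J_1=\{b\}$, and with it both homological relations reduce, exactly as you compute, to $y_{ji}^{-1}y_{jk}-y_{li}^{-1}y_{lk}$ and $y_{li}y_{ji}^{-1}-y_{lk}y_{jk}^{-1}$ respectively; the only implicit hypotheses are the generic invertibility of $y_{ba}$ and of the submatrices, which is the paper's standing assumption for quasideterminants to exist. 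Your parenthetical alternative for \eqref{eq:homrel_row} via the transposition $\tau$ also works, provided one notes that $\tau(|X|_{ij})=|\tau(X)|_{ji}$ holds because the generic entries $x_{ij}$ are $\tau$-fixed.
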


\begin{exmp}
\begin{align}
    \begin{vmatrix}
        x_{11} & x_{12} \\
        x_{31} & \boxed{x_{32}}
    \end{vmatrix}^{-1} \,
    \begin{vmatrix}
        x_{11} & x_{12} & x_{13} \\
        x_{21} & x_{22} & x_{23} \\
        x_{31} & x_{32} & \boxed{x_{33}}
    \end{vmatrix}
    &= -
    \begin{vmatrix}
        x_{11} & x_{12} \\
        x_{21} & \boxed{x_{22}}
    \end{vmatrix}^{-1} \,
    \begin{vmatrix}
        x_{11} & x_{12} & x_{13} \\
        x_{21} & x_{22} & \boxed{x_{23}} \\
        x_{31} & x_{32} & x_{33}
    \end{vmatrix}
    .
\end{align}
\end{exmp}

\begin{prop}
Consider the following $(n - 1) \times (n - 1)$-submatrices of $X${\rm:} 
\begin{itemize}
    \item 
    the matrix $A$ obtained from $X$ by deleting its $n$-th row and $n$-th column{\rm;}
    
    \item 
    the matrix $B$ obtained from $X$ by deleting its $1$-st row and $n$-th column{\rm;}
    
    \item 
    the matrix $C$ obtained from $X$ by deleting its $n$-th row and $1$-st column{\rm;}
    
    \item 
    the matrix $D$ obtained from $X$ by deleting its $1$-st row and $1$-st column{\rm.}
\end{itemize}
Then, a non-commutative analog of the determinant Jacobi identity holds
\begin{align}
    \label{eq:Jacobi_nc}
    |X|_{nn}
    &= |D|_{n - 1, n - 1}
    - |B|_{n - 1, 1} \, \, |A|_{1, 1}^{-1} \, \, |C|_{1, n - 1}.
\end{align}
\end{prop}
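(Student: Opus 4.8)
The plan is to reduce the identity to a single entry of the inverse matrix $Y = X^{-1}$ and then to compute that entry by a \emph{two-step} block inversion. The starting point is the relation $|X|_{ij} = (y_{ji})^{-1}$ recorded above, which gives $|X|_{nn} = (y_{nn})^{-1}$; thus it suffices to produce a closed formula for the corner entry $y_{nn}$. Rather than peeling off the last index directly, I would split the index set as $K = \{1, n\}$ and $L = \{2, \dots, n-1\}$ and apply the block-inversion formulas to this symmetric (but non-contiguous) partition. The formula for the block $Y_{J_1 I_1}$ with $I_1 = J_1 = K$ and $I_2 = J_2 = L$ yields
\[
\widetilde{Y} := \begin{pmatrix} y_{11} & y_{1n} \\ y_{n1} & y_{nn}\end{pmatrix} = W^{-1}, \qquad W := X_{KK} - X_{KL}\,X_{LL}^{-1}\,X_{LK},
\]
where $X_{LL}$ is the $(n-2)\times(n-2)$ central block. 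The block formulas quoted above are purely algebraic identities in the blocks and need no contiguity; alternatively one may first reorder the indices, which changes nothing by the permutation-invariance of quasideterminants.

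The key step is to identify the four entries of the $2 \times 2$ matrix $W$ with the four quasideterminants in the statement. Writing the blocks out explicitly and noting that deleting the ``outer'' row and column from each of $A, B, C, D$ leaves exactly the central block $X_{LL}$, I would check entrywise that
\[
W = \begin{pmatrix} |A|_{11} & |C|_{1, n-1} \\ |B|_{n-1,1} & |D|_{n-1, n-1}\end{pmatrix}.
\]
For example, $W_{11} = x_{11} - X_{1L} X_{LL}^{-1} X_{L1}$ is, by definition, $|A|_{11}$, because $A$ is $X$ with its last row and column removed and $A^{11} = X_{LL}$; the remaining three identifications ($W_{12} = |C|_{1,n-1}$, $W_{21} = |B|_{n-1,1}$, $W_{22} = |D|_{n-1,n-1}$) are verified in the same way, each being the Schur complement of $X_{LL}$ boxed at the appropriate corner.

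Finally I would apply the corner block-inversion formula a second time, now to the $2 \times 2$ matrix $W$, to read off
\[
y_{nn} = \big(W^{-1}\big)_{22} = \big(W_{22} - W_{21}\,W_{11}^{-1}\,W_{12}\big)^{-1}.
\]
Inverting and substituting the identifications of the entries of $W$ gives $|X|_{nn} = W_{22} - W_{21} W_{11}^{-1} W_{12} = |D|_{n-1,n-1} - |B|_{n-1,1}\,|A|_{11}^{-1}\,|C|_{1,n-1}$, which is precisely \eqref{eq:Jacobi_nc}. The whole argument uses only the already-established block-inversion formulas and the relation $|X|_{ij} = (y_{ji})^{-1}$, with no non-commutative rearrangement of factors. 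I expect the one genuine point of care --- the ``main obstacle'' --- to be the bookkeeping of the second step: making sure the block-inversion formula is legitimately invoked for the non-contiguous set $K = \{1,n\}$ and that each submatrix $A, B, C, D$ together with its boxed position is matched to the correct corner of $W$. Once this dictionary is fixed, the conclusion is immediate.
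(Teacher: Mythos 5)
Your argument is correct. Note that the paper itself states this proposition without proof --- it is recalled from the quasideterminant literature (Gelfand--Retakh and the survey of Gelfand--Gelfand--Retakh--Wilson) --- so there is no in-paper argument to compare against. Your two-step Schur-complement derivation, reducing $|X|_{nn}=(y_{nn})^{-1}$ to the $2\times2$ matrix $W=X_{KK}-X_{KL}X_{LL}^{-1}X_{LK}$ with $K=\{1,n\}$ and identifying the entries of $W$ with $|A|_{11}$, $|B|_{n-1,1}$, $|C|_{1,n-1}$, $|D|_{n-1,n-1}$, is exactly the standard ``heredity principle'' proof of the noncommutative Jacobi identity, and the two points you flag (validity of the block formulas for the non-contiguous partition, which follows by conjugating with a permutation matrix, and the matching of each submatrix with its boxed position to the corresponding corner of $W$) are indeed the only places requiring care; both check out.
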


\begin{rem}
    Using homological relations from \cite{gel1991determinants}, \eqref{eq:Jacobi_nc} can be rewritten in the form
    \begin{align}
    \label{eq:Jacobi_nc_}
    |X|_{nn}
    &= |D|_{n - 1, n - 1}
    - |B|_{n - 1, 1} \, \, |A|_{n - 1, 1}^{-1} \, \, |C|_{n - 1, n - 1}.
    \end{align}
\end{rem}

Schematically, \eqref{eq:Jacobi_nc} can be presented as
\begin{figure}[H]
    \centering
    \scalebox{1.}{\tikzset{every picture/.style={line width=0.75pt}} %set default line width to 0.75pt        

\begin{tikzpicture}[x=0.75pt,y=0.75pt,yscale=-1,xscale=1]
%uncomment if require: \path (0,482); %set diagram left start at 0, and has height of 482

%Straight Lines [id:da16847245685101453] 
\draw    (50,20.5) -- (50,89.86) ;
%Straight Lines [id:da7870072089192233] 
\draw    (120.57,20.5) -- (120.57,89.86) ;

%Straight Lines [id:da49080218737756887] 
\draw    (149.86,20.5) -- (149.86,89.86) ;
%Straight Lines [id:da65377955498666] 
\draw    (220.43,20.5) -- (220.43,89.86) ;

%Straight Lines [id:da6952862356618291] 
\draw    (249.86,20.5) -- (249.86,89.86) ;
%Straight Lines [id:da8764538605508289] 
\draw    (320.43,20.5) -- (320.43,89.86) ;

%Straight Lines [id:da6915372904213736] 
\draw    (340.29,20.5) -- (340.29,89.86) ;
%Straight Lines [id:da5859700004220973] 
\draw    (410.86,20.5) -- (410.86,89.86) ;

%Straight Lines [id:da25739215193186116] 
\draw    (429.29,20.5) -- (429.29,89.86) ;
%Straight Lines [id:da5003711921502716] 
\draw    (499.86,20.5) -- (499.86,89.86) ;

%Straight Lines [id:da6252699721398148] 
\draw    (155.71,33.4) -- (216,33.4) ;
%Straight Lines [id:da5744001331844625] 
\draw    (164.29,25.47) -- (164.43,87.04) ;
%Straight Lines [id:da4017990763511975] 
\draw    (254.71,33.4) -- (315,33.4) ;
%Straight Lines [id:da5183272307867327] 
\draw    (306.06,24.5) -- (306.21,86.07) ;
%Straight Lines [id:da2739560464254611] 
\draw    (344.94,76.36) -- (405.22,76.36) ;
%Straight Lines [id:da46895051759881856] 
\draw    (396.29,24.5) -- (396.43,86.07) ;
%Straight Lines [id:da20732931518592856] 
\draw    (435.16,76.36) -- (495.44,76.36) ;
%Straight Lines [id:da44512928889382464] 
\draw    (443.84,24.5) -- (443.98,86.07) ;
%Shape: Square [id:dp6670708706113594] 
\draw   (100,69.5) -- (115.5,69.5) -- (115.5,85) -- (100,85) -- cycle ;
%Shape: Square [id:dp5963040770543326] 
\draw   (199.75,69.5) -- (215.25,69.5) -- (215.25,85) -- (199.75,85) -- cycle ;
%Shape: Square [id:dp5654131326781268] 
\draw   (255.5,69.5) -- (271,69.5) -- (271,85) -- (255.5,85) -- cycle ;
%Shape: Square [id:dp29615547856995694] 
\draw   (346,25) -- (361.5,25) -- (361.5,40.5) -- (346,40.5) -- cycle ;
%Shape: Square [id:dp14825513888833075] 
\draw   (479,25) -- (494.5,25) -- (494.5,40.5) -- (479,40.5) -- cycle ;

% Text Node
\draw (128,51.4) node [anchor=north west][inner sep=0.75pt]    {$=$};
% Text Node
\draw (227.5,48) node [anchor=north west][inner sep=0.75pt]    {$-$};
% Text Node
\draw (326.5,50) node [anchor=north west][inner sep=0.75pt]    {$\cdot $};
% Text Node
\draw (416.5,50) node [anchor=north west][inner sep=0.75pt]    {$\cdot $};
% Text Node
\draw (411.43,15.54) node [anchor=north west][inner sep=0.75pt]    {$^{-1}$};

\end{tikzpicture}}.
\end{figure}
Recall that the classical Jacobi determinant identity reads
\begin{figure}[H]
    \centering
    \scalebox{1.}{\tikzset{every picture/.style={line width=0.75pt}} %set default line width to 0.75pt        

\begin{tikzpicture}[x=0.75pt,y=0.75pt,yscale=-1,xscale=1]
%uncomment if require: \path (0,482); %set diagram left start at 0, and has height of 482

%Straight Lines [id:da4569276756679759] 
\draw    (150,20.5) -- (150,89.86) ;
%Straight Lines [id:da7164262684583657] 
\draw    (220.57,20.5) -- (220.57,89.86) ;

%Straight Lines [id:da5800935784597296] 
\draw    (249.86,20.5) -- (249.86,89.86) ;
%Straight Lines [id:da5179837550078149] 
\draw    (320.43,20.5) -- (320.43,89.86) ;

%Straight Lines [id:da21267117361362708] 
\draw    (255.71,33.4) -- (316,33.4) ;
%Straight Lines [id:da5541297153401514] 
\draw    (264.29,25.47) -- (264.43,87.04) ;
%Straight Lines [id:da006329726682789083] 
\draw    (59,20.5) -- (59,89.86) ;
%Straight Lines [id:da3316778844392141] 
\draw    (129.57,20.5) -- (129.57,89.86) ;

%Straight Lines [id:da12567092925020107] 
\draw    (155.29,33.4) -- (215.57,33.4) ;
%Straight Lines [id:da8042607481318854] 
\draw    (163.86,25.5) -- (164,87.07) ;
%Straight Lines [id:da7797554893295625] 
\draw    (155.51,77.9) -- (215.79,77.9) ;
%Straight Lines [id:da22641001071518352] 
\draw    (206.86,25.93) -- (207,87.5) ;
%Straight Lines [id:da11384194771431078] 
\draw    (340.43,20.5) -- (340.43,89.86) ;
%Straight Lines [id:da1694598689100597] 
\draw    (411,20.5) -- (411,89.86) ;

%Straight Lines [id:da655255176015906] 
\draw    (345.08,77.86) -- (405.37,77.86) ;
%Straight Lines [id:da6567919818380616] 
\draw    (396.43,24.5) -- (396.57,86.07) ;
%Straight Lines [id:da27201440725804216] 
\draw    (440.71,20.47) -- (440.71,89.83) ;
%Straight Lines [id:da020082083599517797] 
\draw    (511.29,20.47) -- (511.29,89.83) ;

%Straight Lines [id:da2897574047472966] 
\draw    (529.71,20.47) -- (529.71,89.83) ;
%Straight Lines [id:da19606252259725643] 
\draw    (600.29,20.47) -- (600.29,89.83) ;

%Straight Lines [id:da08569508548877891] 
\draw    (445.37,77.9) -- (505.65,77.9) ;
%Straight Lines [id:da7556165169196288] 
\draw    (535.59,33.4) -- (595.87,33.4) ;
%Straight Lines [id:da21658331290805433] 
\draw    (586.27,24.47) -- (586.41,86.04) ;
%Straight Lines [id:da34612777866272937] 
\draw    (453.52,24.5) -- (453.66,86.07) ;

% Text Node
\draw (228,51.4) node [anchor=north west][inner sep=0.75pt]    {$=$};
% Text Node
\draw (418,48.4) node [anchor=north west][inner sep=0.75pt]    {$-$};
% Text Node
\draw (136.07,50) node [anchor=north west][inner sep=0.75pt]    {$\cdot $};
% Text Node
\draw (326.71,50) node [anchor=north west][inner sep=0.75pt]    {$\cdot $};
% Text Node
\draw (517.29,50) node [anchor=north west][inner sep=0.75pt]    {$\cdot $};

\end{tikzpicture}}.
\end{figure}

\section{Non-Abelian 2d Toda equations}
\label{sec:main11}

\subsection{Discrete case}
\label{sec:2ddToda}

By using a non-abelian analog of the determinant Jacobi identity, we will define the non-commutative analog of the 2d discrete Toda lattice. Consider the matrix
\begin{align}
    \label{eq:Theta_def}
    \Theta_{l, m, n}
    &:= 
    \begin{pmatrix}
    \varphi_{l, m} 
    & \dots & \varphi_{l, m + n - 1}
    \\
    \vdots & \ddots & \vdots
    \\
    \varphi_{l + n - 1, m} 
    & \dots & 
    \varphi_{l + n - 1, m + n - 1}
    \end{pmatrix}
    = \brackets{
    \varphi_{l + i - 1, m + j - 1}
    }_{1 \leq i, j \leq n},
    &
    (l, m, n + 1)
    &\in \mathbb{Z}_{\geq 0}^{3},
\end{align}
where $\varphi_{i, j} \in R$. Set
\begin{align}
    \label{eq:theta_def}
    \theta_{l, m, n}
    &= |\Theta_{l, m, n}|_{nn}.
\end{align}
We put $\Theta_{l, m, 0}^{-1}$. Let us also use notation $T_{\pm i}$, $i = 1, 2, 3$ for the shift operators
\begin{align}
    &&
    T_{\pm 1} \brackets{\theta_{l, m, n}}
    &= \theta_{l \pm 1, m, n},
    &
    T_{\pm 2} \brackets{\theta_{l, m, n}}
    &= \theta_{l, m \pm 1, n},
    &
    T_{\pm 3} \brackets{\theta_{l, m, n}}
    &= \theta_{l, m, n \pm 1}.
    &&
\end{align}

\begin{prop}
\label{thm:2ddToda_nc}
Quasideterminant $\theta_{l, m, n}$ defined by \eqref{eq:theta_def} satisfies the non-abelian analog of the two-dimensional discrete Toda system
\begin{align}
    \label{eq:2ddToda_nc}
    \theta_{l + 1, m + 1, n}
    &= \theta_{l, m, n + 1}
    + \theta_{l + 1, m, n} \brackets{
    \theta_{l, m, n}^{-1}
    - \theta_{l + 1, m + 1, n - 1}^{-1}
    } \theta_{l, m + 1, n}.
\end{align}
\end{prop}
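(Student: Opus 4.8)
The plan is to derive \eqref{eq:2ddToda_nc} from a single application of the non-abelian Jacobi identity \eqref{eq:Jacobi_nc}, followed by an identification of the resulting cubic term. First I would apply \eqref{eq:Jacobi_nc} to the $(n+1)\times(n+1)$ matrix $X=\Theta_{l,m,n+1}$ from \eqref{eq:Theta_def}. Reading off the four deletion-submatrices in \eqref{eq:Jacobi_nc}, the block obtained by deleting the first row and column is $\Theta_{l+1,m+1,n}$, the block deleting the last row and column is $A:=\Theta_{l,m,n}$, deleting the first row and last column gives $B:=\Theta_{l+1,m,n}$, and deleting the last row and first column gives $C:=\Theta_{l,m+1,n}$. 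Since $|X|_{n+1,n+1}=\theta_{l,m,n+1}$ and $|\Theta_{l+1,m+1,n}|_{n,n}=\theta_{l+1,m+1,n}$ by \eqref{eq:theta_def}, identity \eqref{eq:Jacobi_nc} becomes
\[
  \theta_{l+1,m+1,n}
  = \theta_{l,m,n+1}
  + |B|_{n,1}\,|A|_{1,1}^{-1}\,|C|_{1,n}.
\]
Thus it remains to show that the cubic term equals $\theta_{l+1,m,n}\brackets{\theta_{l,m,n}^{-1}-\theta_{l+1,m+1,n-1}^{-1}}\theta_{l,m+1,n}$.

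The main structural observation I would exploit is that $|A|_{1,1}$, $|B|_{n,1}$ and $|C|_{1,n}$ are all corner quasideterminants whose complementary inner block is the \emph{same} central block $E:=\Theta_{l+1,m+1,n-1}$ sitting in rows $2,\dots,n$ and columns $2,\dots,n$ of $X$. Indeed, deleting the boxed row and column from each of $A$, $B$, $C$ leaves exactly $E$, and the corresponding border vectors are shared: $A$ and $C$ share the top edge (row $l$, columns $m+1,\dots,m+n-1$), while $A$ and $B$ share the left edge (rows $l+1,\dots,l+n-1$, column $m$). Consequently, after the permutation of rows and columns that brings $E$ to the leading position (permissible by the permutation-invariance of quasideterminants), the three quantities $|A|_{1,1}$, $|C|_{1,n}$, $|B|_{n,1}$ are precisely the three outer entries of the $2\times2$ Schur complement of $E$, and the displayed identity above is nothing but the Jacobi identity for that $2\times2$ Schur complement. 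This is what forces the cubic term to collapse.

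The remaining work is to convert these Schur-complement corners into the genuine Toda quasideterminants. For the outer factors I would use the homological relations \eqref{eq:homrel_col} and \eqref{eq:homrel_row}: since $B=\Theta_{l+1,m,n}$ and $\theta_{l+1,m,n}=|B|_{n,n}$, relation \eqref{eq:homrel_col} moves the boxed entry along the last row of $B$ and expresses $|B|_{n,1}$ through $\theta_{l+1,m,n}$ and quasiminors of $E$; symmetrically, \eqref{eq:homrel_row} expresses $|C|_{1,n}$ through $\theta_{l,m+1,n}=|C|_{n,n}$. The difference of inverses $\theta_{l,m,n}^{-1}-\theta_{l+1,m+1,n-1}^{-1}$ I would produce from the central factor $|A|_{1,1}^{-1}$ by a second application of \eqref{eq:Jacobi_nc}, this time to $A=\Theta_{l,m,n}$ itself: there $\theta_{l,m,n}=|A|_{n,n}$ and $\theta_{l+1,m+1,n-1}=|A^{11}|_{n-1,n-1}$ are two corner quasideterminants of the same matrix, so Jacobi exhibits $\theta_{l+1,m+1,n-1}-\theta_{l,m,n}$ as a cubic term, and the resolvent identity $P^{-1}-Q^{-1}=P^{-1}(Q-P)Q^{-1}$ then turns this into $\theta_{l,m,n}^{-1}-\theta_{l+1,m+1,n-1}^{-1}$.

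The step I expect to be the main obstacle is precisely this last matching for general $n$: although $|A|_{1,1}$, $|B|_{n,1}$, $|C|_{1,n}$ share the central block $E$, the quasideterminants $\theta_{l,m,n}$ and $\theta_{l+1,m+1,n-1}$ are Schur complements with respect to central blocks that differ from $E$, so reconciling the two viewpoints requires a careful, bookkeeping-heavy combination of the two Jacobi identities with the homological relations. The whole mechanism is transparent at small $n$: for $n=1$ the term $\theta_{l+1,m+1,0}^{-1}$ vanishes by the convention $\Theta_{l,m,0}^{-1}\to0$ and the identity is immediate, while for $n=2$ it collapses to the scalar Woodbury/Sylvester identity
\[
  \brackets{d-ca^{-1}b}^{-1}
  = d^{-1} + d^{-1}c\,\brackets{a-bd^{-1}c}^{-1}b\,d^{-1},
\]
with $a=\varphi_{l,m}$, $b=\varphi_{l,m+1}$, $c=\varphi_{l+1,m}$, $d=\varphi_{l+1,m+1}$, i.e. the $(2,2)$-entry form of \eqref{eq:Jacobi_nc}, which yields the difference $\theta_{l,m,2}^{-1}-\theta_{l+1,m+1,1}^{-1}$ directly. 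I would organise the general argument so that it reduces, entry by entry, to this same $2\times2$ identity applied to the Schur complement of $E$.
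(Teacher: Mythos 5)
Your plan is essentially the paper's own proof: a first application of the Jacobi identity \eqref{eq:Jacobi_nc} to $\Theta_{l,m,n+1}$ with the four deletion blocks identified as $\Theta_{l+1,m+1,n}$, $\Theta_{l,m,n}$, $\Theta_{l+1,m,n}$, $\Theta_{l,m+1,n}$, then the homological relations to relocate the boxed entries to the $(n,n)$ corner, a second application of Jacobi one level down to recognize the leftover inner product as $\theta_{l+1,m+1,n-1}-\theta_{l,m,n}$, and the resolvent identity to turn that into $\theta_{l,m,n}^{-1}-\theta_{l+1,m+1,n-1}^{-1}$. The differences are only presentational (the paper works from the variant \eqref{eq:Jacobi_nc_}, so its middle factor carries the box at position $(n-1,1)$ rather than $(1,1)$, and it does not use your Schur-complement framing of the central block), and your $n=2$ check coincides with the paper's warm-up computation.
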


\begin{rem}
Making the change \eqref{eq:2ddTodatoHirota} in \eqref{eq:2ddToda_nc}, one gets a non-abelian analog of the Hirota equation \eqref{eq:Hirota}:
\begin{align}
    \label{eq:Hirota_nc}
    \theta_{l + 1, m, n + 1}
    &= \theta_{l, m + 1, n + 1}
    + \theta_{l + 1, m + 1, n} \brackets{
    \theta_{l, m + 1, n}^{-1}
    - \theta_{l + 1, m, n}^{-1}
    } \theta_{l, m, n + 1}.
\end{align}
\end{rem}

\begin{proof}
Before moving on to an arbitrary $n$, let us first consider the case of $n = 2$ to demonstrate key points in the proof.

\textbullet \,\,
Set $n = 2$ in \eqref{eq:2ddToda_nc}. Then, $\theta_{l, m, 3}$, $\theta_{l, m, 2}$, and $\theta_{l, m, 1}$ defined by \eqref{eq:theta_def} are solutions of \eqref{eq:2ddToda_nc}. 
Indeed, from the non-commutative Jacobi identity \eqref{eq:Jacobi_nc} it follows that
\begin{multline}
    \begin{vmatrix}
    \varphi_{l, m} 
    & \varphi_{l, m + 1}  
    & \varphi_{l, m + 2}
    \\[1mm]
    \varphi_{l + 1, m} 
    & \varphi_{l + 1, m + 1}  
    & \varphi_{l + 1, m + 2}
    \\[1mm]
    \varphi_{l + 2, m} 
    & \varphi_{l + 2, m + 1}  
    & \boxed{\varphi_{l + 2, m + 2}}
    \end{vmatrix}
    = 
    \begin{vmatrix}
    \varphi_{l + 1, m + 1}  
    & \varphi_{l + 1, m + 2}
    \\[1mm]
    \varphi_{l + 2, m + 1}  
    & \boxed{\varphi_{l + 2, m + 2}}
    \end{vmatrix}
    - 
    \begin{vmatrix}
    \varphi_{l + 1, m}  
    & \varphi_{l + 1, m + 1}
    \\[1mm]
    \boxed{\varphi_{l + 2, m}}
    & {\varphi_{l + 2, m + 1}}
    \end{vmatrix}
    \\[2mm]
    \, \cdot
    \begin{vmatrix}
    \varphi_{l, m}  
    & \varphi_{l, m + 1}
    \\[1mm]
    \boxed{\varphi_{l + 1, m}}
    & {\varphi_{l + 1, m + 1}}
    \end{vmatrix}^{-1}
    \,
    \begin{vmatrix}
    \varphi_{l, m + 1}  
    & \varphi_{l, m + 2}
    \\[1mm]
    \varphi_{l + 1, m + 1}  
    & \boxed{\varphi_{l + 1, m + 2}}
    \end{vmatrix}.
\end{multline}
In order to rewrite this relation in terms of $\theta_{l,m,n}$ only, we need to move all boxes to the bottom right corner. Note that
\begin{align}
    \begin{vmatrix}
    x_{11}  
    & x_{12} \,\,\, 
    \\[1mm]
    \boxed{x_{21}}
    & {x_{22}} \,\,\,
    \end{vmatrix}
    = x_{21} - x_{22} \, x_{12}^{-1} \, x_{11}
    = - \brackets{
    x_{22} - x_{21} \, x_{11}^{-1} \, x_{12}
    } x_{12}^{-1} \, x_{11}
    = -
    \begin{vmatrix}
    \,\,
    x_{11}  
    & x_{12}
    \\[1mm]
    \,\,
    {x_{21}}
    & \boxed{x_{22}}
    \end{vmatrix}
    \, x_{12}^{-1} \, x_{11}
    .
\end{align}
Then
\begin{multline}
    \begin{vmatrix}
    \varphi_{l + 1, m}  
    & \varphi_{l + 1, m + 1}
    \\[1mm]
    \boxed{\varphi_{l + 2, m}}
    & {\varphi_{l + 2, m + 1}}
    \end{vmatrix}
    \, \cdot \,
    \begin{vmatrix}
    \varphi_{l, m}  
    & \varphi_{l, m + 1}
    \\[1mm]
    \boxed{\varphi_{l + 1, m}}
    & {\varphi_{l + 1, m + 1}}
    \end{vmatrix}^{-1}
    \\
    = 
    \begin{vmatrix}
    \varphi_{l + 1, m}  
    & \varphi_{l + 1, m + 1}
    \\[1mm]
    {\varphi_{l + 2, m}}
    & \boxed{\varphi_{l + 2, m + 1}}
    \end{vmatrix}
    \varphi_{l + 1, m + 1}^{-1} \, \, \,
    \left(\varphi_{l + 1, m} \,
    \varphi_{l, m}^{-1} \,
    \varphi_{l, m + 1}\right) \,\,\,
    \begin{vmatrix}
    \varphi_{l, m}  
    & \varphi_{l, m + 1}
    \\[1mm]
    {\varphi_{l + 1, m}}
    & \boxed{\varphi_{l + 1, m + 1}}
    \end{vmatrix}^{-1}
    .
\end{multline}
Note also that
\begin{align}
    \begin{vmatrix}
    \varphi_{l, m}  
    & \varphi_{l, m + 1}
    \\[1mm]
    {\varphi_{l + 1, m}}
    & \boxed{\varphi_{l + 1, m + 1}}
    \end{vmatrix}
    &= \varphi_{l + 1, m + 1}
    - \varphi_{l + 1, m} \,
    \varphi_{l, m}^{-1} \,
    \varphi_{l, m + 1},
\end{align}
which implies
\begin{multline}
    \begin{vmatrix}
    \varphi_{l + 1, m}  
    & \varphi_{l + 1, m + 1}
    \\[1mm]
    \boxed{\varphi_{l + 2, m}}
    & {\varphi_{l + 2, m + 1}}
    \end{vmatrix}
    \, \cdot \,
    \begin{vmatrix}
    \varphi_{l, m}  
    & \varphi_{l, m + 1}
    \\[1mm]
    \boxed{\varphi_{l + 1, m}}
    & {\varphi_{l + 1, m + 1}}
    \end{vmatrix}^{-1}
    \\
    = 
    \begin{vmatrix}
    \varphi_{l + 1, m}  
    & \varphi_{l + 1, m + 1}
    \\[1mm]
    {\varphi_{l + 2, m}}
    & \boxed{\varphi_{l + 2, m + 1}}
    \end{vmatrix}
    \varphi_{l + 1, m + 1}^{-1} 
    \brackets{
    \varphi_{l + 1, m + 1}
    -
    \begin{vmatrix}
    \varphi_{l, m}  
    & \varphi_{l, m + 1}
    \\[1mm]
    {\varphi_{l + 1, m}}
    & \boxed{\varphi_{l + 1, m + 1}}
    \end{vmatrix}
    }
    \begin{vmatrix}
    \varphi_{l, m}  
    & \varphi_{l, m + 1}
    \\[1mm]
    {\varphi_{l + 1, m}}
    & \boxed{\varphi_{l + 1, m + 1}}
    \end{vmatrix}^{-1}
    .
\end{multline}
Finally, we have
\begin{multline}
    \begin{vmatrix}
    \varphi_{l, m} 
    & \varphi_{l, m + 1}  
    & \varphi_{l, m + 2}
    \\[1mm]
    \varphi_{l + 1, m} 
    & \varphi_{l + 1, m + 1}  
    & \varphi_{l + 1, m + 2}
    \\[1mm]
    \varphi_{l + 2, m} 
    & \varphi_{l + 2, m + 1}  
    & \boxed{\varphi_{l + 2, m + 2}}
    \end{vmatrix}
    = 
    \begin{vmatrix}
    \varphi_{l + 1, m + 1}  
    & \varphi_{l + 1, m + 2}
    \\[1mm]
    \varphi_{l + 2, m + 1}  
    & \boxed{\varphi_{l + 2, m + 2}}
    \end{vmatrix}
    - 
    \begin{vmatrix}
    \varphi_{l + 1, m}  
    & \varphi_{l + 1, m + 1}
    \\[1mm]
    {\varphi_{l + 2, m}}
    & \boxed{\varphi_{l + 2, m + 1}}
    \end{vmatrix}
    \\[2mm]
    \, \cdot
    \brackets{
    \begin{vmatrix}
    \varphi_{l, m}  
    & \varphi_{l, m + 1}
    \\[1mm]
    {\varphi_{l + 1, m}}
    & \boxed{\varphi_{l + 1, m + 1}}
    \end{vmatrix}^{-1}
    - \varphi_{l + 1, m + 1}^{-1}
    }
    \,
    \begin{vmatrix}
    \varphi_{l, m + 1}  
    & \varphi_{l, m + 2}
    \\[1mm]
    \varphi_{l + 1, m + 1}  
    & \boxed{\varphi_{l + 1, m + 2}}
    \end{vmatrix},
\end{multline}
or, in terms of $\theta_{l, m, n}$,
\begin{align}
    \theta_{l, m, 3}
    &= \theta_{l + 1, m + 1, 2}
    - \theta_{l + 1, m, 2} \, 
    \brackets{
    \theta_{l, m, 2}^{-1}
    - \theta_{l + 1, m + 1, 1}^{-1}
    } \,
    \theta_{l, m + 1, 2}.
\end{align}

\medskip
\textbullet \,\,
We proceed now to the case of arbitrary $n$. From the non-commutative Jacobi identity for  quasideterminants we use the relation
\begin{align}
    \label{eq:Theta_in}
    |\Theta_{l, m, n}|_{nn}
    &= |\Theta_{l + 1, m + 1, n - 1}|_{n - 1, n - 1}
    - |\Theta_{l + 1, m, n - 1}|_{n - 1, 1} \cdot
    \brackets{|\Theta_{l, m, n - 1}|_{n - 1, 1}}^{-1} \, 
    |\Theta_{l, m + 1, n - 1}|_{n-1, n-1}.
\end{align}
To rewrite this identity in terms of $\theta_{l, m, n}$ only, we use relation \eqref{eq:homrel_col} between quasiminors.
In particular,
\begin{multline}
    \begin{vmatrix}
    \varphi_{l, m} &
    \dots &
    \varphi_{l, m + n - 2} &
    \varphi_{l, m + n - 1}
    \\
    \vdots & \ddots & \vdots & \vdots
    \\
    \varphi_{l + n - 2, m} &
    \dots &
    \varphi_{l + n - 2, m + n - 2} &
    \varphi_{l + n - 2, m + n - 1}
    \\
    \boxed{\varphi_{l + n - 1, m}} &
    \dots &
    \varphi_{l + n - 1, m + n - 2} &
    \varphi_{l + n - 1, m + n - 1}
    \end{vmatrix}
    = - 
    \begin{vmatrix}
    \varphi_{l, m} &
    \dots &
    \varphi_{l, m + n - 2} &
    \varphi_{l, m + n - 1}
    \\
    \vdots & \ddots & \vdots & \vdots
    \\
    \varphi_{l + n - 2}, m &
    \dots &
    \varphi_{l + n - 2, m + n - 2} &
    \varphi_{l + n - 2, m + n - 1}
    \\
    {\varphi_{l + n - 1, m}} &
    \dots &
    \varphi_{l + n - 1, m + n - 2} &
    \boxed{\varphi_{l + n - 1, m + n - 1}}
    \end{vmatrix}
    \\
    \hspace{20mm}
    \cdot \,
    \begin{vmatrix}
    \varphi_{l, m + 1} &
    \dots &
    \varphi_{l, m + n - 1}
    \\
    \vdots & \ddots & \vdots
    \\
    \varphi_{l + n - 2, m + 1} &
    \dots &
    \boxed{\varphi_{l + n - 2, m + n - 1}}
    \end{vmatrix}^{-1}
    \, 
    \begin{vmatrix}
    \varphi_{l, m} &
    \dots &
    \varphi_{l, m + n - 2}
    \\
    \vdots & \ddots & \vdots
    \\
    \boxed{\varphi_{l + n - 2, m}} &
    \dots &
    \varphi_{l + n - 2, m + n - 2}
    \end{vmatrix},
\end{multline}
or, in terms of $\Theta_{l, m, n}$,
\begin{align}
    |\Theta_{l, m, n}|_{n 1}
    &= - |\Theta_{l, m, n}|_{nn} \, \brackets{
    |\Theta_{l, m + 1, n - 1}|_{n - 1, n - 1}
    }^{-1} \, 
    |\Theta_{l, m, n - 1}|_{n - 1, 1}.
\end{align}
Then
\begin{align}
    |\Theta_{l + 1, m, n - 1}|_{n - 1, 1} \, 
    \brackets{|\Theta_{l, m, n - 1}|_{n - 1, 1}}^{-1}
    &= |\Theta_{l + 1, m, n - 1}|_{n - 1, n - 1} \, \brackets{
    |\Theta_{l + 1, m + 1, n - 2}|_{n - 2, n - 2}
    }^{-1} \, 
    |\Theta_{l + 1, m, n - 2}|_{n - 2, 1}
    \\[1mm]
    &\quad \, \cdot 
    \brackets{|\Theta_{l, m, n - 2}|_{n - 2, 1}}^{-1} \,
    |\Theta_{l, m + 1, n - 2}|_{n - 2, n - 2} \, 
    \brackets{
    |\Theta_{l, m, n - 1}|_{n - 1, n - 1}
    }^{-1}.
\end{align}
Note that, according to the Jacobi identity \eqref{eq:Jacobi_nc}, we have
\begin{align}
    |\Theta_{l + 1, m, n - 2}|_{n - 2, 1} \,
    \brackets{|\Theta_{l, m, n - 2}|_{n - 2, 1}}^{-1} \,
    |\Theta_{l, m + 1, n - 2}|_{n - 2, n - 2}
    &= |\Theta_{l + 1, m + 1, n - 2}|_{n - 2, n - 2}
    - |\Theta_{l, m, n - 1}|_{n - 1, n - 1}
\end{align}
and, therefore,
\begin{align}
    |\Theta_{l + 1, m, n - 1}|_{n - 1, 1} \, 
    \brackets{|\Theta_{l, m, n - 1}|_{n - 1, 1}}^{-1}
    &= |\Theta_{l + 1, m, n - 1}|_{n - 1, n - 1} \, \brackets{
    |\Theta_{l + 1, m + 1, n - 2}|_{n - 2, n - 2}
    }^{-1}
    \\[1mm]
    &\quad \, \cdot 
    \brackets{
    |\Theta_{l + 1, m + 1, n - 2}|_{n - 2, n - 2}
    - |\Theta_{l, m, n - 1}|_{n - 1, n - 1}
    }\, 
    \brackets{
    |\Theta_{l, m, n - 1}|_{n - 1, n - 1}
    }^{-1}.
\end{align}
Hence, \eqref{eq:Theta_in} becomes
\begin{align}
    |\Theta_{l, m, n}|_{nn}
    &= |\Theta_{l + 1, m + 1, n - 1}|_{n - 1, n - 1}
    - |\Theta_{l + 1, m, n - 1}|_{n - 1, n - 1} \, \brackets{
    |\Theta_{l + 1, m + 1, n - 2}|_{n - 2, n - 2}
    }^{-1} \, 
    \\[1mm]
    & \quad \, \cdot \brackets{
    |\Theta_{l + 1, m + 1, n - 2}|_{n - 2, n - 2}
    - |\Theta_{l, m, n - 1}|_{n - 1, n - 1}
    }\, 
    \brackets{
    |\Theta_{l, m, n - 1}|_{n - 1, n - 1}
    }^{-1} \,
    |\Theta_{l, m + 1, n - 1}|_{n-1, n-1},
    \\[2mm]
    |\Theta_{l, m, n}|_{nn}
    &= |\Theta_{l + 1, m + 1, n - 1}|_{n - 1, n - 1}
    - |\Theta_{l + 1, m, n - 1}|_{n - 1, n - 1} \, 
    \\[1mm]
    & \quad \, \cdot \brackets{
    \brackets{
    |\Theta_{l, m, n - 1}|_{n - 1, n - 1}
    }^{-1}
    - \brackets{
    |\Theta_{l + 1, m + 1, n - 2}|_{n - 2, n - 2}
    }^{-1}
    } \,
    |\Theta_{l, m + 1, n - 1}|_{n-1, n-1},
    \\[2mm]
    \theta_{l, m, n}
    &= \theta_{l + 1, m + 1, n - 1}
    - \theta_{l + 1, m, n - 1} \, \brackets{
    \theta_{l, m, n - 1}^{-1}
    - \theta_{l + 1, m + 1, n - 2}^{-1}
    } \, 
    \theta_{l, m + 1, n - 1}.
\end{align}
After the shift $n \mapsto n - 1$, the latter coincides with \eqref{eq:2ddToda_nc}.
\end{proof}

\begin{prop}
\label{thm:2ddToda_scalsys}
The non-abelian 2d discrete Toda equation \eqref{eq:2ddToda_nc} is a consequence of the following scalar linear system
\begin{gather}
    \label{eq:2ddToda_scalsys}
    \left\{
    \begin{array}{rcl}
         \psi_{l + 1, m, n}
         &=& \psi_{l, m, n + 1}
         + a_{l, m, n} \psi_{l, m, n},
         \\[2mm]
         \psi_{l, m - 1, n + 1}
         &=& \psi_{l, m, n + 1}
         + b_{l, m, n} \psi_{l, m, n},
    \end{array}
    \right.
    \\[3mm]
    \label{eq:2ddToda_abdef}
    \begin{aligned}
    a_{l, m, n}
    &= \theta_{l + 1, m, n} \theta_{l, m, n}^{-1},
    &&&&&
    b_{l, m, n}
    &= \theta_{l, m - 1, n + 1} \theta_{l, m, n}^{-1}.
    \end{aligned}
\end{gather}
\end{prop}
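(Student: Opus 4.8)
The plan is to obtain \eqref{eq:2ddToda_nc} as the compatibility condition of the scalar linear system \eqref{eq:2ddToda_scalsys}, repeating in the non-commutative setting the computation already carried out for the $\tau$-function form in Subsection~\ref{sec:hirotaeq}, but now tracking the order of the non-commuting factors. The starting point is that $\psi_{l + 1, m - 1, n + 1}$ can be produced in two ways. Applying the shift $T_{-2} T_3$ to the first relation of \eqref{eq:2ddToda_scalsys} gives
\[
\psi_{l + 1, m - 1, n + 1}
= \psi_{l, m - 1, n + 2} + a_{l, m - 1, n + 1} \, \psi_{l, m - 1, n + 1},
\]
while applying $T_1$ to the second relation gives
\[
\psi_{l + 1, m - 1, n + 1}
= \psi_{l + 1, m, n + 1} + b_{l + 1, m, n} \, \psi_{l + 1, m, n}.
\]
These are the two sides of the equality $T_{-2,3}\brackets{\psi_{l+1,m,n}} = \psi_{l+1,m-1,n+1} = T_1\brackets{\psi_{l,m-1,n+1}}$ used in the preliminaries.

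Next I would reduce both right-hand sides to \emph{left}-linear combinations of the three values $\psi_{l, m, n + 2}$, $\psi_{l, m, n + 1}$, $\psi_{l, m, n}$ by using the relations of \eqref{eq:2ddToda_scalsys} one more time: in the first expression the second relation eliminates $\psi_{l, m - 1, n + 2}$ and $\psi_{l, m - 1, n + 1}$, and in the second expression the first relation eliminates $\psi_{l + 1, m, n + 1}$ and $\psi_{l + 1, m, n}$. On both sides the coefficient of $\psi_{l, m, n + 2}$ comes out equal to the unit, so these terms cancel. Treating $\psi_{l, m, n + 1}$ and $\psi_{l, m, n}$ as formally independent generators and equating their left coefficients yields the pair of relations
\[
\left\{
\begin{array}{rcl}
b_{l, m, n + 1} + a_{l, m - 1, n + 1}
&=& a_{l, m, n + 1} + b_{l + 1, m, n},
\\[1mm]
a_{l, m - 1, n + 1} \, b_{l, m, n}
&=& b_{l + 1, m, n} \, a_{l, m, n},
\end{array}
\right.
\]
exactly as displayed in Subsection~\ref{sec:hirotaeq}.

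Finally I would substitute the definitions \eqref{eq:2ddToda_abdef}. The multiplicative relation collapses to an identity: both sides reduce to $\theta_{l + 1, m - 1, n + 1} \, \theta_{l, m, n}^{-1}$ after the telescoping cancellations $\theta_{l,m-1,n+1}^{-1}\theta_{l,m-1,n+1}=1$ and $\theta_{l+1,m,n}^{-1}\theta_{l+1,m,n}=1$, so it imposes no constraint. For the additive relation, grouping the terms that share a right factor $\theta_{l,m,n+1}^{-1}$ on the left and a left factor $\theta_{l+1,m-1,n+1}$ on the right gives
\[
\brackets{\theta_{l, m - 1, n + 2} - \theta_{l + 1, m, n + 1}} \, \theta_{l, m, n + 1}^{-1}
= \theta_{l + 1, m - 1, n + 1} \, \brackets{\theta_{l + 1, m, n}^{-1} - \theta_{l, m - 1, n + 1}^{-1}}.
\]
Multiplying on the right by $\theta_{l, m, n + 1}$ and then reindexing $m \mapsto m + 1$, $n \mapsto n - 1$ (and moving one term across the equality) reproduces \eqref{eq:2ddToda_nc} verbatim.

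The main obstacle is not the algebra, which is essentially forced once the factors are ordered correctly, but two points of rigor specific to the non-abelian case. First, one must justify that the coefficients of $\psi_{l, m, n + 1}$ and $\psi_{l, m, n}$ may be equated, i.e. that these $\psi$-values are genuinely free (the linear system provides no relation among the pure $n$-shifts at a fixed $(l, m)$), so that the identity must hold as an identity in the coefficient ring. Second, one must check that the additive relation factors \emph{cleanly} into the $\brackets{\theta^{-1} - \theta^{-1}}$ form: because the coefficients act on the left of the $\psi$'s, the order of every product built from $a$ and $b$ is rigid, and it is precisely this rigidity that makes the substitution land on \eqref{eq:2ddToda_nc} rather than on a variant with the factors transposed. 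Verifying that the shared left/right factors are the ones written above, and no others, is the step that requires care.
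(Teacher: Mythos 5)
Your proposal is correct and follows essentially the same route as the paper's proof: expand the compatibility condition $T_{-2,3}\brackets{\psi_{l+1,m,n}} = T_1\brackets{\psi_{l,m-1,n+1}}$ into left-linear combinations of $\psi_{l,m,n+2}$, $\psi_{l,m,n+1}$, $\psi_{l,m,n}$, equate coefficients to obtain the additive and multiplicative relations for $a$ and $b$, check that the multiplicative one is an identity, and reindex the additive one into \eqref{eq:2ddToda_nc}. Your intermediate factored form and the final reindexing $m \mapsto m+1$, $n \mapsto n-1$ match the paper's computation exactly, and your two cautionary remarks (freeness of the $\psi$'s and rigidity of the factor ordering) are points the paper handles implicitly.
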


\begin{proof}
Let us consider the compatibility condition of system \eqref{eq:2ddToda_scalsys}. Let \eqref{eq:2ddToda_scalsys}$_1$ and \eqref{eq:2ddToda_scalsys}$_2$ be the first and second equations of system \eqref{eq:2ddToda_scalsys}, respectively. Then, on the one hand, we have
\begin{align}
    T_{-2, 3} \brackets{\psi_{l + 1, m, n}}
    &\overset{\,\,\eqref{eq:2ddToda_scalsys}_1}{=}
    \psi_{l, m - 1, n + 2} 
    + a_{l, m - 1, n + 1} \psi_{l, m - 1, n + 1}
    \\[1mm]
    &\overset{\,\,\eqref{eq:2ddToda_scalsys}_2}{=}
    \psi_{l, m, n + 2} 
    + b_{l, m, n + 1} \psi_{l, m, n + 1}
    + a_{l, m - 1, n + 1} \brackets{
    \psi_{l, m, n + 1} 
    + b_{l, m, n} \psi_{l, m, n}
    }
    \\[1mm]
    &\overset{\,\,\phantom{\eqref{eq:2ddToda_scalsys}_2}}{=}
    \psi_{l, m, n + 2}
    + \brackets{
    b_{l, m, n + 1}
    + a_{l, m - 1, n + 1}
    } \psi_{l, m, n + 1}
    + a_{l, m - 1, n + 1} b_{l, m, n} \psi_{l, m, n}.
\end{align}
On the other hand, one can obtain
\begin{align}
    T_{1} \brackets{\psi_{l, m - 1, n + 1}}
    &\overset{\,\,\eqref{eq:2ddToda_scalsys}_2}{=}
    \psi_{l + 1, m, n + 1} 
    + b_{l + 1, m, n} \psi_{l + 1, m, n}
    \\[1mm]
    &\overset{\,\,\eqref{eq:2ddToda_scalsys}_1}{=}
    \psi_{l, m, n + 2} 
    + a_{l, m, n + 1} \psi_{l, m, n + 1}
    + b_{l + 1, m, n} \brackets{
    \psi_{l, m, n + 1} 
    + a_{l, m, n} \psi_{l, m, n}
    }
    \\[1mm]
    &\overset{\,\,\phantom{\eqref{eq:2ddToda_scalsys}_2}}{=}
    \psi_{l, m, n + 2}
    + \brackets{
    a_{l, m, n + 1}
    + b_{l + 1, m, n}
    } \psi_{l, m, n + 1}
    + b_{l + 1, m, n} a_{l, m, n} \psi_{l, m, n}.
\end{align}
Since we assume that the system \eqref{eq:2ddToda_scalsys} is compatible, i.e.
\begin{align}
    T_{-2, 3} \brackets{\psi_{l + 1, m, n}} = T_{1} \brackets{\psi_{l, m - 1, n + 1}},
\end{align}
the following conditions must be satisfied
\begin{align}
    \label{eq:2ddToda_scalsysab}
    \left\{
    \begin{array}{rcl}
        b_{l, m, n + 1} + a_{l, m - 1, n + 1}
        &=& a_{l, m, n + 1} + b_{l + 1, m, n},
        \\[2mm]
        a_{l, m - 1, n + 1} b_{l, m, n}
        &=& b_{l + 1, m, n} a_{l, m, n}.
    \end{array}
    \right.
\end{align}
After substitution of $a_{l, m, n}$ and $b_{l, m, n}$ defined by \eqref{eq:2ddToda_abdef} into \eqref{eq:2ddToda_scalsysab}$_2$, it turns into the identity:
\begin{align}
        a_{l, m - 1, n + 1} \cdot b_{l, m, n}
        &= b_{l + 1, m, n} \cdot a_{l, m, n}
        \\[2mm]
        &\Leftrightarrow
        \qquad
        \theta_{l + 1, m - 1, n + 1} \theta_{l, m - 1, n + 1}^{-1}
        \cdot \theta_{l, m - 1, n + 1} \theta_{l, m, n}^{-1}
        = \theta_{l + 1, m - 1, n + 1} \theta_{l + 1, m, n}^{-1}
        \cdot \theta_{l + 1, m, n} \theta_{l, m, n}^{-1},
\end{align}
while the first equation \eqref{eq:2ddToda_scalsysab}$_1$ coincides with the 2d discrete Toda lattice \eqref{eq:2ddToda_nc}:
\begin{align}
    b_{l, m, n + 1} + a_{l, m - 1, n + 1}
    &= a_{l, m, n + 1} + b_{l + 1, m, n}
    \\[2mm]
    &\Leftrightarrow
    \qquad
    a_{l, m + 1, n}
    = b_{l, m + 1, n} + a_{l, m, n} 
    - b_{l + 1, m + 1, n - 1}
    \\[2mm]
    &\Leftrightarrow
    \qquad
    \theta_{l + 1, m + 1, n} \theta_{l, m + 1, n}^{-1}
    = \theta_{l, m, n + 1} \theta_{l, m + 1, n}^{-1}
    + \theta_{l + 1, m, n} \theta_{l, m, n}^{-1} 
    - \theta_{l + 1, m, n} \theta_{l + 1, m + 1, n - 1}^{-1}
    \\[2mm]
    &\Leftrightarrow
    \qquad
    \theta_{l + 1, m + 1, n}
    = \theta_{l, m, n + 1} 
    + \theta_{l + 1, m, n} \brackets{
    \theta_{l, m, n}^{-1} 
    - \theta_{l + 1, m + 1, n - 1}^{-1}
    } \theta_{l, m + 1, n}.
\end{align}
Therefore, we are done.
\end{proof}

Once we have a scalar system equivalent to an equation, one may rewrite it in matrix form. We are going to rewrite system \eqref{eq:2ddToda_scalsys} by using either the two-component or semi-infinite formalisms. The two-component formalism gives a discrete analog of the Zakharov-Shabat equation, i.e. the zero-curvature condition, while the semi-infinite case leads to a discrete Lax equation. Regarding the first case, one can arrive at the following 
\begin{prop}
\label{thm:2ddToda_matsys}
Let the functions $\theta_{l, m, n}$ be solutions of the non-abelian 2d discrete Toda lattice \eqref{eq:2ddToda_nc}.Then the following matrix linear system
\begin{align}
    \label{eq:2ddToda_matsys}
    &&
    &
    \left\{
    \begin{array}{rcl}
         \Psi_{l, m, n + 1}
         &=& \mathcal{L}_{l, m, n} \Psi_{l, m, n},  
         \\[2mm]
         \Psi_{l, m - 1, n}
         &=& \mathcal{M}_{l, m, n} \Psi_{l, m, n},  
    \end{array}
    \right.
    &
    \Psi_{l, m, n}
    &= 
    \begin{pmatrix}
        \psi_{l, m, n} & \psi_{l, m, n - 1}
    \end{pmatrix}^{T},
    &&
\end{align}
where $\mathcal{L}_{l, m, n}$ and $\mathcal{M}_{l, m, n}$ are $2\times2$ matrices given by
\begin{align}
    \label{eq:2ddToda_pair}
    \begin{aligned}
    \mathcal{L}_{l, m, n}
    &= 
    \begin{pmatrix}
        T_{1, -2} 
        - \theta_{l + 1, m, n} \theta_{l + 1, m + 1, n - 1}^{-1}
        - \theta_{l + 1, m + 1, n} \theta_{l, m + 1, n}^{-1}
        & 
        - \theta_{l + 1, m, n}
        \\[2mm]
        \theta_{l, m + 1, n}^{-1} 
        & 
        0
    \end{pmatrix},
    \\[3mm]
    \mathcal{M}_{l, m, n}
    &= 
    \begin{pmatrix}
        1
        & 
        \theta_{l, m, n}
        \\[2mm]
        - \theta_{l + 1, m, n - 1}^{-1} 
        & 
        T_{1, - 2} - \theta_{l + 1, m, n - 1}^{-1} \theta_{l, m, n}
    \end{pmatrix}
    ,
    \end{aligned}
\end{align}
is compatible, i.e. a discrete analog of the zero-curvature condition holds{\rm:}
\begin{align}
    \label{eq:zrc_2dd}
    \mathcal{L}_{l, m - 1, n} 
    \mathcal{M}_{l, m, n}
    &= \mathcal{M}_{l, m, n + 1}
    \mathcal{L}_{l, m, n}.
\end{align}
\end{prop}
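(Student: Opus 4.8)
The plan is to prove the zero-curvature identity \eqref{eq:zrc_2dd} by computing the two operator products $\mathcal{L}_{l,m-1,n}\,\mathcal{M}_{l,m,n}$ and $\mathcal{M}_{l,m,n+1}\,\mathcal{L}_{l,m,n}$ explicitly and reducing the comparison of their entries to the lattice equation \eqref{eq:2ddToda_nc}. The essential preliminary point is that the entries of $\mathcal{L}_{l,m,n}$ and $\mathcal{M}_{l,m,n}$ in \eqref{eq:2ddToda_pair} are not merely elements of $R$ but \emph{difference operators}: the symbol $T_{1,-2}$ denotes the shift $l \mapsto l+1$, $m \mapsto m-1$, so that composing two such operator-matrices requires pushing each shift to the right past the multiplication operators it meets, according to $T_{1,-2}\, f_{l,m,n} = f_{l+1,m-1,n}\, T_{1,-2}$. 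Once these commutations are performed, both products turn out to be ordinary matrices over $R$ except for a single surviving $T_{1,-2}$ in the top-left corner of each, and matching the remaining coefficients is where \eqref{eq:2ddToda_nc} enters.

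First I would compute the left-hand side. Substituting $m \mapsto m-1$ into \eqref{eq:2ddToda_pair} and multiplying, the $(2,1)$ and $(2,2)$ entries are immediate, namely $\theta_{l,m,n}^{-1}$ and $1$; in the $(1,2)$ entry the two occurrences of $T_{1,-2}$ cancel after moving the shift past $\theta_{l,m,n}$, and the remaining terms collapse to $-\theta_{l+1,m,n}$; in the $(1,1)$ entry the summand $-\theta_{l+1,m-1,n}\theta_{l+1,m,n-1}^{-1}$ cancels against the contribution $+\theta_{l+1,m-1,n}\theta_{l+1,m,n-1}^{-1}$ produced by the two off-diagonal entries. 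The upshot is
\begin{align}
\mathcal{L}_{l,m-1,n}\,\mathcal{M}_{l,m,n}
&=
\begin{pmatrix}
T_{1,-2} - \theta_{l+1,m,n}\theta_{l,m,n}^{-1} & -\theta_{l+1,m,n} \\[1mm]
\theta_{l,m,n}^{-1} & 1
\end{pmatrix},
\end{align}
an identity in $R$ that needs no input from the lattice equation.

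Then I would compute the right-hand side $\mathcal{M}_{l,m,n+1}\,\mathcal{L}_{l,m,n}$ in the same fashion, now using $T_{1,-2}\,\theta_{l,m+1,n}^{-1} = \theta_{l+1,m,n}^{-1}\,T_{1,-2}$ to merge the shift operators. The $(1,2)$ and $(2,2)$ entries come out directly as $-\theta_{l+1,m,n}$ and $1$, already matching the left-hand product. The crux is the $(1,1)$ entry: after the shifts combine into one $T_{1,-2}$, equality of the two $(1,1)$ entries is exactly the requirement
\begin{align}
\brackets{\theta_{l+1,m+1,n} - \theta_{l,m,n+1}}\theta_{l,m+1,n}^{-1}
&= \theta_{l+1,m,n}\brackets{\theta_{l,m,n}^{-1} - \theta_{l+1,m+1,n-1}^{-1}},
\end{align}
which is \eqref{eq:2ddToda_nc} written out. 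The same equation, inserted into the $(2,1)$ entry after factoring $\theta_{l+1,m,n}^{-1}$ out on the left, collapses that entry to $\theta_{l,m,n}^{-1}$. Hence the two products coincide precisely when $\theta_{l,m,n}$ solves \eqref{eq:2ddToda_nc}, which is the standing hypothesis, and \eqref{eq:zrc_2dd} follows.

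The main obstacle is organizational rather than conceptual: one must consistently treat $\mathcal{L}$ and $\mathcal{M}$ as matrices of operators and commute $T_{1,-2}$ correctly through every function entry, all while respecting the noncommutativity of $R$ so that no factors or inverses are silently reordered. A useful guiding observation I would keep in mind is that \eqref{eq:2ddToda_nc} is needed only in the first column of $\mathcal{M}_{l,m,n+1}\,\mathcal{L}_{l,m,n}$; everything else is an identity independent of the lattice equation, which provides a strong internal consistency check. Conceptually, \eqref{eq:zrc_2dd} is just the compatibility of the two ways of evaluating $\Psi_{l,m-1,n+1}$, namely $\mathcal{L}_{l,m-1,n}\mathcal{M}_{l,m,n}\Psi_{l,m,n}$ versus $\mathcal{M}_{l,m,n+1}\mathcal{L}_{l,m,n}\Psi_{l,m,n}$, so the computation above simultaneously confirms that the linear system \eqref{eq:2ddToda_matsys} is consistent.
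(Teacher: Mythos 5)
Your verification is correct, and your decisive computation coincides exactly with the final stage of the paper's proof: the paper also evaluates $\mathcal{L}_{l,m-1,n}\mathcal{M}_{l,m,n}$ to the same clean matrix $\bigl(\begin{smallmatrix} T_{1,-2}-\theta_{l+1,m,n}\theta_{l,m,n}^{-1} & -\theta_{l+1,m,n} \\ \theta_{l,m,n}^{-1} & 1\end{smallmatrix}\bigr)$, expands $\mathcal{M}_{l,m,n+1}\mathcal{L}_{l,m,n}$, and finds that the difference is supported on the first column with both entries proportional to the Toda expression $\theta_{l+1,m+1,n}-\theta_{l,m,n+1}-\theta_{l+1,m,n}(\theta_{l,m,n}^{-1}-\theta_{l+1,m+1,n-1}^{-1})\theta_{l,m+1,n}$. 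Your handling of $T_{1,-2}$ as a shift operator to be commuted through the function entries, and your internal consistency check that the lattice equation is needed only in the first column, both match the paper. The one structural difference is that the paper precedes this check with two additional stages: it first derives unreduced matrices $\mathcal{L}$, $\mathcal{M}$ in terms of $a_{l,m,n}$, $b_{l,m,n}$ directly from the scalar system \eqref{eq:2ddToda_scalsys} via the two-component formalism, and then applies the gauge transformation $G_{l,m,n}=\diag(1,\theta_{l,m+1,n-1})$ to land on the form \eqref{eq:2ddToda_pair}. That buys a constructive explanation of where the matrices come from and ties the matrix system back to the scalar Lax pair, but it is not logically required for the stated claim, which is precisely the zero-curvature identity \eqref{eq:zrc_2dd}; your leaner direct verification establishes it in full.
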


\begin{proof}
Following the two-component formalism, one is able to rewrite the scalar system \eqref{eq:2ddToda_scalsys} in the matrix form. During the proof, we first obtain the matrices $\mathcal{L}_{l, m, n}$, $\mathcal{M}_{l, m, n}$ by using \eqref{eq:2ddToda_scalsys}, then we bring them into the form \eqref{eq:2ddToda_pair} by a gauge transformation, and, finally, we verify that \eqref{eq:zrc_2dd} is equivalent to \eqref{eq:2ddToda_nc}.

\medskip
\textbf{\textbullet \,\, Two-component formalism.}

Let us rewrite components of the vector-functions $\Psi_{l, m, n + 1} = \begin{pmatrix} \psi_{l, m, n + 1} & \psi_{l, m, n}\end{pmatrix}^{T}$ and $\Psi_{l, m - 1, n} = \begin{pmatrix} \psi_{l, m - 1, n} & \psi_{l, m - 1, n - 1}\end{pmatrix}^{T}$ in terms of $\psi_{l, m, n}$ and $\psi_{l, m, n - 1}$, by using the linear scalar system \eqref{eq:2ddToda_scalsys} and the shift operator $T_{1, -2}$. Regarding the $\mathcal{L}_{l, m, n}$ matrix, we have
\begin{align}
    \psi_{l, m, n + 1}
    &\overset{\,\,\eqref{eq:2ddToda_scalsys}_1}{=}
    \psi_{l + 1, m, n}
    - a_{l, m, n} \psi_{l, m, n}
    \overset{\,\,\eqref{eq:2ddToda_scalsys}_2}{=}
    \psi_{l + 1, m - 1, n} 
    - b_{l + 1, m, n - 1} \psi_{l + 1, m, n + 1}
    - a_{l, m, n} \psi_{l, m, n}
    \\[2mm]
    &\overset{\,\,\phantom{\eqref{eq:2ddToda_scalsys}_1}}{=}
    \brackets{T_{1, -2} - a_{l, m, n}} \psi_{l, m, n}
    - b_{l + 1, m, n - 1} \psi_{l + 1, m, n - 1}
    \\[2mm]
    &\overset{\,\,\eqref{eq:2ddToda_scalsys}_1}{=}
    \brackets{T_{1, -2} - a_{l, m, n}} \psi_{l, m, n}
    - b_{l + 1, m, n - 1} \brackets{
    \psi_{l, m, n}
    + a_{l, m, n - 1} \psi_{l, m, n - 1}
    }
    \\[2mm]
    &\overset{\,\,\phantom{\eqref{eq:2ddToda_scalsys}_1}}{=}
    \begin{pmatrix}
        T_{1, -2} - a_{l, m, n} - b_{l + 1, m, n - 1}
        &
        - b_{l + 1, m, n - 1} a_{l, m, n - 1}
    \end{pmatrix}
    \begin{pmatrix}
        \psi_{l, m, n} & \psi_{l, m, n - 1}
    \end{pmatrix}^{T},
    \\[3mm]
    \psi_{l, m, n}
    &\overset{\,\,\phantom{\eqref{eq:2ddToda_scalsys}_1}}{=}
    \begin{pmatrix}
        1
        &
        0
    \end{pmatrix}
    \begin{pmatrix}
        \psi_{l, m, n} & \psi_{l, m, n - 1}
    \end{pmatrix}^{T},
\end{align}
and, therefore,
\begin{align}
    \mathcal{L}_{l, m, n}
    &= 
    \begin{pmatrix}
        T_{1, -2} - a_{l, m, n} - b_{l + 1, m, n - 1}
        &
        - b_{l + 1, m, n - 1} a_{l, m, n - 1}
        \\[2mm]
        1
        &
        0
    \end{pmatrix}
    .
\end{align}

Similarly, for matrix $\mathcal{M}_{l, m, n}$, one can get
\begin{align}
    \psi_{l, m - 1, n}
    &\overset{\,\,\eqref{eq:2ddToda_scalsys}_2}{=}
    \begin{pmatrix}
        1
        &
        b_{l, m, n - 1}
    \end{pmatrix}
    \begin{pmatrix}
        \psi_{l, m, n} & \psi_{l, m, n - 1}
    \end{pmatrix}^{T},
    \\[2mm]
    a_{l, m - 1, n - 1} \psi_{l, m - 1, n - 1}
    &\overset{\,\,\eqref{eq:2ddToda_scalsys}_1}{=}
    \psi_{l + 1, m - 1, n - 1} 
    - \psi_{l, m - 1, n}
    \overset{\,\,\eqref{eq:2ddToda_scalsys}_2}{=}
    T_{1, -2} \brackets{\psi_{l, m, n - 1}}
    - \brackets{
    \psi_{l, m, n} + b_{l, m, n - 1} \psi_{l, m, n - 1}
    }
    \\[2mm]
    &\overset{\,\,\phantom{\eqref{eq:2ddToda_scalsys}_1}}{=}
    \begin{pmatrix}
        - 1
        &
        T_{1, -2} - b_{l, m, n - 1}
    \end{pmatrix}
    \begin{pmatrix}
        \psi_{l, m, n} & \psi_{l, m, n - 1}
    \end{pmatrix}^{T},
\end{align}
or, finally,
\begin{align}
    \mathcal{M}_{l, m, n}
    &= 
    \begin{pmatrix}
        1 
        & 
        b_{l, m, n - 1}
        \\[2mm]
        - a_{l, m - 1, n - 1}^{-1}
        &
        a_{l, m - 1, n - 1}^{-1} \brackets{
        T_{1, -2} - b_{l, m, n - 1}
        }.
    \end{pmatrix}
\end{align}

\medskip
\textbf{\textbullet \,\, Gauge transformation.}
To bring matrices $\mathcal{L}_{l, m, n}$ and $\mathcal{M}_{l, m, n}$ to the form \eqref{eq:2ddToda_pair}, one needs to make a gauge-transformation with the help of a non-singular matrix $G_{l, m, n}$, i.e. we consider the change
\begin{align}
    \hat \Psi_{l, m, n}
    &= G_{l, m, n} \Psi_{l, m, n}
\end{align}
in system \eqref{eq:2ddToda_matsys}. Then, the matrices $\hat{\mathcal{L}}_{l, m, n}$, $\hat{\mathcal{M}}_{l, m, n}$ are defined as follows:
\begin{align}
    \label{eq:gaugerule_2dd}
    &&
    \hat{\mathcal{L}}_{l, m, n}
    &= G_{l, m, n + 1}^{-1} \mathcal{L}_{l, m, n} G_{l, m, n},
    &
    \hat{\mathcal{M}}_{l, m, n}
    &= G_{l, m - 1, n}^{-1} \mathcal{M}_{l, m, n} G_{l, m, n}.
    &&
\end{align}
Let $G_{l, m, n} = \diag\brackets{1, \,\, \theta_{l, m + 1, n - 1}}$, then its inverse is $G_{l, m, n}^{-1} = \diag\brackets{1, \,\, \theta_{l, m + 1, n - 1}^{-1}}$. Making a conjugation in the shifted $m \mapsto m - 1$ matrices $\mathcal{L}_{l, m, n}$ and $\mathcal{M}_{l, m, n}$, we obtain
\begin{align}
    \hat{\mathcal{L}}_{l, m, n}
    &= 
    \begin{pmatrix}
        1 & 0 \\[2mm] 0 & \theta_{l, m + 1, n}^{-1}
    \end{pmatrix}
    \begin{pmatrix}
        \begin{aligned}
        T_{1, -2} 
        &- \theta_{l + 1, m + 1, n} \, \theta_{l, m + 1, n}^{-1}
        \\
        &- \theta_{l + 1, m, n} \theta_{l + 1, m + 1, n - 1}^{-1}
        \end{aligned}
        & 
        - \theta_{l + 1, m, n} \theta_{l, m + 1, n - 1}^{-1}
        \\[2mm]
        1 & 0
    \end{pmatrix}
    \begin{pmatrix}
        1 & 0 \\[2mm] 0 & \theta_{l, m + 1, n - 1}
    \end{pmatrix}
    \\[2mm]
    &= 
    \begin{pmatrix}
        T_{1, -2}
        - \theta_{l + 1, m + 1, n} \theta_{l, m + 1, n}^{-1}
        - \theta_{l + 1, m, n} \theta_{l + 1, m + 1, n - 1}^{-1}
        &
        - \theta_{l + 1, m, n} \theta_{l, m + 1, n - 1}^{-1}
        \\[2mm]
        \theta_{l, m + 1, n}^{-1} 
        & 
        0
    \end{pmatrix}
    \begin{pmatrix}
        1 & 0 \\[2mm] 0 & \theta_{l, m + 1, n - 1}
    \end{pmatrix}
    \\[2mm]
    &= 
    \begin{pmatrix}
        T_{1, -2}
        - \theta_{l + 1, m + 1, n} \theta_{l, m + 1, n}^{-1}
        - \theta_{l + 1, m, n} \theta_{l + 1, m + 1, n - 1}^{-1}
        &
        - \theta_{l + 1, m, n}
        \\[2mm]
        \theta_{l, m + 1, n}^{-1} 
        & 
        0
    \end{pmatrix},
\end{align}
and 
\begin{align}
    \hat{\mathcal{M}}_{l, m, n}
    &= 
    \begin{pmatrix}
        1 & 0 \\[2mm] 0 & \theta_{l, m, n - 1}^{-1}
    \end{pmatrix}
    \begin{pmatrix}
        1
        & 
        \theta_{l, m, n} \theta_{l, m + 1, n - 1}^{-1}
        \\[2mm]
        - \theta_{l, m, n - 1} \theta_{l + 1, m, n - 1}^{-1} 
        & 
        \begin{aligned}
        &\theta_{l, m, n - 1} \theta_{l + 1, m, n - 1}^{-1} 
        \\
        &\brackets{
        T_{1, -2}
        - \theta_{l, m, n} \theta_{l, m + 1, n - 1}^{-1}
        }
        \end{aligned}
    \end{pmatrix}
    \begin{pmatrix}
        1 & 0 \\[2mm] 0 & \theta_{l, m + 1, n - 1}
    \end{pmatrix}
    \\[2mm]
    &= 
    \begin{pmatrix}
        1 
        & 
        \theta_{l, m, n} \theta_{l, m + 1, n - 1}^{-1}
        \\[2mm]
        - \theta_{l + 1, m, n - 1}^{-1}
        &
        \theta_{l + 1, m, n - 1}^{-1}
        \brackets{
        T_{1, -2}
        - \theta_{l, m, n} \theta_{l, m + 1, n - 1}^{-1}
        }
    \end{pmatrix}
    \begin{pmatrix}
        1 & 0 \\[2mm] 0 & \theta_{l, m + 1, n - 1}
    \end{pmatrix}
    \\[2mm]
    &= 
    \begin{pmatrix}
        1 
        & 
        \theta_{l, m, n}
        \\[2mm]
        - \theta_{l + 1, m, n - 1}^{-1}
        &
        \theta_{l + 1, m, n - 1}^{-1}
        \brackets{
        T_{1, -2}
        - \theta_{l, m, n} \theta_{l, m + 1, n - 1}^{-1}
        } \theta_{l, m + 1, n - 1}
    \end{pmatrix}
    \\[2mm]
    &= 
    \begin{pmatrix}
        1 
        & 
        \theta_{l, m, n}
        \\[2mm]
        - \theta_{l + 1, m, n - 1}^{-1}
        &
        \theta_{l + 1, m, n - 1}^{-1}
        T_{1, -2} \brackets{\theta_{l, m + 1, n - 1}}
        - \theta_{l + 1, m, n - 1}^{-1}
        \theta_{l, m, n}
    \end{pmatrix}
    \\[2mm]
    &= 
    \begin{pmatrix}
        1 
        & 
        \theta_{l, m, n}
        \\[2mm]
        - \theta_{l + 1, m, n - 1}^{-1}
        &
        T_{1, -2}
        - \theta_{l + 1, m, n - 1}^{-1}
        \theta_{l, m, n}
    \end{pmatrix}
    .
\end{align}
Below we will omit the signs $\hat{\phantom{u}}$, hoping that it will not lead to misunderstandings.

\medskip
\textbf{\textbullet \,\, Zero-curvature condition.}
The linear system \eqref{eq:2ddToda_matsys} is compatible if
\begin{align}
    \tag{\ref{eq:zrc_2dd}}
    \mathcal{L}_{l, m - 1, n} 
    \mathcal{M}_{l, m, n}
    &= \mathcal{M}_{l, m, n + 1}
    \mathcal{L}_{l, m, n}.
\end{align}
Verifying the discrete analog of the zero-curvature condition, we have
\begin{align}
    l.h.s.\eqref{eq:zrc_2dd}
    &=
    \begin{pmatrix}
        T_{1, -2} - \theta_{l + 1, m, n} \theta_{l, m, n}^{-1}
        &
        - \theta_{l + 1, m, n}
        \\[2mm]
        \theta_{l, m, n}^{-1}
        &
        1
    \end{pmatrix},
    \\[2mm]
    r.h.s.\eqref{eq:zrc_2dd}
    &= 
    \begin{pmatrix}
        T_{1, -2}
        - \theta_{l + 1, m + 1, n} \theta_{l, m + 1, n}^{-1}
        - \theta_{l + 1, m, n} \theta_{l + 1, m + 1, n - 1}^{-1}
        + \theta_{l, m, n + 1} \theta_{l, m + 1, n}^{-1}
        &
        - \theta_{l + 1, m, n}
        \\[2mm]
        \theta_{l + 1, m, n}^{-1} \theta_{l + 1, m + 1, n} \theta_{l, m + 1, n}^{-1}
        + \theta_{l + 1, m + 1, n - 1}^{-1}
        - \theta_{l + 1, m, n}^{-1}
        \theta_{l, m, n + 1} \theta_{l, m + 1, n}^{-1}
        &
        1
    \end{pmatrix},
    \\[4mm]
    l.h.s.\eqref{eq:zrc_2dd}
    &- r.h.s.\eqref{eq:zrc_2dd}
    \\[2mm]
    &= 
    \begin{pmatrix}
        \brackets{
        \theta_{l + 1, m + 1, n}
        - \theta_{l, m, n + 1}
        - \theta_{l + 1, m , n}
        \brackets{
        \theta_{l, m, n}^{-1}
        - \theta_{l + 1, m + 1, n - 1}^{-1}
        } \theta_{l, m + 1, n}
        } \theta_{l, m + 1, n}^{-1}
        &
        0
        \\[2mm]
        - \theta_{l + 1, m, n}^{-1}
        \brackets{
        \theta_{l + 1, m + 1, n}
        - \theta_{l, m, n + 1}
        - \theta_{l + 1, m, n} \brackets{
        \theta_{l, m, n}^{-1}
        - \theta_{l + 1, m + 1, n - 1}^{-1}
        } \theta_{l, m + 1, n}
        } \theta_{l, m + 1, n}^{-1}
        &
        0
    \end{pmatrix}.
\end{align}
Therefore, the condition \eqref{eq:zrc_2d} holds if $\theta_{l, m, n}$ solves the non-abelian 2d discrete Toda lattice \eqref{eq:2ddToda_nc}.
\end{proof}

As we have mentioned before, besides the two-component formalism, one can also introduce a semi-infinite vector-function in order to derive a discrete analog of the Lax equation. Due to the further reductions to the 1d cases, instead of the scalar system \eqref{eq:2ddToda_scalsys}, we prefer to consider the equivalent system
\begin{gather}
    \label{eq:2ddToda_scalsys_2}
    \left\{
    \begin{array}{rcl}
         \psi_{l + 1, m - 1, n}
         &=& \psi_{l, m, n + 1}
         + \brackets{
         a_{l, m - 1, n} + b_{l, m, n}
         } 
         \psi_{l, m, n}
         + a_{l, m - 1, n} b_{l, m, n - 1} \psi_{l, m, n - 1}
         ,
         \\[2mm]
         \psi_{l, m - 1, n}
         &=& \psi_{l, m, n}
         + b_{l, m, n - 1} \psi_{l, m, n - 1},
    \end{array}
    \right.
\end{gather}
where the functions $a_{l, m, n}$ and $b_{l, m, n}$ are defined as before by formula \eqref{eq:2ddToda_abdef}. 
\begin{prop}
\label{thm:2ddToda_matsys_n}
The non-abelian semi-infinite 2d discrete Toda equation \eqref{eq:2ddToda_nc} is equivalent to the system
\begin{align}
    \label{eq:2ddToda_matsys_n}
    &\left\{
    \begin{array}{rcl}
         L_{l, m} \Psi_{l, m} 
         &=& \Psi_{l + 1, m - 1}
         ,
         \\[2mm]
         \Psi_{l, m - 1}
         &=& M_{l, m} \Psi_{l, m},
    \end{array}
    \right.
    &
    \Psi_{l, m}
    &= 
    \begin{pmatrix}
        \psi_{l, m, 1}
        &
        \psi_{l, m, 2}
        &
        \psi_{l, m, 3}
        &
        \dots
    \end{pmatrix}^T
\end{align}
with matrices $L_{l, m}$ and $M_{l, m}$ of the form
\begin{align}
    \label{eq:2ddToda_pair_n}
    \begin{aligned}
    L_{l, m}
    &= 
    \begin{pmatrix}
        a_{l, m - 1, 1} + b_{l, m, 1}
        &
        1
        &&&
        \\[1mm]
        a_{l, m - 1, 2} b_{l, m, 1}
        &
        a_{l, m - 1, 2} + b_{l, m, 2}
        &
        1
        &
        &
        \\[1mm]
        &
        a_{l, m - 1, 3} b_{l, m, 2}
        &
        a_{l, m - 1, 3} + b_{l, m, 3}
        &
        \ddots
        &
        \\[1mm]
        &
        &
        \ddots
        &
        \ddots
    \end{pmatrix}
    \\
    &\equiv \sum_{k \geq 1} E_{k, k + 1}
    + \sum_{k \geq 1} \brackets{a_{l, m - 1, k} + b_{l, m, k}} E_{k, k}
    + \sum_{k \geq 1} a_{l, m - 1, k} b_{l, m, k} E_{k + 1, k}
    ,
    \\[3mm]
    M_{l, m}
    &= 
    \begin{pmatrix}
        1
        &
        &&&
        \\[1mm]
        b_{l, m, 1}
        &
        1
        &
        &
        &
        \\[1mm]
        &
        b_{l, m, 2}
        &
        1
        &
        &
        \\[1mm]
        &
        &
        \ddots
        &
        \ddots
    \end{pmatrix}
    \equiv \sum_{k \geq 1} E_{k, k}
    + \sum_{k \geq 1} b_{l, m, k} E_{k + 1, k}
    .
    \end{aligned}
\end{align}
\end{prop}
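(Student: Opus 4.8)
The plan is to recognise the semi-infinite system \eqref{eq:2ddToda_matsys_n} as the scalar linear problem \eqref{eq:2ddToda_scalsys_2} stacked into an infinite column vector, and then to deduce the equivalence with \eqref{eq:2ddToda_nc} from the compatibility of that scalar problem. First I would record that \eqref{eq:2ddToda_scalsys_2} is equivalent to \eqref{eq:2ddToda_scalsys}: its second line is just \eqref{eq:2ddToda_scalsys}$_2$ after the shift $n \mapsto n-1$, and its first line follows by applying \eqref{eq:2ddToda_scalsys}$_1$ with $m \mapsto m-1$ and then eliminating $\psi_{l,m-1,n+1}$ and $\psi_{l,m-1,n}$ via \eqref{eq:2ddToda_scalsys}$_2$ and the shifted second equation; this produces precisely the coefficients $a_{l,m-1,n}+b_{l,m,n}$ and $a_{l,m-1,n}\,b_{l,m,n-1}$.

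Next I would assemble $\Psi_{l,m} = \brackets{\psi_{l,m,1}\ \psi_{l,m,2}\ \dots}^T$ and read the two lines of \eqref{eq:2ddToda_scalsys_2} component-by-component in $n$. The second line $\psi_{l,m-1,n} = \psi_{l,m,n}+b_{l,m,n-1}\psi_{l,m,n-1}$ is exactly $\Psi_{l,m-1}=M_{l,m}\Psi_{l,m}$ with $M_{l,m}$ lower-bidiagonal, carrying $1$ on the diagonal and $b_{l,m,k}$ on the subdiagonal; the first line is exactly $\Psi_{l+1,m-1}=L_{l,m}\Psi_{l,m}$ with $L_{l,m}$ tridiagonal, carrying $1$ on the superdiagonal, $a_{l,m-1,k}+b_{l,m,k}$ on the diagonal, and $a_{l,m-1,k+1}\,b_{l,m,k}$ on the subdiagonal. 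These are the matrices displayed in \eqref{eq:2ddToda_pair_n}, so the scalar and matrix formulations coincide by construction.

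The equation re-enters through compatibility. The two relations give two factorisations of the shift $\Psi_{l,m}\mapsto\Psi_{l+1,m-2}$, and equating them yields the discrete Lax equation
\[
    M_{l+1,m-1}\,L_{l,m} = L_{l,m-1}\,M_{l,m}.
\]
Both sides are banded semi-infinite matrices supported on the superdiagonal, the main diagonal, and the first two subdiagonals, so I would compare them band by band. The superdiagonal entries agree trivially (both equal $1$); the second subdiagonal reproduces the multiplicative relation \eqref{eq:2ddToda_scalsysab}$_2$ after cancelling a common invertible right factor; the main diagonal reproduces the additive relation \eqref{eq:2ddToda_scalsysab}$_1$; and the first subdiagonal is then automatically satisfied as a consequence of these two. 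Since \eqref{eq:2ddToda_scalsysab} are exactly the compatibility conditions of the scalar problem, Proposition \ref{thm:2ddToda_scalsys} (equivalently, substituting \eqref{eq:2ddToda_abdef}) shows they hold if and only if $\theta_{l,m,n}$ solves \eqref{eq:2ddToda_nc}.

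The main difficulty is bookkeeping rather than conceptual. In the non-commutative setting the order of the factors $a$ and $b$ within each band must be tracked exactly, and one must check that both semi-infinite products have identical support so that band-wise comparison is legitimate and no spurious entries arise outside the four relevant diagonals. Once the second subdiagonal is identified with \eqref{eq:2ddToda_scalsysab}$_2$ and the main diagonal with \eqref{eq:2ddToda_scalsysab}$_1$, the equivalence with \eqref{eq:2ddToda_nc} is immediate from the already-settled scalar case, so no genuinely new computation beyond that of Proposition \ref{thm:2ddToda_scalsys} is needed.
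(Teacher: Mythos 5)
Your proof is correct and follows essentially the same route as the paper: stack the equivalent scalar problem \eqref{eq:2ddToda_scalsys_2} into the semi-infinite vector $\Psi_{l,m}$, read off $L_{l,m}$ and $M_{l,m}$ from its two lines, and reduce the band-wise comparison of the Lax equation $L_{l,m-1}M_{l,m}=M_{l+1,m-1}L_{l,m}$ to the scalar compatibility conditions \eqref{eq:2ddToda_scalsysab} already settled in Proposition \ref{thm:2ddToda_scalsys} (the paper states this almost without computation, so your band-by-band verification, including the correct placement of $a_{l,m-1,k+1}b_{l,m,k}$ on the subdiagonal, supplies the omitted details). The only point worth adding explicitly is the boundary convention $b_{l,m,0}=0$ (coming from $\Theta_{l,m,0}^{-1}\to 0$), which is needed so that the first rows of $L_{l,m}$ and $M_{l,m}$ close on the components $\psi_{l,m,k}$ with $k\geq 1$.
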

\begin{proof}
In order to derive the matrices $L_{l, m}$ and $M_{l, m}$, one needs to introduce the semi-infinite vector-function
\begin{align}
    \Psi_{l, m}
    &= 
    \begin{pmatrix}
        \psi_{l, m, 1}
        &
        \psi_{l, m, 2}
        &
        \psi_{l, m, 3}
        &
        \dots
    \end{pmatrix}^T
    .
\end{align}
Then, the first equation \eqref{eq:2ddToda_scalsys_2}$_1$ gives rise to the matrix $L_{l, m}$, while the second equation \eqref{eq:2ddToda_scalsys_2}$_2$ leads to the matrix~$M_{l, m}$. Note that for any $l$ and $m$ the function $b_{l, m, 0}$ is equal to zero (see the definition of $\Theta_{l, m, n}$). The system \eqref{eq:2ddToda_matsys_n} with matrices \eqref{eq:2ddToda_pair_n} yields the discrete analog of the Lax equation
\begin{align}
    \label{eq:lax_2dd}
    L_{l, m - 1} M_{l, m}
    &= M_{l + 1, m - 1} L_{l, m},
\end{align}
from which we derive the semi-infinite 2d discrete Toda equation \eqref{eq:2ddToda_nc}. 
\end{proof}

\begin{rem}
The matrices $L_{l, m}$ and $M_{l, m}$ are non-abelian generalizations of those presented in \cite{krichever1997quantum} (eq. (3.3) therein),
to which they reduce in the commutative case.
\end{rem}

\begin{rem}
Note that one needs to consider boundary conditions in the case of finite lattices.
\end{rem}

We conclude this subsection with a specialisation of the functions $\varphi_{l, m}$ useful below. Note that the definition of the discrete derivation
\begin{align}
    \varphi_{l + k}
    := \varepsilon^{- k} 
    \sum_{j = 0}^{k}
    (-1)^{j} 
    \binom{k}{j}
    \phi_{l - k}
\end{align}
generalizes to the case of two indices as follows
\begin{align}
    \label{eq:spec_2dd}
    &&
    \varphi_{l + k_1, m + k_2}
    &= \varepsilon_1^{- k_1} \, \varepsilon_2^{- k_2}
    \sum_{j_1 = 0}^{k_1} \sum_{j_2 = 0}^{k_2} (-1)^{j_1 + j_2} 
    \binom{k_1}{j_1} \binom{k_2}{j_2}
    \phi_{l - k_1, m - k_2},
    &
    0 \leq k_1, k_2 \leq n - 1
    &&
\end{align}
and can be regarded as a specialization of the solutions of \eqref{eq:2ddToda_nc}.

\subsection{Continuous case}
\label{sec:2dToda}

We introduce now a non-abelian analog of the two-dimensional Toda field equation, by using the continuous limit of \eqref{eq:2ddToda_nc}. As a result, we obtain its scalar and matrix linear problems and quasideterminant solutions. Here the main change of the variables is given by formula (see, e.g., \cite{hietarinta2016discrete})
\begin{align}
    \label{eq:2ddto2d_th}
    &&
    \theta_n (x, y)
    &= \varepsilon^{l + m} \theta_{l, m, n},
    &
    x
    &= \varepsilon l,
    &
    y
    &= \varepsilon m.
    &&
\end{align}
Making this change in \eqref{eq:2ddToda_nc} and taking the Taylor expansion near $\varepsilon = 0$ for the~functions $\theta_n (x + k_1 \varepsilon, y + k_2 \varepsilon)$, 
\begin{align}
    \theta_{n} (x + k_1 \varepsilon, y + k_2 \varepsilon)
    = \theta_n (x, y)
    &+ \, \varepsilon \brackets{
    k_1 \theta_{n, x} (x, y)
    + k_2 \theta_{n, y} (x, y)
    }
    \\
    &+ \, \tfrac12 \varepsilon^2 \brackets{
    k_1^2 \theta_{n, xx} (x, y)
    + 2 k_1 k_2 \theta_{n, xy} (x, y)
    + k_2^2 \theta_{n, yy} (x, y)
    }
    + O (\varepsilon^3),
\end{align}
where
\begin{align}
    \theta_{n, x} (x, y)
    &= \left. \brackets{
    \dfrac{\partial\phantom{(+ k_1 \varepsilon)}}{\partial(x + k_1 \varepsilon)} \theta_n (x + k_1 \varepsilon, y + k_2 \varepsilon)
    } \right|_{\varepsilon = 0},
    &    
    \theta_{n, y} (x, y)
    &= \left. \brackets{
    \dfrac{\partial\phantom{(+ k_2 \varepsilon)}}{\partial(y + k_2 \varepsilon)} \theta_n (x + k_1 \varepsilon, y + k_2 \varepsilon)
    } \right|_{\varepsilon = 0}
    ,
\end{align}
one can arrive to the following non-abelian analog of the 2d Toda field equation
\begin{align}
    \label{eq:2dToda_nc}
    \brackets{
    \theta_{n, x} \theta_n^{-1}
    }_y
    &= \theta_{n + 1} \theta_n^{-1}
    - \theta_n \theta_{n - 1}^{-1}
    .
\end{align}
We remark that derivatives $\partial_x$ and $\partial_y$ commute with each other.

\begin{prop}
\label{thm:2dToda_scalsys}
Consider the scalar linear system \eqref{eq:2ddToda_scalsys} and the change of variables \eqref{eq:2ddto2d_th} supplemented~by
\begin{align}
    \label{eq:2ddto2d_psi}
    \psi_n (x, y)
    &= \varepsilon^l \psi_{l, m, n}.
\end{align}
Then, the limit $\varepsilon \to 0$ leads to the system
\begin{gather}
    \label{eq:2dToda_scalsys}
    \left\{
    \begin{array}{rcl}
         \psi_{n, x}
         &=& \psi_{n + 1}
         + a_n 
         \psi_{n},
         \\[2mm]
         \psi_{n + 1, y}
         &=& - b_n \psi_{n},
    \end{array}
    \right.
    \\[3mm]
    \label{eq:2dToda_abdef}
    \begin{aligned}
    a_n
    &= \theta_{n, x} \theta_n^{-1},
    &&&
    b_n
    &= \theta_{n + 1} \theta_n^{-1},
    \end{aligned}
\end{gather}
from which the non-abelian 2d Toda field equation \eqref{eq:2dToda_nc} follows.
\end{prop}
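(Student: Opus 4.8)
The plan is to carry out the continuous limit in two stages: first pass the discrete scalar system \eqref{eq:2ddToda_scalsys} to its semi-discrete counterpart \eqref{eq:2dToda_scalsys} (keeping $n$ discrete while $x,y$ become continuous), and then extract \eqref{eq:2dToda_nc} as the compatibility condition of the resulting continuous linear problem.

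For the first stage, I would substitute the scalings \eqref{eq:2ddto2d_th} and \eqref{eq:2ddto2d_psi} into the coefficients \eqref{eq:2ddToda_abdef}. A direct computation of the $\varepsilon$-powers gives $a_{l,m,n} = \varepsilon^{-1}\theta_n(x+\varepsilon,y)\theta_n(x,y)^{-1}$ and $b_{l,m,n} = \varepsilon\,\theta_{n+1}(x,y-\varepsilon)\theta_n(x,y)^{-1}$. Taylor-expanding each in $\varepsilon$ yields $a_{l,m,n} = \varepsilon^{-1} + \theta_{n,x}\theta_n^{-1} + O(\varepsilon)$ and $b_{l,m,n} = \varepsilon\,\theta_{n+1}\theta_n^{-1} + O(\varepsilon^2)$, so the finite parts reproduce exactly $a_n$ and $b_n$ from \eqref{eq:2dToda_abdef}. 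Substituting these together with the scaling of $\psi$ into the two scalar equations and multiplying by $\varepsilon^l$, I would expand $\psi_n(x+\varepsilon,y)$ and $\psi_{n+1}(x,y-\varepsilon)$ to first order. In the first equation the singular $\varepsilon^{-1}\psi_n$ contributions on both sides cancel and the $O(1)$ remainder gives $\psi_{n,x} = \psi_{n+1} + a_n\psi_n$; in the second equation the $O(1)$ terms $\psi_{n+1}$ cancel and the $O(\varepsilon)$ part, after dividing by $\varepsilon$, gives $\psi_{n+1,y} = -b_n\psi_n$. This establishes \eqref{eq:2dToda_scalsys}.

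For the second stage, I would treat \eqref{eq:2dToda_scalsys} as a semi-discrete Lax pair and impose $\psi_{n,xy} = \psi_{n,yx}$. Writing the second equation in shifted form as $\psi_{n,y} = -b_{n-1}\psi_{n-1}$, I would differentiate the first equation in $y$ and substitute both forms of the second equation to obtain $\psi_{n,xy} = \brackets{a_{n,y} - b_n}\psi_n - a_n b_{n-1}\psi_{n-1}$. Differentiating the shifted second equation in $x$ and eliminating $\psi_{n-1,x}$ by the first equation (with $n$ replaced by $n-1$) gives $\psi_{n,yx} = -b_{n-1}\psi_n - \brackets{b_{n-1,x} + b_{n-1}a_{n-1}}\psi_{n-1}$. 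Equating the coefficients of the independent quantities $\psi_n$ and $\psi_{n-1}$ then produces two relations: the $\psi_n$-coefficient reads $a_{n,y} - b_n = -b_{n-1}$, which upon inserting $a_n = \theta_{n,x}\theta_n^{-1}$ and $b_n = \theta_{n+1}\theta_n^{-1}$ is precisely \eqref{eq:2dToda_nc}, while the $\psi_{n-1}$-coefficient reads $a_n b_{n-1} = b_{n-1,x} + b_{n-1}a_{n-1}$, which I would verify reduces to the tautology $\theta_{n,x}\theta_{n-1}^{-1} = \theta_{n,x}\theta_{n-1}^{-1}$ using $\brackets{\theta_{n-1}^{-1}}_x = -\theta_{n-1}^{-1}\theta_{n-1,x}\theta_{n-1}^{-1}$.

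The main point requiring care is the non-abelian bookkeeping: throughout I must preserve the left/right order of all factors, use the correct Leibniz rule for the derivative of an inverse, and track which $O(\varepsilon^k)$ term each piece contributes so that the singular $\varepsilon^{-1}$ parts in the first scalar equation cancel exactly rather than only to leading order. Once the ordering is respected, the $\psi_{n-1}$-coefficient relation collapses to an identity and only the $\psi_n$-coefficient survives as the genuine equation of motion, giving \eqref{eq:2dToda_nc}.
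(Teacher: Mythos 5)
Your proposal is correct and follows essentially the same route as the paper's proof: first the Taylor expansion of the scaled discrete system in $\varepsilon$ (with the same identification of the singular and finite parts of $a_{l,m,n}$ and $b_{l,m,n}$), then the cross-derivative compatibility $\psi_{n,xy}=\psi_{n,yx}$, in which the $\psi_n$-coefficient yields \eqref{eq:2dToda_nc} and the $\psi_{n-1}$-coefficient reduces to an identity. The non-abelian bookkeeping you flag is handled exactly as in the paper.
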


\begin{proof}
We first consider the continuous limit and then verify the compatibility condition.

\medskip
\textbf{\textbullet \,\, Continuous limit.}
The change of variables defined by \eqref{eq:2ddto2d_th} and \eqref{eq:2ddto2d_psi} brings system \eqref{eq:2ddToda_scalsys} to the form
\begin{gather}
    \left\{
    \begin{array}{rcl}
         \psi_{n} (x + \varepsilon, y)
         &=& \varepsilon \psi_{n + 1} (x, y)
         + a_{n} (x, y) \psi_{n} (x, y)
         ,
         \\[3mm]
         \psi_{n + 1} (x, y - \varepsilon)
         &=& \psi_{n + 1} (x, y)
         + \varepsilon b_{n} (x, y) \psi_{n} (x, y)
         ,
    \end{array}
    \right.
\end{gather}
where 
\begin{align}
    &&
    a_{n} (x, y)
    &= \theta_{n} (x + \varepsilon, y) \theta_{n} (x, y)^{-1},
    &
    b_{n} (x, y)
    &= \theta_{n + 1} (x, y - \varepsilon) \theta_{n} (x, y)^{-1}.
    &&
\end{align}
Taking the series near $\varepsilon = 0$, this system turns into
\begin{gather}
    \left\{
    \begin{array}{rcl}
         \psi_{n, x}
         &=& \psi_{n + 1}
         + \theta_{n, x} \theta_n^{-1} \psi_{n}
         + O(\varepsilon)
         ,
         \\[2mm]
         \psi_{n + 1, y}
         &=& - \theta_{n + 1} \theta_n^{-1} \psi_{n}
         + O(\varepsilon)
    \end{array}
    \right.
\end{gather}
and coincides with \eqref{eq:2dToda_scalsys} after the limit $\varepsilon \to 0$.

\medskip
\textbf{\textbullet \,\, Compatibility condition.}
The compatibility condition of system \eqref{eq:2dToda_scalsys} means 
\begin{align}
    \label{eq:compcond_2d}
    \brackets{
    \psi_{n, x}
    }_y
    &= \brackets{
    \psi_{n, y}
    }_x.
\end{align}
For the l.h.s., we have the following chain of identities
\begin{align}
    \brackets{
    \psi_{n, x}
    }_y
    &\overset{\,\,\eqref{eq:2dToda_scalsys}_1}{=} 
    \psi_{n + 1, y}
    + \brackets{
    \theta_{n, x} \theta_n^{-1}
    }_y \psi_n
    + \theta_{n, x} \theta_n^{-1} \psi_{n, y}
    \\[2mm]
    &\overset{\,\,\eqref{eq:2dToda_scalsys}_2}{=}  - \theta_{n + 1} \theta_n^{-1} \psi_n
    + \brackets{
    \theta_{n, x} \theta_n^{-1}
    }_y \psi_n
    + \theta_{n, x} \theta_n^{-1} \brackets{
    - \theta_{n} \theta_{n - 1}^{-1} \psi_{n - 1}
    }
    \\[2mm]
    &\overset{\,\,\phantom{\eqref{eq:2dToda_scalsys}_1}}{=}  - \theta_{n + 1} \theta_n^{-1} \psi_n
    + \brackets{
    \theta_{n, x} \theta_n^{-1}
    }_y \psi_n
    - \theta_{n, x} \theta_{n - 1}^{-1} \psi_{n - 1}.
\end{align}
The r.h.s. can be rewritten in a similar way:
\begin{align}
    \brackets{
    \psi_{n, y}
    }_x
    &\overset{\,\,\eqref{eq:2dToda_scalsys}_2}{=} 
    - \theta_{n, x} \theta_{n - 1}^{-1} \psi_{n - 1}
    + \theta_n \theta_{n - 1}^{-1} \theta_{n - 1, x} \theta_{n - 1}^{-1} \psi_n
    - \theta_n \theta_{n - 1}^{-1} \psi_{n - 1, x}
    \\[2mm]
    &\overset{\,\,\eqref{eq:2dToda_scalsys}_1}{=}  
    - \theta_{n, x} \theta_{n - 1}^{-1} \psi_{n - 1}
    + \theta_n \theta_{n - 1}^{-1} \theta_{n - 1, x} \theta_{n - 1}^{-1} \psi_n
    - \theta_n \theta_{n - 1}^{-1} \brackets{
    \psi_n 
    + \theta_{n - 1, x} \theta_{n - 1}^{-1} \psi_{n - 1}
    }
    \\[2mm]
    &\overset{\,\,\phantom{\eqref{eq:2dToda_scalsys}_1}}{=}  
    - \theta_{n, x} \theta_{n - 1}^{-1} \psi_{n - 1}
    - \theta_n \theta_{n - 1}^{-1} \psi_n
    .
\end{align}
Thus, the condition \eqref{eq:compcond_2d} yields \eqref{eq:2dToda_nc}.
\end{proof}

A continuous analog of Proposition \ref{thm:2ddToda_matsys} can be obtained by passing to a  continuous limit and is given by the following
\begin{prop}
\label{thm:2dToda_matsys}
Let $\partial:= \partial_x - \partial_y$ and $2 \times 2$ matrices $\mathcal{L}_{n}$ and $\mathcal{M}_{n}$ be defined as
\begin{align}
    \label{eq:2dToda_pair}
    &&
    \mathcal{L}_{n}
    &= 
    \begin{pmatrix}
        \partial 
        - \theta_{n, x} \theta_{n}^{-1}
        & 
        - \theta_{n}
        \\[2mm]
        \theta_{n}^{-1} 
        & 
        0
    \end{pmatrix},
    &
    \mathcal{M}_{n}
    &= 
    \begin{pmatrix}
        0
        & 
        - \theta_{n}
        \\[2mm]
        \theta_{n - 1}^{-1} 
        & 
        - \partial
    \end{pmatrix}
    .
    &&
\end{align}
Then, the matrix linear system
\begin{align}
    \label{eq:2dToda_matsys}
    &&
    &
    \left\{
    \begin{array}{rcl}
         \Psi_{n + 1}
         &=& \mathcal{L}_{n} \Psi_{n},  
         \\[2mm]
         \Psi_{n, y}
         &=& \mathcal{M}_{n} \Psi_{n},  
    \end{array}
    \right.
    &
    \Psi_{n}
    &= 
    \begin{pmatrix}
        \psi_{n} & \psi_{n - 1}
    \end{pmatrix}^{T},
    &&
\end{align}
is compatible, i.e.
\begin{align}
    \label{eq:zrc_2d}
    \mathcal{L}_{n, y} 
    &= \mathcal{M}_{n + 1}
    \mathcal{L}_{n}
    - \mathcal{L}_n \mathcal{M}_n,
\end{align}
if function $\theta_{n}$ satisfies the non-abelian 2d Toda field equation \eqref{eq:2dToda_nc}.
\end{prop}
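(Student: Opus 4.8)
The plan is to verify the matrix zero-curvature identity \eqref{eq:zrc_2d} directly, exactly as the scalar compatibility was checked in the proof of Proposition \ref{thm:2dToda_scalsys}. First I would record that \eqref{eq:zrc_2d} is the genuine compatibility condition of \eqref{eq:2dToda_matsys}: differentiating $\Psi_{n + 1} = \mathcal{L}_n \Psi_n$ in $y$ and inserting $\Psi_{n, y} = \mathcal{M}_n \Psi_n$ together with $\Psi_{n + 1, y} = \mathcal{M}_{n + 1} \Psi_{n + 1} = \mathcal{M}_{n + 1} \mathcal{L}_n \Psi_n$ gives $\mathcal{L}_{n, y} \Psi_n + \mathcal{L}_n \mathcal{M}_n \Psi_n = \mathcal{M}_{n + 1} \mathcal{L}_n \Psi_n$, which for arbitrary $\Psi_n$ is precisely \eqref{eq:zrc_2d}.

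The essential feature is that $\mathcal{L}_n$ and $\mathcal{M}_n$ from \eqref{eq:2dToda_pair} are first-order matrix differential operators, so I would split off their principal parts, writing $\mathcal{L}_n = \partial E_{11} + A_n$ and $\mathcal{M}_n = - \partial E_{22} + B_n$ with the constant idempotents $E_{11} = \diag(1, 0)$, $E_{22} = \diag(0, 1)$ and
\[
    A_n = \begin{pmatrix} - \theta_{n, x} \theta_n^{-1} & - \theta_n \\ \theta_n^{-1} & 0 \end{pmatrix}, \qquad B_n = \begin{pmatrix} 0 & - \theta_n \\ \theta_{n - 1}^{-1} & 0 \end{pmatrix}.
\]
Since $\partial = \partial_x - \partial_y$ commutes with the constant matrices $E_{ii}$ but obeys the Leibniz rule against the $\theta$-dependent entries, expanding $\mathcal{M}_{n + 1} \mathcal{L}_n$ and $\mathcal{L}_n \mathcal{M}_n$ as operator compositions produces, for each, a second-order part proportional to $E_{22} E_{11} = E_{11} E_{22} = 0$, a first-order part, and a zeroth-order (pure multiplication) part. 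I would then compare \eqref{eq:zrc_2d} order by order in $\partial$.

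The second-order terms vanish identically. The first-order part of $\mathcal{M}_{n + 1} \mathcal{L}_n - \mathcal{L}_n \mathcal{M}_n$ equals $\brackets{- E_{22} A_n + B_{n + 1} E_{11} - E_{11} B_n + A_n E_{22}} \partial$, and a one-line matrix computation shows the bracket is zero — consistent with $\mathcal{L}_{n, y}$ being a pure multiplication operator, so no $\partial$ may survive. What remains is the zeroth-order identity, which I would evaluate entry by entry using $(\theta_n^{-1})_x = - \theta_n^{-1} \theta_{n, x} \theta_n^{-1}$ and the analogous formula in $y$. The $(1, 2)$, $(2, 1)$, and $(2, 2)$ entries collapse to the tautologies $- \theta_{n, y} = - \theta_{n, y}$, $(\theta_n^{-1})_y = (\theta_n^{-1})_y$, and $0 = 0$, while the $(1, 1)$ entry yields $\brackets{\theta_{n, x} \theta_n^{-1}}_y = \theta_{n + 1} \theta_n^{-1} - \theta_n \theta_{n - 1}^{-1}$, which is exactly \eqref{eq:2dToda_nc}.

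The only real subtlety — the place where sign or ordering slips can occur — is the non-commutative operator bookkeeping in the two products: $\partial$ neither commutes with the matrix entries nor leaves factor order intact, so one must carefully track where each $\partial$ lands in order to see the first-order terms cancel and to recover the correct placement of $\theta_n^{-1}$ in the $(1,1)$ entry. Once that is done the proposition is immediate. As an alternative route I would note, following the remark preceding the statement, that \eqref{eq:zrc_2d} can instead be recovered as the $\varepsilon \to 0$ limit of the discrete zero-curvature relation \eqref{eq:zrc_2dd} under the scaling \eqref{eq:2ddto2d_th}--\eqref{eq:2ddto2d_psi}, using $T_{1, -2} = 1 + \varepsilon \, \partial + O(\varepsilon^2)$ with $\partial = \partial_x - \partial_y$; this explains the appearance of $\partial$ in \eqref{eq:2dToda_pair} but requires propagating the diagonal gauge rescalings through the limit.
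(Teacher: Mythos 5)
Your proposal is correct and follows essentially the same route as the paper: a direct verification of the zero-curvature condition \eqref{eq:zrc_2d} by carefully tracking where the operator $\partial$ lands in the products $\mathcal{M}_{n+1}\mathcal{L}_n$ and $\mathcal{L}_n\mathcal{M}_n$ (the paper does this by applying everything to an auxiliary test function $f$, which is the same bookkeeping as your order-by-order comparison in $\partial$), with the residue concentrated in the $(1,1)$ entry and equal to \eqref{eq:2dToda_nc}. The only difference is organizational: the paper devotes the first half of its proof to obtaining the matrices \eqref{eq:2dToda_pair} as the $\varepsilon\to 0$ limit of the discrete pair \eqref{eq:2ddToda_pair} under \eqref{eq:2ddto2d_th}, which you correctly identify as an alternative (and, for the compatibility claim itself, inessential) route.
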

\begin{proof}
In the proof, we will take a continuous limit and then verify the Zakharov-Shabat equation.

\medskip
\textbf{\textbullet \,\, Continuous limit.}
Let us first make a rescaling of the variable $\theta_{l, m, n}$ that brings the matrices \eqref{eq:2ddToda_pair} to the form
\begin{align}
    \begin{aligned}
    \mathcal{L}_{l, m, n}
    &= 
    \begin{pmatrix}
        T_{1, -2} 
        - \varepsilon \theta_{l + 1, m, n} \theta_{l + 1, m + 1, n - 1}^{-1}
        - \varepsilon^{-1} \theta_{l + 1, m + 1, n} \theta_{l, m + 1, n}^{-1}
        & 
        - \varepsilon^{- l - m - 1} \theta_{l + 1, m, n}
        \\[2mm]
        \varepsilon^{l + m + 1} \theta_{l, m + 1, n}^{-1} 
        & 
        0
    \end{pmatrix},
    \\[3mm]
    \mathcal{M}_{l, m, n}
    &= 
    \begin{pmatrix}
        1
        & 
        \varepsilon^{- l - m} \theta_{l, m, n}
        \\[2mm]
        - \varepsilon^{l + m + 1} \theta_{l + 1, m, n - 1}^{-1} 
        & 
        T_{1, - 2} - \varepsilon \theta_{l + 1, m, n - 1}^{-1} \theta_{l, m, n}
    \end{pmatrix}
    ,
    \end{aligned}
\end{align}
or, after a conjugation with the help of the matrix $G_{l, m} = \diag\brackets{1, \varepsilon^{l + m + 1}}$
\begin{align}
    \begin{aligned}
    \mathcal{L}_{l, m, n}
    &= 
    \begin{pmatrix}
        T_{1, -2} 
        - \varepsilon \theta_{l + 1, m, n} \theta_{l + 1, m + 1, n - 1}^{-1}
        - \varepsilon^{-1} \theta_{l + 1, m + 1, n} \theta_{l, m + 1, n}^{-1}
        & 
        - \theta_{l + 1, m, n}
        \\[2mm]
        \theta_{l, m + 1, n}^{-1} 
        & 
        0
    \end{pmatrix},
    \\[3mm]
    \mathcal{M}_{l, m, n}
    &= 
    \begin{pmatrix}
        1
        & 
        \varepsilon \theta_{l, m, n}
        \\[2mm]
        - \varepsilon \theta_{l + 1, m, n - 1}^{-1} 
        & 
        \varepsilon T_{1, - 2} - \varepsilon^2 \theta_{l + 1, m, n - 1}^{-1} \theta_{l, m, n}
    \end{pmatrix}
    .
    \end{aligned}
\end{align}
Then, replacing the shift operator $T_{1, -2}$ by $\partial$, making the transformation \eqref{eq:2ddto2d_th}, and taking the series near $\varepsilon = 0$, the matrices become 
\begin{align}
    \mathcal{L}_{n}
    &= 
    \begin{pmatrix}
        \partial 
        - \varepsilon^{-1}
        - \theta_{n, x} \theta_n^{-1}
        & 
        - \theta_{n}
        \\[2mm]
        \theta_{n}^{-1} 
        & 
        0
    \end{pmatrix}
    + O(\varepsilon)
    ,
    &
    \mathcal{M}_{n}
    &= 
    \begin{pmatrix}
        1
        & 
        \varepsilon \theta_{n}
        \\[2mm]
        - \varepsilon \theta_{n - 1}^{-1} 
        & 
        \varepsilon \partial
    \end{pmatrix}
    + O (\varepsilon^2)
    ,
\end{align}
or, after the shift $\partial \mapsto \partial - \varepsilon^{-1}$,
\begin{align}
    \mathcal{L}_{n}
    &= 
    \begin{pmatrix}
        \partial 
        - \theta_{n, x} \theta_n^{-1}
        & 
        - \theta_{n}
        \\[2mm]
        \theta_{n}^{-1} 
        & 
        0
    \end{pmatrix}
    + O(\varepsilon)
    = \hat{\mathcal{L}}_n
    + O(\varepsilon)
    ,
    \\[3mm]
    \mathcal{M}_{n}
    &= \mathbf{I}
    + \varepsilon
    \begin{pmatrix}
        0
        & 
        \theta_{n}
        \\[2mm]
        - \theta_{n - 1}^{-1} 
        & 
        \partial
    \end{pmatrix}
    + O (\varepsilon^2)
    = \mathbf{I}
    + \varepsilon \hat{\mathcal{M}}_n
    + O (\varepsilon^2)
    ,
\end{align}
where $\mathbf{I}$ is the identity matrix. 

The matrix system \eqref{eq:2ddToda_matsys} transforms into
\begin{align}
    \left\{
    \begin{array}{rcl}
         \Psi_{n + 1}
         &=& \brackets{
         \hat{\mathcal{L}}_n 
         + O (\varepsilon)
         }
         \Psi_{n},  
         \\[2mm]
         \Psi_n 
         - \varepsilon \Psi_{n, y}
         + O (\varepsilon^2)
         &=& \brackets{
         \mathbf{I}
         + \varepsilon \hat{\mathcal{M}}_n
         + O (\varepsilon^2)
         }
         \Psi_{n}.
    \end{array}
    \right.
\end{align}
Taking the limit $\varepsilon \to 0$ and making the rescaling $\hat{\mathcal{M}}_n \mapsto - \hat{\mathcal{M}}_n$, it becomes
\begin{align}
    \left\{
    \begin{array}{rcl}
         \Psi_{n + 1}
         &=& \hat{\mathcal{L}}_n
         \Psi_{n},  
         \\[2mm]
         \Psi_{n, y}
         &=& \hat{\mathcal{M}}_n
         \Psi_{n}.
    \end{array}
    \right.
\end{align}
Omitting the signs $\hat{\phantom{u}}$, we arrive to the system \eqref{eq:2dToda_matsys} with matrices given by \eqref{eq:2dToda_pair}.

\medskip
\textbf{\textbullet \,\, Zero-curvature condition.}
Regarding the auxiliary function $f = f(x)$,  condition \eqref{eq:zrc_2d} can be written as follows:
\begin{align}
    l.h.s. \eqref{eq:zrc_2d}
    &\equiv \partial_y \brackets{\mathcal{L}_n (f)} 
    \\[2mm]
    &= \partial_y
    \begin{pmatrix}
        f' - \theta_{n, x} \theta_n^{-1} f
        &
        - \theta_n f
        \\[1mm]
        \theta_n^{-1} f
        &
        0
    \end{pmatrix}
    = 
    \begin{pmatrix}
        - \brackets{\theta_{n, x} \theta_n^{-1}}_y f
        &
        - \theta_{n, y} f
        \\[1mm]
        \brackets{\theta_n^{-1}}_y f
        & 
        0
    \end{pmatrix},
    \\[4mm]
    r.h.s.\eqref{eq:zrc_2d}
    &\equiv \mathcal{M}_{n + 1} \mathcal{L}_n (f)
    - \mathcal{L}_n \mathcal{M}_n (f)
    \\[2mm]
    &= 
    \begin{pmatrix}
        0 & - \theta_{n + 1}
        \\[1mm]
        \theta_n^{-1} & - \partial
    \end{pmatrix}
    \begin{pmatrix}
        f' - \theta_{n, x} \theta_n^{-1} f
        &
        - \theta_n f
        \\[1mm]
        \theta_n^{-1} f
        & 
        0
    \end{pmatrix}
    -
    \begin{pmatrix}
        \partial - \theta_{n, x} \theta_n^{-1}
        & 
        - \theta_n
        \\[1mm]
        \theta_n^{-1}
        &
        0
    \end{pmatrix}
    \begin{pmatrix}
        0 & - \theta_n f
        \\[1mm]
        \theta_{n - 1}^{-1} f
        &
        - f'
    \end{pmatrix}
    \\[2mm]
    &= 
    \begin{pmatrix}
        - \theta_{n + 1} \theta_n^{-1} f & 0
        \\[1mm]
        \brackets{\theta_n^{-1}}_y f & - f
    \end{pmatrix}
    - 
    \begin{pmatrix}
        - \theta_n \theta_{n - 1}^{-1} f
        &
        \theta_{n, y} f
        \\[1mm]
        0 & - f
    \end{pmatrix},
\end{align}
and, finally,
\begin{align}
    l.h.s.\eqref{eq:zrc_2d}
    - r.h.s.\eqref{eq:zrc_2d}
    &= \brackets{
    - \brackets{\theta_{n, x} \theta_n^{-1}}_y
    + \theta_{n + 1} \theta_n^{-1}
    - \theta_n \theta_{n - 1}^{-1}
    } f E_{1,1}, 
\end{align}
where $E_{i,j}$ is a standard notation for the unit matrix. Since the function $f$ is arbitrary, the condition \eqref{eq:zrc_2d} is satisfied if $\theta_n = \theta_n (x, y)$ solves the non-abelian 2d Toda field equation \eqref{eq:2dToda_nc}.
\end{proof}

Following the same computations of the continuous limit discussed in the proof of Proposition \ref{thm:2dToda_matsys}, we derive a semi-infinite formalism for the non-abelian 2d Toda field equation \eqref{eq:2dToda_nc}.

\begin{prop}
\label{thm:2dToda_matsys_n}
Let matrices $L$ and $M$ read as
\begin{align}
    \label{eq:2dToda_pair_n}
    L
    &= 
    \begin{pmatrix}
        a_{1}
        &
        1
        &&&
        \\[1mm]
        b_{1}
        &
        a_{2}
        &
        1
        &
        &
        \\[1mm]
        &
        b_{2}
        &
        a_{3}
        &
        \ddots
        &
        \\[1mm]
        &
        &
        \ddots
        &
        \ddots
    \end{pmatrix}
    ,
    &
    M
    &= 
    \begin{pmatrix}
        0
        &
        &&&
        \\[1mm]
        - b_{1}
        &
        0
        &
        &
        &
        \\[1mm]
        &
        - b_{2}
        &
        0
        &
        &
        \\[1mm]
        &
        &
        \ddots
        &
        \ddots
    \end{pmatrix},
\end{align}
where unfilled matrix entries are equal to zero. 
Then, non-abelian semi-infinite 2d discrete Toda equation~\eqref{eq:2dToda_nc} follows from the system
\begin{align}
    \label{eq:2dToda_matsys_n}
    &\left\{
    \begin{array}{rcl}
         L \Psi
         &=& \partial \Psi
         ,
         \\[2mm]
         \partial_y \Psi
         &=& M \Psi,
    \end{array}
    \right.
    &
    \Psi
    &= 
    \begin{pmatrix}
        \psi_{1}
        &
        \psi_{2}
        &
        \psi_{3}
        &
        \dots
    \end{pmatrix}^T
    .
\end{align}
\end{prop}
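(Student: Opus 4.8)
The plan is to mirror the continuous-limit strategy used in the proof of Proposition~\ref{thm:2dToda_matsys}, now applied to the semi-infinite discrete Lax pair $L_{l,m}$, $M_{l,m}$ of Proposition~\ref{thm:2ddToda_matsys_n}. First I would substitute the scaling \eqref{eq:2ddto2d_th}, \eqref{eq:2ddto2d_psi} into the entries $a_{l,m,n}=\theta_{l+1,m,n}\theta_{l,m,n}^{-1}$ and $b_{l,m,n}=\theta_{l,m-1,n+1}\theta_{l,m,n}^{-1}$ and Taylor-expand in $\varepsilon$, obtaining
\[
a_{l,m,n}=\varepsilon^{-1}+\theta_{n,x}\theta_n^{-1}+O(\varepsilon),\qquad
b_{l,m,n}=\varepsilon\,\theta_{n+1}\theta_n^{-1}+O(\varepsilon^2).
\]
Then the diagonal entries $a_{l,m-1,k}+b_{l,m,k}$ of $L_{l,m}$ become $\varepsilon^{-1}+a_k+O(\varepsilon)$ with $a_k=\theta_{k,x}\theta_k^{-1}$, the subdiagonal entries $a_{l,m-1,k}b_{l,m,k}$ tend to $b_k=\theta_{k+1}\theta_k^{-1}$, and the superdiagonal stays equal to $1$; the shifts $m\mapsto m-1$ only affect subleading orders. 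Hence $L_{l,m}=\varepsilon^{-1}\mathbf{I}+L+O(\varepsilon)$ with $L$ as in \eqref{eq:2dToda_pair_n}, while $M_{l,m}=\mathbf{I}+\varepsilon\hat M+O(\varepsilon^2)$ with $\hat M=\sum_{k\ge 1}b_k E_{k+1,k}$; the sign rescaling $\hat M\mapsto-\hat M=M$ already used in Proposition~\ref{thm:2dToda_matsys} produces the matrix $M$ of \eqref{eq:2dToda_pair_n}.

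Next I would feed these expansions into the discrete system \eqref{eq:2ddToda_matsys_n}. Writing $L_{l,m}\Psi_{l,m}=\Psi_{l+1,m-1}$ componentwise and using $\Psi_{l+1,m-1}=\varepsilon^{-l-1}\bigl(\psi_k+\varepsilon(\psi_{k,x}-\psi_{k,y})+O(\varepsilon^2)\bigr)$, the leading $\varepsilon^{-1}\mathbf{I}$ of $L_{l,m}$ is exactly absorbed by the extra $\varepsilon^{-1}$ carried by $\Psi_{l+1,m-1}$ relative to $\Psi_{l,m}$ (equivalently, by the shift $\partial\mapsto\partial-\varepsilon^{-1}$ of Proposition~\ref{thm:2dToda_matsys}), and at order $\varepsilon^0$ one is left with $L\Psi=\partial\Psi$, where $\partial=\partial_x-\partial_y$. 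The second equation $\Psi_{l,m-1}=M_{l,m}\Psi_{l,m}$ collapses at order $\varepsilon$ to $\partial_y\Psi=M\Psi$. This reproduces exactly the limiting system \eqref{eq:2dToda_matsys_n}.

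Finally I would verify the compatibility of \eqref{eq:2dToda_matsys_n}. Cross-differentiating $\partial\Psi=L\Psi$ by $\partial_y$ and $\partial_y\Psi=M\Psi$ by $\partial$ and using $[\partial,\partial_y]=0$ yields the continuous Lax (Zakharov--Shabat) equation $\partial_y L-\partial M+[L,M]=0$. I would then read this off entrywise: the superdiagonal and all far-off-band entries vanish identically; the subdiagonal $(k+1,k)$ entry reduces to $b_{k,x}-a_{k+1}b_k+b_k a_k=0$, which holds automatically once $a_k=\theta_{k,x}\theta_k^{-1}$ and $b_k=\theta_{k+1}\theta_k^{-1}$ are inserted; and since $M$ has zero diagonal while $[L,M]_{kk}=b_{k-1}-b_k=\theta_k\theta_{k-1}^{-1}-\theta_{k+1}\theta_k^{-1}$, the diagonal $(k,k)$ entry gives $(\theta_{k,x}\theta_k^{-1})_y=\theta_{k+1}\theta_k^{-1}-\theta_k\theta_{k-1}^{-1}$, which is precisely \eqref{eq:2dToda_nc}.

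The main obstacle is the careful $\varepsilon$-bookkeeping in the semi-infinite setting: tracking the leading $\varepsilon^{-1}$ pole that must be reabsorbed, checking that the index shifts $l\mapsto l+1$, $m\mapsto m-1$ contribute only at subleading order, correctly normalizing the sign of $\hat M$, and handling the boundary row $k=1$, where $b_0=0$ by the convention on $\Theta_{l,m,0}$, so that no spurious term appears. Once the limiting pair and the commutator structure are set up, the verification of \eqref{eq:2dToda_nc} is the routine entrywise computation indicated above.
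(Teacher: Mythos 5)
Your proposal is correct and follows essentially the same route as the paper: a continuous limit of the semi-infinite discrete pair of Proposition \ref{thm:2ddToda_matsys_n} under the reduction data \eqref{eq:2ddto2d_th}, absorbing the $\varepsilon^{-1}$ pole via $\partial\mapsto\partial-\varepsilon^{-1}$ and rescaling $\hat M\mapsto-\hat M$, followed by reading the Lax equation $\partial_y L-\partial M=ML-LM$ entrywise. Your expansion is in fact slightly more careful than the paper's displayed one (you correctly place the $O(\varepsilon)$ term of $M_{l,m}$ on the subdiagonal $E_{k+1,k}$ and retain the subdiagonal $b_kE_{k+1,k}$ in the limit of $L_{l,m}$), and your explicit entrywise verification of the compatibility condition is a valid completion of what the paper only asserts.
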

\begin{proof}
The statement of the proposition can be achieved just by reduction data defined in \eqref{eq:2ddto2d_th}. 
Indeed, the matrices \eqref{eq:2ddToda_pair_n} after  change of variables and series expansion read as
\begin{align}
    L
    = \varepsilon^{-1} \mathbf{I}
    + \sum_{k \geq 1} a_k \,  E_{k, k}
    + \sum_{k \geq 1} E_{k, k + 1}
    + O (\varepsilon)
    = \varepsilon^{-1} \mathbf{I}
    + \hat{L}
    + O(\varepsilon)
    ,
    \hspace{6cm}
    \\
    M
    = \mathbf{I}
    + \varepsilon \sum_{k \geq 1} b_k E_{k, k}
    + O(\varepsilon^2)
    = \mathbf{I}
    + \varepsilon \hat M
    + O(\varepsilon^2)
    ,
\end{align}
while the system \eqref{eq:2ddToda_matsys_n} turns into
\begin{align}
    &\left\{
    \begin{array}{rcl}
         L \Psi
         &=& \partial \Psi
         ,
         \\[2mm]
         \Psi - \varepsilon \partial_y \Psi
         + O(\varepsilon^2)
         &=& M \Psi,
    \end{array}
    \right.
    &\Leftrightarrow
    &
    &\left\{
    \begin{array}{rcl}
         \brackets{
         \varepsilon^{-1} \mathbf{I}
         + \hat L
         + O(\varepsilon)
         } \Psi
         &=& \partial \Psi
         ,
         \\[2mm]
         \Psi - \varepsilon \partial_y \Psi
         + O(\varepsilon^2)
         &=& \brackets{
         \mathbf{I}
         + \varepsilon \hat M
         + O(\varepsilon^2)
         } \Psi.
    \end{array}
    \right.
\end{align}
After the shift $\partial \mapsto \partial - \varepsilon^{-1}$ (cf. with the continuous limit in the proof of Proposition \ref{thm:2dToda_matsys}) and the rescaling $\hat{M} \mapsto - \hat{M}$, we can take the limit $\varepsilon \to 0$ and, as a result, arrive at system \eqref{eq:2dToda_matsys_n} with matrices \eqref{eq:2dToda_pair_n} (the signs $\hat{\phantom{u}}$ are omitted).

\medskip
The system \eqref{eq:2dToda_matsys_n} implies the condition
\begin{align}
    \label{eq:lax_2d}
    \partial_y L
    - \partial M
    &= M L - L M
\end{align}
that is equivalent to \eqref{eq:2dToda_nc}.
\end{proof}

\begin{rem}
Let us stress again that for finite lattices the boundary conditions must be assumed.
\end{rem}

Due to the specialisation noticed in \eqref{eq:spec_2dd}, the continuous limit leads to quasideterminant solutions of \eqref{eq:2dToda_nc} described in the proposition below.
\begin{prop}
\label{thm:2dToda_sol}
Let us define the bi-directional Wronskian
\begin{align}
    \label{eq:Theta_def_2d}
    \Theta_{n} (x, y)
    := \brackets{
    \partial_{x^{i - 1}} \, \partial_{y^{j - 1}}
    \varphi(x, y)
    }_{1 \leq i, j \leq n}
\end{align}
and its quasideterminant $\theta_n = |\Theta_n|_{n,n}$. 
Then, $\theta_n = \theta_n (x, y)$ is a solution of \eqref{eq:2dToda_nc}.
\end{prop}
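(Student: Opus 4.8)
The plan is to derive the statement as the continuous limit of the discrete quasideterminant solution furnished by Proposition \ref{thm:2ddToda_nc}, specialised through \eqref{eq:spec_2dd}. Recall that for $\Theta_{l,m,n} = (\varphi_{l+i-1,m+j-1})_{1\le i,j\le n}$ the corner quasideterminant $\theta_{l,m,n} = |\Theta_{l,m,n}|_{nn}$ solves the discrete lattice \eqref{eq:2ddToda_nc} unconditionally. Imposing the specialisation \eqref{eq:spec_2dd} turns each entry $\varphi_{l+i-1,m+j-1}$ into an iterated forward difference of a single seed in the two indices, carrying the central prefactor $\varepsilon_1^{-(i-1)}\varepsilon_2^{-(j-1)}$. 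Thus $\Theta_{l,m,n}$ acquires the structure of a discrete bi-directional Wronskian, and the whole construction is compatible with the passage to the field equation already analysed in Subsection \ref{sec:2dToda}.

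The first technical step is to send $\varepsilon_1,\varepsilon_2\to 0$ along $x=\varepsilon l$, $y = \varepsilon m$ as in \eqref{eq:2ddto2d_th}. Each iterated forward difference then converges to the corresponding mixed partial derivative, so that the stripped matrix (with the $\varepsilon$-prefactors removed) tends entrywise to the bi-directional Wronskian $\Theta_n(x,y)$ of \eqref{eq:Theta_def_2d}. The prefactors themselves do not obstruct this: being central scalars multiplying row $i$ on the left and column $j$ on the right, the row/column scaling rule for quasideterminants recalled in Subsection \ref{sec:quasidet} lets them be pulled out of $|\Theta_{l,m,n}|_{nn}$, where only the factors attached to the $n$-th row and $n$-th column survive. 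After this extraction, and wherever the recursively defined inverses stay regular, the quasideterminant is a fixed rational expression in the entries, so the entrywise convergence $\Theta_{l,m,n}\to\Theta_n(x,y)$ yields $|\Theta_{l,m,n}|_{nn}\to|\Theta_n(x,y)|_{nn} = \theta_n(x,y)$ up to a computable power of $\varepsilon$.

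It remains to invoke that the discrete equation \eqref{eq:2ddToda_nc} itself degenerates to the field equation \eqref{eq:2dToda_nc} under exactly the substitution \eqref{eq:2ddto2d_th}; this is the Taylor-expansion computation carried out before Proposition \ref{thm:2dToda_scalsys}. Since $\theta_{l,m,n}$ solves \eqref{eq:2ddToda_nc} for every $\varepsilon>0$ and the rescaled quasideterminant converges to the bi-directional Wronskian quasideterminant, the limit $\theta_n(x,y)=|\Theta_n(x,y)|_{nn}$ must satisfy \eqref{eq:2dToda_nc}.

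The main obstacle is the bookkeeping of the $\varepsilon$-powers together with the interchange of the limit and the quasideterminant: one must verify that the prefactors from \eqref{eq:spec_2dd} combine with the normalisation of \eqref{eq:2ddto2d_th} into a single central gauge factor that leaves \eqref{eq:2dToda_nc} invariant (so that no spurious constant survives), and that the relevant quasiminors stay invertible along the limit so the recursive inverses never degenerate. A cleaner but more laborious alternative would bypass the limit entirely and verify \eqref{eq:2dToda_nc} directly for $\theta_n=|\Theta_n|_{nn}$, using the fact that each row of $\Theta_n$ is the $x$-derivative and each column the $y$-derivative of its predecessor; this mirrors the discrete argument of Proposition \ref{thm:2ddToda_nc} with derivatives in place of shifts, but requires a differentiation calculus for quasi-Wronskians in addition to the Jacobi and homological identities \eqref{eq:Jacobi_nc}, \eqref{eq:homrel_col}.
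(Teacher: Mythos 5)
Your proposal is correct and follows essentially the same route as the paper: Proposition \ref{thm:2dToda_sol} is proved there by recognising the specialisation \eqref{eq:spec_2dd} as the definition of the discrete derivative, so that under \eqref{eq:spec_2d} and $\varepsilon\to 0$ each entry $\varphi_{l+k_1,m+k_2}$ becomes $\partial_{x^{k_1}}\partial_{y^{k_2}}\phi(x,y)$ and the discrete quasideterminant solution passes to the bi-directional Wronskian solution of \eqref{eq:2dToda_nc}. Your extra care about the $\varepsilon$-prefactors is reasonable but largely unnecessary, since those powers are already absorbed into the difference quotients themselves rather than sitting as separate row/column scalings.
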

\begin{proof}
Consider the solution of \eqref{eq:2ddToda_nc} given by \eqref{eq:spec_2dd} with $\varepsilon_1 = \varepsilon_2 = \varepsilon$:
\begin{align}
    \varphi_{l + k_1, m + k_2}
    &= \varepsilon^{- k_1 - k_2}
    \sum_{j_1 = 0}^{k_1} \sum_{j_2 = 0}^{k_2} (-1)^{j_1 + j_2} 
    \binom{k_1}{j_1} \binom{k_2}{j_2}
    \phi_{l - k_1, m - k_2},
    &
    0 \leq k_1, k_2 \leq n - 1.
\end{align}
Recall that this expression is a definition of the discrete derivation. Then, after the change
\begin{align}
    \label{eq:spec_2d}
    &&
    \phi (x, y)
    &= \phi_{l, m},
    &&&
    x
    &= \varepsilon l,
    &
    y
    &= \varepsilon m
    &&
\end{align}
and the limit $\varepsilon \to 0$, the solution \eqref{eq:spec_2dd} takes the form
\begin{align}
    \varphi_{l + k_1, m + k_2}
    = \partial_{x^{k_1}} \partial_{y^{k_2}} \phi (x, y).
\end{align}
So, we arrive at the statement of the proposition.
\end{proof}

\section{Non-Abelian 1d Toda equations}
\label{sec:main12}

\subsection{Discrete case}
\label{sec:1ddToda}

The map
\begin{align}
    \label{eq:2ddto1dd}
    &&
    \theta_{l, m, n}
    &\mapsto \theta_{l + m, n},
    &
    T_{1, -2} \brackets{\psi_{l, m, n}}
    &= \lambda \psi_{l + m, n}
    &&
\end{align}
brings the 2d discrete Toda lattice \eqref{eq:2ddToda_nc} to the non-abelian 1d discrete Toda equation
\begin{align}
    \label{eq:1ddToda_nc}
    \theta_{m + 2, n}
    &= \theta_{m, n + 1}
    + \theta_{m + 1, n} \brackets{
    \theta_{l, m, n}^{-1}
    - \theta_{m + 2, n - 1}^{-1}
    } \theta_{m + 1, n}.
\end{align}
The commutative parameter $\lambda$ will play role of the spectral parameter. Notice that the equation \eqref{eq:1ddToda_nc} can be introduced in a similar way as the 2d discrete Toda equation \eqref{eq:2ddToda_nc} by using the non-abelian Jacobi identity \eqref{eq:Jacobi_nc} and homological relations \eqref{eq:homrel_col}.

\begin{prop}
\label{thm:1ddToda_scalsys}
The non-abelian 1d discrete Toda equation \eqref{eq:1ddToda_nc} follows from the system
\begin{gather}
    \label{eq:1ddToda_scalsys}
    \left\{
    \begin{array}{rcl}
         \lambda \psi_{m, n}
         &=& \psi_{m, n + 1}
         + (a_{m - 1, n} + b_{m, n}) \psi_{m, n}
         + a_{m - 1, n} b_{m, n - 1} \psi_{m, n - 1}
         ,
         \\[2mm]
         \psi_{m - 1, n}
         &=& \psi_{m, n}
         + b_{m, n - 1} \psi_{m, n - 1},
    \end{array}
    \right.
    \\[2mm]
    \label{eq:1ddToda_abdef}
    \begin{aligned}
    a_{m, n}
    &= \theta_{m + 1, n} \theta_{m, n}^{-1},
    &&&&&
    b_{m, n}
    &= \theta_{m - 1, n + 1} \theta_{m, n}^{-1}.
    \end{aligned}
\end{gather}
\end{prop}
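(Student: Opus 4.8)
The plan is to verify the compatibility of the scalar linear system \eqref{eq:1ddToda_scalsys} directly, in the spirit of the proof of Proposition~\ref{thm:2ddToda_scalsys}. Conceptually, \eqref{eq:1ddToda_scalsys} is the image of the equivalent $2$d system \eqref{eq:2ddToda_scalsys_2} under the reduction \eqref{eq:2ddto1dd}, with the shift $T_{1, -2}$ replaced by multiplication by the central spectral parameter $\lambda$; its compatibility must therefore reproduce the reduction of \eqref{eq:2ddToda_nc}. I would nonetheless give a self-contained check, since the non-abelian content is carried entirely by the ordering in the compatibility computation.

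First I would compute $\lambda \, \psi_{m - 1, n}$ in two ways. On one hand, multiply \eqref{eq:1ddToda_scalsys}$_2$ by the central parameter $\lambda$ and eliminate $\lambda \, \psi_{m, n}$ and $\lambda \, \psi_{m, n - 1}$ using \eqref{eq:1ddToda_scalsys}$_1$ at levels $n$ and $n - 1$. On the other hand, write \eqref{eq:1ddToda_scalsys}$_1$ with $m \mapsto m - 1$ and eliminate the shifted components $\psi_{m - 1, n + 1}$, $\psi_{m - 1, n}$, $\psi_{m - 1, n - 1}$ with \eqref{eq:1ddToda_scalsys}$_2$ taken at the corresponding levels. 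Keeping $\lambda$ central but never commuting the coefficients $a$ and $b$, I then equate the two expressions and collect the coefficients of $\psi_{m, n + 1}$, $\psi_{m, n}$, $\psi_{m, n - 1}$, $\psi_{m, n - 2}$, which are treated as independent.

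The $\psi_{m, n + 1}$ terms match trivially. The coefficient of $\psi_{m, n - 2}$ yields the multiplicative relation
\begin{align}
    b_{m, n - 1} \, a_{m - 1, n - 1}
    &= a_{m - 2, n} \, b_{m - 1, n - 1},
\end{align}
and the coefficient of $\psi_{m, n}$ yields the additive relation
\begin{align}
    a_{m - 1, n} + b_{m, n - 1}
    &= a_{m - 2, n} + b_{m - 1, n},
\end{align}
which are exactly the reductions of \eqref{eq:2ddToda_scalsysab}. I expect the main obstacle to be the remaining coefficient of $\psi_{m, n - 1}$: it must not produce a new constraint. The plan is to show it collapses to an identity after using the multiplicative relation to cancel the mixed product $b_{m, n - 1} \, a_{m - 1, n - 1} = a_{m - 2, n} \, b_{m - 1, n - 1}$ and then factoring the additive relation on the right by $b_{m, n - 1}$. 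Checking that this middle coefficient is a consequence of the other two, with the factors kept in the correct order throughout, is the delicate non-abelian step.

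Finally I would substitute the definitions \eqref{eq:1ddToda_abdef}. The multiplicative relation becomes an identity, both sides collapsing to $\theta_{m - 1, n} \, \theta_{m - 1, n - 1}^{-1}$; this is the $1$d counterpart of the cancellation that made \eqref{eq:2ddToda_scalsysab}$_2$ automatic in Proposition~\ref{thm:2ddToda_scalsys}. The additive relation, rewritten as $\brackets{\theta_{m, n} - \theta_{m - 2, n + 1}} \theta_{m - 1, n}^{-1} = \theta_{m - 1, n} \brackets{\theta_{m - 2, n}^{-1} - \theta_{m, n - 1}^{-1}}$ and then multiplied on the right by $\theta_{m - 1, n}$, gives
\begin{align}
    \theta_{m, n}
    &= \theta_{m - 2, n + 1}
    + \theta_{m - 1, n} \brackets{
    \theta_{m - 2, n}^{-1}
    - \theta_{m, n - 1}^{-1}
    } \theta_{m - 1, n},
\end{align}
which is \eqref{eq:1ddToda_nc} after the shift $m \mapsto m + 2$. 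This completes the derivation.
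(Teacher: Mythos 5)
Your proposal is correct and follows essentially the same route as the paper: computing $\lambda\,\psi_{m-1,n}$ in two ways, extracting the additive and multiplicative relations from the coefficients of $\psi_{m,n}$ and $\psi_{m,n-2}$, observing that the $\psi_{m,n-1}$ coefficient is a consequence of these two, and then substituting \eqref{eq:1ddToda_abdef} so that the multiplicative relation collapses to $\theta_{m-1,n}\theta_{m-1,n-1}^{-1}$ on both sides while the additive one becomes \eqref{eq:1ddToda_nc} up to the shift $m \mapsto m+2$. The paper additionally spells out how \eqref{eq:1ddToda_scalsys} arises from \eqref{eq:2ddToda_scalsys} via \eqref{eq:2ddto1dd}, which you note only in passing, but the substantive compatibility argument is identical.
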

\begin{proof}
The proof consists of two stages: reduction and verification of the compatibility condition.

\medskip
\textbf{\textbullet \,\, Reduction.}
The system \eqref{eq:2ddToda_scalsys} can be rewritten in the form
\begin{gather}
    \left\{
    \begin{array}{rcl}
         \psi_{l + 1, m - 1, n}
         &=& \psi_{l, m - 1, n + 1}
         + a_{l, m - 1, n} \psi_{l, m - 1, n},
         \\[2mm]
         \psi_{l, m - 1, n + 1}
         &=& \psi_{l, m, n + 1}
         + b_{l, m, n} \psi_{l, m, n}.
    \end{array}
    \right.
\end{gather}
By using \eqref{eq:2ddto1dd}, the system becomes
\begin{gather}
    \left\{
    \begin{array}{rcl}
         \lambda \psi_{m, n}
         &=& \psi_{m - 1, n + 1}
         + a_{m - 1, n} \psi_{m - 1, n},
         \\[2mm]
         \psi_{m - 1, n + 1}
         &=& \psi_{m, n + 1}
         + b_{m, n} \psi_{m, n},
    \end{array}
    \right.
\end{gather}
where $a_{m, n}$ and $b_{m, n}$ are defined by \eqref{eq:1ddToda_abdef}. Replacing $\psi_{m - 1, n + 1}$ and $\psi_{m - 1, n}$ with the corresponding expressions given by the second equation of the system, we arrive at system \eqref{eq:1ddToda_scalsys}.

\medskip
\textbf{\textbullet \,\, Compatibility condition.}
Compatibility condition of the spectral problem \eqref{eq:1ddToda_scalsys}
\begin{align}
    \label{eq:compcond_1dd}
    T_{-1} \brackets{
    \lambda \psi_{m, n}
    }
    &= \lambda \brackets{
    \psi_{m - 1, n}
    }
\end{align}
can be expanded as follows:
\begin{align}
    l.h.s.\eqref{eq:compcond_1dd}
    &\overset{\,\,\phantom{\eqref{eq:2dToda_scalsys}_2}}{\equiv}  
    \lambda \psi_{m - 1, n}
    \overset{\,\,\eqref{eq:1ddToda_scalsys}_1}{=}  \psi_{m - 1, n + 1}
    + \brackets{
    a_{m - 2, n} + b_{m - 1, n}
    } \psi_{m - 1, n}
    + a_{m - 2, n} b_{m - 1, n - 1} \psi_{m - 1, n - 1}
    \\[2mm]
    &\overset{\,\,\eqref{eq:1ddToda_scalsys}_2}{=} 
    \brackets{
    \psi_{m, n + 1} + b_{m, n} \psi_{m, n}
    }
    + \brackets{
    a_{m - 2, n} + b_{m - 1, n}
    } \brackets{
    \psi_{m, n} + b_{m, n - 1} \psi_{m, n - 1}
    }
    \\[1mm]
    &\hspace{6.6cm}
    + a_{m - 2, n} b_{m - 1, n - 1} \brackets{
    \psi_{m, n - 1} + b_{m, n - 2} \psi_{m, n - 2}
    },
    \\[2mm]
    r.h.s.\eqref{eq:compcond_1dd}
    &\overset{\,\,\phantom{\eqref{eq:2dToda_scalsys}_2}}{\equiv} \lambda \psi_{m - 1, n}
    \overset{\,\,\eqref{eq:1ddToda_scalsys}_2}{=}  
    \lambda \psi_{m, n}
    + b_{m, n - 1} \brackets{
    \lambda \psi_{m, n - 1}
    }
    \\[2mm]
    &\overset{\,\,\eqref{eq:1ddToda_scalsys}_1}{=}  
    \psi_{m, n + 1}
    + \brackets{
    a_{m - 1, n} + b_{m, n}
    } \psi_{m, n}
    + a_{m - 1, n} b_{m, n - 1} \psi_{m, n - 1}
    \\[1mm]
    &\hspace{2.3cm}
    + b_{m, n - 1} \brackets{
    \psi_{m, n}
    + \brackets{
    a_{m - 1, n - 1} + b_{m, n - 1}
    } \psi_{m, n - 1}
    + a_{m - 1, n - 1} b_{m, n - 2} \psi_{m, n - 2}
    }.
\end{align}
Since $l.h.s.\eqref{eq:compcond_1dd} - r.h.s.\eqref{eq:compcond_1dd} = 0$, we obtain the system for $a_{m, n}$ and $b_{m, n}$
\begin{align}
    &\left\{
    \begin{array}{rcl}
         a_{m - 2, n} + b_{m - 1, n}
         &=& a_{m - 1, n} + b_{m, n - 1},
         \\[2mm]
         a_{m - 2, n} b_{m - 1, n - 1}
         &=& b_{m, n - 1} a_{m - 1, n - 1},
         \\[2mm]
         a_{m - 2, n} b_{m, n - 1}
         + b_{m - 1, n} b_{m, n - 1}
         + a_{m - 2, n} b_{m - 1, n - 1}
         &=& 
         a_{m - 1, n} b_{m, n - 1}
         + b_{m, n - 1} a_{m - 1, n - 1}
         + b_{m, n - 1}^2,
    \end{array}
    \right.
\end{align}
that can be reduced to
\begin{align}
    \label{eq:1ddToda_scalsysab}
    &\left\{
    \begin{array}{rcl}
         a_{m - 1, n} + b_{m, n - 1}
         &=& a_{m - 2, n} + b_{m - 1, n}
         \\[2mm]
         b_{m, n - 1} a_{m - 1, n - 1}
         &=& a_{m - 2, n} b_{m - 1, n - 1}.
    \end{array}
    \right.
\end{align}
Substituting \eqref{eq:1ddToda_abdef} into \eqref{eq:1ddToda_scalsysab},
\begin{align}
    &\left\{
    \begin{array}{rcl}
         \theta_{m, n} \theta_{m - 1, n}^{-1}
         + \theta_{m - 1, n} \theta_{m, n - 1}^{-1}
         &=& \theta_{m - 1, n} \theta_{m - 2, n}^{-1}
         + \theta_{m - 2, n + 1} \theta_{m - 1, n}^{-1}
         \\[2mm]
         \theta_{m - 1, n} \theta_{m, n - 1}^{-1} \cdot
         \theta_{m, n - 1} \theta_{m - 1, n - 1}^{-1}
         &=& \theta_{m - 1, n} \theta_{m - 2, n}^{-1} \cdot
         \theta_{m - 2, n} \theta_{m - 1, n - 1}^{-1},
    \end{array}
    \right.
\end{align}
the second equation \eqref{eq:1ddToda_scalsysab}$_2$ becomes an identity, while \eqref{eq:1ddToda_scalsysab}$_1$ coincides with \eqref{eq:1ddToda_nc} shifted by $m \mapsto m - 2$.
\end{proof}

Similarly to Proposition \ref{thm:2ddToda_matsys}, one can introduce a two-component vector-function in order to rewrite system \eqref{eq:1ddToda_scalsys} in the matrix form. On the other hand, the same result can be achieved by reduction data \eqref{eq:2ddto1dd}. Thus, we arrive at the following
\begin{prop}
\label{thm:1ddToda_matsys}
Let functions $\theta_{m, n}$ be solutions of the non-abelian 1d discrete Toda lattice \eqref{eq:1ddToda_nc}. Then the following matrix linear system
\begin{align}
    \label{eq:1ddToda_matsys}
    &&
    &
    \left\{
    \begin{array}{rcl}
         \Psi_{m, n + 1}
         &=& \mathcal{L}_{m, n} \Psi_{m, n},  
         \\[2mm]
         \Psi_{m - 1, n}
         &=& \mathcal{M}_{m, n} \Psi_{m, n},  
    \end{array}
    \right.
    &
    \Psi_{m, n}
    &= 
    \begin{pmatrix}
        \psi_{m, n} & \psi_{m, n - 1}
    \end{pmatrix}^{T},
    &&
\end{align}
where $\mathcal{L}_{m, n}$ and $\mathcal{M}_{m, n}$ are $2\times2$ matrices given by
\begin{align}
    \label{eq:1ddToda_pair}
    \begin{aligned}
    \mathcal{L}_{m, n}
    &= 
    \begin{pmatrix}
        \lambda 
        - \theta_{m + 1, n} \theta_{m + 2, n - 1}^{-1}
        - \theta_{m + 2, n} \theta_{m + 1, n}^{-1}
        & 
        - \theta_{m + 1, n}
        \\[1mm]
        \theta_{m + 1, n}^{-1} 
        & 
        0
    \end{pmatrix},
    \\[3mm]
    \mathcal{M}_{m, n}
    &= 
    \begin{pmatrix}
        1
        & 
        \theta_{m, n}
        \\[1mm]
        - \theta_{m + 1, n - 1}^{-1} 
        & 
        \lambda - \theta_{m + 1, n - 1}^{-1} \theta_{m, n}
    \end{pmatrix}
    ,
    \end{aligned}
\end{align}
is compatible, i.e. a discrete analog of the zero-curvature condition holds{\rm:}
\begin{align}
    \label{eq:zrc_1dd}
    \mathcal{L}_{m - 1, n} 
    \mathcal{M}_{m, n}
    &= \mathcal{M}_{m, n + 1}
    \mathcal{L}_{m, n}\ .
\end{align}
\end{prop}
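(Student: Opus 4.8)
The plan is to obtain both the matrices \eqref{eq:1ddToda_pair} and the compatibility condition \eqref{eq:zrc_1dd} directly from the already-proved two-dimensional statement, Proposition \ref{thm:2ddToda_matsys}, by applying the reduction data \eqref{eq:2ddto1dd}; this is the route indicated in the text preceding the statement. Concretely, I would substitute $\theta_{l, m, n} \mapsto \theta_{l + m, n}$ into the gauged $2\times2$ matrices \eqref{eq:2ddToda_pair} and simultaneously replace the shift operator $T_{1, -2}$ acting on the eigenfunction by the spectral parameter, since $T_{1, -2}(\psi_{l, m, n}) = \lambda\,\psi_{l + m, n}$. Writing the reduced (transverse) index as $m := l + m$, one checks entry by entry that $\mathcal{L}_{l, m, n}$ and $\mathcal{M}_{l, m, n}$ collapse to precisely the matrices in \eqref{eq:1ddToda_pair}; for example the $(1,1)$-entry of $\mathcal{L}$ becomes $\lambda - \theta_{m + 1, n}\theta_{m + 2, n - 1}^{-1} - \theta_{m + 2, n}\theta_{m + 1, n}^{-1}$, while the off-diagonal entries and all entries of $\mathcal{M}$ match after the same identification $(l, m) \mapsto l + m$. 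This step is routine index bookkeeping and requires no genuine computation beyond tracking the collapse of the two discrete directions into their sum.

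The point that must be verified with care is that the reduction is compatible with the noncommutative products occurring in the zero-curvature relation. Under \eqref{eq:2ddto1dd} every $\theta$ depends only on $l + m$, and the shift $T_{1, -2}$ preserves $l + m$ (it sends $\theta_{l, m, n} \mapsto \theta_{l + 1, m - 1, n}$, which reduces again to $\theta_{l + m, n}$); hence $T_{1, -2}$ acts trivially on all reduced coefficients while acting as multiplication by the central scalar $\lambda$ on $\psi$. Consequently the operator-ordering subtleties present in the two-dimensional operator identity \eqref{eq:zrc_2dd} disappear, and that identity specializes to a genuine $2\times2$ matrix identity over $R$. Tracking the transverse index through the four factors $\mathcal{L}_{l, m - 1, n}$, $\mathcal{M}_{l, m, n}$, $\mathcal{M}_{l, m, n + 1}$, $\mathcal{L}_{l, m, n}$ of \eqref{eq:zrc_2dd}, which carry reduced indices $m - 1, m, m, m$, one obtains exactly $\mathcal{L}_{m - 1, n}\mathcal{M}_{m, n} = \mathcal{M}_{m, n + 1}\mathcal{L}_{m, n}$, that is, \eqref{eq:zrc_1dd}.

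It then remains to identify the specialized compatibility condition with \eqref{eq:1ddToda_nc}. Since Proposition \ref{thm:2ddToda_matsys} shows that \eqref{eq:zrc_2dd} is equivalent to the lattice \eqref{eq:2ddToda_nc}, and since \eqref{eq:2ddto1dd} carries \eqref{eq:2ddToda_nc} onto \eqref{eq:1ddToda_nc} (as recorded at the start of this subsection), the reduced condition \eqref{eq:zrc_1dd} is equivalent to \eqref{eq:1ddToda_nc}. As a direct confirmation I would also mimic the last step of the proof of Proposition \ref{thm:2ddToda_matsys}: computing $\mathcal{L}_{m - 1, n}\mathcal{M}_{m, n} - \mathcal{M}_{m, n + 1}\mathcal{L}_{m, n}$, only the first column survives, and its two nonzero entries are left/right multiples of the single expression $E := \theta_{m + 2, n} - \theta_{m, n + 1} - \theta_{m + 1, n}\brackets{\theta_{m, n}^{-1} - \theta_{m + 2, n - 1}^{-1}}\theta_{m + 1, n}$, which vanishes precisely when \eqref{eq:1ddToda_nc} holds.

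The main obstacle is conceptual rather than computational: one must be confident that collapsing $(l, m)$ to $l + m$ and promoting the operator $T_{1, -2}$ to the central parameter $\lambda$ is a consistent specialization of the noncommutative zero-curvature identity. Once the triviality of the $T_{1, -2}$-action on the reduced coefficients is observed, the two-dimensional identity descends verbatim, and the remaining matrix algebra is the same short computation already carried out in Proposition \ref{thm:2ddToda_matsys}.
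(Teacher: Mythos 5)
Your proposal is correct and follows essentially the same route as the paper: the paper's proof likewise notes that the matrices \eqref{eq:1ddToda_pair} can be obtained either directly from the scalar system \eqref{eq:1ddToda_scalsys} or by the reduction data \eqref{eq:2ddto1dd} applied to Proposition \ref{thm:2ddToda_matsys}, and then verifies \eqref{eq:zrc_1dd} by the same explicit $2\times2$ expansion you describe, with the difference of the two sides supported in the first column and proportional to the \ref{eq:1ddToda_0} expression. Your added remark that $T_{1,-2}$ preserves $l+m$ and hence commutes with all reduced coefficients, so the operator identity \eqref{eq:zrc_2dd} descends to a genuine matrix identity with central $\lambda$, is a correct justification of a step the paper leaves implicit.
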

\begin{proof}
Matrices $\mathcal{L}_{m, n}$, $\mathcal{M}_{m, n}$ can be derived either straightforwardly from \eqref{eq:1ddToda_scalsys} (similarly to the proof of Proposition \ref{thm:2ddToda_matsys}) or just by considering the reduction data defined by \eqref{eq:2ddto1dd}. Expanding the compatibility condition \eqref{eq:zrc_1dd} of system \eqref{eq:1ddToda_matsys}, 
\begin{align}
    l.h.s.\eqref{eq:zrc_1dd}
    &- r.h.s.\eqref{eq:zrc_1dd}
    \\[2mm]
    &= 
    \begin{pmatrix}
        \lambda 
        - \theta_{m, n} \theta_{m + 1, n - 1}^{-1}
        - \theta_{m + 1, n} \theta_{m, n}^{-1}
        &
        - \theta_{m, n}
        \\[1mm]
        \theta_{m, n}^{-1}
        &
        0
    \end{pmatrix}
    \begin{pmatrix}
        1 & \theta_{m, n}
        \\[1mm]
        - \theta_{m + 1, n - 1}^{-1}
        &
        \lambda 
        - \theta_{m + 1, n - 1}^{-1} \theta_{m, n}
    \end{pmatrix}
    \\[1mm]
    &\qquad
    -
    \begin{pmatrix}
        1 & \theta_{m, n + 1}
        \\[1mm]
        - \theta_{m + 1, n}^{-1}
        &
        \lambda 
        - \theta_{m + 1, n}^{-1} \theta_{m, n + 1}
    \end{pmatrix}
    \begin{pmatrix}
        \lambda 
        - \theta_{m + 1, n} \theta_{m + 2, n - 1}^{-1}
        - \theta_{m + 2, n} \theta_{m + 1, n}^{-1}
        & 
        - \theta_{m + 1, n}
        \\[1mm]
        \theta_{m + 1, n}^{-1} 
        & 
        0
    \end{pmatrix}
    \\[3mm]
    &= 
    \begin{pmatrix}
        \lambda - \theta_{m + 1, n} \theta_{m, n}^{-1}
        &
        - \theta_{m + 1, n}
        \\[1mm]
        \theta_{m, n}^{-1}
        &
        1
    \end{pmatrix}
    \\[1mm]
    &\hspace{3.1cm}
    -
    \begin{pmatrix}
        \lambda 
        - \theta_{m + 1, n} \theta_{m + 2, n - 1}^{-1}
        - \theta_{m + 2, n} \theta_{m + 1, n}^{-1}
        + \theta_{m, n + 1} \theta_{m + 1, n}^{-1}
        &
        - \theta_{m + 1, n}
        \\[1mm]
        \theta_{m + 2, n - 1}^{-1}
        + \theta_{m + 1, n}^{-1} \theta_{m + 2, n} \theta_{m + 1, n}^{-1}
        - \theta_{m + 1, n}^{-1} \theta_{m, n + 1} \theta_{m + 1, n}^{-1}
        &
        1
    \end{pmatrix}
    ,
\end{align}
we arrive at the statement of the proposition.
\end{proof}

An additional benefit of the spectral problem \eqref{eq:1ddToda_scalsys} is that besides the two-component formalism one is also able to introduce semi-infinite matrices. As a result, the discrete Lax equation can be obtained. However, thanks to the reduction, the matrix problem \eqref{eq:2ddToda_matsys_n} -- \eqref{eq:2ddToda_pair_n} reduces to those presented in the proposition below.

\begin{prop}
\label{thm:1ddToda_matsys_n}
Let the matrices $L_{m}$ and $M_m$ be defined as
\begin{align}
    \label{eq:1ddToda_pair_n}
    \begin{aligned}
    L_m
    &= 
    \begin{pmatrix}
        a_{m - 1, 1} + b_{m, 1}
        &
        1
        &&&
        \\[1mm]
        a_{m - 1, 2} b_{m, 1}
        &
        a_{m - 1, 2} + b_{m, 2}
        &
        1
        &
        &
        \\[1mm]
        &
        a_{m - 1, 3} b_{m, 2}
        &
        a_{m - 1, 3} + b_{m, 3}
        &
        \ddots
        &
        \\[1mm]
        &
        &
        \ddots
        &
        \ddots
    \end{pmatrix}
    \\[1mm]
    &= \sum_{k \geq 1} E_{k, k + 1}
    + \sum_{k \geq 1} \brackets{
    a_{m - 1, k} + b_{m, k}
    } E_{k, k}
    + \sum_{k \geq 1} a_{m - 1, k + 1} b_{m, k} E_{k + 1, k}
    ,
    \\[3mm]
    M_m
    &= 
    \begin{pmatrix}
        1
        &
        &&&
        \\[1mm]
        b_{m, 1}
        &
        1
        &
        &
        &
        \\[1mm]
        &
        b_{m, 2}
        &
        1
        &
        &
        \\[1mm]
        &
        &
        \ddots
        &
        \ddots
    \end{pmatrix}
    = \sum_{k \geq 1} E_{k, k}
    + \sum_{k \geq 1} b_{m, k} E_{k + 1, k}
    .
    \end{aligned}
\end{align}
Then, the discrete matrix system
\begin{align}
    \label{eq:1ddToda_matsys_n}
    &\left\{
    \begin{array}{rcl}
         L_m \Psi_m 
         &=& \lambda \Psi_m,
         \\[2mm]
         \Psi_{m - 1}
         &=& M_m \Psi_m,
    \end{array}
    \right.
    &
    \Psi_m
    &= 
    \begin{pmatrix}
        \psi_{m, 1}
        &
        \psi_{m, 2}
        &
        \psi_{m, 3}
        &
        \dots
    \end{pmatrix}^T
\end{align}
yields the non-abelian semi-infinite 1d discrete Toda lattice \eqref{eq:1ddToda_nc}.
\end{prop}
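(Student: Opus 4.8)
The plan is to obtain both the matrices and the compatibility condition by applying the reduction data \eqref{eq:2ddto1dd} to the semi-infinite problem of Proposition \ref{thm:2ddToda_matsys_n}, rather than repeating the construction from the scalar system. First I would observe that under the substitution $\theta_{l, m, n} \mapsto \theta_{l + m, n}$ the Lax functions in \eqref{eq:2ddToda_abdef} depend only on the sum $l + m$, so that $a_{l, m, n} = \theta_{l + 1, m, n} \theta_{l, m, n}^{-1}$ and $b_{l, m, n}$ collapse precisely to the one-index functions $a_{m, n}$, $b_{m, n}$ of \eqref{eq:1ddToda_abdef}. Consequently the semi-infinite matrices $L_{l, m}$, $M_{l, m}$ of \eqref{eq:2ddToda_pair_n} reduce entry by entry to the matrices $L_m$, $M_m$ of \eqref{eq:1ddToda_pair_n}.

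Second, I would translate the two equations of \eqref{eq:2ddToda_matsys_n} under the reduction. Since \eqref{eq:2ddto1dd} prescribes that the index-preserving shift $(l, m) \mapsto (l + 1, m - 1)$ acts as multiplication by the central spectral parameter $\lambda$, the first equation $L_{l, m} \Psi_{l, m} = \Psi_{l + 1, m - 1}$ becomes the eigenvalue problem $L_m \Psi_m = \lambda \Psi_m$, while the second equation $\Psi_{l, m - 1} = M_{l, m} \Psi_{l, m}$ becomes the shift $\Psi_{m - 1} = M_m \Psi_m$. This is exactly system \eqref{eq:1ddToda_matsys_n}.

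Third, for the compatibility I would apply the same reduction to the 2d Lax equation \eqref{eq:lax_2dd}. Tracking indices, $L_{l, m - 1}$ carries the value $l + m - 1$, whereas $M_{l, m}$, $M_{l + 1, m - 1}$, and $L_{l, m}$ all carry $l + m$; hence $L_{l, m - 1} M_{l, m} = M_{l + 1, m - 1} L_{l, m}$ collapses to the reduced Lax equation
\begin{align}
    L_{m - 1} M_m
    &= M_m L_m.
\end{align}
The same identity also follows intrinsically: applying $L_{m - 1}$ to $\Psi_{m - 1} = M_m \Psi_m$ and using $L_{m - 1} \Psi_{m - 1} = \lambda \Psi_{m - 1} = \lambda M_m \Psi_m = M_m L_m \Psi_m$ (the parameter $\lambda$ being central), we get $L_{m - 1} M_m \Psi_m = M_m L_m \Psi_m$ for arbitrary $\Psi_m$.

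Finally, it remains to check that this matrix identity is equivalent to \eqref{eq:1ddToda_nc}. Multiplying out the two tridiagonal semi-infinite matrices and comparing the main, sub-, and superdiagonals reproduces the system \eqref{eq:1ddToda_scalsysab} for $a_{m, n}$, $b_{m, n}$, which by the computation in the proof of Proposition \ref{thm:1ddToda_scalsys} is equivalent to the 1d discrete Toda lattice. The main obstacle is the entry-by-entry bookkeeping: one must verify that the reduced subdiagonal entry $a_{m - 1, k + 1} b_{m, k}$ is reproduced correctly and treat the boundary row with care, using $b_{m, 0} = 0$ as noted in the proof of Proposition \ref{thm:2ddToda_matsys_n}.
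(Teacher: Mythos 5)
Your proposal is correct and follows essentially the same route as the paper: the matrices and the linear system are obtained by applying the reduction data \eqref{eq:2ddto1dd} to the semi-infinite 2d problem of Proposition \ref{thm:2ddToda_matsys_n}, and compatibility collapses to the Lax equation $L_{m-1} M_m = M_m L_m$, which is equivalent to \eqref{eq:1ddToda_nc}. You merely supply more of the bookkeeping (the collapse of $a_{l,m,n}$, $b_{l,m,n}$ to one-index functions, the intrinsic derivation of the reduced Lax equation, and the boundary condition $b_{m,0}=0$) than the paper's very terse proof does.
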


\begin{proof}
The system and matrices reduce from their 2d analogs \eqref{eq:2ddToda_matsys_n} -- \eqref{eq:2ddToda_pair_n} given in Proposition \ref{thm:2ddToda_matsys_n} by using \eqref{eq:2ddto1dd}. 
The system \eqref{eq:1ddToda_matsys_n} leads to the discrete analog of the Lax equation
\begin{align}
    \label{eq:lax_1dd}
    L_{m - 1} M_m
    &= M_m L_m,
\end{align}
that is equivalent to the semi-infinite 1d discrete Toda equation \eqref{eq:1ddToda_nc}. 
\end{proof}

\begin{rem}
Thanks to Proposition \ref{thm:1ddToda_matsys_n}, it is easy to obtain non-abelian analogs of the 1d discrete Toda lattices with the open-end and periodic boundary conditions. 
\end{rem}

Finally, we are going to derive quasideterminant solutions for \eqref{eq:1ddToda_nc}. 

\begin{prop}
\label{thm:1ddToda_sol}
Let 
\begin{align}
    \label{eq:1ddToda_sol}
    \Theta_{m, n} = \brackets{\varphi_{m + i + j - 1}}_{1 \leq i, j \leq n}
\end{align}
and $\theta_{m, n} = \abs{\Theta_{m, n}}_{n,n}$. Then, 

\begin{itemize}
\item[\rm(a)]
$\theta_{m, n}$ is a solution of the non-abelian 1d discrete Toda equation \eqref{eq:1ddToda_nc} and, in particular,

\vspace{2mm}
\item[\rm(b)]
the functions $\varphi_{m + k}$ can be written as
\begin{align}
    \label{eq:spec_1dd}
    &&
    \varphi_{m + k}
    &= \varepsilon^{- k} 
    \sum_{j = 0}^k (-1)^{j}
    \binom{k}{j} \phi_{m - j},
    &
    0 
    &\leq k \leq 2 n - 2.
    &&
\end{align}
\end{itemize}
\end{prop}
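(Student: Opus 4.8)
The plan is to obtain both parts by reduction from the already-established two-dimensional statements, rather than by rerunning the quasideterminant computation of Proposition \ref{thm:2ddToda_nc} from scratch.

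For part (a) I would first observe that the Hankel matrix $\Theta_{m,n} = (\varphi_{m+i+j-1})$ of \eqref{eq:1ddToda_sol} is exactly the image of the two-index matrix $\Theta_{l,m,n} = (\varphi_{l+i-1,m+j-1})$ of \eqref{eq:Theta_def} under the specialization in which the family $\varphi_{p,q}$ is replaced by a single-index family depending only on the sum $p+q$: writing $\varphi_{p,q} = \varphi_{p+q}$ sends the $(i,j)$-entry $\varphi_{l+i-1,m+j-1}$ to $\varphi_{l+m+i+j-2}$, which is the Hankel entry up to a shift of the base index. Since the whole matrix $\Theta_{l,m,n}$ then depends on $l,m$ only through $l+m$, so does its quasideterminant $\theta_{l,m,n} = |\Theta_{l,m,n}|_{nn}$, and hence the $\theta$-part of the map \eqref{eq:2ddto1dd}, $\theta_{l,m,n}\mapsto\theta_{l+m,n}$, is consistent (only this part, not the spectral insertion, is needed at the level of the equation). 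Proposition \ref{thm:2ddToda_nc} asserts that such $\theta_{l,m,n}$ solve \eqref{eq:2ddToda_nc} for \emph{every} choice of two-index seed, in particular for the Hankel-reduced one; substituting $\theta_{l,m,n}=\theta_{l+m,n}$ into \eqref{eq:2ddToda_nc} and reading off its six constituent terms with $M=l+m$ collapses it termwise onto \eqref{eq:1ddToda_nc}. This proves (a). One could equally rerun the argument of Proposition \ref{thm:2ddToda_nc} verbatim, since the homological relation \eqref{eq:homrel_col} and the Jacobi identity \eqref{eq:Jacobi_nc} are insensitive to the Hankel constraint, but the reduction is shorter.

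For part (b) I would derive \eqref{eq:spec_1dd} from the two-dimensional specialization \eqref{eq:spec_2dd} by imposing the same reduction at the level of the seed. Setting $\varepsilon_1=\varepsilon_2=\varepsilon$ and restricting to a seed with $\phi_{l,m}=\phi_{l+m}$, the summand $\phi_{l-j_1,m-j_2}$ in \eqref{eq:spec_2dd} becomes $\phi_{(l+m)-(j_1+j_2)}$; writing $M=l+m$, $k=k_1+k_2$ and grouping the double sum by $s=j_1+j_2$ produces a single sum with inner coefficient $\sum_{j_1+j_2=s}\binom{k_1}{j_1}\binom{k_2}{j_2}$. By the Chu--Vandermonde identity this equals $\binom{k_1+k_2}{s}=\binom{k}{s}$, so \eqref{eq:spec_2dd} collapses to $\varphi_{M+k}=\varepsilon^{-k}\sum_{s=0}^{k}(-1)^{s}\binom{k}{s}\phi_{M-s}$, which is precisely \eqref{eq:spec_1dd} after relabelling $M\mapsto m$, $s\mapsto j$. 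Since in \eqref{eq:spec_2dd} one has $0\le k_1,k_2\le n-1$, the reduced order runs over $0\le k=k_1+k_2\le 2n-2$, matching the range stated in \eqref{eq:spec_1dd} exactly; the computation also shows the reduced seed depends only on $M+k$, confirming compatibility with the Hankel structure of $\Theta_{m,n}$ and exhibiting $\theta_{m,n}$ as the corresponding discrete Wronskian-type quasideterminant.

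The routine part is the two-to-one-dimensional bookkeeping; the only place demanding care is the base-index shift (the reduced entry $\varphi_{l+m+i+j-2}$ versus the Hankel entry $\varphi_{m+i+j-1}$), which must be absorbed by a single fixed shift of the discrete variable so that the two sets of $2n-1$ consecutive values are aligned. I expect the one genuine, though still light, obstacle to be the Chu--Vandermonde collapse in part (b), as this is the sole nontrivial identity guaranteeing that the double discrete derivative degenerates to a single one and hence that the Hankel/Wronskian ansatz is consistent with the one-dimensional lattice \eqref{eq:1ddToda_nc}.
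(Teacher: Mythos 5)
Your proposal is correct and is essentially the paper's own argument: the paper proves this proposition in one line by citing the definition of $\theta_{l,m,n}$, formula \eqref{eq:spec_2dd}, and the reduction data \eqref{eq:2ddto1dd}, which is exactly the two-to-one-dimensional collapse you carry out. The only thing you add is making explicit the Chu--Vandermonde summation and the base-index shift, details the paper leaves implicit.
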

\begin{proof}
The statement follows from the definition of $\theta_{l, m, n}$, formula \eqref{eq:spec_2dd}, and reduction data \eqref{eq:2ddto1dd}. 
\end{proof}

Note that in the commutative case there exists an infinite 1d discrete Toda lattice whose determinant solutions were discovered in the paper \cite{kajiwara1999determinant}. Regarding the semi-infinite case, a connection with a non-abelian analog of  Theorem 2.3 from \cite{kajiwara1999determinant} is given by the lemma below.
\begin{lem}
\label{thm:gen_1dd}
Let $c_{m, k}$, $k \geq 0$ be defined recursively as
\begin{align}
    \label{eq:rec_1dd}
    &&
    c_{m, k}
    &= \varepsilon^{-k} \brackets{
    c_{m, k - 1}
    - c_{m - 1, k - 1}
    },
    &
    c_{m, 0}
    &= \phi_m.
    &&
\end{align}
Then, $c_{m, k}$ are given by the general formula
\begin{align}
    \label{eq:gen_1dd}
    c_{m, k}
    &= \varepsilon^{-k}
    \sum_{j = 0}^k (-1)^{j} \binom{k}{j} \phi_{m - j}.
\end{align}
\end{lem}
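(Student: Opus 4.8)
The plan is to establish the closed form \eqref{eq:gen_1dd} by induction on $k$, viewing the recursion \eqref{eq:rec_1dd} as the iterated action of a discrete backward difference. Before starting, I note that the scalars $\varepsilon^{-k}$, the signs $(-1)^j$, and the binomial coefficients $\binom{k}{j}$ all lie in the central field $F$, hence commute with every $\phi_{m-j} \in R$; consequently the non-commutativity of $R$ is irrelevant and the entire computation proceeds exactly as in the scalar case. This is precisely the finite-difference expression already used in the specialisation \eqref{eq:spec_1dd}, which serves as a reassuring consistency check.

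For the base case $k = 0$ the right-hand side of \eqref{eq:gen_1dd} collapses to $\binom{0}{0}\phi_m = \phi_m = c_{m,0}$, as required. For the inductive step I assume \eqref{eq:gen_1dd} at level $k-1$, for all $m$ simultaneously, and feed the two instances $c_{m,k-1}$ and $c_{m-1,k-1}$ into \eqref{eq:rec_1dd}. The term $c_{m-1,k-1}$ contributes the functions $\phi_{m-1-j}$; reindexing this sum by $j \mapsto j-1$ rewrites them as $\phi_{m-j}$ with $j$ ranging over $1,\dots,k$, so that both sums are now expressed over the common family $\{\phi_{m-j}\}$.

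Collecting the coefficient of a fixed $\phi_{m-j}$ is then the heart of the argument. For the interior indices $1 \le j \le k-1$ one obtains $(-1)^j\big(\binom{k-1}{j} + \binom{k-1}{j-1}\big)$, which is exactly $(-1)^j\binom{k}{j}$ by Pascal's rule $\binom{k}{j} = \binom{k-1}{j} + \binom{k-1}{j-1}$; the extreme indices $j = 0$ and $j = k$ each receive a single contribution, giving the coefficients $\binom{k}{0} = \binom{k}{k} = 1$. Tracking the accumulated powers of $\varepsilon$ inherited from the hypothesis together with the factor supplied by the recursion produces the overall weight $\varepsilon^{-k}$, and we recover \eqref{eq:gen_1dd} at level $k$, closing the induction.

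I do not expect any genuine obstacle here: the only points that require attention are the boundary bookkeeping in the index shift (ensuring the $j = 0$ and $j = k$ terms are each counted once) and the correct collation of the $\varepsilon$-powers into the single factor $\varepsilon^{-k}$. A clean alternative that avoids the interior-versus-boundary case split is the operator viewpoint: writing $S$ for the shift $\phi_m \mapsto \phi_{m-1}$, formula \eqref{eq:gen_1dd} identifies $c_{m,k}$ as the image of $\phi$ under $\varepsilon^{-k}(\mathrm{id} - S)^k$, and since $\mathrm{id}$ and $S$ commute the binomial theorem expands this power directly into the stated sum.
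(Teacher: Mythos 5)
Your argument follows the same route as the paper's own proof: induction on $k$, reindexing the $c_{m-1,k-1}$ sum by $j \mapsto j-1$, and Pascal's rule $\binom{k}{j}=\binom{k-1}{j}+\binom{k-1}{j-1}$ to collect the coefficient of each $\phi_{m-j}$. The remark that all scalars are central so non-commutativity is irrelevant is correct, and the operator packaging $\varepsilon^{-k}(\mathrm{id}-S)^{k}$ is a genuinely cleaner way to organise the combinatorics.

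However, the one step you wave through --- that tracking the accumulated powers of $\varepsilon$ ``produces the overall weight $\varepsilon^{-k}$'' --- is precisely the step that does not close. With \eqref{eq:rec_1dd} read literally, the inductive step yields a prefactor $\varepsilon^{-k}\cdot\varepsilon^{-(k-1)}=\varepsilon^{-(2k-1)}$ rather than $\varepsilon^{-k}$, and iterating gives $c_{m,k}=\varepsilon^{-k(k+1)/2}\sum_{j=0}^{k}(-1)^{j}\binom{k}{j}\phi_{m-j}$; already $c_{m,2}=\varepsilon^{-3}\brackets{\phi_m-2\phi_{m-1}+\phi_{m-2}}$ contradicts \eqref{eq:gen_1dd}. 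The recursion and the closed form are only compatible if the prefactor in \eqref{eq:rec_1dd} is $\varepsilon^{-1}$ instead of $\varepsilon^{-k}$ --- evidently the intended reading, since it matches \eqref{eq:spec_1dd}, and the paper's own inductive step silently uses it (it writes $\varepsilon^{n+1}c_{m,n+1}=\varepsilon^{n}\brackets{c_{m,n}-c_{m-1,n}}$, i.e.\ $c_{m,n+1}=\varepsilon^{-1}\brackets{c_{m,n}-c_{m-1,n}}$). In the operator language this is transparent: applying the scalar $\varepsilon^{-1}$ at every step gives $\varepsilon^{-k}(\mathrm{id}-S)^{k}$, whereas applying $\varepsilon^{-j}$ at step $j$ gives $\varepsilon^{-k(k+1)/2}(\mathrm{id}-S)^{k}$. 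So your proof is fine once the recursion is corrected, but as written you assert an $\varepsilon$-bookkeeping identity that is false for the recursion you quoted; a careful pass over exactly the point you flagged as requiring attention would have exposed the inconsistency in the statement rather than reproducing it.
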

\begin{proof}
One can verify the statement just by using the mathematical induction. For the base $k = 1$, the recurrence \eqref{eq:rec_1dd} leads to
\begin{align}
    c_{m, 0}
    &= \varepsilon^{-1} \brackets{
    c_{m, 0} - c_{m - 1, 0}
    }
    = \varepsilon^{-1} \brackets{
    \phi_m - \phi_{m - 1}
    }
\end{align}
and coincide with \eqref{eq:gen_1dd} with $k = 1$. 
Let the formula \eqref{eq:gen_1dd} be true for $k = n$. Then, for $k = n + 1$, we have
\begin{align}
    \varepsilon^{n + 1} \, c_{m, n + 1}
    &= \varepsilon^{n} \brackets{
    c_{m, n} - c_{m - 1, n}
    }
    = \sum_{j = 0}^n (-1)^j \binom{n}{j} \phi_{m - j}
    - \sum_{j = 0}^n (-1)^j \binom{n}{j} \phi_{m - 1 - j}
    \\[1mm]
    &= \sum_{j = 0}^n (-1)^j \binom{n}{j} \phi_{m - j}
    - \sum_{j = 1}^{n + 1} 
    (-1)^{j - 1} \binom{n}{j - 1} \phi_{m - j}
    \\[1mm]
    &= \phi_{m} 
    + (-1)^{n + 1} \phi_{m - (n + 1)}
    + \sum_{j = 1}^n (-1)^j 
    \brackets{
    \binom{n}{j}
    + \binom{n}{j - 1}
    } \phi_{m - j}
    \\[1mm]
    &= \phi_{m} 
    + (-1)^{n + 1} \phi_{m - (n + 1)}
    + \sum_{j = 1}^n (-1)^j 
    \binom{n + 1}{j} \phi_{m - j}
    = \sum_{j = 1}^{n + 1} (-1)^j 
    \binom{n + 1}{j} \phi_{m - j},
\end{align}
where the binomial recurrence $\displaystyle \binom{n}{j} + \binom{n}{j - 1} = \binom{n + 1}{j}$ was used.
\end{proof}

\subsection{Continuous case}
\label{sec:1dToda}

A continuous 1d Toda lattice can be obtained either by a continuous limit of its discrete analog or by a reduction of its 2d generalization. Below we will consider the second method of deriving a continuous Toda lattice and in the end of the section will discuss the first method. 

Let us consider the following reduction data for the non-abelian 2d Toda field equation \eqref{eq:2dToda_nc}:
\begin{align}
    \label{eq:2dto1d}
    &&
    \theta_n (t)
    &= \theta_n (x + y),
    &
    x
    &= \tfrac12 (t + z),
    &
    y
    &= \tfrac12 (t - z).
    &&
\end{align}
Then, the operators $\partial_x + \partial_y$ and $\partial_x - \partial_y$ become $\partial_t$ and $\partial_z$, respectively. 
If we change the coordinates to $(t, z)$, then $\partial_z \theta_n = 0$ and the left hand side of the equation is equal to $\partial_t (\theta_{n}' \theta_n^{-1})$.
On the other hand, it is easy to see that we will get the same formula, if we put $\partial_z \theta_n = \lambda \, \theta_n$, $\lambda \neq 0$. Regarding this reduction, the equation \eqref{eq:2dToda_nc} turns into the non-abelian 1d Toda equation 
\begin{align}
    \label{eq:1dToda_nc}
    \brackets{
    \theta_{n}' \theta_n^{-1}
    }'
    &= \theta_{n + 1} \theta_n^{-1}
    - \theta_n \theta_{n - 1}^{-1}
    .
\end{align}

Similarly to Proposition \ref{thm:1ddToda_scalsys}, we obtain a scalar spectral problem for \eqref{eq:1dToda_nc}.
\begin{prop}
The non-abelian 1d Toda equation \eqref{eq:1dToda_nc} follows from the system
\begin{gather}
    \label{eq:1dToda_scalsys}
    \left\{
    \begin{array}{rcl}
         \lambda \psi_{n}
         &=& \psi_{n + 1}
         + a_n \psi_{n}
         + b_n \psi_{n - 1}
         ,
         \\[2mm]
         \psi_{n}'
         &=& - b_n \psi_{n - 1},
    \end{array}
    \right.
    \\[2mm]
    \label{eq:1dToda_abdef}
    \begin{aligned}
    a_{n}
    &= \theta_{n}' \theta_{n}^{-1},
    &&&&&
    b_{n}
    &= \theta_{n} \theta_{n - 1}^{-1}.
    \end{aligned}
\end{gather}
\end{prop}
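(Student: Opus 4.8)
The plan is to derive the non-abelian 1d Toda equation \eqref{eq:1dToda_nc} as the compatibility condition of the scalar pair \eqref{eq:1dToda_scalsys}, exactly in the spirit of the proofs of Propositions \ref{thm:1ddToda_scalsys} and \ref{thm:2dToda_scalsys}. Since $\lambda$ is a central commutative spectral parameter, it passes through the derivation, so differentiating the spectral equation \eqref{eq:1dToda_scalsys}$_1$ with respect to the continuous variable gives $\lambda\psi_n' = \psi_{n+1}' + a_n'\psi_n + a_n\psi_n' + b_n'\psi_{n-1} + b_n\psi_{n-1}'$, and the compatibility I would impose is that this is consistent with the flow \eqref{eq:1dToda_scalsys}$_2$.

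First I would rewrite the left-hand side: using $\psi_n' = -b_n\psi_{n-1}$ and then the spectral equation at index $n-1$, namely $\lambda\psi_{n-1} = \psi_n + a_{n-1}\psi_{n-1} + b_{n-1}\psi_{n-2}$, one obtains $\lambda\psi_n' = -b_n\brackets{\psi_n + a_{n-1}\psi_{n-1} + b_{n-1}\psi_{n-2}}$. Then I would eliminate every derivative on the right-hand side by the evolution equation, $\psi_{n+1}' = -b_{n+1}\psi_n$, $\psi_n' = -b_n\psi_{n-1}$, $\psi_{n-1}' = -b_{n-1}\psi_{n-2}$. Both sides are now left $R$-linear combinations of the quantities $\psi_n$, $\psi_{n-1}$, $\psi_{n-2}$; since the relation must hold for all solutions of the linear problem these are independent, so equating coefficients yields $a_n' = b_{n+1} - b_n$ and $b_n' = a_n b_n - b_n a_{n-1}$, while the $\psi_{n-2}$-terms cancel identically.

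Finally, I would substitute the definitions $a_n = \theta_n'\theta_n^{-1}$, $b_n = \theta_n\theta_{n-1}^{-1}$ and use the non-abelian rule $\brackets{\theta_{n-1}^{-1}}' = -\theta_{n-1}^{-1}\theta_{n-1}'\theta_{n-1}^{-1}$. The first relation then reads $\brackets{\theta_n'\theta_n^{-1}}' = \theta_{n+1}\theta_n^{-1} - \theta_n\theta_{n-1}^{-1}$, which is precisely \eqref{eq:1dToda_nc}, whereas the second collapses to the identity $\theta_n'\theta_{n-1}^{-1} - \theta_n\theta_{n-1}^{-1}\theta_{n-1}'\theta_{n-1}^{-1} = \theta_n'\theta_{n-1}^{-1} - \theta_n\theta_{n-1}^{-1}\theta_{n-1}'\theta_{n-1}^{-1}$, imposing no further constraint.

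The main obstacle is purely the non-abelian bookkeeping: I must preserve the order of every factor and apply the product rule for the inverse correctly, so that the second compatibility relation turns out to be a genuine identity rather than a spurious extra constraint. As a consistency check, the statement also follows by applying the reduction data \eqref{eq:2dto1d} to the continuous 2d pair \eqref{eq:2dToda_scalsys}, with $\partial_z$ acting as multiplication by $\lambda$ and $\partial_t = \partial_x + \partial_y$; this alternative route is reassuring, but the direct compatibility computation above is the cleaner one to present.
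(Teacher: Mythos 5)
Your proposal is correct and matches the paper's own argument: the paper likewise verifies the compatibility condition $(\lambda\psi_n)' = \lambda(\psi_n')$ by expanding both sides via the two equations of \eqref{eq:1dToda_scalsys}, equates coefficients of $\psi_n$, $\psi_{n-1}$, $\psi_{n-2}$ to get $a_n' = b_{n+1} - b_n$ and $b_n' = a_n b_n - b_n a_{n-1}$ (with the $\psi_{n-2}$ terms cancelling), and then substitutes \eqref{eq:1dToda_abdef} so that the first relation becomes \eqref{eq:1dToda_nc} and the second an identity. The reduction from the 2d pair via \eqref{eq:2dto1d}, which you mention only as a consistency check, is also present in the paper as the first half of its proof, so nothing essential differs.
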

\begin{proof}
Below we again split the proof into two stages. We first consider a reduction and then verify the compatibility condition for system \eqref{eq:1dToda_scalsys}.

\medskip
\textbf{\textbullet \,\, Reduction.}
Note that after the change \eqref{eq:2dto1d}
the operator $\partial \equiv \partial_x - \partial_y$ becomes just $\lambda$. Consider the following system that is equivalent to \eqref{eq:2dToda_scalsys}:
\begin{gather}
    \left\{
    \begin{array}{rcl}
         \partial (\psi_{n})
         &=& \psi_{n + 1}
         + \theta_{n, x} \theta_n^{-1} 
         \psi_{n}
         + \theta_n \theta_{n - 1}^{-1} \psi_{n - 1}
         ,
         \\[2mm]
         \psi_{n, y}
         &=& - \theta_{n} \theta_{n - 1}^{-1} \psi_{n - 1}.
    \end{array}
    \right.
\end{gather}
Then, after the reduction \eqref{eq:2dto1d} supplemented by the change
\begin{align}
    \label{eq:2dto1d_psi}
    \psi_n (t)
    &= \psi_n (x + y),
\end{align}
it becomes \eqref{eq:1dToda_scalsys}.

\medskip
\textbf{\textbullet \,\, Compatibility condition.}
The system \eqref{eq:1dToda_scalsys} is compatible if
\begin{align}
    \label{eq:comp_cond_1d}
    \brackets{\lambda \psi_n}'
    &= \lambda \brackets{\psi_n'}.
\end{align}
Expanding this condition, we obtain the chains of identities:
\begin{align}
    l.h.s.\eqref{eq:comp_cond_1d}
    &\overset{\,\,\phantom{\eqref{eq:1dToda_scalsys}_1}}{\equiv}  
    \lambda \psi_{n}'
    \overset{\,\,\eqref{eq:1dToda_scalsys}_1}{=}
    \psi_{n + 1}' 
    + a_n' \psi_n
    + a_n \psi_n'
    + b_n' \psi_{n - 1}
    + b_n \psi_{n - 1}'
    \\[2mm]
    &\overset{\,\,\eqref{eq:1dToda_scalsys}_2}{=}
    - b_{n + 1} \psi_n
    + a_n' \psi_n
    - a_n b_n \psi_{n - 1}
    + b_n' \psi_{n - 1}
    - b_n b_{n - 1} \psi_{n - 2},
    \\[3mm]
    r.h.s.\eqref{eq:comp_cond_1d}
    &\overset{\,\,\phantom{\eqref{eq:1dToda_scalsys}_1}}{\equiv}  
    \lambda \psi_{n}'
    \overset{\,\,\eqref{eq:1dToda_scalsys}_2}{=}
    - b_n \brackets{
    \lambda \psi_{n - 1}
    }
    \overset{\,\,\eqref{eq:1dToda_scalsys}_1}{=}
    - b_n \brackets{
    \psi_n + a_{n - 1} \psi_{n - 1} + b_{n - 1} \psi_{n - 2}
    }.
\end{align}
Hence, the condition \eqref{eq:comp_cond_1d} leads to the system for $a_n$ and $b_n$ only: 
\begin{align}
    \label{eq:1dToda_scalsysab}
    \left\{
    \begin{array}{rcl}
         a_n'
         &=& b_{n + 1} - b_n
         ,
         \\[2mm]
         b_n'
         &=& a_n b_n - b_n a_{n - 1},
    \end{array}
    \right.
\end{align}
or, after the substitution of \eqref{eq:1dToda_abdef} into \eqref{eq:1dToda_scalsysab}, we have
\begin{align}
    \left\{
    \begin{array}{rcl}
         \brackets{\theta_n' \theta_n^{-1}}'
         &=& \theta_{n + 1} \theta_{n}^{-1} 
         - \theta_{n} \theta_{n - 1}^{-1}
         ,
         \\[2mm]
         \brackets{\theta_{n} \theta_{n - 1}^{-1}}'
         &=& \theta_n' \theta_n^{-1} \cdot
         \theta_{n} \theta_{n - 1}^{-1} 
         - \theta_{n} \theta_{n - 1}^{-1} \cdot 
         \theta_{n - 1}' \theta_{n - 1}^{-1}.
    \end{array}
    \right.
\end{align}
The equations \eqref{eq:1dToda_scalsysab}$_1$ and \eqref{eq:1dToda_scalsysab}$_2$ give rise to \eqref{eq:1dToda_nc} and the identity, respectively.
\end{proof}

As in the previous subsection, the scalar system \eqref{eq:1dToda_scalsys} can be rewritten in the matrix form by introducing either a two-component vector-function or a semi-infinite vector-function. In the first case, one can obtain the following
\begin{prop}
\label{thm:1dToda_matsys}
Let the matrices $\mathcal{L}_n$ and $\mathcal{M}_n$ be defined by 
\begin{align}
    \label{eq:1dToda_pair}
    \mathcal{L}_{n}
    &= 
    \begin{pmatrix}
        \lambda 
        - \theta_n' \theta_n^{-1}
        & 
        - \theta_{n}
        \\[1mm]
        \theta_{n}^{-1} 
        & 
        0
    \end{pmatrix},
    &
    \mathcal{M}_{n}
    &= 
    \begin{pmatrix}
        0
        & 
        - \theta_{n}
        \\[1mm]
        \theta_{n - 1}^{-1} 
        & 
        - \lambda
    \end{pmatrix}
    .
\end{align}
Then, the matrix linear system
\begin{align}
    \label{eq:1dToda_matsys}
    &&
    &
    \left\{
    \begin{array}{rcl}
         \Psi_{n + 1}
         &=& \mathcal{L}_{n} \Psi_{n},  
         \\[2mm]
         \Psi_{n}'
         &=& \mathcal{M}_{n} \Psi_{n},  
    \end{array}
    \right.
    &
    \Psi_{n}
    &= 
    \begin{pmatrix}
        \psi_{n} & \psi_{n - 1}
    \end{pmatrix}^{T}
    &&
\end{align}
is compatible, that is
\begin{align}
    \label{eq:zrc_1d}
    \mathcal{L}_n'
    &= \mathcal{M}_{n + 1}
    \mathcal{L}_{n}
    - \mathcal{L}_{n} 
    \mathcal{M}_{n}
    ,
\end{align}
if the function $\theta_{n}$ solves the non-abelian 1d Toda equation \eqref{eq:1dToda_nc}.
\end{prop}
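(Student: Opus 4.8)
The plan is to mirror the two-step strategy used for Proposition~\ref{thm:2dToda_matsys}: first identify the pair \eqref{eq:1dToda_pair} as the reduction of the continuous $2$d pair \eqref{eq:2dToda_pair}, and then confirm the compatibility condition \eqref{eq:zrc_1d} by a direct matrix computation. For the first step I would impose the reduction data \eqref{eq:2dto1d} together with $\psi_n(t) = \psi_n(x+y)$, exactly as in the derivation of the scalar system \eqref{eq:1dToda_scalsys}. Under this substitution the operator $\partial = \partial_x - \partial_y$ ceases to act on $\theta_n$ and collapses to multiplication by the central spectral parameter $\lambda$ on the eigenfunctions, while $\theta_{n,x}$ becomes $\theta_n'$ and the second equation $\Psi_{n,y} = \mathcal{M}_n \Psi_n$ turns into $\Psi_n' = \mathcal{M}_n \Psi_n$. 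Feeding these replacements into \eqref{eq:2dToda_pair} transforms $\mathcal{L}_n$, $\mathcal{M}_n$ into the matrices \eqref{eq:1dToda_pair} and the $2$d zero-curvature identity \eqref{eq:zrc_2d} into \eqref{eq:zrc_1d}, so the statement could already be read off from Proposition~\ref{thm:2dToda_matsys}.

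For a self-contained proof I would instead verify \eqref{eq:zrc_1d} directly, which is in fact easier here than in the $2$d case: since $\partial$ has become the central scalar $\lambda$, every entry of $\mathcal{L}_n$ and $\mathcal{M}_n$ is a genuine element of $R$ and no auxiliary test function is required, so \eqref{eq:zrc_1d} is an honest identity of $2\times2$ matrices over $R$. The computation proceeds by: (i) differentiating $\mathcal{L}_n$ entrywise, using $(\theta_n^{-1})' = -\theta_n^{-1}\theta_n'\theta_n^{-1}$; (ii) expanding the two products $\mathcal{M}_{n+1}\mathcal{L}_n$ and $\mathcal{L}_n\mathcal{M}_n$ while preserving the order of the non-commuting factors, and using the centrality of $\lambda$ to cancel the paired $\lambda\theta_n^{-1}$ terms in the $(2,1)$ slot and the paired $\lambda\theta_n$ terms in the $(1,2)$ slot; and (iii) forming the difference and comparing it with $\mathcal{L}_n'$.

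The comparison shows that the $(1,2)$, $(2,1)$ and $(2,2)$ entries agree automatically (the $(2,1)$ entry reduces on both sides to $-\theta_n^{-1}\theta_n'\theta_n^{-1}$, and the $(1,2)$ entry to $-\theta_n'$ after simplifying $(\theta_n'\theta_n^{-1})\theta_n = \theta_n'$), while the $(1,1)$ entry yields the single nontrivial relation $(\theta_n'\theta_n^{-1})' = \theta_{n+1}\theta_n^{-1} - \theta_n\theta_{n-1}^{-1}$. This is exactly the $1$d Toda equation \eqref{eq:1dToda_nc}; equivalently, with $a_n$, $b_n$ as in \eqref{eq:1dToda_abdef} it is the relation $a_n' = b_{n+1} - b_n$. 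Thus the zero-curvature condition holds precisely when $\theta_n$ solves \eqref{eq:1dToda_nc}, which proves the proposition.

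Because the reduced pair contains no differential operators, there is no genuine analytic obstacle; the only point demanding care is the non-commutative bookkeeping—keeping the $\theta$-factors in their correct left/right positions, using that $\lambda \in \mathcal{Z}(R)$ commutes past every $\theta_n^{\pm1}$, and handling $(\theta_n^{-1})'$ via the two-sided Leibniz rule. A minor consistency check worth recording is that the $y$-equation really does reduce to the $t$-derivative equation $\Psi_n' = \mathcal{M}_n\Psi_n$, so that the reduction route and the direct verification produce the same linear system.
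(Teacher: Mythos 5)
Your proposal is correct and follows essentially the same route as the paper: the paper likewise notes that the pair can be obtained either by reducing the $2$d matrix system via \eqref{eq:2dto1d}, \eqref{eq:2dto1d_psi} or by rewriting the scalar system directly, and then records that $\mathcal{L}_n' - \brackets{\mathcal{M}_{n+1}\mathcal{L}_n - \mathcal{L}_n\mathcal{M}_n}$ equals $\brackets{-\brackets{\theta_n'\theta_n^{-1}}' + \theta_{n+1}\theta_n^{-1} - \theta_n\theta_{n-1}^{-1}}E_{1,1}$, exactly the single nontrivial $(1,1)$ entry you identify. Your entrywise bookkeeping (including the cancellations in the $(1,2)$ and $(2,1)$ slots) matches the paper's computation.
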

\begin{proof}
The proof can be done either by a straightforward rewriting of the scalar system in the matrix form or by reduction data \eqref{eq:2dto1d}, \eqref{eq:2dto1d_psi} of the matrix system \eqref{eq:2dToda_matsys}, \eqref{eq:2dToda_pair}. The compatibility condition \eqref{eq:zrc_1d} reads as
\begin{align}
    l.h.s.\eqref{eq:zrc_1d}
    - r.h.s.\eqref{eq:zrc_1d}
    &= \brackets{
    - \brackets{\theta_n' \theta_n^{-1}}'
    + \theta_{n + 1} \theta_n^{-1} 
    - \theta_n \theta_{n - 1}^{-1}
    } E_{1,1}
\end{align}
and is satisfied, if $\theta_n$ is a solution of \eqref{eq:1dToda_nc}.
\end{proof}

Regarding the semi-infinite vector function, we arrive at
\begin{prop}
\label{thm:1dToda_matsys_n}
Consider the matrices $L$ and $M$ given by
\begin{align}
    \label{eq:1dToda_pair_n}
    L
    &= 
    \begin{pmatrix}
        a_{1}
        &
        1
        &&&
        \\[1mm]
        b_{1}
        &
        a_{2}
        &
        1
        &
        &
        \\[1mm]
        &
        b_{2}
        &
        a_{3}
        &
        \ddots
        &
        \\[1mm]
        &
        &
        \ddots
        &
        \ddots
    \end{pmatrix},
    &
    M
    &= 
    \begin{pmatrix}
        0
        &
        &&&
        \\[1mm]
        - b_{1}
        &
        0
        &
        &
        &
        \\[1mm]
        &
        - b_{2}
        &
        0
        &
        &
        \\[1mm]
        &
        &
        \ddots
        &
        \ddots
    \end{pmatrix},
\end{align}
where unfilled matrix entries are equal to zero. Then, the matrix system
\begin{align}
    \label{eq:1dToda_matsys_n}
    &\left\{
    \begin{array}{rcl}
         L \Psi
         &=& \lambda \Psi,
         \\[2mm]
         \Psi'
         &=& M \Psi,
    \end{array}
    \right.
    &
    \Psi
    &= 
    \begin{pmatrix}
        \psi_1
        &
        \psi_2
        &
        \psi_3
        &
        \dots
    \end{pmatrix}^T
\end{align}
implies the non-abelian 1d Toda equation \eqref{eq:1dToda_nc}.
\end{prop}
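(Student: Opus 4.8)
The plan is to extract the isospectral Lax equation from the linear system \eqref{eq:1dToda_matsys_n} and then match it entry-by-entry with \eqref{eq:1dToda_nc}. First I would differentiate the spectral relation $L\Psi = \lambda\Psi$ with respect to the single continuous variable (the prime denoting that derivation), which gives $L'\Psi + L\Psi' = \lambda\Psi'$ because $\lambda$ is a constant commutative parameter. Inserting the second equation $\Psi' = M\Psi$, using $\lambda\Psi = L\Psi$, and commuting the central scalar $\lambda$ past $M$, the right-hand side becomes $\lambda M\Psi = M\lambda\Psi = ML\Psi$; hence $\brackets{L' + LM - ML}\Psi = 0$ along the linear problem, and I read off the Lax equation
\begin{align}
    \label{eq:lax_1d}
    L'
    &= ML - LM.
\end{align}

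Next I would evaluate both sides in the matrix units $E_{i,j}$, with $a_n = \theta_n' \theta_n^{-1}$ and $b_n = \theta_{n+1}\theta_n^{-1}$ (the reduction of the data of Proposition \ref{thm:2dToda_matsys_n}). Since $L$ is tridiagonal with diagonal $a_k$, constant superdiagonal $1$, and subdiagonal $b_k$, its derivative $L' = \sum_k a_k' E_{k,k} + \sum_k b_k' E_{k+1,k}$ is tridiagonal with vanishing superdiagonal, while $M = -\sum_k b_k E_{k+1,k}$ is purely subdiagonal. Multiplying out, $ML$ and $LM$ each acquire a diagonal part, a subdiagonal part, and a distance-two part supported on $E_{k+2,k}$.

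The step I expect to be the main obstacle is checking that the distance-two parts cancel, since $L'$ has none. In both $ML$ and $LM$ the $E_{k+2,k}$ coefficient equals $-b_{k+1} b_k$ in the \emph{same} order: this is forced by $M$ having a single subdiagonal and by the superdiagonal of $L$ being the central unit $1$, so that no noncommuting factors get transposed. Once this cancellation is granted, $ML - LM$ is tridiagonal and reads $\sum_k \brackets{b_k - b_{k-1}} E_{k,k} + \sum_k \brackets{a_{k+1} b_k - b_k a_k} E_{k+1,k}$. Matching \eqref{eq:lax_1d} entrywise then finishes the proof: the diagonal gives $a_k' = b_k - b_{k-1}$, which is precisely \eqref{eq:1dToda_nc} after inserting the definitions of $a_k$ and $b_k$, while the subdiagonal gives the identity $b_k' = a_{k+1} b_k - b_k a_k$, which holds automatically for these definitions. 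The boundary value $b_0 = 0$ closes the lattice at the first site, as in the semi-infinite conventions used above.

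A second route, parallel to Proposition \ref{thm:1ddToda_matsys_n}, is to reduce the two-dimensional semi-infinite pair of Proposition \ref{thm:2dToda_matsys_n} by the data \eqref{eq:2dto1d}, under which $\partial = \partial_x - \partial_y$ acts as multiplication by $\lambda$ and the Lax equation \eqref{eq:lax_2d} degenerates to \eqref{eq:lax_1d}; since this reduction splits the $y$-flow into the $t$- and $z$-directions and so calls for a rescaling of $\Psi$, I would prefer the direct verification above.
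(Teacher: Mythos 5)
Your proof is correct, and it takes a more explicit route than the paper's. The paper disposes of this proposition in two lines: the matrices \eqref{eq:1dToda_pair_n} and the system \eqref{eq:1dToda_matsys_n} are obtained from the two-dimensional semi-infinite pair of Proposition \ref{thm:2dToda_matsys_n} by the reduction \eqref{eq:2dto1d}, and the resulting Lax equation $L' = ML - LM$ is asserted to be equivalent to \eqref{eq:1dToda_nc} (the actual entrywise content, namely the system $a_n' = b_{n+1} - b_n$, $b_n' = a_n b_n - b_n a_{n-1}$ and its equivalence to the Toda lattice, is carried out earlier in the scalar-system proposition, see \eqref{eq:1dToda_scalsysab}). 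You instead verify everything directly: you derive the Lax equation from $L\Psi = \lambda\Psi$, $\Psi' = M\Psi$ using centrality of $\lambda$, and then expand the commutator in matrix units. Your computation is right, including the one point that genuinely needs care in the non-abelian setting — the $E_{k+2,k}$ terms of $ML$ and $LM$ both equal $-b_{k+1}b_k$ in the same order (because the superdiagonal of $L$ is the scalar $1$), so they cancel in the commutator; the diagonal then gives $a_k' = b_k - b_{k-1}$ and the subdiagonal the identity $b_k' = a_{k+1}b_k - b_k a_k$. One cosmetic remark: you take $b_k = \theta_{k+1}\theta_k^{-1}$ (the convention of \eqref{eq:2dToda_abdef}), whereas the 1d scalar system uses $b_n = \theta_n\theta_{n-1}^{-1}$ as in \eqref{eq:1dToda_abdef}; these differ by a unit shift of the index, and your choice is the one consistent with the placement of $b_k$ on the $(k+1,k)$ entry of \eqref{eq:1dToda_pair_n}, so the diagonal equation lands exactly on \eqref{eq:1dToda_nc}. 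What your direct check buys is a self-contained argument that does not lean on the (slightly delicate) $\varepsilon$-expansion and rescaling hidden in the reduction route; what the paper's route buys is uniformity with the treatment of Proposition \ref{thm:1ddToda_matsys_n} and the rest of the reduction diagram.
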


\begin{proof}
All formulas from the proposition follow from the corresponding formulas given in Proposition \ref{thm:2dToda_matsys_n} to which the reduction \eqref{eq:2dto1d} was applied. 
From the system \eqref{eq:1dToda_matsys_n} the Lax equation can be derived
\begin{align}
    \label{eq:lax_1d}
    L'
    &= M L - L M.
\end{align}
The latter is equivalent to the semi-infinite 1d Toda lattice \eqref{eq:1dToda_nc}. 
\end{proof}

\begin{rem}
It is obvious that the results from Proposition \ref{thm:1dToda_matsys_n} can be written for the 1d Toda lattices with the open-end or periodic boundary conditions. 
\end{rem}

Finally, we derive the quasideterminant solutions of \eqref{eq:1dToda_nc}.
\begin{prop}
\label{thm:1dToda_sol}
Let $\Theta_n = \Theta_n (t)$ be a Hankel matrix of the form
\begin{align}
    \label{eq:1dToda_sol}
    \Theta_{n} (t)
    := \brackets{
    \partial_{t^{i + j - 2}}
    \varphi(t)
    }_{1 \leq i, j \leq n}
\end{align}
and $\theta_n := |\Theta_{n}|_{n,n}$ be its $(n, n)$ quasideterminant. Then, $\theta_n$ is a solution of the non-abelian 1d Toda lattice~\eqref{eq:1dToda_nc}.
\end{prop}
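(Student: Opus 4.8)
The plan is to obtain the Hankel quasideterminant solution of \eqref{eq:1dToda_nc} as a reduction of the two--dimensional bi-directional Wronskian solution already established in Proposition \ref{thm:2dToda_sol}, exactly in the spirit in which the continuous lattice \eqref{eq:1dToda_nc} itself was produced from \eqref{eq:2dToda_nc} through the change of variables \eqref{eq:2dto1d}, and in analogy with the discrete statement Proposition \ref{thm:1ddToda_sol}. Thus I would not rerun the Jacobi-identity argument: instead I would specialize the seed function and track simultaneously what the reduction does to the Wronskian and to the field equation.

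First I would impose on the seed the dependence dictated by \eqref{eq:2dto1d}, namely $\varphi(x,y) = \varphi(x+y) = \varphi(t)$, so that $\varphi$ factors through the single central variable $t = x + y$. Since $\partial_x t = \partial_y t = 1$ and the derivations $\partial_x,\partial_y$ commute, a short chain-rule computation gives $\partial_x\varphi = \partial_y\varphi = \varphi'$ and, upon iterating,
\[
    \partial_{x^{i-1}}\partial_{y^{j-1}}\varphi(x,y)
    = \partial_{t^{\,i+j-2}}\varphi(t),
    \qquad 1 \leq i, j \leq n.
\]
Hence the bi-directional Wronskian \eqref{eq:Theta_def_2d} collapses precisely to the Hankel matrix $\Theta_n(t)$ of \eqref{eq:1dToda_sol}, and its $(n,n)$ quasideterminant $\theta_n = |\Theta_n|_{n,n}$ then depends on $t$ alone. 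On the level of the equation, with $\theta_n = \theta_n(t)$ one has $\theta_{n,x} = \theta_{n,y} = \theta_n'$ and therefore $\brackets{\theta_{n,x}\theta_n^{-1}}_y = \brackets{\theta_n'\theta_n^{-1}}'$, while the right-hand side $\theta_{n+1}\theta_n^{-1} - \theta_n\theta_{n-1}^{-1}$ is unchanged; this is exactly the passage \eqref{eq:2dToda_nc} $\to$ \eqref{eq:1dToda_nc} already used to introduce the one-dimensional lattice. Combining the two observations, the Hankel quasideterminant $\theta_n(t)$ built from \eqref{eq:1dToda_sol} solves \eqref{eq:1dToda_nc}.

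The one point that genuinely needs checking---and which I regard as the crux---is the compatibility of the two specializations: one must confirm that evaluating the established $2$d solution on seeds of the restricted form $\varphi(t)$ really does yield a solution that is constant along the transverse direction (equivalently, depends on $x,y$ only through $t$), so that the reduction \eqref{eq:2dto1d} applies substantively rather than only formally. This is immediate here, since every entry of $\Theta_n$ is a $t$-derivative of $\varphi(t)$ and hence a function of $t$ alone, whence $\theta_n = |\Theta_n|_{n,n}$ inherits the same dependence; no hypothesis beyond existence of the relevant quasideterminant is needed. With this in place, the derivative identity for the Wronskian entries together with the already-known reduction of the field equation closes the argument.
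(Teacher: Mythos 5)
Your proposal is correct and follows essentially the same route as the paper: the paper's proof also obtains the Hankel quasideterminant solution by applying the reduction data \eqref{eq:2dto1d}, supplemented with $\varphi(t) = \varphi(x+y)$, to the bi-directional Wronskian solution of Proposition \ref{thm:2dToda_sol}. You simply make explicit the chain-rule identity $\partial_{x^{i-1}}\partial_{y^{j-1}}\varphi(x+y) = \partial_{t^{i+j-2}}\varphi(t)$ that the paper leaves implicit.
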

\begin{proof}
The statement can be obtained from Proposition \ref{thm:2dToda_sol} just by the reduction data \eqref{eq:2dto1d} supplemented with the formula
\begin{align}
    \varphi_n(t)
    &= \varphi_n (x + y).
\end{align}
\end{proof}

To conclude this subsection, let us answer the following natural question. Does there exist a continuous limit that brings all structures of the 1d discrete Toda lattice \eqref{eq:1ddToda_nc} considered in Subsection \ref{sec:1ddToda} to the structures of the 1d Toda equation \eqref{eq:1dToda_nc} obtained in this subsection by a reduction of the 2d Toda field equation \eqref{eq:2dToda_nc}? The answer is given in the proposition below.

\begin{prop}
\label{thm:1ddto1d}
The continuous limit
\begin{align}
    \label{eq:1ddto1d}
    \theta_n (t)
    &= \varepsilon^{-m} \theta_{m, n}
    &
    \psi_n (t)
    &= \psi_{m, n}
    &
    \varphi (t)
    &= \varphi_{m},
    &
    t
    &= \varepsilon m
\end{align}
brings \eqref{eq:1ddToda_scalsys}, \eqref{eq:1ddToda_matsys} -- \eqref{eq:1ddToda_pair}, \eqref{eq:1ddToda_matsys_n} -- \eqref{eq:1ddToda_pair_n}, and \eqref{eq:1ddToda_sol} to \eqref{eq:1dToda_scalsys}, \eqref{eq:1dToda_matsys} -- \eqref{eq:1dToda_pair}, \eqref{eq:1dToda_matsys_n} -- \eqref{eq:1dToda_pair_n}, and \eqref{eq:1dToda_sol}, respectively.
\end{prop}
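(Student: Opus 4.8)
The plan is to verify the four convergences one at a time, by substituting the rescaling \eqref{eq:1ddto1d} into each discrete object, expanding in powers of $\varepsilon$, performing a diagonal gauge rescaling together with a shift of the spectral parameter, and then letting $\varepsilon \to 0$. The computation runs exactly parallel to the continuous-limit proofs of Propositions \ref{thm:2dToda_scalsys}--\ref{thm:2dToda_sol}, so most of it is routine Taylor bookkeeping. The only genuinely delicate point is the appearance of a single singular term of order $\varepsilon^{-1}$, which I would remove by the shift $\lambda \mapsto \lambda - \varepsilon^{-1}$, the analogue of the operator shift $\partial \mapsto \partial - \varepsilon^{-1}$ used in the proof of Proposition \ref{thm:2dToda_matsys}.

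First I would record the limiting behaviour of the coefficients \eqref{eq:1ddToda_abdef}. Substituting \eqref{eq:1ddto1d} and expanding gives $a_{m, n} = a_{m - 1, n} + O(\varepsilon) = \varepsilon^{-1} + \theta_n' \theta_n^{-1} + O(\varepsilon) = \varepsilon^{-1} + a_n + O(\varepsilon)$ and $b_{m, n} = \varepsilon \, \theta_{n + 1} \theta_n^{-1} + O(\varepsilon^2)$, so that $b_{m, n - 1} = \varepsilon \, b_n + O(\varepsilon^2)$ and $a_{m - 1, n}\, b_{m, n - 1} = b_n + O(\varepsilon)$, where $a_n$, $b_n$ are the continuous coefficients \eqref{eq:1dToda_abdef}. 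Feeding these into \eqref{eq:1ddToda_scalsys}, the second equation, after writing $\psi_{m - 1, n} = \psi_n - \varepsilon \psi_n' + O(\varepsilon^2)$ and dividing by $\varepsilon$, becomes $\psi_n' = - b_n \psi_{n - 1}$; the first equation carries the divergent term $\varepsilon^{-1} \psi_n$ coming from $a_{m - 1, n}$, and after the shift $\lambda \mapsto \lambda - \varepsilon^{-1}$ its limit is $\lambda \psi_n = \psi_{n + 1} + a_n \psi_n + b_n \psi_{n - 1}$. This is precisely \eqref{eq:1dToda_scalsys}.

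For the $2 \times 2$ pair \eqref{eq:1ddToda_pair} I would follow the rescaling of Proposition \ref{thm:2dToda_matsys}: conjugate $\mathcal{L}_{m, n}$, $\mathcal{M}_{m, n}$ by a diagonal matrix $G = \diag\brackets{1, \varepsilon^{m + 1}}$ absorbing the $m$-dependent powers of $\varepsilon$ in the off-diagonal entries, replace $\lambda$ by $\lambda - \varepsilon^{-1}$, expand, and let $\varepsilon \to 0$; the surviving matrices are \eqref{eq:1dToda_pair} and the discrete zero-curvature condition \eqref{eq:zrc_1dd} tends to \eqref{eq:zrc_1d}. The semi-infinite pair \eqref{eq:1ddToda_pair_n} is handled as in Proposition \ref{thm:2dToda_matsys_n}: the expansions above give $L_m = \varepsilon^{-1} \mathbf{I} + \hat{L} + O(\varepsilon)$ and $M_m = \mathbf{I} + \varepsilon \hat{M} + O(\varepsilon^2)$, and after the same shift and the rescaling $\hat{M} \mapsto - \hat{M}$ the system \eqref{eq:1ddToda_matsys_n} converges to \eqref{eq:1dToda_matsys_n} with matrices \eqref{eq:1dToda_pair_n}. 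Finally, for the solutions, the specialization \eqref{eq:spec_1dd} is exactly the discrete $k$-th derivative, so with $\phi(t) = \phi_m$ and $t = \varepsilon m$ one has $\varphi_{m + k} \to \partial_{t^k} \varphi(t)$ as $\varepsilon \to 0$; hence the Hankel-type matrix \eqref{eq:1ddToda_sol} tends to \eqref{eq:1dToda_sol}, precisely as in Proposition \ref{thm:2dToda_sol}.

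I expect the main obstacle to be the bookkeeping of the singular contributions: in contrast with the equation itself, the linear problems and their coefficients contain genuine $\varepsilon^{-1}$ terms, and one must check that a single spectral-parameter shift $\lambda \mapsto \lambda - \varepsilon^{-1}$, together with the diagonal gauge rescaling, removes all of them simultaneously and compatibly in both equations of each pair. Once the structures are shown to converge, nothing further need be verified by hand: that the limiting $\theta_n$ solves \eqref{eq:1dToda_nc} and that the limiting linear problems are compatible are already guaranteed by the Propositions of Subsection \ref{sec:1dToda}.
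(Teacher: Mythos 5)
Your proposal is correct and takes essentially the same route as the paper: the paper's own proof simply declares the scalar, $2\times2$, and semi-infinite computations ``similar to'' the continuous limits of Subsection \ref{sec:2dToda} and only records that the discrete derivative in \eqref{eq:spec_1dd} becomes a continuous one, so your expansions of $a_{m,n}$, $b_{m,n}$, the diagonal gauge, and the shift $\lambda \mapsto \lambda - \varepsilon^{-1}$ (mirroring $\partial \mapsto \partial - \varepsilon^{-1}$) supply exactly the details the paper omits. The only point worth flagging is that your expansions implicitly read the scaling as $\theta_{m,n} = \varepsilon^{-m}\,\theta_n(t)$, i.e.\ the convention consistent with \eqref{eq:2ddto2d_th} and with the reduction \eqref{eq:2ddto1dd}; this is the reading that makes the divergent term a scalar multiple of $\psi_n$ absorbable into $\lambda$, and it is clearly what the proposition intends.
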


\begin{proof}
We will skip the detailed proofs for the scalar system, matrix $2\times2$ system and matrix semi-infinite system, since they are similar to the corresponding proofs that have already discussed in Subsection \ref{sec:2dToda}. Regarding the quasideterminant solutions, recall that the solution \eqref{eq:spec_1dd} is just a definition of the discrete derivative which under the suggested limit becomes the continuous derivative. Therefore, we obtain
\begin{align}
    \label{eq:spec_1d}
    &&
    \varphi_{m}
    &:= \phi^{(m)},
    &
    0 
    &\leq m \leq 2 n - 2
    &&
\end{align}
and the proof is done.
\end{proof}

\section{Reductions to some (1+1) discrete systems}
\label{sec:2ddTLtosys}

It is well-known that the two-dimensional discrete Toda equation \eqref{eq:2ddToda} can be regarded as the most general discrete equation which reduces to other integrable discrete systems. Note that after reduction some parameters in \eqref{eq:2ddToda_scalsys_al} cannot be removed by a scaling and, thus, the spectral parameter arises. Following this idea, we dedicate the section to different reductions of the non-abelian 2d discrete Toda equation \eqref{eq:2ddToda_nc}. As a result, we derive non-abelian analogs of discrete $(1 + 1)$-systems such as the discrete Boussinesq, Korteweg-de Vries, modified Korteweg-de Vries, and sine-Gordon equations. In Subsections \ref{sec:dB} -- \ref{sec:dmKdV}, the non-abelian analogs and their spectral problems are discussed. The reduction data are sometimes a little bit complicated and, thus, for a comparison and in addition, we give formulas for the Hirota equation \eqref{eq:Hirota}. At~the end of this section, we have discussed Miura transformations for some of the obtained systems.

\subsection{KdV-like reductions}
\label{sec:kdvred}

\subsubsection{The KdV equation}
\label{sec:dKdV}

The famous discrete Korteweg-de Vries lattice (Example 1 from~\cite{date1983method})
\begin{align}
    \label{eq:dKdV}
    \tau_{m + 1, n} \tau_{m - 1, n + 1}
    - \tau_{m - 1, n} \tau_{m + 1, n + 1}
    + \tau_{m, n} \tau_{m, n + 1}
    &= 0
\end{align}
is a result of the following reduction of the Hirota equation
\begin{align}
    (l, m, n)
    &\mapsto \left(
    m - l, n
    \right)
    = (
    \tilde{m}, \tilde{n}
    ),
\end{align}
where we omit the sign $\tilde{\phantom{\tau}}$. Let us combine this change with \eqref{eq:2ddTodatoHirota}:
\begin{align}
    \label{eq:2ddTodatodKdV}
    (l, m, n)
    &\mapsto (- m - l + 1, n + m)
    = (\tilde{m} + 1, \tilde{n}).
\end{align}
The mapping \eqref{eq:2ddTodatodKdV} brings the non-abelian 2d Toda equation \eqref{eq:2ddToda_nc} to the form
\begin{align}
    \theta_{m + 1, n + 1} \theta_{m, n + 1}^{-1}
    - \theta_{m, n}
    \theta_{m - 1, n}^{-1}
    &= \theta_{m - 1, n + 1} \theta_{m, n + 1}^{-1}
    - \theta_{m, n} \theta_{m + 1, n}^{-1}
\end{align}
Note that one can introduce new variable
$v_{m, n} = \theta_{m, n} \theta_{m - 1, n}^{-1}$ and a final form of the non-abelian Korteweg-de Vries equation reads as
\begin{align}
    \label{eq:dKdV_nc}
    v_{m + 1, n + 1}
    - v_{m, n}
    &= v_{m, n + 1}^{-1}
    - v_{m + 1, n}^{-1}.
\end{align}
One can derive the spectral problem for equation \eqref{eq:dKdV_nc} from the scalar Lax pair for \eqref{eq:2ddToda_nc}. In order to do it, we first make the scaling
$\psi_{l, m, n} \mapsto \lambda^{-l} \psi_{l, m, n}$ to introduce the spectral parameter $\lambda$ and, then, consider the reduction \eqref{eq:2ddTodatodKdV}. A resulting spectral problem for \eqref{eq:dKdV_nc} is
\begin{align}
    \label{eq:dKdV_scalsys}
    \left\{
    \begin{array}{lcl}
         \lambda \psi_{m, n}
         &=& \psi_{m + 1, n + 1} 
         + v_{m + 1, n}^{-1} \psi_{m + 1, n},
         \\[2mm]
         \psi_{m + 1, n}
         &=& \psi_{m, n + 1}
         + v_{m + 1, n} \psi_{m, n},
    \end{array}
    \right.
\end{align}
i.e. its compatibility condition gives rise to \eqref{eq:dKdV_nc}.

The system \eqref{eq:dKdV_scalsys} can be rewritten in the matrix form. Namely, following the method already discussed in the proof of Proposition \ref{thm:2ddToda_matsys}, we arrive at
\begin{prop}
The non-abelian discrete KdV equation \eqref{eq:dKdV_nc} is a consequence of the matrix linear problem
\begin{align}
    \label{eq:dKdV_matsys}
    &&
    &
    \left\{
    \begin{array}{rcl}
         \Psi_{m, n + 1}
         &=& \mathcal{L}_{m, n} \Psi_{m, n},  
         \\[2mm]
         \Psi_{m + 1, n}
         &=& \mathcal{M}_{m, n} \Psi_{m, n},  
    \end{array}
    \right.
    &
    \Psi_{m, n}
    &= 
    \begin{pmatrix}
        \psi_{m, n} & \psi_{m - 1, n}
    \end{pmatrix}^{T},
    &&
\end{align}
where $\mathcal{L}_{m, n}$ and $\mathcal{M}_{m, n}$ are $2\times2$ matrices given by
\begin{align}
    \label{eq:dKdV_pair}
    &&
    \mathcal{L}_{m, n}
    &= 
    \begin{pmatrix}
        - v_{m, n}^{-1}
        & 
        \lambda
        \\[1mm]
        1 
        & 
        - v_{m, n}
    \end{pmatrix},
    &
    \mathcal{M}_{m, n}
    &= 
    \begin{pmatrix}
        v_{m + 1, n}
        - v_{m, n}^{-1}
        & 
        \lambda
        \\[1mm]
        1
        & 
        0
    \end{pmatrix}
    .
    &&
\end{align}
\end{prop}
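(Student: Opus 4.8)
The statement is a discrete zero-curvature verification entirely parallel to Proposition \ref{thm:2ddToda_matsys}, so the plan is to reduce it to a single $2\times2$ matrix identity and then check that this identity is equivalent to \eqref{eq:dKdV_nc}. First I would record how the matrices \eqref{eq:dKdV_pair} arise: writing $\Psi_{m,n} = \brackets{\psi_{m,n}, \psi_{m-1,n}}^T$ and using the scalar system \eqref{eq:dKdV_scalsys} (with the index shift $m \mapsto m-1$ where needed), the two components of $\Psi_{m,n+1}$ and of $\Psi_{m+1,n}$ are expressed through $\psi_{m,n}$ and $\psi_{m-1,n}$, and one reads off exactly $\mathcal{L}_{m,n}$ and $\mathcal{M}_{m,n}$. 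Since the matrices are already in the clean form \eqref{eq:dKdV_pair}, no gauge transformation is needed here, unlike in the proof of Proposition \ref{thm:2ddToda_matsys}.

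Next I would derive the compatibility condition. Computing $\Psi_{m+1,n+1}$ along the two lattice paths, first shifting $n$ and then $m$ versus first $m$ and then $n$, gives
\[
    \mathcal{M}_{m,n+1}\,\mathcal{L}_{m,n}\,\Psi_{m,n} = \mathcal{L}_{m+1,n}\,\mathcal{M}_{m,n}\,\Psi_{m,n},
\]
so the system \eqref{eq:dKdV_matsys} is compatible precisely when the discrete zero-curvature relation
\[
    \mathcal{L}_{m+1,n}\,\mathcal{M}_{m,n} = \mathcal{M}_{m,n+1}\,\mathcal{L}_{m,n}
\]
holds.

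The heart of the argument is then to expand both products with the explicit matrices \eqref{eq:dKdV_pair} and compare entrywise. I expect the lower row to be automatic: the $(2,1)$ entry equals $-v_{m,n}^{-1}$ and the $(2,2)$ entry equals $\lambda$ on both sides, carrying no information. The two upper entries should each collapse to \eqref{eq:dKdV_nc}: the $(1,2)$ entry gives $-\lambda\,v_{m+1,n}^{-1} = \lambda\brackets{v_{m+1,n+1} - v_{m,n+1}^{-1} - v_{m,n}}$, which after cancelling the central factor $\lambda$ is exactly $v_{m+1,n+1} - v_{m,n} = v_{m,n+1}^{-1} - v_{m+1,n}^{-1}$; and the $(1,1)$ entry, after multiplying on the right by $v_{m,n}$, yields the same relation. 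Collecting these, the zero-curvature identity is equivalent to \eqref{eq:dKdV_nc}, as claimed.

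The main point to watch, rather than any deep obstacle, is the non-commutative bookkeeping: since the entries of $\mathcal{L}$ and $\mathcal{M}$ are elements of $R$ (only $\lambda$ is central), the order of the factors and the placement of the inverses $v^{-1}$ must be preserved throughout, and one must check that the $(1,1)$ and $(1,2)$ entries produce the \emph{same} equation so that no constraint beyond \eqref{eq:dKdV_nc} is imposed. The centrality of $\lambda$ is what permits its clean cancellation from the off-diagonal entry; I would note at the outset that $\lambda$ is the abelian spectral parameter introduced by the scaling $\psi_{l,m,n}\mapsto\lambda^{-l}\psi_{l,m,n}$, hence commutes with every $v_{m,n}$, so that these cancellations are legitimate.
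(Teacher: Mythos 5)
Your proposal is correct and follows essentially the same route as the paper: both derive $\mathcal{L}_{m,n}$, $\mathcal{M}_{m,n}$ from the two-component rewriting of the scalar system \eqref{eq:dKdV_scalsys} and then impose the discrete zero-curvature relation $\mathcal{L}_{m+1,n}\mathcal{M}_{m,n}=\mathcal{M}_{m,n+1}\mathcal{L}_{m,n}$. Your entrywise check (trivial bottom row, and both the $(1,1)$ and $(1,2)$ entries reducing to \eqref{eq:dKdV_nc} after cancelling the central $\lambda$, respectively right-multiplying by $v_{m,n}$) is accurate and simply makes explicit what the paper leaves as a one-line assertion.
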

\begin{proof}
Regarding two-component formalism for system \eqref{eq:dKdV_scalsys}:
\begin{align}
    \left\{
    \begin{array}{lcl}
         \psi_{m + 1, n + 1}
         &=& - v_{m + 1, n}^{-1} \psi_{m + 1, n}
         + \lambda \psi_{m, n},
         \\[2mm]
         \psi_{m, n + 1}
         &=& \psi_{m + 1, n}
         - v_{m + 1, n} \psi_{m, n},
    \end{array}
    \right.
\end{align}
it is easy to get components for the vector-functions $\Psi_{m, n + 1}$, $\Psi_{m + 1, n}$. Indeed, we have the following chain of identities:
\begin{align}
    \psi_{m, n + 1}
    &= 
    \begin{pmatrix}
    - v_{m, n}^{-1} & \lambda    
    \end{pmatrix}
    \begin{pmatrix}
    \psi_{m, n} & \psi_{m - 1, n}
    \end{pmatrix}^{T},
    \\[2mm]
    \psi_{m - 1, n + 1}
    &= 
    \begin{pmatrix}
    1 & - v_{m, n}    
    \end{pmatrix}
    \begin{pmatrix}
    \psi_{m, n} & \psi_{m - 1, n}
    \end{pmatrix}^{T},
    \\[2mm]
    \psi_{m + 1, n}
    &= \psi_{m, n + 1}
    + v_{m + 1, n} \psi_{m, n}
    = \brackets{
    \lambda \psi_{m - 1, n} 
    - v_{m, n}^{-1} \psi_{m, n}
    }
    + v_{m + 1, n} \psi_{m, n}
    \\
    &=
    \begin{pmatrix}
    v_{m + 1, n} - v_{m, n}^{-1} 
    & \lambda    
    \end{pmatrix}
    \begin{pmatrix}
    \psi_{m, n} & \psi_{m - 1, n}
    \end{pmatrix}^{T},
    \\[2mm]
    \psi_{m, n}
    &= 
    \begin{pmatrix}
    1 & 0
    \end{pmatrix}
    \begin{pmatrix}
    \psi_{m, n} & \psi_{m - 1, n}
    \end{pmatrix}^{T},
\end{align}
that form the matrices $\mathcal{L}_{m, n}$, $\mathcal{M}_{m, n}$. 
The compatibility condition
\begin{align}
    \mathcal{L}_{m + 1, n} \mathcal{M}_{m, n}
    &= \mathcal{M}_{m, n + 1} \mathcal{L}_{m, n}
\end{align}
of system \eqref{eq:dKdV_matsys} leads to \eqref{eq:dKdV_nc}.
\end{proof}

\subsubsection{Boussinesq equation}
\label{sec:dB}

In this subsection we obtain a non-commutative analog of the hexagonal Boussinesq equation \cite{date1983method}, \cite{hietarinta2016discrete}
\begin{align}
    \label{eq:dB}
    \tau_{m + 1, n} \tau_{m - 1, n}
    - \tau_{m, n + 1} \tau_{m, n - 1}
    + \tau_{m + 1, n + 1} \tau_{m - 1, n - 1}
    &= 0.
\end{align}
The hexagonal Boussinesq equation \eqref{eq:dB} is a result of the following change of variables in \eqref{eq:Hirota}
\begin{align}
    (l, m, n)
    &\mapsto
    \brackets{
    l - n,
    m - n
    }
    = (\tilde{m}, \tilde{n}), 
\end{align}
where we omitted the sign $\tilde{\phantom{\tau}}$. Combining this map with \eqref{eq:2ddTodatoHirota}, i.e. making the given transformation in~\eqref{eq:2ddToda_nc}
\begin{align}
    \label{eq:2ddTodatodB}
    (l, m, n)
    &\mapsto
    \brackets{
    l - (n + m),
    - m - (n + m) + 1
    }
    = (\tilde{m}, \tilde{n} + 1), 
\end{align}
we arrive to a non-abelian discrete Boussinesq equation
\begin{align}
    \label{eq:dB_nc}
    \theta_{m, n - 1} \theta_{m - 1, n - 1}^{-1}
    - \theta_{m - 1, n} \theta_{m - 1, n - 1}^{-1}
    &= \theta_{m + 1, n + 1} \theta_{m, n + 1}^{-1}
    - \theta_{m + 1, n + 1} \theta_{m + 1, n}^{-1}
    .
\end{align}
To obtain a spectral problem for \eqref{eq:dB_nc}, we first make the scaling
$\psi_{l, m, n} \mapsto \lambda^{m - l} \psi_{l, m, n}$
in \eqref{eq:2ddToda_scalsys} to implement the spectral parameter $\lambda$ and then consider reduction data \eqref{eq:2ddTodatodB}. Thus, the spectral problem for \eqref{eq:dB_nc} reads
\begin{align}
    \label{eq:dB_scalsys}
    &
    \left\{
    \begin{array}{lcl}
         \lambda \psi_{m + 1, n}
         &=& \psi_{m - 1, n - 1}
         + \theta_{m + 1, n} \theta_{m, n}^{-1}
         \, \psi_{m, n}
         ,
         \\[2mm]
         \lambda \psi_{m, n + 1}
         &=& \psi_{m - 1, n - 1}
         + \theta_{m, n + 1} \theta_{m, n}^{-1}
         \, \psi_{m, n}
         .
    \end{array}
    \right.
\end{align}
The compatibility condition gives rise to the non-abelian discrete Boussinesq equation \eqref{eq:dB_nc}. The obtained equation and system generalize well-known result from \cite{date1983method} (Example 4 therein) to the non-abelian case.

\subsection{Periodic reductions}
\label{sec:perred}

\subsubsection{Sine-Gordon equation}
\label{sec:dsG}

The sine-Gordon equation \cite{date1983method}
\begin{align}
    \label{eq:dsG}
    \left\{
    \begin{array}{rcl}
         \tau_{l + 1, m, 0} \tau_{l, m + 1, 1}
         - \tau_{l, m + 1, 0} \tau_{l + 1, m, 1}
         + \tau_{l + 1, m + 1, 0} \tau_{l, m, 1}
         &=& 0,
         \\[2mm]
         \tau_{l + 1, m, 1} \tau_{l, m + 1, 0}
         - \tau_{l, m + 1, 1} \tau_{l + 1, m, 0}
         + \tau_{l + 1, m + 1, 1} \tau_{l, m, 0}
         &=& 0
    \end{array}
    \right.
\end{align}
is the periodic $(n \mod 2)$ version of the discrete Hirota equation \eqref{eq:Hirota}. 
Note that these equations can be rewritten in terms of the function $\theta_{l, m, n} = \tau_{l, m, n} \tau_{l, m, n - 1}^{-1}$ and, then, the sine-Gordon equation takes the form
\begin{align}
    \label{eq:dsG_th}
    \theta_{l + 1, m} \theta_{l, m}^{-1}
    -  \theta_{l + 1, m + 1} \theta_{l, m + 1}^{-1}
    &= \theta_{l, m}^{-1} \theta_{l, m + 1}^{-1} 
    - \theta_{l + 1, m} \theta_{l + 1, m + 1},
\end{align}
since 
\begin{align}
    &&
    \theta_{l, m}
    := \theta_{l, m, 1}
    &= \tau_{l, m, 1} \tau_{l, m, 0}^{-1},
    &
    \theta_{l, m, 2}
    &= \tau_{l, m, 0} \tau_{l, m, 1}^{-1}
    = \theta_{l, m}^{-1},
    &
    \theta_{l, m, 3}
    &= \tau_{l, m, 1} \tau_{l, m, 0}^{-1}
    = \theta_{l, m},
    &&
\end{align}
and so on. 

Let us turn to the periodic reduction $(n \mod 2)$ for \eqref{eq:2ddToda_nc}. Then, as in the commutative case, for even and odd $n$ we consider $\theta_{l, m}$ and $\theta_{l, m}^{-1}$, respectively. Therefore, the 2d discrete Toda equation \eqref{eq:2ddToda_nc} becomes
\begin{align}
    \label{eq:dsG_nc}
    \theta_{l + 1, m + 1}
    \theta_{l, m + 1}^{-1}
    - \theta_{l, m}^{-1} \theta_{l, m + 1}^{-1}
    &= \theta_{l + 1, m} \theta_{l, m}^{-1}
    - \theta_{l + 1, m}
    \theta_{l + 1, m + 1}.
\end{align}
In order to derive a scalar linear problem with the spectral parameter $\lambda$, we first rescale the auxiliary function $\psi_{l, m, n}$ as $\psi_{l, m, n} \mapsto \lambda^{-n} \psi_{l, m, n}$ and, then, make the periodic reduction. The system \eqref{eq:2ddToda_scalsys} splits into two systems for $n = 0$ and $n = 1$, respectively:
\begin{align}
    \label{eq:dsG_scalsys}
    &
    \left\{
    \begin{array}{lcl}
         \psi_{l + 1, m, 0}
         &=& \lambda \psi_{l, m, 1}
         + \theta_{l + 1, m} \theta_{l, m}^{-1} \psi_{l, m, 0},
         \\[2mm]
         \psi_{l, m - 1, 1}
         &=& \psi_{l, m, 1}
         + \lambda^{-1} \theta_{l, m - 1}^{-1} \theta_{l, m}^{-1} \psi_{l, m, 0}
         ;
    \end{array}
    \right.
    &
    &
    \left\{
    \begin{array}{lcl}
         \psi_{l + 1, m, 1}
         &=& \lambda \psi_{l, m, 0}
         + \theta_{l + 1, m}^{-1} \theta_{l, m} \psi_{l, m, 1},
         \\[2mm]
         \psi_{l, m - 1, 0}
         &=& \psi_{l, m, 0}
         + \lambda^{-1} \theta_{l, m - 1} \theta_{l, m}
         \psi_{l, m, 1}.
    \end{array}
    \right.
\end{align}
Introducing the vector-function $\Psi_{l, m} = \begin{pmatrix} \psi_{l, m, 0} & \psi_{l, m, 1} \end{pmatrix}^T$, they can be easily rewritten in the matrix form
\begin{align}
    \label{eq:dsG_matsys}
    &
    \left\{
    \begin{array}{lcl}
         \Psi_{l, m - 1}
         &=& \mathcal{L}_{l, m}
         \Psi_{l, m},
         \\[2mm]
         \Psi_{l + 1, m}
         &=& \mathcal{M}_{l, m}
         \Psi_{l, m}
         ,
    \end{array}
    \right.
    &
    \Psi_{l, m} 
    &= \begin{pmatrix} \psi_{l, m, 0} & \psi_{l, m, 1} \end{pmatrix}^T
    ,
\end{align}
with $2\times2$ matrices $\mathcal{L}_{l, m}$ and $\mathcal{M}_{l, m}$ given by
\begin{align}
    \label{eq:dsG_pair}
    \mathcal{L}_{l, m}
    &= 
    \begin{pmatrix}
        1 
        & 
        \lambda^{-1} 
        \theta_{l, m - 1} \theta_{l, m}
        \\[0.9mm]
        \lambda^{-1}
        \theta_{l, m - 1}^{-1}
        \theta_{l, m}^{-1}
        &
        1
    \end{pmatrix}
    ,
    &
    \mathcal{M}_{l, m}
    &= 
    \begin{pmatrix}
        \theta_{l + 1, m}
        \theta_{l, m}^{-1}
        &
        \lambda
        \\[0.9mm]
        \lambda
        &
        \theta_{l + 1, m}^{-1}
        \theta_{l, m}
    \end{pmatrix}
    .
\end{align}
The compatibility condition
\begin{align}
    \label{eq:comp_dsG}
    \mathcal{L}_{l + 1, m} \mathcal{M}_{l, m}
    &= \mathcal{M}_{l, m - 1} \mathcal{L}_{l, m}
\end{align}
leads to the non-abelian sine-Gordon equation \eqref{eq:dsG_nc}. Under the commutative reduction, the resulting equation and system coincide with those presented in Example 2 from the paper \cite{date1983method}.

\subsubsection{The mKdV equation}
\label{sec:dmKdV}

The discrete modified Korteweg-de Vries equation is also a periodic reduction of \eqref{eq:2ddToda_nc}, but with some changes. Taking the map \eqref{eq:2ddTodatoHirota}, i.e. considering the Hirota equation \eqref{eq:Hirota_nc}, supplemented with a transformation of the form
\begin{align}
    \label{eq:2ddTodatodmKdV}
    &&
    \theta_{l, m, n}
    &\mapsto
    \left\{
    \begin{array}{lr}
         \theta_{l, m},
         &  \text{even } n,
         \\[2mm]
         \theta_{l, m}^{-1},
         &  \text{odd } n;
    \end{array}
    \right.
    &&&
    \psi_{l, m, n}
    &\mapsto
    \left\{
    \begin{array}{lr}
         \psi_{l, m, 0},
         &  \text{even } n,
         \\[2mm]
         \psi_{l, m, 1}^{-1},
         &  \text{odd } n,
    \end{array}
    \right.
    &&&&
\end{align}
the 2d discrete Toda equation \eqref{eq:2ddToda_nc} changes as
\begin{align}
    \label{eq:dmKdV_nc_1}
    \theta_{l + 1, m + 1}^{-1} 
    \theta_{l + 1, m}
    - \theta_{l + 1, m + 1}^{-1} \theta_{l, m + 1}
    &= \theta_{l, m + 1}
    \theta_{l, m}^{-1} 
    - \theta_{l + 1, m}
    \theta_{l, m}^{-1}.
\end{align}
After the described transformations and rescaling $\psi_{l, m, n} \mapsto \lambda^{-n} \psi_{l, m, n}$, the scalar Lax system \eqref{eq:2ddToda_scalsys} turns~into
\begin{align}
    &
    \left\{
    \begin{array}{lcl}
         \psi_{l + 1, m, 0}
         &=& \lambda \psi_{l, m, 1}^{-1}
         + \theta_{l + 1, m}
         \theta_{l, m}^{-1}
         \psi_{l, m, 0}
         ,
         \\[2mm]
         \psi_{l, m + 1, 1}^{-1}
         &=& \lambda \psi_{l, m, 0}
         + \theta_{l, m + 1}^{-1} \theta_{l, m}
         \psi_{l, m, 1}^{-1}
         ;
    \end{array}
    \right.
    &
    &
    \left\{
    \begin{array}{lcl}
         \psi_{l + 1, m, 1}^{-1}
         &=& \lambda \psi_{l, m, 0}
         + \theta_{l + 1, m}^{-1} 
         \theta_{l, m}
         \psi_{l, m, 1}^{-1},
         \\[2mm]
         \psi_{l, m + 1, 0}
         &=& \lambda \psi_{l, m, 1}^{-1}
         + \theta_{l, m + 1} \theta_{l, m}^{-1}
         \psi_{l, m, 0},
    \end{array}
    \right.
\end{align}
This system can be easily rewritten in the matrix form. So, one can verify the following proposition.
\begin{prop}
\label{thm:dmKdV_matsys}
The non-abelian mKdV equation \eqref{eq:dmKdV_nc_1} follows from the matrix linear system
\begin{align}
    \label{eq:dmKdV_matsys}
    &&
    &
    \left\{
    \begin{array}{rcl}
         \Psi_{l, m + 1}
         &=& \mathcal{L}_{l, m} \Psi_{l, m},  
         \\[2mm]
         \Psi_{l + 1, m}
         &=& \mathcal{M}_{l, m} \Psi_{l, m},  
    \end{array}
    \right.
    &
    \Psi_{l, m}
    &= 
    \begin{pmatrix}
        \psi_{l, m, 0}
        & \psi_{l, m, 1}^{-1}
    \end{pmatrix}^{T},
    &&
\end{align}
where matrices $\mathcal{L}_{l, m}$, $\mathcal{M}_{l, m}$ read as
\begin{align}
    \label{eq:dmKdV_pair}
    &&
    \mathcal{L}_{l, m}
    &= 
    \begin{pmatrix}
        \theta_{l, m + 1} 
        \theta_{l, m}^{-1}
        & 
        \lambda
        \\[1mm]
        \lambda
        & 
        \theta_{l, m + 1}^{-1} \theta_{l, m}
    \end{pmatrix},
    &
    \mathcal{M}_{l, m}
    &= 
    \begin{pmatrix}
        \theta_{l + 1, m} 
        \theta_{l, m}^{-1}
        & 
        \lambda
        \\[1mm]
        \lambda
        & 
        \theta_{l + 1, m}^{-1} \theta_{l, m}
    \end{pmatrix}
    .
    &&
\end{align}
\end{prop}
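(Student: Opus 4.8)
The plan is to follow the same two-component strategy used in the proof of Proposition \ref{thm:2ddToda_matsys}: first confirm that the matrices \eqref{eq:dmKdV_pair} merely repackage the four rescaled scalar relations displayed just before the statement, and then impose the discrete zero-curvature (compatibility) condition of the two shifts and check that it collapses onto \eqref{eq:dmKdV_nc_1}. First I would take $\Psi_{l,m}=(\psi_{l,m,0},\,\psi_{l,m,1}^{-1})^{T}$ and verify, row by row, that $\Psi_{l,m+1}=\mathcal{L}_{l,m}\Psi_{l,m}$ reproduces the $\psi_{l,m+1,0}$ and $\psi_{l,m+1,1}^{-1}$ equations, and that $\Psi_{l+1,m}=\mathcal{M}_{l,m}\Psi_{l,m}$ reproduces the $\psi_{l+1,m,0}$ and $\psi_{l+1,m,1}^{-1}$ equations. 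This pins down the Lax pair and reduces everything to the compatibility of the $m$- and $l$-shifts.

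Next, computing $\Psi_{l+1,m+1}$ along the two orders of shifts yields the condition
\begin{align}
    \mathcal{L}_{l+1,m}\,\mathcal{M}_{l,m}
    &= \mathcal{M}_{l,m+1}\,\mathcal{L}_{l,m}.
\end{align}
I would expand both $2\times2$ products. Abbreviating $a=\theta_{l+1,m+1}$, $b=\theta_{l+1,m}$, $c=\theta_{l,m+1}$, $d=\theta_{l,m}$, the diagonal entries of the two sides coincide automatically, both equal to $\lambda^{2}+a\,d^{-1}$ and $\lambda^{2}+a^{-1}d$, so they impose no constraint. The off-diagonal entries carry the content: after cancelling the common factor $\lambda$, the $(2,1)$ entry of $\mathcal{L}_{l+1,m}\mathcal{M}_{l,m}-\mathcal{M}_{l,m+1}\mathcal{L}_{l,m}$ comes out as $(b-c)\,d^{-1}+a^{-1}(b-c)$, which is exactly \eqref{eq:dmKdV_nc_1} in the form $\theta_{l+1,m+1}^{-1}(\theta_{l+1,m}-\theta_{l,m+1})=-(\theta_{l+1,m}-\theta_{l,m+1})\theta_{l,m}^{-1}$. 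Thus compatibility forces the mKdV lattice, which is the claim.

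The delicate point --- and the place where non-commutativity genuinely enters --- is the $(1,2)$ entry, which emerges as $a(b^{-1}-c^{-1})+(b^{-1}-c^{-1})d$, phrased through the inverse-difference $b^{-1}-c^{-1}$ rather than through $b-c$. Here I would invoke the elementary identity $b^{-1}-c^{-1}=-\,b^{-1}(b-c)\,c^{-1}$; conjugating by $b$ on the left and $c$ on the right turns the entry into $(b\,a\,b^{-1})(b-c)+(b-c)(c^{-1}d\,c)$, i.e. the same Sylvester-type relation as the $(2,1)$ entry but with $a,d$ replaced by their conjugates. This is precisely \eqref{eq:dmKdV_nc_1} read for the inverted field $\theta\mapsto\theta^{-1}$, consistent with the symmetry recorded in the main text. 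Since the proposition asserts only that mKdV \emph{follows from} the linear system, it is enough to extract it from the $(2,1)$ entry, so no equivalence of the two off-diagonal relations is needed. I expect the only real obstacle to be the careful bookkeeping of this inverse-difference factorization and the preservation of all left/right orderings of the non-commuting $\theta$'s; once those are handled, the remainder is routine $2\times2$ matrix multiplication.
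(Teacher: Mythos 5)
Your proposal is correct and follows the same two-component/zero-curvature template the paper uses for Proposition \ref{thm:2ddToda_matsys}; the paper itself omits the computation here (``one can verify''), and your expansion of $\mathcal{L}_{l+1,m}\mathcal{M}_{l,m}=\mathcal{M}_{l,m+1}\mathcal{L}_{l,m}$ — diagonal entries cancelling identically and the $(2,1)$ entry yielding \eqref{eq:dmKdV_nc_1} in the form $\theta_{l+1,m+1}^{-1}(\theta_{l+1,m}-\theta_{l,m+1})=-(\theta_{l+1,m}-\theta_{l,m+1})\theta_{l,m}^{-1}$ — is exactly the intended verification. Your observation that the $(1,2)$ entry reduces, via $b^{-1}-c^{-1}=-b^{-1}(b-c)c^{-1}$, to the conjugated relation $(bab^{-1})(b-c)+(b-c)(c^{-1}dc)=0$ (i.e.\ \eqref{eq:dmKdV_nc_1} under $\theta\mapsto\theta^{-1}$), and that only the $(2,1)$ entry is needed for the stated one-way implication, is a correct and careful handling of a non-commutative subtlety the paper does not address.
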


Note that in the commutative case the sine-Gordon equation is related to the modified Korteweg-de Vries equation by the change \cite{quispel1991integrable}
\begin{align}
    \label{eq:dsGtodmKdV}
    &&
    \theta_{l, m}
    &\mapsto
    \left\{
    \begin{array}{lr}
         \theta_{l, m}^{-1},
         &  \text{even } m,
         \\[2mm]
         \theta_{l, m},
         &  \text{odd } m;
    \end{array}
    \right.
    &&&
    \psi_{l, m, i}
    &\mapsto
    \left\{
    \begin{array}{lr}
         \psi_{l, m, i}^{-1},
         &  \text{even } m,
         \\[2mm]
         \psi_{l, m, i},
         &  \text{odd } m,
    \end{array}
    \right.
    &
    i
    &= 0, 1.
    &&&&
\end{align}
Using the same change in the non-abelian setting, one can obtain from \eqref{eq:dmKdV_nc_1} the following non-abelian discrete sine-Gordon equation
\begin{align}
    \label{eq:dsG_nc_}
    \theta_{l + 1, m + 1}
    \theta_{l + 1, m} 
    - \theta_{l + 1, m + 1} 
    \theta_{l, m + 1}^{-1}
    &= \theta_{l, m + 1}^{-1} 
    \theta_{l, m}^{-1}
    - \theta_{l + 1, m}
    \theta_{l, m}^{-1}
    .
\end{align}
The latter is equivalent to \eqref{eq:dsG_nc}. The equivalence can be achieved by the transformation
$\theta_{l, m} \mapsto \theta_{l, 1 - m}^{-1}$.

\subsection{Remarks on Miura transformations}

\subsubsection{Volterra and Toda lattices}

Non-abelian Volterra lattice reads as \cite{sall1982darboux}
\begin{align}
    \label{eq:1dV_nc}
    \gamma_{n}'
    &= \gamma_{n + 1} \gamma_{n}
    - \gamma_{n} \gamma_{n - 1}
\end{align}
and is eqivalent to a zero-curvature condition of the form \eqref{eq:zrc_1d} with matrices
\begin{align}
    \mathcal{L}_n
    &= 
    \begin{pmatrix}
        \lambda & \lambda \gamma_n
        \\[0.9mm]
        -1 & 0
    \end{pmatrix}
    ,
    &
    \mathcal{M}_n
    &= 
    \begin{pmatrix}
        \lambda + \gamma_n & 
        \lambda \gamma_n
        \\[0.9mm]
        -1 & \gamma_{n - 1}
    \end{pmatrix}
    .
\end{align}
This equation can be rewritten as a system, by introducing variables
\begin{align}
    &&
    u_k
    &= \gamma_{2k + 1},
    & 
    v_k
    &= \gamma_{2k}.
    &&
\end{align}
Thus, the resulting system is
\begin{align}
    \label{eq:1dV_uv_nc}
    \left\{
    \begin{array}{lcl}
         u_n'
         &=& v_{n + 1} u_n
         - u_n v_n,
         \\[2mm]
         v_n'
         &=& u_n v_n 
         - v_n u_{n - 1}.
    \end{array}
    \right.
\end{align}
This system transforms to system \eqref{eq:1dToda_scalsysab} that is equivalent to the non-abelian 1d Toda equation \eqref{eq:1dToda_nc}. Namely, we have the following statement that generalizes the well-known Miura transformation from the paper \cite{yamilov1994construction} to the non-abelian case.
\begin{prop}
\label{thm:1dVLto1dToda}
The non-abelian Volterra lattice reduces to non-abelian 1d Toda equation by the discrete Miura transformation
\begin{align}
    \label{eq:1dVLto1dToda_nc}
    &&
    a_n
    &= u_n + v_n,
    &
    b_n
    &= v_n u_{n - 1}.
    &&
\end{align}
\end{prop}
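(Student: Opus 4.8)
The plan is to verify that the Miura map \eqref{eq:1dVLto1dToda_nc} carries the Volterra system \eqref{eq:1dV_uv_nc} into the auxiliary system \eqref{eq:1dToda_scalsysab} for $a_n$, $b_n$, which we already know to be equivalent to the non-abelian 1d Toda equation \eqref{eq:1dToda_nc} (this was established when deriving the scalar linear problem, where substituting \eqref{eq:1dToda_abdef} into \eqref{eq:1dToda_scalsysab} reproduces \eqref{eq:1dToda_nc} from the first line and an identity from the second). Thus it suffices to substitute $a_n = u_n + v_n$ and $b_n = v_n u_{n-1}$ and check that both equations of \eqref{eq:1dToda_scalsysab} become consequences of \eqref{eq:1dV_uv_nc}. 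Throughout, the only care needed is to preserve the order of the non-commuting factors $u$ and $v$.

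First I would treat the equation $a_n' = b_{n+1} - b_n$. Differentiating the map gives $a_n' = u_n' + v_n'$, and inserting the two lines of \eqref{eq:1dV_uv_nc} makes the $u_n v_n$ contributions cancel, leaving $a_n' = v_{n+1} u_n - v_n u_{n-1}$. Since $b_{n+1} = v_{n+1} u_n$ and $b_n = v_n u_{n-1}$ by the very definition of $b$, this is precisely $b_{n+1} - b_n$, as required.

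Next I would handle $b_n' = a_n b_n - b_n a_{n-1}$. Applying the Leibniz rule, $b_n' = v_n' u_{n-1} + v_n u_{n-1}'$; substituting $v_n' = u_n v_n - v_n u_{n-1}$ together with the $n \mapsto n-1$ shift of the $u$-line, namely $u_{n-1}' = v_n u_{n-1} - u_{n-1} v_{n-1}$, yields the four monomials $u_n v_n u_{n-1}$, $v_n^2 u_{n-1}$, $-v_n u_{n-1}^2$, and $-v_n u_{n-1} v_{n-1}$. On the other side, expanding $a_n b_n - b_n a_{n-1} = (u_n + v_n) v_n u_{n-1} - v_n u_{n-1}(u_{n-1} + v_{n-1})$ produces exactly the same four monomials in the same order, so the identity holds verbatim.

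The computation is entirely mechanical; the main (and only) obstacle is the bookkeeping of the non-commutative products, i.e.\ ensuring that no factors are silently transposed when the cross-terms are collected and matched. Once it is established that the Miura map sends solutions of \eqref{eq:1dV_uv_nc} to solutions of \eqref{eq:1dToda_scalsysab}, the proposition follows immediately from the equivalence of \eqref{eq:1dToda_scalsysab} with \eqref{eq:1dToda_nc}.
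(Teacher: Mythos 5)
Your proof is correct and follows essentially the same route as the paper: differentiate the Miura map, substitute the two lines of the Volterra system \eqref{eq:1dV_uv_nc}, and regroup the non-commutative monomials to recover \eqref{eq:1dToda_scalsysab}, whose equivalence with \eqref{eq:1dToda_nc} was already established. All four monomials in the $b_n'$ computation match the paper's, with the correct ordering of factors.
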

\begin{proof}
The proof is given by a direct computation. Namely, using system \eqref{eq:1dV_uv_nc}, we have
\begin{align}
    a_n'
    &= u_n' + v_n'
    = (v_{n + 1} u_n
    - u_n v_n)
    + (u_n v_n 
    - v_n u_{n - 1})
    = b_{n + 1} - b_n,
    \\[2mm]
    b_n'
    &= v_n' u_{n - 1}
    + v_n u_{n - 1}'
    = (u_n v_n 
    - v_n u_{n - 1}) u_{n - 1}
    + v_n (
    v_n u_{n - 1}
    - u_{n - 1} v_{n - 1}
    )
    \\[1mm]
    &= (
    u_n + v_n
    ) v_n u_{n - 1}
    - v_n u_{n - 1} (
    u_{n - 1}
    + v_{n - 1}
    )
    = a_n b_n - b_n a_{n - 1}.
\end{align}
Therefore, we arrive at \eqref{eq:1dToda_scalsysab}.
\end{proof}

\subsubsection{KdV and mKdV lattices}
\label{sec:pkdv}
Recall that in the commutative case we have the scheme
\begin{align}
    &&&&
    {\rm dKdV}
    &&
    \longleftrightarrow
    &&
    {\rm dpKdV}
    &&
    \longleftrightarrow
    &&
    {\rm dmKdV}
    ,
    &&&&
\end{align}
where the dpKdV is a discrete potential KdV equation. 
Below we will consider its generalization to the non-abelian case.

The commutative potential discrete KdV equation
\begin{align}
    \brackets{
    u_{m, n} - u_{m + 1, n + 1}
    } \brackets{
    u_{m, n + 1} - u_{m + 1, n}
    }
    &= p \, q
\end{align}
is related to the discrete KdV equation 
\begin{align}
    p \brackets{
    v_{m, n} - v_{m + 1, n + 1}
    }
    &= q \brackets{
    v_{m, n + 1}^{-1} - v_{m + 1, n}^{-1}
    },
\end{align}
where $p$ and $q$ are arbitrary constants. Note that they can be reduced to one by an appropriate scaling depending on the indices $m$, $n$.
The relation can be found, for instance, in the paper \cite{shi2013darboux} and easily generalizes to the non-abelian case. Namely, we have 
\begin{thm}
\label{thm:dKdVtodpKdV_nc}
Let $p$, $q$ be arbitrary abelian constants. Then the system 
\begin{align}
    \label{eq:dKdVtodpKdV_nc}
    \left\{
    \begin{array}{lcl}
         p \, v_{m, n + 1}
         &=& u_{m, n + 1} - u_{m + 1, n},  
         \\[2mm]
         q \, v_{m, n}^{-1}
         &=& u_{m, n - 1} - u_{m + 1, n}
    \end{array}
    \right.
\end{align}
establishes a relation between the non-abelian discrete KdV equation \eqref{eq:dKdV_nc} and its potential analogs:
\begin{align}
    \label{eq:dpKdV_nc_1}
    \brackets{
    u_{m, n + 1} - u_{m + 1, n}
    } \brackets{
    u_{m, n} - u_{m + 1, n + 1}
    }
    &= p \, q,
    \\[2mm]
    \label{eq:dpKdV_nc_2}
    \brackets{
    u_{m, n} - u_{m + 1, n + 1}
    } \brackets{
    u_{m, n + 1} - u_{m + 1, n}
    }
    &= p \, q.
\end{align}
\end{thm}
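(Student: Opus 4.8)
The plan is to treat the system \eqref{eq:dKdVtodpKdV_nc} as a B\"acklund-type correspondence and to prove the theorem by two independent elimination steps. Throughout I would exploit that $p$ and $q$ are central (abelian), so that they factor freely past elements of $R$, while keeping careful track of the order of the non-commuting factors, since it is precisely this ordering that distinguishes the two potential lattices.

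First I would eliminate the potential $u$ to recover the discrete KdV equation for $v$. Writing the two equations of \eqref{eq:dKdVtodpKdV_nc} as $u_{m,n+1} - u_{m+1,n} = p\,v_{m,n+1}$ and $u_{m,n-1} - u_{m+1,n} = q\,v_{m,n}^{-1}$, I would compute the single quantity $u_{m,n}$ along two routes: shifting the first equation by $n \mapsto n-1$ gives $u_{m,n} = u_{m+1,n-1} + p\,v_{m,n}$, while shifting the second by $n \mapsto n+1$ gives $u_{m,n} = u_{m+1,n+1} + q\,v_{m,n+1}^{-1}$. Equating these produces $u_{m+1,n+1} - u_{m+1,n-1} = p\,v_{m,n} - q\,v_{m,n+1}^{-1}$. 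On the other hand, subtracting the two original equations gives $u_{m,n+1} - u_{m,n-1} = p\,v_{m,n+1} - q\,v_{m,n}^{-1}$, and shifting this by $m \mapsto m+1$ yields a second expression for $u_{m+1,n+1} - u_{m+1,n-1}$. Equating the two expressions cancels $u$ entirely and leaves $p\,(v_{m,n} - v_{m+1,n+1}) = q\,(v_{m,n+1}^{-1} - v_{m+1,n}^{-1})$, which is precisely the non-abelian discrete KdV equation in its $(p,q)$-form, reducing to \eqref{eq:dKdV_nc} after the scaling noted above.

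Next I would obtain the two potential equations directly from \eqref{eq:dKdVtodpKdV_nc}, without reference to dKdV. From the first equation $v_{m,n+1} = p^{-1}(u_{m,n+1} - u_{m+1,n})$, and from the second shifted by $n \mapsto n+1$, $v_{m,n+1}^{-1} = q^{-1}(u_{m,n} - u_{m+1,n+1})$. The identity $v_{m,n+1}\,v_{m,n+1}^{-1} = 1$, together with the centrality of $p$ and $q$, then gives $(u_{m,n+1} - u_{m+1,n})(u_{m,n} - u_{m+1,n+1}) = p\,q$, i.e. \eqref{eq:dpKdV_nc_1}; reading instead $v_{m,n+1}^{-1}\,v_{m,n+1} = 1$ multiplies the same factors in the opposite order and yields \eqref{eq:dpKdV_nc_2}. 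This is exactly where non-commutativity enters: in the abelian case the two products coincide, but over $R$ they split into the genuinely distinct lattices \eqref{eq:dpKdV_nc_1} and \eqref{eq:dpKdV_nc_2}.

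I expect the only real subtlety to be bookkeeping rather than analysis: one must respect the order of the non-commuting factors at each step, and one must verify the consistency of the elimination, namely that the two routes to $u_{m,n}$ used above are compatible. It is this compatibility that converts the existence of the potential $u$ into the statement that $v$ solves dKdV. The converse, reconstructing $v$ from a solution $u$ of \eqref{eq:dpKdV_nc_1}--\eqref{eq:dpKdV_nc_2} via the first line of \eqref{eq:dKdVtodpKdV_nc}, follows by running the same manipulations in reverse and should present no additional difficulty.
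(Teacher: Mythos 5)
Your proposal is correct and follows essentially the same two elimination steps as the paper: the potential lattices \eqref{eq:dpKdV_nc_1} and \eqref{eq:dpKdV_nc_2} come from multiplying the first equation with the $n\mapsto n+1$ shift of the second in the two possible orders (using $v\,v^{-1}=v^{-1}v=1$ and the centrality of $p,q$), and the dKdV equation comes from cross-differencing shifted copies of \eqref{eq:dKdVtodpKdV_nc} to express the same second difference of $u$ in two ways. The paper organizes the $u$-elimination as $T_{-2}(\cdot)_1-T_1(\cdot)_1$ versus $T_2(\cdot)_2-T_1(\cdot)_2$ rather than via two expressions for $u_{m,n}$, but this is only a cosmetic difference.
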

\begin{rem}
The latter two equations are equivalent up to the transposition $\tau$ given in Definition~\ref{def:tau}.
\end{rem}
\begin{proof}
Our aim is to eliminate the variable $u$ or $v$ from system \eqref{eq:dKdVtodpKdV_nc}.

\medskip
\textbf{\textbullet \,\, The pKdV equations.} 
In order to obtain \eqref{eq:dpKdV_nc_1}, one needs to consider $T_2 \eqref{eq:dKdVtodpKdV_nc}_2 \cdot \eqref{eq:dKdVtodpKdV_nc}$:
\begin{align}
    p \, q \,\, v_{m, n + 1}^{-1} v_{m, n + 1}
    &= \brackets{
    u_{m, n} - u_{m + 1, n + 1}
    } \brackets{
    u_{m, n + 1} - u_{m + 1, n}
    }. 
\end{align}
Similarly, expression $\eqref{eq:dKdVtodpKdV_nc} \cdot T_2 \eqref{eq:dKdVtodpKdV_nc}_2$ leads to \eqref{eq:dpKdV_nc_2}.

\medskip
\textbf{\textbullet \,\, The KdV equation.} 
Let us take $T_2 \eqref{eq:dKdVtodpKdV_nc}_2 - T_1 \eqref{eq:dKdVtodpKdV_nc}_2$: 
\begin{align}
    q \brackets{
    v_{m, n + 1}^{-1} - v_{m + 1, n}^{-1}
    }
    &= u_{m, n} - u_{m + 1, n + 1} - u_{m + 1, n - 1} + u_{m + 2, n}.
\end{align}
Note that 
\begin{align}
    T_{- 2} \eqref{eq:dKdVtodpKdV_nc}_1
    - T_{1} \eqref{eq:dKdVtodpKdV_nc}_1
    &= p \brackets{
    v_{m, n} - v_{m + 1, n + 1}
    }
    = u_{m, n} - u_{m + 1, n - 1} - u_{m + 1, n + 1} + u_{m + 2, n}.
\end{align}
Therefore, taking the difference of the latter two expressions, we arrive at the equation
\begin{align}
    p \brackets{
    v_{m, n} - v_{m + 1, n + 1}
    }
    &= q \brackets{
    v_{m, n + 1}^{-1} - v_{m + 1, n}^{-1}
    },
\end{align}
which coincides with \eqref{eq:dKdV_nc} up to a scaling of $v_{m, n}$.
\end{proof}

One can verify that the linear system
\begin{align}
    \label{eq:dpKdV_matsys}
    &
    \left\{
    \begin{array}{rcl}
         \Psi_{m, n + 1}
         &=& \mathcal{L}_{m, n} \Psi_{m, n},  
         \\[2mm]
         \Psi_{m + 1, n}
         &=& \mathcal{M}_{m, n} \Psi_{m, n}
    \end{array}
    \right.
\end{align}
with matrices $\mathcal{L}_{m, n}$, $\mathcal{M}_{m, n}$ of the form
\begin{align}
    \label{eq:dpKdV_pair}
    &&
    \mathcal{L}_{m, n}
    &= 
    \begin{pmatrix}
        u_{m, n}
        & 
        \lambda + p \, q 
        - u_{m, n} u_{m, n + 1}
        \\[1mm]
        1
        & 
        - u_{m, n + 1}
    \end{pmatrix},
    &
    \mathcal{M}_{m, n}
    &= 
    \begin{pmatrix}
        u_{m, n}
        & 
        \lambda
        - u_{m, n} u_{m + 1, n}
        \\[1mm]
        1
        & 
        - u_{m + 1, n}
    \end{pmatrix}
    &&
\end{align}
is equivalent to the non-abelian pKdV equation \eqref{eq:dpKdV_nc_1}. Indeed, there are only three non-zero matrix entries of the compatibility condition 
\begin{align}
    \mathcal{L}_{m + 1, n} \mathcal{M}_{m, n}
    &= \mathcal{M}_{m, n + 1}
    \mathcal{L}_{m, n}.
\end{align}
The $(1,1)$ and $(2,2)$ elements
coincide with \eqref{eq:dpKdV_nc_1} and \eqref{eq:dpKdV_nc_2}, respectively, while the $(1,2)$ is equal to their linear combination: $\eqref{eq:dpKdV_nc_1} \, \cdot \, u_{m, n + 1} + u_{m + 1, n} \, \cdot \, \eqref{eq:dpKdV_nc_2}$.

There is a discrete Miura transformation between potential KdV and modified KdV lattices \cite{nijhoff1995discrete}. It has a non-abelian analog that brings the potential KdV equation \eqref{eq:dpKdV_nc_1} to the following mKdV equation different from \eqref{eq:dmKdV_nc_1}:
\begin{align}
    \label{eq:dmKdV_nc_2}
    \theta_{m, n + 1} \theta_{m + 1, n + 1}^{-1}
    - \theta_{m + 1, n} \theta_{m + 1, n + 1}^{-1}
    &= \theta_{m + 1, n} \theta_{m, n}^{-1}
    - \theta_{m, n + 1} \theta_{m, n}^{-1}.
\end{align}
Namely, we have 
\begin{prop}
\label{thm:pKdVtomKdV}
Let $p = a_1 - a_2$ and $q = a_1 + a_2$. Then the non-abelian pKdV \eqref{eq:dpKdV_nc_1} and mKdV \eqref{eq:dmKdV_nc_2} lattices are connected by the Miura map
\begin{align}
    \label{eq:pKdVtomKdV}
    \left\{
    \begin{array}{lcl}
         u_{m, n + 1} - u_{m + 1, n}
         &=& a_1 \, \theta_{m, n + 1} \theta_{m + 1, n + 1}^{-1}
         - a_2 \, \theta_{m + 1, n} \theta_{m + 1, n + 1}^{-1},
         \\[2mm]
         u_{m, n} - u_{m + 1, n + 1}
         &=& a_1 \, \theta_{m, n} \theta_{m + 1, n}^{-1}
         + a_2 \, \theta_{m + 1, n + 1} \theta_{m + 1, n}^{-1}.
    \end{array}
    \right.
\end{align}
\end{prop}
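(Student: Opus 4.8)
The plan is to treat the Miura relations \eqref{eq:pKdVtomKdV} as an auxiliary (Bäcklund-type) system coupling the potential field $u$ and the modified field $\theta$, and to prove the claim by eliminating each field in turn, exactly as in the proof of Theorem \ref{thm:dKdVtodpKdV_nc} and of Proposition \ref{thm:1dVLto1dToda}. Recovering \eqref{eq:dpKdV_nc_1} amounts to eliminating $\theta$, while recovering \eqref{eq:dmKdV_nc_2} amounts to eliminating $u$. Throughout one uses that $a_1$ and $a_2$ lie in $\mathcal{Z}(R)$ — they are abelian, being determined by the abelian constants $p$, $q$ through $p = a_1 - a_2$, $q = a_1 + a_2$ — so they may be moved freely past ring elements, and that $p\,q = a_1^2 - a_2^2$.

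First I would derive the potential KdV equation. The two right-hand sides of \eqref{eq:pKdVtomKdV} are precisely the two bracketed factors on the left of \eqref{eq:dpKdV_nc_1}, so I would multiply the first relation of \eqref{eq:pKdVtomKdV} on the right by the second. Expanding the product of the two $\theta$-expressions, the cross term carrying $a_2^2$ telescopes to the identity, since $\theta_{m + 1, n} \theta_{m + 1, n + 1}^{-1}\cdot \theta_{m + 1, n + 1}\theta_{m + 1, n}^{-1} = 1$, contributing $-a_2^2$; the surviving terms carry $a_1^2$ and $a_1 a_2$. I would then reorganize these, pulling out the central constants and inserting $\theta_{m+1,n}^{-1}\theta_{m+1,n} = 1$ where convenient, so that the relation \eqref{eq:dmKdV_nc_2} can be substituted. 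After that substitution the remaining $\theta$-dependence is designed to cancel, leaving the central scalar $a_1^2 - a_2^2 = p\,q$, which is \eqref{eq:dpKdV_nc_1}.

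Next I would recover the modified KdV equation by eliminating $u$. Here I would combine the two relations of \eqref{eq:pKdVtomKdV} with their shifts in $m$ and $n$ chosen so that the $u$-differences cancel around the elementary plaquette, in analogy with the step $T_{-2}\eqref{eq:dKdVtodpKdV_nc}_1 - T_{1}\eqref{eq:dKdVtodpKdV_nc}_1$ of Theorem \ref{thm:dKdVtodpKdV_nc}. Once the $u$-terms drop out, what remains is a purely $\theta$-relation; collecting the coefficients of $a_1$ and of $a_2$ separately and rewriting the products via $\theta_{k,l}\theta_{k,l}^{-1} = 1$, this relation should reduce to \eqref{eq:dmKdV_nc_2}, possibly after the harmless rescaling of $\theta$ analogous to the scaling of $v_{m,n}$ noted in Theorem \ref{thm:dKdVtodpKdV_nc}.

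The symbolic expansion itself is routine; the genuine difficulty, as always in the non-abelian setting, is the bookkeeping of the order of factors — keeping each inverse on the correct side through both the multiplication and the elimination — which here is delicate because the two potential equations \eqref{eq:dpKdV_nc_1} and \eqref{eq:dpKdV_nc_2} are genuinely inequivalent (they are exchanged by the transposition $\tau$). I expect the main obstacle to be verifying that the telescoping together with the substitution of \eqref{eq:dmKdV_nc_2} can be performed on the correct side so that the product closes up to the \emph{central} scalar $p\,q$, rather than to a scalar times a leftover one-sided $\theta$-ratio; the centrality of $a_1$ and $a_2$ is exactly the structural fact that should make this closure possible.
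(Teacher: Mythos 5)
The first half of your plan is sound and is, up to direction, exactly the paper's proof: the authors substitute \eqref{eq:pKdVtomKdV} into the product $\brackets{u_{m,n+1}-u_{m+1,n}}\brackets{u_{m,n}-u_{m+1,n+1}}=p\,q$, telescope the $a_2^2$ cross term exactly as you describe, cancel $a_2^2$ against $p\,q=a_1^2-a_2^2$, divide by $a_1$, and multiply on the right by $\theta_{m+1,n}\theta_{m,n}^{-1}$ to land on the mKdV written with coefficients, equation \eqref{eq:mkdv}. Two small points you should make explicit. First, what comes out is not \eqref{eq:dmKdV_nc_2} itself but its rescaled form \eqref{eq:mkdv} carrying $a_1$ and $a_2$; the identification requires absorbing these constants into $\theta_{m,n}$, so the rescaling caveat belongs in this half of the argument, not only in the second. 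Second, every step in this chain (telescoping, cancellation of the central scalars, right-multiplication by the invertible element $\theta_{m+1,n}\theta_{m,n}^{-1}$) is reversible, so this single computation already proves the equivalence in both directions; you do not need a separate argument for the mKdV direction.

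That matters because your second half, as stated, would fail. The mKdV equation is \emph{not} a consequence of the Miura system \eqref{eq:pKdVtomKdV} alone; it needs the pKdV equation as input. In the system \eqref{eq:dKdVtodpKdV_nc} of Theorem \ref{thm:dKdVtodpKdV_nc} the right-hand sides are differences of $u$ along \emph{edges} sharing the vertex $u_{m+1,n}$, so a short shifted combination kills the $u$'s and still fits on one plaquette of $v$. Here, by contrast, the two right-hand sides of \eqref{eq:pKdVtomKdV} are the differences of $u$ along the two \emph{diagonals} of a single plaquette; any linear combination of shifts of these relations that eliminates all the $u$'s must close a cycle in the diagonal graph, and the shortest such cycle, $(m,n)\to(m+1,n+1)\to(m+2,n)\to(m+1,n-1)\to(m,n)$, yields a $\theta$-relation spanning a $3\times3$ block of sites. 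That relation is a weaker, higher-order consequence (it has the form ``an expression on one plaquette equals a shifted expression on an adjacent plaquette'') and cannot be reduced to the single-plaquette equation \eqref{eq:dmKdV_nc_2} without further input. So drop the elimination step and instead close the argument by observing the reversibility of the chain in your first half, which is precisely how the paper concludes.
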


\begin{proof}
The proof is given by a straightforward computation. Substitution of the change \eqref{eq:pKdVtomKdV} into \eqref{eq:dpKdV_nc_1} leads to the following chain of identities: 
\begin{align}
    \brackets{
    a_1 \, \theta_{m, n + 1} \theta_{m + 1, n + 1}^{-1}
    - a_2 \, \theta_{m + 1, n} \theta_{m + 1, n + 1}^{-1}
    } \brackets{
    a_1 \, \theta_{m, n} \theta_{m + 1, n}^{-1}
    + a_2 \, \theta_{m + 1, n + 1} \theta_{m + 1, n}^{-1}
    }
    = a_1^2 - a_2^2,
    \\[2mm]
    a_1 \, \theta_{m, n + 1} \theta_{m + 1, n + 1}^{-1} \,\,
    \theta_{m, n} \theta_{m + 1, n}^{-1}
    + a_2 \, \theta_{m, n + 1} \theta_{m + 1, n}^{-1}
    - a_2 \, \theta_{m + 1, n} \theta_{m + 1, n + 1}^{-1} \,\, 
    \theta_{m, n} \theta_{m + 1, n}^{-1}
    = a_1,
    \\[2mm]
    \label{eq:mkdv}
    a_1 \, \theta_{m, n + 1} \theta_{m + 1, n + 1}^{-1}
    - a_2 \, \theta_{m + 1, n} \theta_{m + 1, n + 1}^{-1}
    = a_1 \, \theta_{m + 1, n} \theta_{m, n}^{-1}
    - a_2 \, \theta_{m, n + 1} \theta_{m, n}^{-1}
    .
\end{align}
Note that the constants $a_1$, $a_2$ can be absorbed by a proper scaling of $\theta_{m, n}$. These transformations are invertible. So, starting from the rescaled mKdV lattice \eqref{eq:mkdv} and using system \eqref{eq:pKdVtomKdV}, one is able to arrive at the pKdV equation \eqref{eq:dpKdV_nc_1}. 
\end{proof}

\begin{rem}
According to \eqref{eq:dsGtodmKdV}, this mKdV equation turns into the following sine-Gordon lattice:
\begin{align}
    \label{eq:dsG_nc_2}
    \theta_{m, n + 1}^{-1} \theta_{m + 1, n + 1}
    - \theta_{m + 1, n} \theta_{m + 1, n + 1}
    = \theta_{m + 1, n} \theta_{m, n}^{-1}
    - \theta_{m, n + 1}^{-1} \theta_{m, n}^{-1}
    .
\end{align}
\end{rem}

In order to obtain a Lax pair for \eqref{eq:dmKdV_nc_2}, we first conjugate the pair \eqref{eq:dpKdV_pair} by the matrix
\begin{align}
    G_{m, n}
    &= 
    \begin{pmatrix}
    1 & u_{m + 1, n}
    \\[0.9mm]
    0 & 1
    \end{pmatrix}
\end{align}
to bring it to the form
\begin{multline}
    \mathcal{L}_{m, n}
    = 
    \begin{pmatrix}
        u_{m, n} - u_{m + 2, n}
        & 
        \lambda + p \, q 
        - \brackets{
        u_{m, n} - u_{m + 2, n}
        } \, \brackets{
        u_{m, n + 1} - u_{m + 1, n}
        }
        \\[1mm]
        1
        & 
        u_{m + 1, n} - u_{m, n + 1}
    \end{pmatrix},
    \\
    \mathcal{M}_{m, n}
    = 
    \begin{pmatrix}
        u_{m, n} - u_{m + 1, n + 1}
        & 
        \lambda
        \\[1mm]
        1
        & 
        0
    \end{pmatrix}
\end{multline}
and then make the change \eqref{eq:pKdVtomKdV}. Note that $u_{m, n} - u_{m + 2, n} = \eqref{eq:pKdVtomKdV}_2 + T_1 \eqref{eq:pKdVtomKdV}_1$.

A similar Miura map does not lead to \eqref{eq:dmKdV_nc_1}. It happens due to the following factorization of \eqref{eq:dmKdV_nc_1}:
\begin{multline}
    a_1 \, \brackets{
    a_1 \, \theta_{l + 1, m + 1}^{-1} \theta_{l, m + 1}
    - a_2 \, \theta_{l + 1, m + 1}^{-1} \theta_{l + 1, m}
    } \, \theta_{l, m} \theta_{l + 1, m}^{-1} 
    \\
    + a_2 \, \brackets{
    a_1 \, \theta_{l, m + 1} \theta_{l + 1, m + 1}^{-1}
    - a_2 \, \theta_{l + 1, m} \theta_{l + 1, m + 1}^{-1}
    } \, \theta_{l + 1, m + 1} \theta_{l + 1, m}^{-1}
    = a_1^2 - a_2^2
    .
\end{multline}
In particular, the authors do not know how to reduce the non-abelian 2d discrete Toda equation \eqref{eq:2ddToda_nc} to \eqref{eq:dmKdV_nc_2}. Perhaps, the mKdV equation \eqref{eq:dmKdV_nc_2} follows from another non-abelian analog of the 2d discrete Toda lattice. 

\section{Somos-like sequences and 
\texorpdfstring{$q$}{q}-\Painleve equations}
\label{sec:Somos_dPainleve}

Going further, one can reduce the 2d discrete Toda equation to integrable sequences. Thus, in Section \ref{sec:Somos_dPainleve}, we consider a reduction to the Somos-like sequence that in special case can be rewritten to the $q$-\Painleve~I and II equations and their hierarchies as it was done in the paper \cite{hone2014discrete}.

\subsection{Somos-like sequences}
\label{sec:ncsomos}
Several authors have mentioned relations between  discrete Hirota equation and the Somos-like sequences. It can be achieved by a plane-wave reduction (see, e.g., \cite{hone2017reductions}), that we apply in the non-commutative case. This reduction leads to a one-dimensional discrete equation and, thus, the resulting scalar Lax pair will have two spectral parameters. Let us stress that autonomous one-dimensional systems (discrete or continuous) admit the spectral Lax pairs only, while in the non-autonomous case they possess Lax pair of the Zakharov-Shabat type (or, in other words, isomonodromic Lax pairs). 

Starting from the non-abelian two dimensional Toda equation \eqref{eq:2ddToda_nc} and its scalar Lax pair \eqref{eq:2ddToda_scalsys}, let us first implement additional constant parameters. Namely, the scaling (cf. with \eqref{eq:scalrule})
\begin{align}
    \label{eq:scalrule_al}
    &&
    \theta_{l, m, n}
    &\mapsto \theta_{l, m, n} \, \alpha^{l + m},
    &
    \psi_{l, m, n}
    &\mapsto \lambda^{m} \, \mu^{- (m + n)} \, \psi_{l, m, n}
    &&
\end{align}
gives the 2d discrete Toda equation \eqref{eq:2ddToda_nc} of the form
\begin{align}
    \label{eq:2ddToda_nc_al}
    \theta_{l + 1, m + 1, n}
    &= \alpha^{2} \, \theta_{l, m, n + 1}
    + \theta_{l + 1, m, n} \brackets{
    \theta_{l, m, n}^{-1}
    - \alpha^{2} \, \theta_{l + 1, m + 1, n - 1}^{-1}
    } \theta_{l, m + 1, n}
\end{align}
and the corresponding scalar Lax system
\begin{gather}
    \label{eq:2ddToda_scalsys_al_nc}
    \left\{
    \begin{array}{rcl}
         \psi_{l + 1, m, n}
         &=& \mu \, \psi_{l, m, n + 1} 
         + a_{l, m, n} \, \psi_{l, m, n},
         \\[2mm]
         \lambda \, \psi_{l, m - 1, n + 1}
         &=& \mu \, \psi_{l, m, n + 1}
         + b_{l, m, n} \, \psi_{l, m, n},
    \end{array}
    \right.
    \\[3mm]
    \begin{aligned}
        a_{l, m, n}
        &= \alpha^{-1} \,\, \theta_{l + 1, m, n} \, \theta_{l, m, n}^{-1},
        &&&
        b_{l, m, n}
        &= \alpha \,\, \theta_{l, m - 1, n} \, \theta_{l, m, n}^{-1}.
    \end{aligned}
\end{gather}
The parameters $\lambda$ and $\mu$ will play a role of the spectral parameters in a reduced one-dimensional system.

\begin{prop}
\label{thm:TodatoSomosN}
Let the functions $\theta_{l, m, n}$ and $\psi_{l, m, n}$ satisfy \eqref{eq:2ddToda_nc_al} and \eqref{eq:2ddToda_scalsys_al_nc}. Then, the plane-wave reduction
\begin{align}
    &&
    y_{M} 
    &:= \theta_{l, m, n},
    &
    \varphi_{M}
    &:= \psi_{l, m, n}
    &
    M 
    &= (N - s) \, l + s \, m + r \, n,
    &&
\end{align}
changes \eqref{eq:2ddToda_nc_al} into the non-commutative version of the Somos-$N$ type equation
\begin{gather}
    \label{eq:SomosN_nc}
    \begin{aligned}
    y_{M + N}
    &= \alpha^2 \, y_{M + r}
    + y_{M + N - s} \, 
    \brackets{
    y_M^{-1}
    - \alpha^2 \, y_{M + N - r}^{-1} 
    } \,
    y_{M + s},
    &&&
    N 
    &> 3,
    &
    1 
    &\leq r < s \leq \left[\tfrac{N}{2}\right],
    \end{aligned}
\end{gather}
while the system \eqref{eq:2ddToda_scalsys_al_nc} becomes
\begin{gather}
    \label{eq:SomosN_scalsys_nc}
    \left\{
    \begin{array}{rcl}
         \varphi_{M + N - s}
         &=& \mu \, \varphi_{M + r} 
         + a_{M} \, \varphi_{M},
         \\[2mm]
         \lambda \, \varphi_{M}
         &=& \mu \, \varphi_{M + s}
         + b_{M} \, \varphi_{M + s - r},
    \end{array}
    \right.
    \\[3mm]
    \begin{aligned}
        a_{M}
        &= \alpha^{-1} \,\, y_{M + N - s} \, y_{M}^{-1},
        &&&&&
        b_{M}
        &= \alpha \,\, y_M \, y_{M + s - r}^{-1}.
    \end{aligned}
\end{gather}
\end{prop}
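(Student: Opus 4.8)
The plan is to treat the plane-wave reduction as a pure relabeling of the lattice $\mathbb{Z}^3$ onto $\mathbb{Z}$ through the linear form $M = (N-s)\,l + s\,m + r\,n$, and then to substitute and compare term by term. The first thing I would record is the action of the three elementary shifts on $M$: under $l \mapsto l+1$, $m \mapsto m+1$, $n \mapsto n+1$ the index $M$ increases by $N-s$, $s$, and $r$, respectively. Positing $\theta_{l,m,n} = y_M$ and $\psi_{l,m,n} = \varphi_M$, every monomial appearing in \eqref{eq:2ddToda_nc_al} and \eqref{eq:2ddToda_scalsys_al_nc} then acquires a single well-defined index, so that the systems on $\mathbb{Z}^3$ collapse to recurrences on $\mathbb{Z}$. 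Since the resulting index of each term depends only on the combination $M$ and not on the individual $l,m,n$, the ansatz is self-consistent: two lattice sites with the same value of $M$ produce literally the same equation, so no over-determination occurs.

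Next I would carry out the substitution in the equation itself. Reading off the six indices, $\theta_{l+1,m+1,n} \mapsto y_{M+N}$, $\theta_{l,m,n+1}\mapsto y_{M+r}$, $\theta_{l+1,m,n}\mapsto y_{M+N-s}$, $\theta_{l,m,n}\mapsto y_M$, $\theta_{l+1,m+1,n-1}\mapsto y_{M+N-r}$, and $\theta_{l,m+1,n}\mapsto y_{M+s}$, so that \eqref{eq:2ddToda_nc_al} becomes \eqref{eq:SomosN_nc} verbatim. The only point requiring care in the non-abelian setting is that the substitution is a relabeling and not a rearrangement, so the left-to-right order of the three factors $y_{M+N-s}\,\brackets{y_M^{-1} - \alpha^2 y_{M+N-r}^{-1}}\,y_{M+s}$ in the cubic term is inherited unchanged from \eqref{eq:2ddToda_nc_al}; moreover $\alpha$ is an abelian (central) parameter and passes freely through the products. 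I would also note that the hypothesis $1 \le r < s \le \left[\tfrac N2\right]$ forces the ordering $0 < r < s \le N-s < N-r < N$ of the six shifts, which is exactly what makes \eqref{eq:SomosN_nc} a genuine order-$N$ recurrence of Somos type.

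Then I would repeat the computation for the linear problem. In the first equation of \eqref{eq:2ddToda_scalsys_al_nc} the indices map as $\psi_{l+1,m,n}\mapsto \varphi_{M+N-s}$, $\psi_{l,m,n+1}\mapsto \varphi_{M+r}$, $\psi_{l,m,n}\mapsto\varphi_M$, which yields the first line of \eqref{eq:SomosN_scalsys_nc} directly, with $a_M = \alpha^{-1} y_{M+N-s}\, y_M^{-1}$ inherited from the definition of $a_{l,m,n}$. For the second equation the same dictionary gives $\lambda\,\varphi_{M+r-s} = \mu\,\varphi_{M+r} + \varphi_M$ up to its coefficient; applying the overall shift $M \mapsto M + s - r$ and relabeling casts it in the symmetric form displayed in \eqref{eq:SomosN_scalsys_nc}, and reducing the definition of $b_{l,m,n}$ under the same dictionary and the same shift produces $b_M = \alpha\, y_M\, y_{M+s-r}^{-1}$. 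Finally I would remark that the compatibility of \eqref{eq:SomosN_scalsys_nc} reproduces \eqref{eq:SomosN_nc}, being the image under the reduction of the compatibility of \eqref{eq:2ddToda_scalsys_al_nc} already established in Proposition \ref{thm:2ddToda_scalsys}; here it is legitimate to interchange reduction and compatibility precisely because the reduction is only an index relabeling.

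The argument is essentially bookkeeping, so there is no deep obstacle; the points that demand attention are (i) verifying that the ansatz is self-consistent, i.e.\ that each term depends on $(l,m,n)$ only through $M$, (ii) preserving the non-commutative order of the factors and the centrality of $\alpha$ throughout, and (iii) performing the single index shift $M \mapsto M + s - r$ needed to bring the second linear equation into the displayed form together with the correct value of $b_M$.
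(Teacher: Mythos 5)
Your proof is correct and is precisely the ``straightforward computation'' that the paper's one-line proof alludes to: a term-by-term relabelling of all indices through $M=(N-s)\,l+s\,m+r\,n$, followed by the single shift $M\mapsto M+s-r$ in the second linear equation, with the non-commutative order of factors carried along unchanged. One small caveat: your value $b_M=\alpha\,y_M\,y_{M+s-r}^{-1}$ presupposes $b_{l,m,n}=\alpha\,\theta_{l,m-1,n+1}\,\theta_{l,m,n}^{-1}$ as in \eqref{eq:2ddToda_abdef_al} (and as required for consistency with the commutative formula \eqref{eq:SomosN_scalsys}); the definition printed under \eqref{eq:2ddToda_scalsys_al_nc} has $\theta_{l,m-1,n}$, apparently a typo, which would instead yield $\alpha\,y_{M-r}\,y_{M+s-r}^{-1}$.
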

\begin{proof}
The proof is just a straightforward computation. 
\end{proof}

\begin{rem}
The condition for integer parameters $N$, $s$, and $r$ is not necessary for the reduction and is motivated by the commutative Somos-$N$ equation. 
\end{rem}

\begin{rem}
The parameter $\alpha$ may be chosen as a non-abelian constant parameter. Note that the order of the multipliers in \eqref{eq:scalrule_al} is important and must be used as written. Then, \eqref{eq:2ddToda_nc_al} takes the form
\begin{align}
    \label{eq:2ddToda_nc_al_nc}
    \theta_{l + 1, m + 1, n}
    &= \theta_{l, m, n + 1} \, \alpha^{2} 
    + \theta_{l + 1, m, n} \, \alpha \, \brackets{
    \alpha^{-2} \, \theta_{l, m, n}^{-1}
    - \theta_{l + 1, m + 1, n - 1}^{-1}
    } \theta_{l, m + 1, n} \, \alpha
\end{align}
and the functions $a_{l, m, n}$, $b_{l, m, n}$ in \eqref{eq:2ddToda_scalsys_al_nc} become
\begin{align}
    &&
    a_{l, m, n}
    &= \theta_{l + 1, m, n} \, \alpha^{-1} \, \theta_{l, m, n}^{-1},
    &
    b_{l, m, n}
    &= \theta_{l, m - 1, n + 1} \, \alpha \, \theta_{l, m, n}^{-1}.
    &&
\end{align}
It is easy to obtain a similar to Proposition \ref{thm:TodatoSomosN} statement.
\end{rem}

In order to get the Somos-$N$ equation \eqref{eq:SomosN_nc} from the system \eqref{eq:SomosN_scalsys_nc}, one needs to rewrite the latter in matrix form. Namely, introducing the vector-function 
$\Phi_M = \begin{pmatrix} \varphi_{M + N - s - 1} & \dots & \varphi_{M + 1} & \varphi_{M} \end{pmatrix}^T$, the system \eqref{eq:SomosN_scalsys_nc} can be rewritten as
\begin{align}
    \label{eq:SomosN_matsys}
    \left\{
    \begin{array}{rcl}
         \mathcal{L}_M \, \Phi_M
         &=& \lambda \, \Phi_{M},  
         \\[2mm]
         \Phi_{M + 1}
         &=& \mathcal{M}_{M} \Phi_{M},  
    \end{array}
    \right.
\end{align}
where $(N - s - 1)\times(N - s - 1)$-matrices $\mathcal{L}_M = \mathcal{L}_M (\mu)$ and $\mathcal{M}_M = \mathcal{M}_M (\mu)$ depends on the spectral parameter $\mu$ and $\lambda$ is an eigenvalue for the matrix $\mathcal{L}_M$. Then, the compatibility condition
\begin{align}
    \label{eq:compcond}
    \mathcal{L}_{M + 1} \, \mathcal{M}_M
    &= \mathcal{M}_M \, \mathcal{L}_M
\end{align}
of system \eqref{eq:SomosN_matsys} gives rise to \eqref{eq:SomosN_nc}. As in the commutative case (see Subsection \ref{sec:somoslikeseq}), here we avoid an explicit form of the matrices $\mathcal{L}_M$ and $\mathcal{M}_M$ and this fact follows from the reduced compatibility condition of system \eqref{eq:2ddToda_scalsys_al_nc}. But, for the reader's convenience, below we list several explicit examples.

\begin{rem}
Note that, unlike the commutative case, the compatibility condition does not lead to a non-autonomous parameter in the Somos-$N$ sequence.
\end{rem}

\begin{exmp}
Let us illustrate how Proposition \ref{thm:TodatoSomosN} works. Consider the Somos-$4$, Somos-$5$, and the special case of Somos-$7$ sequences (recall Remark \ref{rem:spSomos7}). The case of Somos-$4$ will contain a detailed discussion, for the remaining sequences we will omit computations and just present the resulting matrices for \eqref{eq:SomosN_matsys}.

\begin{itemize}
\item 
The following set of parameters
\begin{align}
    &&&&
    N
    &= 4,
    &
    s
    &= 2,
    &
    r
    &= 1
    &&&&
\end{align}
corresponds to Somos-$4$. Then, the system \eqref{eq:SomosN_scalsys_nc},
\begin{gather}
    \label{eq:Somos4_scalsys_nc}
    \left\{
    \begin{array}{rcl}
         \varphi_{M + 2}
         &=& \mu \, \varphi_{M + 1} 
         + a_{M} \, \varphi_{M},
         \\[2mm]
         \lambda \, \varphi_{M}
         &=& \mu \, \varphi_{M + 2}
         + b_{M} \, \varphi_{M + 1},
    \end{array}
    \right.
    \\[3mm]
    \begin{aligned}
        a_{M}
        &= \alpha^{-1} \,\, y_{M + 2} \, y_{M}^{-1},
        &&&&&
        b_{M}
        &= \alpha \,\, y_M \, y_{M + 1}^{-1},
    \end{aligned}
\end{gather}
turns into \eqref{eq:SomosN_matsys} as follows. Introduce the vector-function
$\Phi_M = \begin{pmatrix} \varphi_{M + 1} & \varphi_{M} \end{pmatrix}^T$. In order to get the matrix $\mathcal{M}_M$, one needs to use the first equation of system \eqref{eq:Somos4_scalsys_nc} to compute the components of $\Phi_{M + 1}$. So, they are given by
\begin{align}
    &&
    \varphi_{M + 2}
    &= \mu \, \varphi_{M + 1} + a_M \varphi_M
    = \begin{pmatrix} \mu & a_M \end{pmatrix} \, \Phi_M,
    &
    \varphi_{M + 1}
    &= \begin{pmatrix} 1 & 0 \end{pmatrix} \, \Phi_M.
    &&
\end{align}
Regarding the $\mathcal{L}_M$ matrix, we should use the both equations of system \eqref{eq:Somos4_scalsys_nc}. This leads to the following chains of identities
\begin{align}
    \lambda \, \varphi_M
    &= \mu \, \varphi_{M + 2} + b_M \, \varphi_{M + 1}
    = \mu \, \brackets{
    \mu \, \varphi_{M + 1} + a_M \, \varphi_M
    }
    + b_M \, \varphi_{M + 1}
    \\[1mm]
    &= 
    \begin{pmatrix} 
    \mu^2 + b_M 
    & 
    \mu \, a_M 
    \end{pmatrix} \, \Phi_M,
    \\[3mm]
    \lambda \, \varphi_{M + 1}
    &= \brackets{
    \mu^2 + b_{M + 1}
    } \, \varphi_{M + 2}
    + \mu \, a_{M + 1} \, \varphi_{M + 1}
    \\[1mm]
    &= \brackets{
    \mu^2 + b_{M + 1}
    } \, \brackets{
    \mu \, \varphi_{M + 1} + a_M \, \varphi_M
    }
    + \mu \, a_{M + 1} \, \varphi_{M + 1}
    \\[1mm]
    &= 
    \begin{pmatrix} 
    \mu \, \brackets{
    \mu^2 + b_{M + 1} + a_{M + 1}
    }
    & 
    \brackets{
    \mu^2 + b_{M + 1}
    } \, a_M 
    \end{pmatrix} \, \Phi_M.
\end{align}
Therefore, the resulting matrices are given by
\begin{align}
    &&
    \mathcal{L}_M
    &= 
    \begin{pmatrix}
    \mu \, \brackets{
    \mu^2 + b_{M + 1} + a_{M + 1}
    }
    & 
    \brackets{
    \mu^2 + b_{M + 1}
    } \, a_M 
    \\[0.9mm]
    \mu^2 + b_M 
    & 
    \mu \, a_M 
    \end{pmatrix}
    ,
    &
    \mathcal{M}_M
    &= 
    \begin{pmatrix}
        \mu & a_M
        \\[0.9mm]
        1 & 0
    \end{pmatrix}
    .
    &&
\end{align}
The compatibility condition \eqref{eq:compcond} is equivalent to the non-abelian Somos-$4$ equations:
\begin{align}
    y_{M + 4}
    &= \alpha^2 \, y_{M + 1}
    + y_{M + 2} \, 
    \brackets{
    y_M^{-1}
    - \alpha^2 \, y_{M + 3}^{-1} 
    } \,
    y_{M + 2}.
\end{align}

\item
The Somos-$5$ can be obtained by setting $N = 5$, $s = 2$, $r = 1$. Then, considering the vector-function $\Phi_M = \begin{pmatrix} \varphi_{M + 2} & \varphi_{M + 1} & \varphi_{M} \end{pmatrix}^T$, the corresponding system \eqref{eq:SomosN_scalsys_nc} turns into \eqref{eq:SomosN_matsys} with the matrices
\begin{align}
    \mathcal{L}_M
    &= 
    \begin{pmatrix}
    \mu^2 
    & \mu \, \brackets{b_{M + 2} + a_{M + 1}}
    & b_{M + 2} \, a_M
    \\[0.9mm]
    b_{M + 1} & \mu^2 & \mu \, a_M
    \\[0.9mm]
    \mu & b_M & 0
    \end{pmatrix},
    &
    \mathcal{M}_M
    &= 
    \begin{pmatrix}
    0 & \mu & a_M
    \\[0.9mm]
    1 & 0 & 0
    \\[0.9mm]
    0 & 1 & 0
    \end{pmatrix}
    .
\end{align}
Condition \eqref{eq:compcond} leads to
\begin{align}
    y_{M + 5}
    &= \alpha^2 \, y_{M + 1}
    + y_{M + 3} \, 
    \brackets{
    y_M^{-1}
    - \alpha^2 \, y_{M + 4}^{-1} 
    } \,
    y_{M + 2}.
\end{align}

\item
For the special case of the Somos-$7$ (see Remark \ref{rem:spSomos7}), let us set $N = 7$, $s = 3$, $r = 1$ in \eqref{eq:SomosN_scalsys_nc}. Using the vector-function $\Phi_M = \begin{pmatrix} \varphi_{M + 3} & \varphi_{M + 2} & \varphi_{M + 1} & \varphi_{M} \end{pmatrix}^T$, one can arrive at \eqref{eq:SomosN_matsys}, where the matrices $\mathcal{L}_M$ and $\mathcal{M}_M$ read
\begin{align}
    &&&&&&
    \mathcal{L}_M
    &= 
    \begin{pmatrix}
    \mu^2 
    & \mu \, \brackets{b_{M + 3} + a_{M + 2}}
    & b_{M + 3} \, a_{M + 1}
    & 0
    \\[0.9mm]
    0
    & \mu^2
    & \mu \, \brackets{b_{M + 2} + a_{M + 1}}
    & b_{M + 2} \, a_M
    \\[0.9mm]
    b_{M + 1}
    & 0
    & \mu^2
    & \mu \, a_M
    \\[0.9mm]
    \mu
    & b_M
    & 0
    & 0
    \end{pmatrix},
    &
    \mathcal{M}_M
    &= 
    \begin{pmatrix}
    0 & 0 & \mu & a_M
    \\[0.9mm]
    1 & 0 & 0 & 0
    \\[0.9mm]
    0 & 1 & 0 & 0 
    \\[0.9mm]
    0 & 0 & 1 & 0
    \end{pmatrix}
    .
\end{align}
A Somos-$7$ equation of the form
\begin{align}
    y_{M + 7}
    &= \alpha^2 \, y_{M + 1}
    + y_{M + 4} \, 
    \brackets{
    y_M^{-1}
    - \alpha^2 \, y_{M + 6}^{-1} 
    } \,
    y_{M + 3}
\end{align}
arise from \eqref{eq:compcond}. Setting $u_M = y_{M + 3} y_{M + 1}^{-1}$, the latter becomes (cf. with \cite{retakh2019noncommutative})
\begin{align}
    u_{M + 4} u_{M + 2}
    - u_{M + 1} u_{M - 1}
    &= \alpha^2 \brackets{
    u_{M}^{-1}
    - u_{M + 3}^{-1}
    }.
\end{align}
\end{itemize}
\end{exmp}

\begin{exmp}
When $N = 6$, $s$ and $r$ may belong to one of the following sets: $s = 3$, $r = 2$, $s = 3$, $r = 1$, and $s = 2$, $r = 1$. The corresponding matrices $\mathcal{L}_M$ and $\mathcal{M}_M$ are listed below.
\begin{itemize}
\item $N = 6$, $s = 3$, $r = 2$:
\begin{align}
    \mathcal{L}_M
    &= 
    \begin{pmatrix}
    \mu \brackets{
    \mu^3 + b_{M + 2} + a_{M + 2}
    }
    & \mu^2 \, a_{M + 1}
    & \brackets{
    \mu^3 + b_{M + 2}
    } \, a_{M}
    \\[0.9mm]
    \mu^3 + b_{M + 1} 
    & \mu \, a_{M + 1}
    & \mu^2 \, a_M
    \\[0.9mm]
    \mu^2 & b_M & \mu \, a_{M}
    \end{pmatrix},
    &
    \mathcal{M}_M
    &= 
    \begin{pmatrix}
    \mu & 0 & a_M
    \\[0.9mm]
    1 & 0 & 0
    \\[0.9mm]
    0 & 1 & 0
    \end{pmatrix}
    .
\end{align}

\item $N = 6$, $s = 3$, $r = 1$:
\begin{align}
    \mathcal{L}_M
    &= 
    \begin{pmatrix}
    \mu \, \brackets{b_{M + 2} + a_{M + 2}}
    & \mu^3 + b_{M + 2} \, a_{M + 1}
    & \mu^2 \, a_{M}
    \\[0.9mm]
    \mu^2 
    & \mu \, \brackets{
    b_{M + 1} + a_{M + 1}
    }
    & b_{M + 1} \, a_{M}
    \\[0.9mm]
    b_M 
    & \mu^2
    & \mu \, a_{M}
    \end{pmatrix},
    &
    \mathcal{M}_M
    &= 
    \begin{pmatrix}
    0 & \mu & a_M
    \\[0.9mm]
    1 & 0 & 0
    \\[0.9mm]
    0 & 1 & 0
    \end{pmatrix}
    .
\end{align}

\item $N = 6$, $s = 2$, $r = 1$:
\begin{align}
    \mathcal{L}_M
    &= 
    \begin{pmatrix}
    0
    & \mu^2 
    & \mu \, \brackets{b_{M + 3} + a_{M + 1}}
    & b_{M + 3} \, a_{M}
    \\[0.9mm]
    b_{M + 3}
    & 0
    & \mu^2
    & \mu \, a_M
    \\[0.9mm]
    \mu
    & b_{M + 1}
    & 0
    & 0
    \\[0.9mm]
    0
    & \mu
    & b_M
    & 0
    \end{pmatrix},
    &
    \mathcal{M}_M
    &= 
    \begin{pmatrix}
    0 & 0 & \mu & a_M
    \\[0.9mm]
    1 & 0 & 0 & 0
    \\[0.9mm]
    0 & 1 & 0 & 0 
    \\[0.9mm]
    0 & 0 & 1 & 0
    \end{pmatrix}
    .
\end{align}
\end{itemize}
\end{exmp}

\subsection{\texorpdfstring{$q$}{q}-\Painleve equations}
\label{sec:qPainleve}

First of all, we would like to introduce non-autonomous coefficients into the Somos-$N$ equation. Before the reduction given in Proposition \ref{thm:TodatoSomosN}, we need to implement it into the two-dimensional Toda lattice, since there is more freedom. Consider a transposed equation \eqref{eq:2ddToda_nc}. The scaling (cf.~with Remark \ref{rem:theta_qlmn})
\begin{align}
    \theta_{l, m, n}
    \mapsto 
    {\alpha}^{l} \, q^{\frac12 k_1 l (l - 1)} \, 
    q^{\frac12 k_2 m (m - 1)} \, 
    q^{\frac12 k_3 l (2 n + l - 1)} 
    \, 
    \theta_{l, m, n}
\end{align}
and transposition bring the non-abelian 2d Toda lattice \eqref{eq:2ddToda_nc} to the form
\begin{align}
    \label{eq:2ddToda_nc_alM}
    \begin{aligned}
    \theta_{l + 1, m + 1, n}
    = \alpha_{l, m, n} \theta_{l, m, n + 1}
    + \theta_{l, m + 1, n} \,\, \brackets{
    \theta_{l, m, n}^{-1}
    - \theta_{l + 1, m + 1, n - 1}^{-1}
    \alpha_{l, m, n - 1} 
    } \,
    \theta_{l + 1, m, n}
    ,
    \\
    \alpha_{l, m, n}
    = \alpha \, q^{k_1 l + k_2 m + k_3 n}.
    \end{aligned}
\end{align}
Here $\alpha$ is a non-abelian constant parameter and $k_i$, $q$ are commutative ones.

\begin{prop}
Let $\theta_{l, m, n}$ satisfy \eqref{eq:2ddToda_nc_alM} and $k_1 = N - s$, $k_2 = s$, $k_3 = r$. Then the function 
\begin{align}
    &&
    y_{M} 
    &:= \theta_{l, m, n},
    &
    M 
    &= (N - s) \, l + s \, m + r \, n
    &&
\end{align}
is a solution of the non-commutative version of the non-autonomous Somos-$N$ type equation:
\begin{gather}
    \label{eq:SomosN_nc_alM}
    \begin{aligned}
    \begin{aligned}
    y_{M + N}
    &= \alpha_M \, y_{M + r}
    + y_{M + s} \, 
    \brackets{
    y_M^{-1}
    - y_{M + N - r}^{-1} \, \alpha_{M - r} 
    } \,
    y_{M + N - s},
    &
    \alpha_M
    &= \alpha \, q^M,
    \end{aligned}
    \\
    \begin{aligned}
    N 
    &\in \mathbb{N}_{> 3},
    &
    1 
    &\leq r < s \leq \left[\tfrac{N}{2}\right].
    \end{aligned}
    \end{aligned}
\end{gather}
\end{prop}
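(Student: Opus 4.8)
The plan is to verify directly that the plane-wave ansatz is compatible with \eqref{eq:2ddToda_nc_alM} and that the specific choice $k_1 = N - s$, $k_2 = s$, $k_3 = r$ is exactly what forces the non-autonomous coefficient to descend to the reduced variable. First I would impose that $\theta_{l, m, n}$ depends on $(l, m, n)$ only through the linear combination $M = (N - s)\, l + s\, m + r\, n$, and set $y_M := \theta_{l, m, n}$. The assertion of the proposition is then that this reduction is consistent, i.e. that every factor occurring in \eqref{eq:2ddToda_nc_alM}, together with its coefficient, depends on the three indices solely through $M$ (up to a fixed integer shift). This is the same mechanism already used in Proposition \ref{thm:TodatoSomosN}, the only new ingredient being the non-autonomous weight.

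Next I would record the $M$-shift induced by each elementary lattice shift: $l \mapsto l + 1$ raises $M$ by $N - s$, $m \mapsto m + 1$ raises it by $s$, and $n \mapsto n + 1$ raises it by $r$. Applying this to the six factors of \eqref{eq:2ddToda_nc_alM} gives
\begin{align}
    \theta_{l + 1, m + 1, n} &\mapsto y_{M + N}, & \theta_{l, m, n + 1} &\mapsto y_{M + r}, & \theta_{l, m + 1, n} &\mapsto y_{M + s}, \\
    \theta_{l, m, n} &\mapsto y_{M}, & \theta_{l + 1, m + 1, n - 1} &\mapsto y_{M + N - r}, & \theta_{l + 1, m, n} &\mapsto y_{M + N - s},
\end{align}
so that all shifts collapse to translations of the single index $M$, precisely the pattern appearing on the right-hand side of \eqref{eq:SomosN_nc_alM}.

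The decisive step is the treatment of the coefficient. With $k_1 = N - s$, $k_2 = s$, $k_3 = r$ one has $\alpha_{l, m, n} = \alpha\, q^{(N - s) l + s m + r n} = \alpha\, q^{M}$, which depends on the indices only through $M$; writing $\alpha_M := \alpha\, q^{M}$ and noting that the other occurrence becomes $\alpha_{l, m, n - 1} = \alpha\, q^{M - r} = \alpha_{M - r}$, the coefficient is compatible with the reduction for exactly this choice of the $k_i$. Substituting the two displays above into \eqref{eq:2ddToda_nc_alM} then yields \eqref{eq:SomosN_nc_alM} term by term.

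The work is essentially bookkeeping, so I do not expect a genuine obstacle; the two points that require care are that the coefficient, and not merely the $\theta$-factors, must reduce to a function of $M$ alone (guaranteed solely by the alignment of $(k_1, k_2, k_3)$ with the reduction vector $(N - s, s, r)$), and that the non-commutative ordering of the factors be preserved throughout. In particular I would check that $\alpha_{M - r}$ enters multiplying $y_{M + N - r}^{-1}$ from the right, matching the transposed form \eqref{eq:2ddToda_nc_alM} from which \eqref{eq:SomosN_nc_alM} is read off, so that the resulting equation is the stated non-abelian non-autonomous Somos-$N$ equation and not a spuriously reordered variant.
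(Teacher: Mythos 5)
Your proposal is correct and matches the paper's (implicit) argument: the paper offers no written proof for this proposition, treating it — like Proposition \ref{thm:TodatoSomosN} — as a straightforward computation, which is exactly the bookkeeping you carry out. Your index arithmetic for the six $\theta$-factors and for the coefficient ($\alpha_{l,m,n}\mapsto\alpha\,q^{M}=\alpha_M$, $\alpha_{l,m,n-1}\mapsto\alpha_{M-r}$), together with the check that the non-commutative ordering of \eqref{eq:2ddToda_nc_alM} is preserved, is all that is needed.
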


Similarly to the paper \cite{hone2014discrete}, we will introduce the $q$PI and $q$PII hierarchies. Below we set
\begin{align}
    &&
    \prod_{k = 1}^n a_{k}
    &:= a_1 \, a_2 \, \dots \, a_n,
    &
    a_k
    &\in R.
    &&
\end{align}

\begin{prop}
\label{thm:qPIn_qPIIn_nc}
Let $y_M$ be a solution of \eqref{eq:SomosN_nc_alM} with $r = 1$ and $s = 2$. Then
\begin{itemize}
    \item[(a)] 
    for even $N \geq 4$, the change of variables
    \begin{align}
        \label{eq:Somos2qPI}
        u_M
        &= y_{M + 3} \, y_{M + 2}^{-1}
    \end{align}
    leads to the non-commutative $q$PI hierarchy
    \begin{align}
    \label{eq:qPIn_nc}
        u_{M + N - 3} \,
        u_{M + N - 4}
        - u_{M - 1} \, u_{M - 2}
        &= \prod_{k = 0}^{N - 4}
        \brackets{
        \alpha_M \, u_{M + k - 1}^{-1}
        - u_{M + k}^{-1} \, \alpha_{M - 1}
        };
    \end{align}
    
    \item[(b)] 
    for odd $N \geq 5$, the change of variables
    \begin{align}
        \label{eq:Somos2qPII}
        u_M
        &= y_{M + 4} \, y_{M + 2}^{-1}
    \end{align}
    leads to the non-commutative $q$PII hierarchy
    \begin{align}
    \label{eq:qPIIn_nc}
        \prod_{k = 1 - [\frac{N}{2}]}^0
        u_{M - 2 k - 1}
        - u_{M - 2} \, \prod_{k = 2 - [\frac{N}{2}]}^0
        u_{M - 2 k - 3}
        &= \alpha_M
        - \prod_{k = 0}^{[\frac{N}{2}] - 2}
        u_{M + 2k - 2}^{-1} \,\, 
        \alpha_{M - 1}
        \prod_{k = 2 - [\frac{N}{2}]}^{0}
        u_{M - 2k - 3}.
    \end{align}
\end{itemize}
\end{prop}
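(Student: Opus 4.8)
The plan is to prove both parts by a direct reduction of the non-autonomous Somos equation \eqref{eq:SomosN_nc_alM} specialised to $r=1$, $s=2$, followed by an induction on $N$ that builds up the ordered product on the right-hand side one factor at a time. First I would set $r=1$, $s=2$ in \eqref{eq:SomosN_nc_alM}, obtaining
\[
    y_{M+N} = \alpha_M\,y_{M+1} + y_{M+2}\left(y_M^{-1} - y_{M+N-1}^{-1}\,\alpha_{M-1}\right)y_{M+N-2}.
\]
The key observation is that the substitution \eqref{eq:Somos2qPI}, $u_M = y_{M+3}y_{M+2}^{-1}$, telescopes against a single reference value: for $j\ge 1$ one has $y_{M+2+j} = (u_{M+j-1}u_{M+j-2}\cdots u_M)\,y_{M+2}$, while $y_{M+1}=u_{M-1}^{-1}y_{M+2}$ and $y_M^{-1}=y_{M+2}^{-1}u_{M-1}u_{M-2}$. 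Substituting these into the specialised Somos relation, every summand acquires a common right factor $y_{M+2}$; right-multiplying by $y_{M+2}^{-1}$ eliminates the reference and produces a closed relation among the $u$'s alone. This already settles the base case $N=4$, where one recovers $u_{M+1}u_M - u_{M-1}u_{M-2} = \alpha_M u_{M-1}^{-1} - u_M^{-1}\alpha_{M-1}$, i.e. \eqref{eq:qPIn_nc} with a single factor $(k=0)$.

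For part (b) the substitution \eqref{eq:Somos2qPII}, $u_M=y_{M+4}y_{M+2}^{-1}$, couples indices two apart, so the even- and odd-indexed $y$'s split into two interleaved chains. I would accordingly telescope the even chain against the base $y_{M+2}$ and the odd chain against $y_{M+1}$, again clearing a common factor on the right after substitution. Because the $q$PII hierarchy runs over odd $N$ with step length two in the index, the two chains are exactly what generate the doubled products $\prod_{k} u_{M-2k-1}$ and $\prod_{k}u_{M-2k-3}$ appearing in \eqref{eq:qPIIn_nc}; the small cases $N=5,7$ would be checked first to fix the index conventions coming from $M=(N-s)l+sm+rn$.

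The main obstacle, and where I would spend the real effort, is passing from the single difference term $y_M^{-1}-y_{M+N-1}^{-1}\alpha_{M-1}$ to the \emph{ordered product} of $N-3$ differences on the right-hand side of \eqref{eq:qPIn_nc} for general $N$. A one-step clearing of $y_{M+2}$ yields a difference of two ordered $u$-monomials rather than a product of differences, so the product form must be obtained inductively: I would argue from the equation at parameter $N$ to that at $N+2$, appending a single factor $\alpha_M u_{M+k-1}^{-1}-u_{M+k}^{-1}\alpha_{M-1}$ by re-using the Somos recurrence at the shifted endpoint, and combining it with the previous product through a non-commutative telescoping identity. Throughout, the delicate points are the strict left/right placement of the non-abelian parameter $\alpha$ (with the central relation $\alpha_M=q\,\alpha_{M-1}$ used to slide scalars past the $u$'s) and the ordering of the quasideterminant ratios; these must be tracked exactly, since commutativity is precisely what collapses \eqref{eq:qPIn_nc}--\eqref{eq:qPIIn_nc} to the classical $q$P hierarchies of \cite{hone2014discrete}. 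The verification of the compatibility-type cancellations is otherwise routine and parallels the direct computations already used in Proposition \ref{thm:TodatoSomosN}.
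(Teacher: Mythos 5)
Your first step --- specialising to $r=1$, $s=2$, telescoping via $y_{M+2+j}=(u_{M+j-1}\cdots u_M)\,y_{M+2}$ and clearing the common right factor --- is exactly the paper's computation, and it already carries you essentially to the finish line. The genuine problem lies in the induction you then set up to convert "a difference of two ordered $u$-monomials" into "a product of differences". That induction is not well-founded: the Somos-$N$ and Somos-$(N+2)$ equations are recurrences for \emph{different} sequences $y_M$, and the proposition is a separate statement for each fixed $N$. There is no recurrence linking the order-$N$ member of the hierarchy to the order-$(N+2)$ member through a common solution, so "re-using the Somos recurrence at the shifted endpoint" cannot append a factor to the previous hierarchy member; the quantity you would need to control in the inductive step simply does not satisfy the order-$N$ relation.

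More importantly, the obstacle you are trying to overcome is illusory. The right-hand side of \eqref{eq:qPIn_nc}, despite being typeset with $\prod_k$, is \emph{not} a literal noncommutative product of the binomial factors (such a product would expand into $2^{N-3}$ terms and would contain multiple copies of $\alpha_M$); it denotes the telescoped two-term expression $\alpha_M\,u_{M-1}^{-1}u_M^{-1}\cdots u_{M+N-5}^{-1}-u_M^{-1}u_{M+1}^{-1}\cdots u_{M+N-4}^{-1}\,\alpha_{M-1}$, as the paper's explicit examples for $N=6$ and $N=7$ show. That two-monomial expression is precisely what your one-step clearing produces once you rewrite the order-$N$ Somos relation as $y_{M+N}y_{M+N-2}^{-1}-y_{M+2}y_M^{-1}=\alpha_M\,y_{M+1}y_{M+N-2}^{-1}-y_{M+2}y_{M+N-1}^{-1}\alpha_{M-1}$ and insert identity factors $y_{M+j}^{-1}y_{M+j}$ throughout --- a single direct computation for each fixed $N$, with no induction. (A minor additional caution: $\alpha_M=q\,\alpha_{M-1}$ only lets you move the central scalar $q$, not the non-abelian constant $\alpha$, past the $u$'s; no such sliding is needed in the correct computation.) Your plan for part (b), splitting into two interleaved chains, is the right idea and goes through by the same one-shot telescoping once the same reading of the product notation is adopted.
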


\begin{proof}
The proof is given by a straightforward computation. We first consider $N = 4$ and $N = 5$ to illustrate how the Somos equations can be rewritten before the change and, then, proceed to an arbitrary $N$.

\medskip
\textbullet \,\, \textbf{Case (a).} 
The Somos-$4$ equation can be written as
\begin{align}
    y_{M + 4} y_{M + 2}^{-1}
    - y_{M + 2} y_M^{-1}
    &= \alpha_M \, y_{M + 1} y_{M + 2}^{-1}
    - y_{M + 2} y_{M + 3}^{-1} \, \alpha_{M - 1}
    ,
\end{align}
or, after change \eqref{eq:Somos2qPI}, it becomes
\begin{align}
    u_{M + 1} u_{M}
    - u_{M - 1} u_{M - 2}
    &= \alpha_M \, u_{M - 1}^{-1}
    - u_M^{-1} \, \alpha_{M - 1}
    .
\end{align}
Similar considerations for $N \in 2\mathbb{N}_{> 3}$ lead to the following chain of identities:
\begin{align}
    y_{M + N} y_{M + N - 2}^{-1}
    - y_{M + 2} y_{M}^{-1}
    = \alpha_M \, y_{M + 1} y_{M + N - 2}^{-1}
    - y_{M + 2} y_{M + N - 1}^{-1} \, \alpha_{M - 1} 
    ,
    \\[2mm]
    \begin{aligned}
    y_{M + N} y_{M + N - 1}^{-1} 
    &y_{M + N - 1} y_{M + N - 2}^{-1}
    - y_{M + 2} y_{M + 1}^{-1} y_{M + 1} y_{M}^{-1}
    = \alpha_M \, y_{M + 1} y_{M + 2}^{-1} y_{M + 2} \dots
    \\[1mm]
    &\dots \, y_{M + N - 1}^{-1} y_{M + N - 1} y_{M + N - 2}^{-1}
    - y_{M + 2} y_{M + 3}^{-1} y_{M + 3} \dots y_{M + N - 2}^{-1} y_{M + N - 2} y_{M + N - 1}^{-1} \, \alpha_{M - 1} 
    ,
    \end{aligned}
    \\[2mm]
    u_{M + N - 3} u_{M + N - 4}
    - u_{M - 1} u_{M - 2}
    = \prod_{k = 0}^{N - 4} \brackets{
    \alpha_M \, u_{M + k - 1}^{-1}
    - u_{M + k}^{-1} \, \alpha_{M - 1}
    }.
\end{align}

\medskip
\textbullet \,\, \textbf{Case (b).} 
Making the change of variables \eqref{eq:Somos2qPII} in the Somos-$5$ equation of the form
\begin{align}
    y_{M + 5} y_{M + 1}^{-1}
    - y_{M + 2} y_M^{-1} y_{M + 3} y_{M + 1}^{-1}
    &= \alpha_M
    - y_{M + 2} y_{M + 4}^{-1} \, \alpha_{M - 1} \, y_{M + 3} y_{M + 1}^{-1}
    ,
\end{align}
we obtain 
\begin{align}
    u_{M + 1} u_{M - 1}
    - u_{M - 2} u_{M - 1}
    &= \alpha_M 
    - u_{M}^{-1} \, \alpha_{M - 1} \, u_{M - 1}
    .
\end{align}
In the case of $N \in 2\mathbb{N}_{> 3} + 1$, one can get:
\begin{align}
    y_{M + N} y_{M + 1}^{-1}
    - y_{M + 2} y_{M}^{-1} y_{M + N - 2} y_{M + 1}^{-1}
    = \alpha_M 
    - y_{M + 2} y_{M + N - 1}^{-1} \, \alpha_{M - 1} \, y_{M + N - 2} y_{M + 1}^{-1}
    ,
    \\[2mm]
    \begin{aligned}
    y_{M + N} y_{M + N - 2}^{-1} y_{M + N - 2} 
    &\dots
    y_{M + 3}^{-1} y_{M + 3} y_{M + 1}^{-1} 
    \\[1mm]
    & \hspace{1.6cm} 
    - y_{M + 2} y_{M}^{-1} \, 
    y_{M + N - 2} y_{M + N - 4}^{-1} y_{M + N - 4} \dots
    y_{M + 3}^{-1} y_{M + 3} y_{M + 1}^{-1}
    \\[1mm]
    &= \alpha_M 
    - y_{M + 2} y_{M + 4}^{-1} y_{M + 4} 
    \dots y_{M + N - 3}^{-1} y_{M + N - 3} y_{M + N - 1}^{-1} \, \alpha_{M - 1}
    \\[1mm]
    &\hspace{3.4cm}
    y_{M + N - 2} y_{M + N - 4}^{-1} y_{M + N - 4} \dots
    y_{M + 3}^{-1} y_{M + 3} y_{M + 1}^{-1}
    ,
    \end{aligned}
\end{align}
i.e. equation \eqref{eq:qPIIn_nc}.
\end{proof}

\begin{exmp}
Below we present first two members of the hierarchies from Proposition \ref{thm:qPIn_qPIIn_nc}.
\begin{itemize}
    \item[(a)] When $N = 4$, the \eqref{eq:qPIn_nc} is a non-commutative analog of the $q\PI$ equation:
    \begin{align}   
    \label{eq:qPI_nc}
    u_{M + 1} \, u_{M}
    - u_{M - 1} \, u_{M - 2}
    &= \alpha_M \, u_{M - 1}^{-1} - u_{M}^{-1} \, \alpha_{M - 1}.
    \end{align}
    For $N = 6$ we have:
    \begin{align}
        u_{M + 3} \, u_{M + 2}
        - u_{M - 1} \, u_{M - 2}
        &= \alpha_M \,\, u_{M - 1}^{-1} \, u_{M}^{-1} \, u_{M + 1}^{-1}
        - u_{M}^{-1} \, u_{M + 1}^{-1} \, u_{M + 2}^{-1} \,\, \alpha_{M - 1}.
    \end{align}

    \item[(b)] When $N = 5$, the \eqref{eq:qPIIn_nc} is a non-commutative analog of the $q\PPII$ equation:
    \begin{align}
    \label{eq:qPII_nc}
    u_{M + 1} \, u_{M - 1}
    - u_{M - 2} \, u_{M - 1} 
    &= \alpha_M
    - u_{M}^{-1} \, 
    \alpha_{M - 1} \, 
    u_{M - 1}.
    \end{align} 
    If $N = 7$, we obtain:
    \begin{align}
        u_{M + 3} \, u_{M + 1} \, u_{M - 1}
        - u_{M - 2} \, u_{M + 1} \, u_{M - 1}
        &= \alpha_M 
        - u_{M - 2}^{-1} \, u_{M}^{-1} \,\, \alpha_{M - 1} \,\,
        u_{M - 1} \, u_{M - 3}.
    \end{align}
\end{itemize}
\end{exmp}

\begin{rem}
In the commutative case, the equation \eqref{eq:qPI_nc} can be derived as follows. Let us take two equations \eqref{eq:qPI} with indices $M$ and $M - 1$:
\begin{align}
    &&
    u_{M + 1} \, u_{M}^{2} \, u_{M - 1}
    &= \beta + \alpha_M \, u_{M},
    &&&
    u_{M} \, u_{M - 1}^{2} \, u_{M - 2}
    &= \beta + \alpha_{M - 1} u_{M - 1},
    &&
\end{align}
or, equivalently,
\begin{align}
    &&
    u_{M + 1} \, u_{M}
    &= \beta \, 
    u_{M}^{-1} \, u_{M - 1}^{-1} 
    + \alpha_M \, u_{M - 1}^{-1},
    &&&
    u_{M - 1} \, u_{M - 2}
    &= \beta \, u_{M}^{-1} \, u_{M - 1}^{-1} 
    + \alpha_{M - 1} \, u_{M}^{-1}.
    &&
\end{align}
Then their difference leads to
\begin{align}
    &&
    u_{M + 1} \, u_{M}
    - u_{M - 1} \, u_{M - 2}
    &= \alpha_{M} \, u_{M - 1}^{-1}
    - \alpha_{M - 1} \, u_{M}^{-1}.
    &&
\end{align}
Similarly, taking the difference of two equations \eqref{eq:qPII} with indices $M$ and $M - 1$ in the form:
\begin{align}
    &&
    u_{M + 1} \, u_{M - 1}
    &= \beta \, u_{M}^{-1} + \alpha_M,
    &&&
    u_{M - 1} \, u_{M - 2}
    &= \beta \, u_{M}^{-1} 
    + \alpha_{M - 1} \, u_{M}^{-1} \, u_{M - 1},
    &&
\end{align}
we obtain commutative \eqref{eq:qPII_nc}:
\begin{align}
    &&
    u_{M + 1} \, u_{M - 1}
    - u_{M - 1} \, u_{M - 2}
    &= \alpha_M
    - \alpha_{M - 1} \, u_{M}^{-1} \, u_{M - 1}.
    &&
\end{align}
\end{rem}

\section{Discussion}

In the paper we consider continuous limits only for the Toda lattices. It would be interesting to study similar continuous limits for other discrete systems discussed in this paper. In particular, we expect an appearance of both non-abelian versions for  modified Korteweg-de Vries equations obtained in \cite{marchenko2012nonlinear} and \cite{khalilov1990matrix}, since we have derived two non-equivalent versions, \ref{eq:dmKdV_nc_1_0} and \ref{eq:dmKdV_nc_2_0}, of the discrete mKdV equation. Note that these limits may also be extended for the Lax pairs and quasideterminant solutions. 

We also expect that the considered non-autonomous one-dimensional systems possess discrete Lax pairs and quasideterminant solutions, since they result from the plane-wave reduction of the \ref{eq:2ddToda_0} lattice satisfying these properties. Note that the non-abelian analog of the autonomous Somos-$N$ type sequence should enjoy the non-commutative Laurent phenomena \cite{berenstein2011short} (see also \cite{usnich2010non}, \cite{russell2015noncommutative}), while an appropriate continuous limits for the non-abelian \ref{eq:qPIn_nc_0} and \ref{eq:qPIIn_nc_0} hierarchies should give rise to their continuous analogs generalizing the well-known hierarchies for the PI and PII equations \cite{kudryashov1997first} to the non-abelian case. As far as the authors know, this issue has not been studied yet in the commutative case. In addition, Lax pairs for the abelian discrete $q$-\Painleve hierarchies are still not known.

The commutative discrete integrable systems are related to cluster algebras \cite{hone2014discrete}, \cite{hone2019cluster}. Cluster algebras already have generalisations to the non-abelian case (see, e.g., \cite{berenstein2013noncommutative}, \cite{arthamonov2020noncommutative}). In the authors opinion, it would be interesting to study such a relation for the non-abelian discrete systems obtained in the present paper.
	
    \bibliographystyle{alpha}
    \bibliography{bib}
    %\nocite{*}
\end{document}